\theoremstyle{plain}
\newtheorem{theorem}{Theorem}[section]
\newtheorem{proposition}[theorem]{Proposition}
\theoremstyle{definition}
\newtheorem{definition}[theorem]{Definition}
\newtheorem{example}[theorem]{Example}
\newtheorem{remark}[theorem]{Remark}
\newtheorem{claim}[theorem]{Claim}
\renewcommand{\(}{\begin{equation}}
\renewcommand{\)}{\end{equation}}
\newcommand{\bea}{\begin{eqnarray}}
\newcommand{\eea}{\end{eqnarray}}
\it \href{http://ercpqg-espace.sciencesconf.org/resource/page/id/1}{New Spaces for Mathematics and Physics}}}
\begin{document}

\title{Higher prequantum geometry\footnote{This file is available at \href{http://ncatlab.org/schreiber/show/Higher+Prequantum+Geometry}{ncatlab.org/schreiber/show/Higher+Prequantum+Geometry}}}

\author{Urs Schreiber \thanks{CAS Prague and MPI Bonn}}

\maketitle

\thispagestyle{fancy}

This is a survey of motivations, constructions and applications of \emph{higher prequantum geometry}
\cite{dcct}\footnote{Section \ref{PrequantumLocalFieldTheoryInMotivation} is based on joint work with Igor Khavkine.
Section \ref{examplesinmotivation} is based on joint work with Domenico Fiorenza and Hisham Sati.
Section \ref{abstractprequantumgeometry} is based on joint work with David Carchedi}.
In section \ref{PrequantumLocalFieldTheoryInMotivation}
we highlight the open problem of prequantizing local field theory in a local and gauge invariant way,
and we survey how a solution to this problem exists in higher differential geometry.
In section \ref{examplesinmotivation}  we survey examples and problems of interest.
In section \ref{abstractprequantumgeometry}  we survey the abstract cohesive homotopy theory that serves to make all this precise and tractable.
Combining this cohesive with linear homotopy theory should serve to
non-perturbatively quantize higher prequantum geometry, see \cite{motquant}.

\tableofcontents

\section{Prequantum field theory}
\label{PrequantumLocalFieldTheoryInMotivation}

The geometry that underlies the physics of Hamilton and Lagrange's classical mechanics and classical field theory
has long been identified: this is
\emph{symplectic geometry} \cite{Arnold89} and \emph{variational calculus on jet bundles} \cite{Anderson-book,Olver93}.
In these theories, configuration spaces of physical systems are differentiable manifolds, possibly infinite-dimensional,
and the physical dynamics is all encoded by way of certain globally defined differential forms on these spaces.

But fundamental physics is of course of quantum nature, to which classical physics is but an approximation
that applies at non-microscopic scales. Of what mathematical nature are systems of quantum physics?

\subsection{The need for prequantum geometry}
\label{Theneedforprequantum}
\label{TheNeedForPrequantumGeometry}

A sensible answer to this question is given by algebraic deformation
theory. One considers a deformation of classical physics to quantum
physics by deforming a Poisson bracket to the commutator in a
non-commutative algebra, or by deforming a classical measure to a
quantum BV operator.
$$
  \xymatrix{
    \fbox{
    \begin{tabular}{c}
      classical
      \\
      physics
    \end{tabular}
    }
    \ar@<+6pt>[rr]^-{\mathrm{deformation} \atop \mathrm{quantization}}
    \ar@{<-}@<-6pt>[rr]_-{\mathrm{classical} \atop \mathrm{limit}}
    &&
    \fbox{
      \begin{tabular}{c}
        perturbative
        \\
        quantum
        \\
        physics
      \end{tabular}
    }
  }
$$
However,
this tends to work only perturbatively, in the
infinitesimal neighbourhood of classical physics, expressed in terms of
formal (possibly non-converging) power series in Planck's constant
$\hbar$.

There is a genuinely non-perturbative mathematical formalization of quantization,
called \emph{geometric quantization} \cite{Souriau70, Souriau74, Kostant75, BatesWeinstein}.
A key insight is that before genuine quantization even applies,
there is to be a \emph{pre-quantization} step in which
the classical geometry is supplemented by \emph{global} coherence data.
For actions of global gauge groups, this is also known as
cancellation of \emph{classical anomalies} \cite[5.A]{Arnold89}.
$$
  \xymatrix{
    \fbox{
      \begin{tabular}{c} classical \\ physics \end{tabular}
    }
    \ar@{<-}[rr]^-{\mbox{\tiny \begin{tabular}{c}disregard \\ global \\ information \end{tabular}}}
    &&
    \fbox{
      \begin{tabular}{c}
        pre-quantum \\ physics
      \end{tabular}
    }
    \ar[rr]^-{\mathrm{geometric} \atop \mathrm{quantization}}
    &&
    \fbox{
      \begin{tabular}{c} full \\ quantum \\ physics \end{tabular}
    }
    \\
  }
$$

The archetypical example of pre-quantization is \emph{Dirac charge quantization} \cite{Dirac31},
\cite[5.5]{Frankel}, \cite{Freed00}.
The classical mechanics of an electron propagating in an electromagnetic field on a spacetime $X$ is all
encoded in a differential 2-form on $X$, called the Faraday tensor $F$, which encodes the classical
Lorentz force that the electromagnetic field exerts on the electron.
But this data is insufficient for passing to the quantum theory of the electron: locally, on a coordinate chart $U$,
what the quantum electron really couples to is the ``\emph{vector potential}'', a differential 1-form
$A_U$ on $U$, such that $d A_U = F|_U$. But globally such a vector potential may not exist.
Dirac realized\footnote{Dirac considered this in the special case where spacetime is the complement
in 4-dimensional Minkowski spacetime of the worldline
of a magnetic point charge. The homotopy type of this space is the 2-sphere and hence in this case
principal connections may be exhibited by what in algebraic topology is called a
\emph{clutching construction}, and this is what Dirac described. What the physics literature knows as the
``Dirac string'' in this context is the ray whose complement gives one of the two hemispheres in the clutching
construction.}
that what it takes to define the quantized electron globally is,
in modern language, a lift of the locally defined vector potentials to an $(\mathbb{R}/\mathbb{Z})$-principal connection
on a $(\mathbb{R}/\mathbb{Z})$-principal bundle over spacetime. The first Chern class of this principal bundle is
quantized, and this is identified with the quantization of the magnetic charge whose induced force the electron feels. This
quantization effect, which needs to be present before the quantization of the dynamics of the electron itself
even makes sense globally, is an example of \emph{pre-quantization}.

A variant of this example occupies particle physics these days. As we pass attention from electrons to quarks, these
couple to the weak and strong nuclear force, and this coupling is, similarly, locally described by a 1-form $A_U$,
but now with values in a Lie algebra $\mathfrak{su}(n)$,
from which the strength of the nuclear force field is encoded by the 2-form $F|_U := d A_U + \tfrac{1}{2}[A_U \wedge A_U]$.
For the consistency of the quantization of quarks, notably for the consistent global definition of
\emph{Wilson loop observables}, this local data must be lifted to an $\mathrm{SU}(n)$-principal connection
on a $\mathrm{SU}(n)$-principal bundle over spacetime. The second Chern class of this bundle is quantized,
and is physically interpreted as the number of \emph{instantons}\footnote{Strictly speaking, the term ``instanton'' refers
to a principal connection that in addition to having non-trivial topological charge also minimizes Euclidean energy.
Here we are just concerned with the nontrivial topological charge, which in particular is insensitive to
and independent of any ``Wick rotation''.}.
In the physics literature instantons are expressed via Chern-Simons 3-forms, mathematically these
constitute the pre-quantization of the 4-form $\mathrm{tr}(F \wedge F)$ to a 2-gerbe with 2-connection, more on this in a moment.

The vacuum which we inhabit is filled with such instantons at a density of the order of one instanton per femtometer in every direction.
(The precise quantitative theoretical predictions of this \cite{SchaeferShuryak96} suffer from an infrared regularization ambiguity, but
numerical simulations demonstrate the phenomenon \cite{Gr13}.)
This ``instanton sea'' that fills spacetime governs the mass of the $\eta'$-particle \cite{Witten79, Veneziano79}
as well as other non-perturbative chromodynamical phenomena, such as the quark-gluon plasma seen in experiment \cite{Shuryak01}.
It is also at the heart of the standard hypothesis for the
mechanism of primordial baryogenesis \cite{Sakharov67, tHooft76, RiottoTrodden}, the fundamental explanation of
a universe filled with matter.

Passing beyond experimentally observed physics, one finds that the qualitative structure of the
standard model of particle physics coupled to gravity, namely the structure of Einstein-Maxwell-Yang-Mills-Dirac-Higgs theory,
follows naturally if one assumes that the 1-dimensional worldline theories of particles such as electrons and quarks are,
at very high energy,
accompanied by higher dimensional worldvolume theories of fundamental objects called strings, membranes and
generally $p$-branes (e.g. \cite{Duff99}).
While these are hypothetical as far as experimental physics goes, they are interesting examples of
the mathematical formulation of field theory,
and hence their study is part of mathematical physics, just as the study of the Ising model or $\phi^4$-theory.
These $p$-branes are subject to a higher analog of the Lorentz force, and this is subject to
a higher analog of the Dirac charge quantization condition, again a prequantum effect for the worldvolume theory.

For instance the \emph{strong CP-problem} of the standard model of particle physics has several hypothetical solutions,
one is the presence of particles called \emph{axions}. The discrete shift symmetry (Peccei-Quinn symmetry) that characterizes these
may naturally be explained as the result of $\mathbb{R}/\mathbb{Z}$-brane charge quantization in the hypothetical case that axions are
wrapped membranes \cite[section 6]{Wittenaxions}.

More generally, $p$-brane charges are not quantized in ordinary integral cohomology, but in generalized cohomology theories.
For instance 1-branes (strings) are well-known to carry charges whose quantization is in K-theory (see \cite{Freed00}).
While the physical existence of fundamental strings remains hypothetical, since the boundaries of strings are particles
this does impact on known physics, for instance on the quantization of phase spaces that are not symplectic but just Poisson \cite{Nuiten}.

Finally, when we pass from fundamental physics to low energy effective physics such as solid state physics, then
prequantum effects control topological phases of matter: symmetry protected topological phases are being argued 
\cite{cglw11} to be described, at low energy, by higher dimensional WZW models, 
mathematically of just the same kind as those hypothetical fundamental super $p$-brane models.

\begin{center}
\begin{tabular}{|c|c|}
  \hline
  \begin{tabular}{c}
    {\bf worldvolume}
    \\
    {\bf field theory}
  \end{tabular}
  &
  {\bf prequantum effect}
  \\
  \hline
  \hline
   electron & \begin{tabular}{c} Dirac charge quantization, \\ magnetic flux quantization \end{tabular}
   \\
   \hline
   quark & \begin{tabular}{c} instantons, \\ baryogenesis
           \end{tabular}
   \\
   \hline
   $p$-brane & \begin{tabular}{c} brane charge quantization, \\ axion shift symmetry \end{tabular}
   \\
   \hline
\end{tabular}
\end{center}

\medskip

These examples show that pre-quantum geometry is at the heart of the description of fundamental and of effective physical reality.
Therefore, before rushing to discuss the mathematics of quantum geometry proper, it behooves us to first carefully
consider the mathematics of pre-quantum geometry. This is what we do here.

\medskip

If the prequantization of the Lorentz force potential 1-form $A$
for the electron is a connection on a $(\mathbb{R}/\mathbb{Z})$-principal
bundle, then what is the prequantization of the Chern-Simons 3-form counting instantons, or of the
higher Lorentz force potential $(p+1)$-form of a $p$-brane for higher $p$?

This question has no answer in traditional differential geometry.
It is customary to consider it only after transgressing the $(p+1)$-forms down to 1-forms
by splitting spacetime/worldvolume as a product $\Sigma = \Sigma_p \times [0,1]$
of $p$-dimensional spatial slices with a time axis, and integrating the $(p+1)$-forms
over $\Sigma_p$
\begin{center}
\begin{tabular}{|ccc|}
  \hline
  {\bf global in space} && {\bf local in spacetime}
  \\
  \hline
  1-form $A_1$ && $(p+1)$-form $A_{p+1}$
  \\
  $
    \underset{[0,1]}{\int} A_1
  $
  &
  $
    \xymatrix{
      \ar@{=}[rr]^{A_{1} := \int_{\Sigma_p} A_{p+1}}_{\mathrm{fiber}\;\mathrm{integration}}
      &&
    }
  $
  &
  $
    \underset{\Sigma_p \times [0,1]}{\int} A_{p+1}
  $
  \\
  \hline
\end{tabular}
\end{center}
This transgression reduces $(p+1)$-dimensional field theory to 1-dimensional field theory, hence to mechanics,
on the moduli space of spatial field configurations. That 1-dimensional field theory may
be subjected to the traditional theory of prequantum mechanics.

But clearly this space/time decomposition is a brutal step for relativistic field theories.
It destroys their inherent symmetry and makes their analysis hard.
In physics this is called the ``non-covariant'' description of field theory, referring
to covariance under application of diffeomorphisms.
We need prequantum geometry for spacetime local field theory where $(p+1)$-forms may be
prequantized by regarding them as connections on higher degree analogs of principal bundles \cite{hgp,KhavkineSchreiber}.
This is what we discuss below.

Where an ordinary principal bundle is a smooth manifold, hence a smooth set, with certain
extra structure, a higher principal bundle needs to be a smooth homotopy type.
\begin{center}
\begin{tabular}{|c|c|}
  \hline
  \multicolumn{2}{|c|}{\bf prequantum bundle}
  \\
  {\bf global in space} & {\bf local in spacetime}
  \\
  \hline
  smooth set & smooth homotopy type
  \\
  \hline
\end{tabular}
\end{center}

The generalization of geometry to higher geometry, where sets -- which may be thought of as homotopy 0-types --
are generalized to homotopy $p$-types
for higher $p$, had been envisioned in \cite{PursuingStacks} and a precise general framework has eventually been obtained in
\cite{Lurie}. This may be specialized to higher differential geometry \cite{dcct}, which is the context we will be using here.
The description of pre-quantum field theory local in spacetime is related to the description of
topological quantum field theory local-to-the-point known as ``extended'' or ``multi-tiered'' field theory
\cite{LurieTFT}\cite{Bergner}.
\begin{center}
\begin{tabular}{|c||c|c|}
  \hline
  & {\bf classical} & {\bf prequantum}
  \\
  \hline
  \hline
  {\bf global in space} & \begin{tabular}{c} symplectic geometry  \\ \hline classical mechanics\end{tabular} &
  \begin{tabular}{c} prequantum geometry  \\ \hline prequantum mechanics\end{tabular}
  \\
  \hline
  {\bf local in spacetime} &
  \begin{tabular}{c} diffiety geometry \\ \hline classical field theory \end{tabular} &
  \begin{tabular}{c} higher prequantum geometry \\ \hline prequantum field theory \end{tabular}
  \\
  \hline
\end{tabular}
\end{center}

Once we are in a context of higher geometry where higher prequantum bundles exist, several other
subtleties fall into place.

\begin{center}
\begin{tabular}{|c|c|}
  \hline
  \begin{tabular}{c}
    {\bf ingredient of }
    \\
    {\bf variational calculus}
  \end{tabular}
  &
  \begin{tabular}{c}
    {\bf new examples}
    \\
    {\bf available in }
    \\
    {\bf higher geometry}
  \end{tabular}
  \\
  \hline
  spacetime & orbifolds
  \\
  \hline
  field bundle & \begin{tabular}{c} instanton sectors of gauge fields,\\ integrated BRST complex\end{tabular}
  \\
  \hline
  prequantum bundle & \begin{tabular}{c} global Lagrangians for WZW-type models \end{tabular}
  \\
  \hline
\end{tabular}
\end{center}

A well-kept secret of the traditional formulation of variational calculus on jet bundles is that it does
not in fact allow to properly formulate global aspects of local gauge theory. Namely the only way to
make the fields of gauge theory be sections of a traditional field bundle is to fix the instanton number (Chern class)
of the gauge field configuration. The gauge fields then are taken to be connections on that fixed bundle.
One may easily see \cite{higherfieldbundles} that it is impossible to have a description of gauge fields as sections of a field bundle
that is both local and respects the gauge principle. However, this \emph{is} possible with a higher field bundle.
Indeed, the natural choice of the field bundle for gauge fields has as typical fiber the smooth moduli stack of
principal connections. Formulated this way, not only does the space of all field configurations then span all
instanton sectors, but it also has the gauge transformations between gauge field configurations built into it.
In fact it is then the globalized (integrated) version of what in the physics literature is known as the
(off-shell) BRST complex of gauge theory.

Moreover, in a context of higher geometry also spacetime itself is allowed to be a smooth homotopy type. This is relevant
at least in some hypothetical models of fundamental physics, which require spacetime to be an \emph{orbifold}. Mathematically,
an orbifold is a special kind of Lie groupoid, which in turn is a special kind of smooth homotopy 1-type.

\subsection{The principle of extremal action -- comonadically}
\label{Principleofextremalactioncomonadically}

Most field theories of relevance in theory and in nature are \emph{local Lagrangian field theories}
(and those that are not tend to be holographic boundary theories of those that are).
This means that their equations of motion are partial differential equations obtained as
Euler-Lagrange equations of a local variational principle. This is the modern incarnation of the time-honoured
\emph{principle of least action} (really: of extremal action).

We review how this is formalized, from a category-theoretic point of view that will point the way to
prequantum covariant field theory below in section \ref{elgerbesinintroduction}, \cite{KhavkineSchreiber}.

\medskip

The kinematics of a field theory is specified by
a smooth manifold $\Sigma$ of dimension $(p+1)$ and a smooth bundle $E$ over $\Sigma$. A \emph{field configuration}
is a smooth section of $E$. If we think of $\Sigma$ as being spacetime, then typical examples of fields
are the electromagnetic field or the field of gravity. But we may also think of $\Sigma$ as being the
worldvolume of a particle (such as the electron in the above examples)
or of a higher dimensional ``brane'' that propagates in a fixed background of such spacetime fields,
in which case the fields are the maps that encode a given trajectory.

The dynamics of a field theory is specified by an \emph{equation of motion}, a partial differential equation
for such sections. Since differential equations are equations among all the derivatives of such sections, we consider the spaces that these form:
the \emph{jet bundle} $J^\infty_\Sigma E $ is the bundle over $\Sigma$ whose fiber over a point $\sigma \in \Sigma$ is the space
of sections of $E$ over the infinitesimal neighbourhood $\mathbb{D}_\sigma$ of that point:
  $$
    \left\{
    \raisebox{20pt}{
    \xymatrix{
      & J^\infty_\Sigma E  \ar[d]
      \\
      \ast \ar@{-->}[ur] \ar[r]^-{\sigma} &  \Sigma
    }
    }
    \right\}
    \;\;\;\simeq\;\;\;
    \left\{
    \raisebox{20pt}{
    \xymatrix{
      & E \ar[d]
      \\
      \mathbb{D}_\sigma \ar@{-->}[ur] \ar@{^{(}->}[r] & \Sigma
    }
    }
    \right\}
  $$
Therefore every section $\phi$ of $E$
yields a section $j^\infty(\phi)$ of the jet bundle, given by $\phi$
and all its higher order derivatives.
$$
  \xymatrix{
     \mbox{jet bundle} & J^\infty_\Sigma E
     \ar[d]
     \\
     \mbox{field bundle} & E
     \ar[d]
     \\
     \mbox{\begin{tabular}{c} spacetime /\\ worldvolume\end{tabular}} & \Sigma \ar@{-->}@/_1.2pc/[u]_{\phi\;\;\;\;\;\;\;\mathrm{field}\;\mathrm{configuration}}
     \ar@{-->}@/_3pc/[uu]_{j^\infty(\phi)\;\;\;\;\;{{\mathrm{derivatives}\; \mathrm{of}} \atop {\mathrm{field}\;\mathrm{configuration}}}}
  }
$$
Accordingly, for $E, F$ any two smooth bundles over $\Sigma$, then a bundle map
$$
  \xymatrix{
    J^\infty E \ar[dr] \ar[rr]^f &&  F \ar[dl]
    \\
    & \Sigma
  }
$$
encodes a (non-linear) differential operator $D_f : \Gamma_\Sigma(E) \longrightarrow \Gamma_\Sigma(F)$ by
sending any section $\phi$ of $E$ to the section  $f \circ j^\infty(\phi)$ of $F$.
Under this identification, the composition of differential operators $D_g \circ D_f$ corresponds to the
\emph{Kleisli-composite} of $f$ and $g$, which is
$$
  \xymatrix{
     J^\infty E \ar[r] \ar[drr]
      &  J^\infty J^\infty E \ar[dr] \ar[r]^-{J^\infty f}
      & J^\infty F \ar[r]^-{g} \ar[d]
      & G \ar[dl]
    \\
    && \Sigma
  }
  \,.
$$
Here the first map is given by re-shuffling derivatives and gives the jet bundle construction $J^\infty_\Sigma$ the structure of a comonad
-- the \emph{jet comonad}.

Differential operators are so ubiquitous in the present context that it is convenient
to leave them notationally implicit and understand \emph{every} morphism of bundles $E \longrightarrow F$ to designate
a differential operator $D : \Gamma_\Sigma(E)\longrightarrow \Gamma_\Sigma(F)$. This is what we will do from now on.
Mathematically this means that we are now in the co-Kleisli category $\mathrm{Kl}(J^\infty_\Sigma)$ of the jet comonad
$$
  \mathrm{DiffOp}_\Sigma \simeq \mathrm{Kl}(J^\infty_\Sigma)
  \,.
$$

For example the de Rham differential is a differential operator from sections of $\wedge^{p}T^\ast \Sigma$
to sections of $\wedge^{p+1}T^\ast \Sigma$ and hence now appears as a morphism of the form
$d_H : \wedge^{p}T^\ast \Sigma \longrightarrow \wedge^{p+1}T^\ast \Sigma$.
With this notation, a \emph{globally defined local Lagrangian} for fields that are sections of some bundle $E$
over spacetime/worldvolume $\Sigma$ is simply a morphism of the form
$$
  L : E \longrightarrow \wedge^{p+1}T^\ast \Sigma
  \,.
$$
Unwinding what this means, this is a function that at each point of $\Sigma$ sends the value of field configurations and all their spacetime/worldvolume
derivatives at that point to a $(p+1)$-form on $\Sigma$ at that point. It is this pointwise local (in fact: infinitesimally local) dependence
that the term \emph{local} in \emph{local Lagrangian} refers to.

Notice that this means that $\wedge^{p+1} T^\ast \Sigma$
serves the role of the \emph{moduli space} of horizontal $(p+1)$-forms:
$$
  \Omega^{p+1}_H(E) = \mathrm{Hom}_{\mathrm{DiffOp}_\Sigma}(E,\wedge^{p+1}T^\ast \Sigma)
  \,.
$$

Regarding such $L$ for a moment as just a differential form on $J^\infty_\Sigma(E)$, we may apply the
de Rham differential to it. One finds that this uniquely decomposes as a sum of the form
\begin{equation}
  \label{differentialofLagrangian}
  d L = \mathrm{EL} - d_H (\Theta + d_H(\cdots))
  \,,
\end{equation}
for some $\Theta$ and for $\mathrm{EL}$ pointwise the
pullback of a vertical 1-form on $E$; such a differential form is called a \emph{source form}:
$\mathrm{EL} \in \Omega^{p+1,1}_S(E)$.
This particular source form is of paramount importance: the equation
$$
  \underset{v\in \Gamma(V E)}{\forall} j^\infty(\phi)^\ast \iota_v\mathrm{EL} = 0
$$
on sections $\phi \in \Gamma_\Sigma(E)$
is a partial differential equation, and
this is called the \emph{Euler-Lagrange equation of motion} induced by $L$. Differential equations arising this way
from a local Lagrangian are called \emph{variational}.

A little reflection reveals that this is indeed a re-statement of the traditional prescription of obtaining the Euler-Lagrange equations by
locally varying the integral over the Lagrangian and then applying partial integration to turn all variation of derivatives (i.e. of jets) of fields
into variation of the fields themselves. Here we do not consider this under the integral, and hence the boundary terms arising from the
would-be partial integration show up as the contribution $\Theta$.

We step back to say this more neatly. In general, a differential
equation on sections of a bundle $E$ is what characterizes the kernel of a differential operator. Now such
kernels do not in general exist in the Kleisli category $\mathrm{DiffOp}_\Sigma$
of the jet comonad that we have been using, but (as long as it is non-singular) it does exist
in the full Eilenberg-Moore category $\mathrm{EM}(J^\infty_\Sigma)$ of jet-coalgebras. In fact,
that category turns out \cite{Marvan86} to be equivalent to the category $\mathrm{PDE}_\Sigma$ whose objects are differential equations on sections of bundles,
and whose morphisms are solution-preserving differential operators :
$$
  \mathrm{PDE}_\Sigma \simeq \mathrm{EM}(J^\infty_\Sigma)
  \,.
$$
Our original category of bundles with differential operators between them sits in $\mathrm{PDE}_\Sigma$
as the full subcategory on the trivial differential equations, those for which every section is a solution.
This inclusion extends to (pre-)sheaves via left Kan extension; so we are now in the sheaf topos (e.g. \cite{Johnstone02})
$\mathrm{Sh}(\mathrm{PDE}_\Sigma)$.

And while source forms such as the Euler-Lagrange form $\mathrm{EL}$ are not representable in $\mathrm{DiffOp}_\Sigma$,
it is still true that for
$f : E\longrightarrow F$ any differential operator then the property of source forms
is preserved by precomposition with this map,
hence we have the induced pullback operation on source forms: $f^\ast : \Omega^{p+1,1}_S(F) \longrightarrow \Omega^{p+1,1}_S(E)$.
This means that source forms do constitute a presheaf on $\mathrm{DiffOp}_\Sigma$,
hence by left Kan extension an object in the topos over partial differential equations:
$$
 \mathbf{\Omega}^{p+1,1}_S
 \in
 \mathrm{Sh}(\mathrm{PDE}_\Sigma)
 \,.
$$

Therefore now the Yoneda lemma applies to say that $\mathbf{\Omega}^{p+1,1}_S$ is the moduli space for source forms in this context: a source form on $E$ is now just a morphism of the form $E \longrightarrow \mathbf{\Omega}^{p+1,1}_S$. Similarly, the Euler variational derivative is now incarnated as a morphism of moduli spaces of the form $\mathbf{\Omega}^{p+1}_H \stackrel{\delta_V}{\longrightarrow} \mathbf{\Omega}^{p+1,1}_S$, and applying the variational differential to a Lagrangian is now incarnated as the composition of the corresponding two modulating morphisms
$$
  \mathrm{EL} := \delta_V L : E \stackrel{L}{\longrightarrow} \mathbf{\Omega}^{p+1}_H \stackrel{\delta_V}{\longrightarrow} \mathbf{\Omega}^{p+1,1}_S
$$.
Finally, and that is the beauty of it, the Euler-Lagrange differential equation $\mathcal{E}$ induced by the Lagrangian $L$ is now incarnated simply as the kernel of $\mathrm{EL}$:\footnote{\label{BVremark}That kernel always exists in the topos $\mathrm{Sh}(\mathrm{PDE}_\Sigma)$, but it may not be representable
by an actual sub\emph{manifold} of $J^\infty_\Sigma E$ if there are singularities. Without any changes to the general discussion
one may replace the underlying category of manifolds by one of ``derived manifolds'' formally dual to ``BV-complexes'',
where algebras of smooth functions are replaced by higher homotopy-theoretic algebras, for instance by graded algebras equipped with a differential $d_{\mathrm{BV}}$.}
$\mathcal{E} \stackrel{\mathrm{ker}(\mathrm{EL})}{\hookrightarrow} E$.

In summary, from the perspective of the topos over partial differential equations,
the traditional structure of local Lagrangian variational field theory
is captured by the following diagram:
\begin{center}
\begin{tabular}{|c|}
\hline
{\bf classical variational local field theory}
\\
\hline
$
  \xymatrix@C=40pt{
     & &
     {{{\mathrm{Euler-Lagrange} \atop \mathrm{equation}}} \atop {\mathcal{E}}}
     \ar@{^{(}->}[dr]|-{\mathrm{ker}(\mathrm{EL})}
     &&\mathbf{\Omega}^{p+1}_H
      \ar[dr]^{\delta_V}_{\mathrm{variational}\atop \mathrm{differential}}
     \\
     {\mathrm{spacetime/}
     \atop
     \mathrm{worldvolume}}
     \!\!\!\!\!\!\!
     &
     \Sigma
     \ar@{-->}[ur]^-{\mathrm{solution}}
     \ar[rr]^-\phi_-{\mathrm{field}\;\mathrm{configuration}}
     &&
     E
     \ar[ur]^-{L}_-{\mathrm{local} \atop \mathrm{Lagrangian}} \ar[rr]_{\mathrm{EL}}
     &&
     \mathbf{\Omega}^{p+1,1}_S
  }
$
\\
\hline
\end{tabular}
\end{center}

So far, all this assumes that there is a globally defined Lagrangian form $L$ in the first place,
which is not in fact the case for all field theories of interest. Notably it is in general not the case
for field theories of higher WZW type. However, as the
above diagram makes manifest, for the purpose of identifying the classical equations of motion,
it is only the variational Euler differential  $\mathrm{EL} := \delta_V L$ that matters.
But if that is so, the variation being a local operation, then we should still call equations
of motion $\mathcal{E}$ \emph{locally variational} if there is a cover $\{U_i \to E\}$ and
Lagrangians on each patch of the cover $L : U_i \to \mathbf{\Omega}^{p+1}_H$, such that there
is a globally defined Euler-Lagrange form $\mathrm{EL}$ which restricts on each patch $U_i$ to
the variational Euler-derivative of $L_i$.
Such \emph{locally variational} classical field theory is discussed in  \cite{AndersonDuchamp, FPW-locvar}.
\begin{center}
\begin{tabular}{|c|}
\hline
{\bf classical locally variational local field theory}
\\
\hline
$
  \xymatrix@C=40pt{
    &
    &
    {{{\mathrm{Euler-Lagrange} \atop \mathrm{equation}}} \atop {\mathcal{E}}}
    \ar[dr]|{\mathrm{ker}(\mathrm{EL})}
    &
    \coprod_{i \in I} U_i
    \ar@{->>}[d]
    \ar[r]^-{ {(L_i)_{i \in I}}}_{\,\,\,{\mathrm{locally}\,\mathrm{defined}} \atop {\mathrm{local}\,\mathrm{Lagrangians}}}
    &
    \mathbf{\Omega}^{p+1}_H
    \ar[dr]^{\delta_V}_{\mathrm{variational} \atop \mathrm{differential}}
    \\
     {\mathrm{spacetime/}
     \atop
     \mathrm{worldvolume}}
     \!\!\!\!\!\!\!
     &
    \Sigma
    \ar@{-->}[ur]^{\mathrm{solution}}
    \ar[rr]^-{\phi}_{\mathrm{field}\,\mathrm{configuration}}
    &&
    E
    \ar[rr]_{\mathrm{EL}}
    &&
    \mathbf{\Omega}^{p+1,1}_S
  }
  \,.
$
\\
\hline
\end{tabular}
\end{center}
But when going beyond classical field theory, the Euler-Lagrange equations of motion $\mathcal{E}$ are not the
end of the story. As one passes to the quantization of a classical field theory, there are further global
structures on $E$ and on $\mathcal{E}$ that are relevant. These are the action functional and the
Kostant-Souriau prequantization of the covariant phase space. For these one needs to promote a patchwise system of local Lagrangians to a
$p$-gerbe connection. This we turn to now.

\subsection{The global action functional -- cohomologically}
\label{globalactionfunctionalinintroduction}

For a globally defined Lagrangian $(p+1)$-form $L_{p+1}$ on the jet bundle of a given field bundle, then the value of the action functional
on a compactly supported field configuration $\phi$ is simply the integral
$$
  S(\phi) := \int_\Sigma j^\infty(\phi)^\ast L_{p+1}
$$
of the Lagrangian, evaluated on the field configuration, over the spacetime/worldvolume $\Sigma$.

But when Lagrangian forms are only defined patchwise on a cover $\{U_i \to E\}_i$ as in the locally
variational field theories mentioned above in \ref{Principleofextremalactioncomonadically},
then there is no way to globally make invariant sense of the action functional!
As soon as sections pass through several patches, then making invariant sense of such an integral requires more
data, in particular it requires more than just a compatibility condition of the locally
defined Lagrangian forms on double intersections.

The problem of what exactly it takes to define global integrals of locally defined forms
has long found a precise answer in mathematics, in the theory of ordinary differential cohomology.
This has several equivalent incarnations, the one closest to classical constructions in
differential geometry involves Cech cocycles:
one first needs to choose on each intersection $U_{i j}$ of two patches $U_i$ and $U_j$ a differential form $(\kappa_{p})_{i j}$ of degree
$p$, whose horizontal de Rham differential is the difference between the two Lagrangians restricted to that intersection
$$
  (L_{p+1})_j - (L_{p+1})_i = d_H (\kappa_p)_{i j} \;\;\;\;\mbox{on $U_{i j}$}
  \,.
$$
Then further
one needs to choose on each triple intersection $U_{i j k}$ a horizontal differential form $(\kappa_{p-1})_{i j k}$ of degree $p-1$
whose horizontal differential is the alternating sum of the relevant three previously defined forms:
$$
  (\kappa_p)_{j k} - (\kappa_p)_{i k} + (\kappa_p)_{i j} = d_H (\kappa_{p-1})_{i j k }
  \;\;\;\;\;
  \mbox{on $U_{i j k}$}
  \,.
$$
And so on. Finally on $(p+2)$-fold intersections one needs to choose smooth functions $(\kappa_0)_{i_0 \cdots i_{p+1}}$ whose
horizontal differential is the alternating sum of $(p+2)$ of the previously chosen horizontal 1-forms, and, moreover,
on $(p+3)$-fold intersections the alternating sum of these functions has to vanish.
Such a tuple $(\{U_i\}; \{(L_{p+1})_i\},\, \{(\kappa_p)_{i j}\}, \cdots)$ is a horizontal \emph{Cech-de Rham cocycle}
in degree $(p+2)$.

Given such, there is then a way to make sense of global integrals: one chooses a triangulation
subordinate to the given cover, then integrates the locally defined Lagrangians $(L_{p+1})_i$ over the
$(p+1)$-dimensional cells of the
triangulation, integrates the gluing forms $(\kappa_p)_{i j}$ over the $p$-dimensional faces of these cells, the
higher gluing forms $(\kappa_p)_{i j k}$ over the faces of these faces, etc., and sums all this up.
This defines a global action functional, which we may denote by
$$
  S(\phi) := \int_{\Sigma} j^\infty(\phi)^\ast(\{{L}_i\},\, \{(\kappa_p)_{i j}\}, \cdots)
  \,.
$$
This horizontal Cech-de Rham cocycle data is subject to fairly evident coboundary relations (gauge transformations)
that themselves are parameterized by systems $(\rho_\bullet)$ of $(p+1)-k$-forms on $k$-fold intersections:
$$
  \begin{aligned}
    L_i & \mapsto L_i + d_H (\rho_{p})_i
    \\
    (\kappa_{p})_{i j} & \mapsto (\kappa_p)_{i j} + d_H (\rho_{p-1})_{i j} + (\rho_p)_j - (\rho_p)_i
    \\
    \vdots
  \end{aligned}
$$
The definition of the global integral as above is preserved by these gauge transformations.
This is the point of the construction: if we had only integrated
the $(L_{p+1})_i$ over the cells of the triangulation without the contributions of the gluing forms
$(\kappa_{\bullet})$, then the resulting sum would not be invariant under the operation of shifting the
Lagrangians by horizontally exact terms (``total derivatives'') $L_i \mapsto L_i + d_H \rho_i$.

It might seem that this solves the problem. But there is one more subtlety: if the action
functional takes values in the real numbers, then the functions assigned to $(p+2)$-fold intersections
of patches are real valued, and then one may show that there exists a gauge transformation as above that collapses
the whole system of forms back to one globally defined Lagrangian form after all.
In other words: requiring a globally well-defined $\mathbb{R}$-valued action functional forces the
field theory to be globally variational, and hence rules out all locally variational field theories,
such as those of higher WZW-type.

But there is a simple way to relax the assumptions such that this restrictive conclusion is evaded.
Namely we may pick a discrete subgroup $\Gamma \hookrightarrow \mathbb{R}$ and relax the condition
on the functions $(\kappa_0)_{i_0 \cdots i_{p+1}}$ on $(p+2)$-fold intersections to the demand that on $(p+3)$-fold intersections
their alternating sum vanishes only modulo $\Gamma$. A system of $(p+2)-k$-forms on $k$-fold intersections
with functions regarded modulo $\Gamma$ this way is called a (horizontal) \emph{$\mathbb{R}/\Gamma$-Cech-Deligne cocycle} in
degree$(p+2)$.

For instance for field theories of WZW-type,
as above, we may take $\Gamma$ to be the discrete group of periods of the closed form $\omega$. Then
one may show that a lift of $\omega$ to a Cech-Deligne cocycle of local Lagrangians with gluing data
always exists. Indeed in general more than one inequivalent lift exists. The choice of these
lifts is a choice of prequantization.

However, modding out a discrete subgroup $\Gamma$ this ways also affects the induced global integral:
that integral itself is now only defined modulo the subgroup $\Gamma$:
$$
  S(\phi) = \int_{\Sigma} j^\infty(\phi)^\ast(\{(L_{(p+1)})_i\},\, \{(\kappa_{p})_{i j}\}, \cdots) \;\;\;\; \in \mathbb{R}/\Gamma
  \,.
$$

Now, there are not that many discrete subgroups of $\mathbb{R}$. There are the subgroups isomorphic to the integers, and then
there are dense subgroups, which make the quotient $\mathbb{R}/\Gamma$ ill behaved. Hence we
focus on the subgroup of integers.

The space of group inclusions $i : \mathbb{Z} \hookrightarrow \mathbb{R}$
is parameterized by a non-vanishing real number $2\pi \hbar \in \mathbb{R}-\{0\}$, given by $i : n \mapsto 2\pi \hbar n$.
The resulting quotient $\mathbb{R}/_{\!\hbar}\mathbb{Z}$ is isomorphic to the circle group $SO(2)\simeq U(1)$,
exhibited by the short exact exponential sequence
\begin{equation}
  \label{thexponentialsequenceforprequantization}
  \xymatrix{
    0
    \ar[r]
    &
    \mathbb{Z}
    \ar@{^{(}->}[rr]^-{2\pi \hbar(-)}
    &&
    \mathbb{R}
    \ar@{->>}[rr]^-{\exp(\tfrac{i}{\hbar}(-))}
    &&
    U(1)
    \ar[r]
    &
    0
  }
\end{equation}

Hence in the case that we take $\Gamma := \mathbb{Z}$, then we get locally variational field theories whose action
functional is well defined modulo $2\pi\hbar$. Equivalently the exponentiated action functional
is well defined as a function with values in $U(1)$:
$$
  \exp(\tfrac{i}{\hbar}S(\phi)) \; \in \; U(1) \simeq \mathbb{R}/_{\!\hbar}\mathbb{Z}
  \,.
$$

The appearance of Planck's constant $\hbar$ here signifies that requiring a locally variational
classical field theory to have a globally well-defined action functional is related to preparing it
for quantization. Indeed, if we consider the above discussion for $p= 0$, then the
above construction reproduces equivalently Kostant-Souriau's concept of
geometric \emph{pre-quantization}. Accordingly we may think of the Cech-Deligne cocycle data
$(\{U_i\};\{(L_{p+1})_i\}, \{(\kappa_p)_{i j}\},\cdots)$ for general $p$ as encoding
 \emph{higher pre-quantum geometry}.

\medskip

Coming back to the formulation of variational calculus in terms of diagrammatics in the
sheaf topos $\mathrm{Sh}(\mathrm{PDE}_\Sigma)$ as in section \ref{Principleofextremalactioncomonadically} above,
what we, therefore, are after is a context
in which the moduli object $\mathbf{\Omega}^{p+1}_H$ of globally defined horizontal $(p+1)$-forms
may be promoted to an object which we are going to denote $\mathbf{B}^{p+1}_H(\mathbb{R}/_{\!\hbar}\mathbb{Z})_{\mathrm{conn}}$
and which modulates horizontal Cech-Deligne cocycles, as above.

Standard facts in homological algebra and sheaf cohomology say that in order to achieve this we are to pass
from the category of sheaves on $\mathrm{PDE}_\Sigma$ to the ``derived category'' over $\mathrm{PDE}_\Sigma$.
We may take this to be the category of chain complexes of sheaves, regarded as a homotopy theory by understanding
that a morphism of sheaves of chain complexes that is locally a quasi-isomorphism counts as a weak equivalence.
In fact we may pass a bit further. Using the Dold-Kan correspondence to identify chain complexes in
non-negative degree with simplicial abelian groups, hence with group objects in Kan complexes,
we think of sheaves of chain complexes as special cases
of sheaves of Kan complexes \cite{Brown}:
$$
  \xymatrix{
    \mathrm{Sh}(\mathrm{PDE}_\Sigma, \mathrm{ChainCplx})
    \ar[rr]^-{\mathrm{Dold-Kan}}
  &&
  \mathrm{Sh}(\mathrm{PDE}_\Sigma,\mathrm{KanCplx})
  \simeq
  \mathrm{Sh}_\infty(\mathrm{PDE}_\Sigma)
  }
  \,.
$$
In such a homotopy-theoretically enlarged context we find the sheaf of chain complexes
that is the $(p+1)$-truncated de Rham complex with the integers included into the 0-forms:
$$
  \mathbf{B}^{p+1}_H( \mathbb{R}/_{\!\hbar}\mathbb{Z})_{\mathrm{conn}}
  :=
  [\mathbb{Z} \stackrel{2\pi \hbar}{\hookrightarrow} \mathbf{\Omega}^0_H \stackrel{d_H}{\to} \mathbf{\Omega}^2_H \stackrel{d_H}{\to} \cdots \stackrel{d_H}{\to}\mathbf{\Omega}^{p+1}_H]
  \,.
$$
This chain complex of sheaves is known as the (horizontal) \emph{Deligne complex} in degree $(p+2)$.
The horizontal Cech-Deligne cocycles that we saw before are
exactly the cocycles in the sheaf hypercohomology with coefficients in the horizontal Deligne complex.
Diagrammatically in $\mathrm{Sh}_\infty(\mathrm{PDE}_\Sigma)$ these are simply morphisms
$\mathbf{L} : E \to \mathbf{B}^{p+1}(\mathbb{R}/_{\!\hbar}\mathbb{Z})$ from the field
bundle to the Deligne moduli:
$$
  \left\{ (\{U_i\}, \{(L_{p+1})_i\}, \{(\kappa_p)_{i j}\}, \cdots ) \right\}
  \;\;\;
  \simeq
  \;\;\;
  \left\{
    E \stackrel{\mathbf{L}}{\longrightarrow} \mathbf{B}^{p+1}_H( \mathbb{R}/_{\!\hbar}\mathbb{Z})_{\mathrm{conn}}
  \right\}
  \,.
$$
This is such that a smooth homotopy between two maps to the Deligne moduli is equivalently a
coboundary of Cech cocycles:
$$
  \hspace{-.6cm}
  \left\{
     \raisebox{34pt}{
     \xymatrix{
        (\{U_i\}, \{(L_{p+1})_i\}, \{(\kappa_p)_{i j}\}, \cdots )
        \ar[dd]^{\left( \{U_i\}, \{(\rho_p)_i\}, \{(\rho_{p-1})_{i j}\}, \cdots \right)}
        \\
        \\
        (\{U_i\}, \{(L_{p+1})_i + d_h (\rho_p)_i \}, \{(\kappa_p)_{i j} + d_H (\rho_{p-1})_{i j} + (\rho_p)_j - (\rho_p)_i\}, \cdots )
     }
     }
  \right\}
  \;\;\;\;
  \simeq
  \;\;\;\;
  \left\{
    \xymatrix{
       E
       \ar@/^2pc/[rr]^{\mathbf{L}}_{\ }="s"
       \ar@/_2pc/[rr]_{\mathbf{L}'}^{\ }="t"
       &&
       \mathbf{B}^{p+1}_H(\mathbb{R}/_{\!\hbar}\mathbb{Z})
       \ar@{=>} "s"; "t"
    }
  \right\}
$$
Evidently, the diagrammatics serves as a considerable compression of data. In the following all
diagrams we displays are filled with homotopies as on the right above, even if we do not always make them
notationally explicit.

There is an evident morphism
$\mathbf{\Omega}^{p+1}_{H} \longrightarrow \mathbf{B}^{p+1}_H( \mathbb{R}/_{\!\hbar}\mathbb{Z})_{\mathrm{conn}}$
which includes the globally defined horizontal forms into the horizontal Cech-Deligne cocycles
(regarding them as Cech-Deligne cocycles with all the gluing data $(\kappa_\bullet)$ vanishing).
This morphism turns out to be the analog of a covering map in traditional differential geometry,
it is an atlas of smooth stacks:
\begin{center}
\begin{tabular}{|c|c|}
  \hline
  \begin{tabular}{c}
    {\bf atlas of}
    \\
    {\bf a smooth manifold}
  \end{tabular}
  &
  \begin{tabular}{c}
    {\bf atlas of}
    \\
    {\bf a smooth $\infty$-groupoid}
  \end{tabular}
  \\
  \hline
  $\xymatrix{
     \coprod_i U_i
     \ar@{->>}[d]
     \\
     E
    }$
    &
    $\xymatrix{
      \mathbf{\Omega}_H^{p+1}
      \ar@{->>}[d]
      \\
      \mathbf{B}^{p+1}_{H}(\mathbb{R}/_{\!\hbar}\mathbb{Z})_{\mathrm{conn}}
    }$
  \\
  \hline
\end{tabular}
\end{center}

Via this atlas, the Euler variational differential $\delta_V$ on horizontal forms that we have seen in
section \ref{Principleofextremalactioncomonadically} extends to horizontal Deligne coefficients to induce a curvature map on these coefficients.
$$
  \xymatrix{
    \mathbf{\Omega}^{\bullet\leq p+1}
    \ar@{->>}[drr]
    \ar[rrrr]^{\delta_V}
    && &&
    \mathbf{\Omega}^{p+1}_{S,\mathrm{cl}}
    \\
    &&
    \mathbf{B}^{p+1}_H( \mathbb{R}/_{\!\hbar}\mathbb{Z} )_{\mathrm{conn}}
    \ar[urr]_{\mathrm{curv}}
    &&
  }
$$
A \emph{prequantization} of a source form $\mathrm{EL}$ is a lift through this curvature map,
hence a horizontal Cech-Deligne cocycle of locally defined local Lagrangians for $\mathrm{EL}$, equipped with gluing data:
$$
  \xymatrix{
    && \mathbf{B}^{p+1}_H(\mathbb{R}/_{\!\hbar}\mathbb{Z})_{\mathrm{conn}}
    \ar[d]^{\mathrm{curv}}
    \\
    E
    \ar[rr]_-{\mathrm{EL}}
    \ar@{-->}[urr]^{\mathbf{L}}
    &&
    \mathbf{\Omega}^{p+1,1}_S
  }
  \,.
$$

Hence in conclusion we find that in the $\infty$-topos $\mathrm{Sh}_\infty(\mathrm{PDE}_\Sigma)$ the diagrammatic picture of prequantum local field theory
is this:

\medskip
\begin{tabular}{|c|}
\hline
{\bf prequantum local field theory}
\\
\hline
$
  \xymatrix@C=40pt{
    &&
    {{{\mathrm{Euler-Lagrange} \atop \mathrm{equation}}} \atop {\mathcal{E}}}
    \ar[dr]|{\mathrm{ker}(\mathrm{EL})}
    &
    \coprod_{i \in I} U_i
    \ar@{->>}[d]^{\ }="t"
    \ar[r]^-{(L_i)_{i \in I}}_{\, {\mathrm{locally}\,\mathrm{defined}} \atop {\mathrm{local}\, \mathrm{Lagrangians}}  }
    &
    \mathbf{\Omega}^{p+1}_H
    \ar@{->>}[d]_{\ }="s"
    \ar[dr]^{\delta_V}
    \\
     {\mathrm{spacetime/}
     \atop
     \mathrm{worldvolume}}
     \!\!\!\!\!\!\!
     &
    \Sigma
    \ar@{-->}[ur]^{\mathrm{solution}}
    \ar[rr]^-{\phi}_{\mathrm{field}\,\mathrm{configuration}}
    &&
    E
    \ar@{-->}[r]|-{\mathbf{L}}^-{ \mathrm{Euler-Lagrange} \atop \mathrm{p-gerbe} }
    \ar@/_1.3pc/[rr]_{\mathrm{EL}}
    &
    \mathbf{B}^{p+1}_H(\mathbb{R}/_{\!\hbar}\mathbb{Z})_{\mathrm{conn}}
    \ar[r]^-{\mathrm{curv}}
    &
    \mathbf{\Omega}^{p+1,1}_S
    %
  }
$
\\
\hline
\end{tabular}

\medskip

In summary, comparing this to the diagrammatics for variational and locally variational
classical field theory which we discussed in section \ref{Principleofextremalactioncomonadically},
we have the following three levels of description of local Lagrangian field theory:
\begin{center}
\begin{tabular}{|c|c|c|}
  \hline
  \multicolumn{3}{|c|}{{\bf local Lagrangian field theory}}
  \\
  \hline
  \multicolumn{2}{|c|}{\hspace{-1cm}classical} & pre-quantum
  \\
  variational & locally variational &
  \\
  \hline
  $
  \raisebox{-5pt}{
  \xymatrix{
    &
    \mathbf{\Omega}^{p+1}_H
    \ar[dr]^{\delta_V}
    \\
    E
    \ar[ur]^{L}
    \ar[rr]_{\mathrm{EL}}
    &&
    \mathbf{\Omega}^{p+1,1}_S
  }
  }
 $
  &
  $
  \raisebox{-5pt}{
  \xymatrix{
    \coprod_{i \in I} U_i
    \ar@{->>}[d]
    \ar[r]^-{(L_i)_{i \in I}}
    &
    \mathbf{\Omega}^{p+1}_H
    \ar[dr]^{\delta_V}
    \\
    E
    \ar[rr]_{\mathrm{EL}}
    &&
    \mathbf{\Omega}^{p+1,1}_S
  }
  }
 $
  &
  $\xymatrix{
    \coprod_{i \in I} U_i
    \ar@{->>}[d]
    \ar[r]^-{(L_i)_{i \in I}}_>>>{\ }="s"
    &
    \mathbf{\Omega}^{p+1}_H
    \ar@{->>}[d]
    \ar[dr]^{\delta_V}
    \\
    E
    \ar[r]|-{\mathbf{L}}^<<<{\ }="t"
    \ar@/_1.3pc/[rr]_{\mathrm{EL}}
    &
    \mathbf{B}^{p+1}_H(\mathbb{R}/_{\!\hbar}\mathbb{Z})_{\mathrm{conn}}
    \ar[r]^-{\mathrm{curv}}
    &
    \mathbf{\Omega}^{p+1,1}_S
    \ar@{=>}|\simeq "s"; "t"
  }$
  \\
  \hline
  \multicolumn{3}{|l|}{
    \begin{tabular}{rl}
      $L$: & Lagrangian horizontal form (integrand in locally defined action functional)
      \\
      $\mathbf{L}$: & Euler-Lagrange horizontal $p$-gerbe connection (integrand in globally defined action functional)
      \\
      $\delta_V$: & Euler variational differential
      \\
      $\mathrm{EL}$: & Euler-Lagrange differential source form
      \\
      $\mathcal{E} := \mathrm{ker}(\mathrm{EL})$: & Euler-Lagrange partial differential equations of motion
    \end{tabular}
  }
  \\
  \hline
\end{tabular}
\end{center}

\subsection{The covariant phase space -- transgressively}
\label{elgerbesinintroduction}

The Euler-Lagrange $p$-gerbes discussed above are singled out as being exactly the
right coherent refinement of locally defined local Lagrangians that may be integrated over a $(p+1)$-dimensional spacetime/worldvolume
to produce a \emph{function}, the action functional.
In a corresponding manner there are further refinements of locally defined Lagrangians by
differential cocycles  that are adapted to integration over submanifolds of $\Sigma_{p+1}$ of positive codimension.
In codimesion $k$ these will yield not functions, but $(p-k)$-gerbes.

We consider this now for codimension 1 and find the covariant phase space of a locally variational field theory
equipped with its canonical (pre-)symplectic structure and equipped with a Kostant-Souriau prequantization of that.

\medskip

First consider the process of transgression in general codimension.

Given a smooth manifold $\Sigma$, then the mapping space $[\Sigma,\mathbf{\Omega}^{p+2}]$
into the smooth moduli space of $(p+2)$-forms
is the smooth space defined by the property that for any other smooth manifold $U$, there is a
natural identification
$$
  \left\{
    U \longrightarrow [\Sigma, \mathbf{\Omega}^{p+2}]
  \right\}
  \;\;
  \simeq
  \;\;
  \Omega^{p+2}(U \times \Sigma)
$$
of smooth maps into the mapping space with smooth $(p+2)$-forms on the product manifold $U \times \Sigma$.

Now suppose that $\Sigma = \Sigma_{d}$ is an oriented closed smooth manifold of dimension $d$. Then
there is the fiber integration of differential forms on $U \times \Sigma$ over $\Sigma$ (e.g \cite{BottTu}), which gives a map
$$
  \underset{(U \times \Sigma_{d}) / U}{\int} : \Omega^{p+2}(U \times \Sigma_{d}) \longrightarrow \Omega^{p+2-d}(U)
  \,.
$$
This map is natural in $U$, meaning that it is compatible with pullback of differential forms along any
smooth function $U_1 \to U_2$. This property is precisely what is summarized by saying that the
fiber integration map constitutes a morphism in the sheaf topos of the form
$$
  \int_\Sigma \;:\; [\Sigma_{d}, \mathbf{\Omega}^{p+2}] \longrightarrow \mathbf{\Omega}^{p+2-d}
  \,.
$$
This provides an elegant means to speak about transgression. Namely given a differential form $\alpha \in \Omega^{p+2}(X)$
(on any smooth space $X$)
modulated by a morphism $\alpha : X \longrightarrow \mathbf{\Omega}^{p+2}$, then its transgression to the mapping space
$[\Sigma,X]$ is simply the form in $\Omega^{p+2-d}([\Sigma,X])$ which is modulated by the composite
$$
  \int_{\Sigma} [\Sigma, -]
  \;:\;
  [\Sigma,X]
  \stackrel{[\Sigma,\alpha]}{\longrightarrow}
  [\Sigma, \mathbf{\Omega}^{p+2}]
  \stackrel{\int_{\Sigma}}{\longrightarrow}
  \mathbf{\Omega}^{p+2-d}
$$
of the fiber integration map above with the image of $\alpha$ under the functor $[\Sigma,-]$ that forms
mapping spaces out of $\Sigma$.

Moreover, this statement has a prequantization \cite[2.8]{FiorenzaSatiSchreiberIV}: the fiber integration of curvature
forms lifts to a morphism of differential cohomology coefficients
$$
  \int_\Sigma \;:\;
  [\Sigma, \mathbf{B}^{p+1}(\mathbb{R}/_{\!\hbar}\mathbb{Z})_{\mathrm{conn}}]
   \longrightarrow
   \mathbf{B}^{p+1-d} (\mathbb{R}/_{\!\hbar}\mathbb{Z})_{\mathrm{conn}}
$$
and hence the transgression of a $p$-gerbe $\nabla : X \longrightarrow \mathbf{B}^{p+1}(\mathbb{R}/_{\!\hbar}\mathbb{Z})_{\mathrm{conn}}$
(on any smooth space $X$) to the mapping space $[\Sigma, X]$ is given by the composite $\int_\Sigma \circ [\Sigma,-]$
$$
  \int_\Sigma [\Sigma,-]
  \;:\;
  [\Sigma,X] \stackrel{[\Sigma,\nabla]}{\longrightarrow} [\Sigma,\mathbf{B}^{p+1}(\mathbb{R}/_{\!\hbar}\mathbb{Z})_{\mathrm{conn}}]
    \stackrel{\int_\Sigma}{\longrightarrow}
    \mathbf{B}^{p+1-d}(\mathbb{R}/_{\!\hbar}\mathbb{Z})_{\mathrm{conn}}
  \,.
$$

All this works verbatim also in the context of PDEs over $\Sigma$. For instance if $L : E \longrightarrow \mathbf{\Omega}^{p+1}_H$
is a local Lagrangian on (the jet bundle of) a field bundle $E$ over $\Sigma_{p+1}$ as before, then the action functional that it induces,
as in section \ref{globalactionfunctionalinintroduction},
is the transgression to $\Sigma_{p+1}$:
$$
  S : [\Sigma,E]_\Sigma \stackrel{[\Sigma,L]_\Sigma}{\longrightarrow} [\Sigma, \mathbf{\Omega}^{p+1}_H]_\Sigma
   \stackrel{\int_\Sigma}{\longrightarrow} \mathbf{\Omega}^0
   \,.
$$
But now the point is that we have the analogous construction in higher codimension $k$, where the Lagrangian does not
integrate to a function (a differential 0-form) but to a differential $k$-form.
And all this goes along with passing from globally defined differential forms to Cech-Deligne cocycles.

To apply this for codimension $k = 1$,
consider now $p$-dimensional submanifolds $\Sigma_p \hookrightarrow \Sigma$ of spacetime/worldvolume.
We write $N^\infty_\Sigma \Sigma_p$ for the infinitesimal normal neighbourhood of $\Sigma_p$ in
$\Sigma$.
In
practice one is often, but not necessarily, interested in $\Sigma_p$ being a
\emph{Cauchy surface}, which means that the induced restriction map
$$
  [\Sigma_{p+1},\mathcal{E}]
  \longrightarrow
  [N^\infty_\Sigma \Sigma_p, \mathcal{E}]
$$
(from field configurations solving the equations of motion on all of $\Sigma$ to normal jets of
solutions on $\Sigma_p$) is an equivalence.
An element in the solution space $[\Sigma_{p+1},\mathcal{E}]$ is a \emph{classical state} of the physical
system that is being described, a classical trajectory of a field configuration over all of spacetime. Its image in
$[\Sigma_{p+1},\mathcal{E}]$ is the restriction of that field configuration and of all its derivatives to
$\Sigma_p$.

In many -- but not in all -- examples of interest, classical trajectories are fixed once
their first order derivatives over a Cauchy surface is known. In these cases the phase space
may be identified with the cotangent bundle of the space of field configurations on the
Cauchy surface
$$
  [N^\infty_\Sigma \Sigma_p, \mathcal{E}]
  \simeq
  T^\ast [\Sigma_p, E]
  \,.
$$
The expression on the right is often taken as the definition of \emph{phase spaces}.
But since the equivalence with the left hand side does not hold generally,
we will not restrict attention to this simplified case and instead consider the solution space $[\Sigma,\mathcal{E}]_\Sigma$
as the phase space. To emphasize this more general point of view, one sometimes speaks of the
\emph{covariant phase space}. Here ``covariance'' refers to invariance under the action of the diffeomorphism
group of $\Sigma$, meaning here that no space/time split in the form of a choice of Cauchy surface is made
(or necessary) to define the phase space, even if a choice of Cauchy surface is possible and potentially useful for parameterizing
phase space in terms of initial value data.

Now it is crucial that the covariant phase space $[\Sigma,\mathcal{E}]_\Sigma$ comes equipped with further
geometric structure which remembers that this is not just any old space, but the space of solutions of a
locally variational differential equation.

To see how this comes about, let us write $(\mathbf{\Omega}^{p+1}_{\mathrm{cl}})_{\Sigma}$ for the
moduli space of all closed $p+1$-forms on PDEs.
This is to mean that if $E$ is a bundle over $\Sigma$, and regarded as representing the space of solutions
 of the trivial PDE on sections of $E$, then morphisms $E \longrightarrow (\mathbf{\Omega}^{p+1})_{\Sigma}$ are equivalent
to closed differential $(p+2)$-forms on the jet bundle of $E$.
$$
  \left\{
    E \longrightarrow (\mathbf{\Omega}^{p+1})_\Sigma
  \right\}
  \;\;\simeq\;\;
  \Omega^{p+1}_{\mathrm{cl}}(J^\infty_\Sigma E)
  \,.
$$
The key now is that there is a natural filtration on these differential forms adapted to spacetime codimension.
This is part of a bigrading structure on differential forms on jet bundles known as
the \emph{variational bicomplex} \cite{Anderson-book}.
In its low stages it looks as follows:
$$
  \xymatrix{
    (\mathbf{\Omega}^{p+1})_{\Sigma}
    \ar[r]
    &
    \cdots
    \ar[r]
    &
    \mathbf{\Omega}^{p+1,1}_{S} \oplus \mathbf{\Omega}^{p,2}
    \ar[r]
    &
    \mathbf{\Omega}^{p+1,1}_S
  }
  \,.
$$
The lowest item here is what had concerned us in section \ref{Principleofextremalactioncomonadically}
and \ref{globalactionfunctionalinintroduction}, it is the moduli of $p+2$-forms which have $p+1$ of their legs along
spacetime/worldvolume $\Sigma$ and whose remaining vertical leg along the space of local field configurations
depends only on the field value itself, not on any of its derivatives. This was precisely the correct recipient
of the variational curvature, hence the variational differential of horizontal $(p+1)$-forms representing local Lagrangians.

But now that we are moving up in codimension, this coefficient will disappear, as these forms do not contribute
when integrating just over $p$-dimensional hypersurfaces. The correct coefficient for that case is instead clearly
$\mathbf{\Omega}^{p,2}$, the moduli space of those $(p+2)$-forms on jet bundles which have $p$ of their
legs along spacetime/worldvolume, and the remaining two along the space of local field configurations.
(There is a more abstract way to derive this filtration from first principles,
and which explains why we have restriction to ``source forms'' (not differentially depending on the jets),
indicated by the subscript, only in the bottom row. But for the moment we just take that little subtlety for granted.)

So $\mathbf{\Omega}^{p,2}$ is precisely the space of those $(p+2)$-forms on the jet bundle that
become (pre-)symplectic 2-forms on the space of field configurations once evaluated on a $p$-dimensional
spatial slice $\Sigma_p$ of spacetime $\Sigma_{p+1}$. We may think of this as a \emph{current} on spacetime with values in
2-forms on fields.

Indeed, there is a \emph{canonical} such \emph{presymplectic current} for every locally variational
field theory \cite{Zuckerman}\cite{Kh-covar}. To see this, we ask for a lift of the purely horizontal locally defined Lagrangian $L_i$
through the variational bicomplex to a $(p+1)$-form on the jet bundle whose curvature $d(L_i + \Theta_i)$
coincides with the Euler-Lagrange form $\mathrm{EL} = \delta_V L_i$ in vertical degree 1.
Such a  lift
$L_i + \Theta_i$ is known as a \emph{Lepage form} for $L_i$ (e.g. \cite[2.1.2]{GMStextbook}).

Notice that it is precisely the restriction to the shell $\mathcal{E}$ that makes the Euler-Lagrange form
$\mathrm{EL}_i$ disappear, by construction, so that only the new curvature component $\Omega_i$ remains as the curvature of the
Lepage form on shell:
$$
  \xymatrix{
    \mathcal{E}|_{U_i} \ar[d] \ar@/^1pc/[rr]^{\Omega_i = d_V \Theta_i} & & \mathbf{\Omega}^{p+1,1}_S \oplus \mathbf{\Omega}^{p,2}
    \ar[d]
    \\
    E|_{U_i} \ar@{-->}[urr]^{d(L_i + \Theta_i)} \ar[rr]_{\mathrm{EL}_i = \delta_V L_i} && \mathbf{\Omega}^{p+1,1}_S
  }
$$

The condition means that the horizontal
differential of $\Theta_i$ has to cancel against the horizontally exact part that appears when decomposing the
differential of $L_i$ as in equation \ref{differentialofLagrangian}. Hence, up to horizontal derivatives, this
$\Theta_i$ is in fact uniquely fixed by $L_i$:
$$
  \begin{aligned}
    d (L_i + \Theta_i) & = (\mathrm{EL}_i - d_H (\Theta_i + d_H(\cdots)))  + (d_H \Theta_i + d_V \Theta_i)
    \\
    & = \mathrm{EL}_i + d_V \Theta_i
    \\
    & =: \mathrm{EL}_i + \Omega_i
  \end{aligned}
  \,.
$$
The new curvature component $\Omega \in \Omega^{p,2}(J^\infty_\Sigma E)$
whose restriction to patches is given this way,
$\Omega|_{U_i} := d_V \Theta_i$,
is known as the \emph{presymplectic current} \cite{Zuckerman}, \cite{Kh-covar}. Because, by the way we
found its existence, this is such that its transgression over a codimension-1 submanifold
$\Sigma_p \hookrightarrow \Sigma$ yields a closed 2-form (a ``presymplectic 2-form'') on the
covariant phase space:
$$
  \omega := \int_{\Sigma_p} [\Sigma_p, \Omega]
  \in
  \Omega^2([N^\infty_\Sigma \Sigma_p, \mathcal{E}])
  \,.
$$
Since $\Omega$ is uniquely specified by the local Lagrangians $L_i$, this gives the covariant phase space
canonically the structure of a presymplectic space $([\Sigma,\mathcal{E}], \omega)$.
This is the reason why phase spaces in classical mechanics are given by (pre-)symplectic geometry
as in \cite{Arnold89}.

Since $\Omega$ is a conserved current, the canonical presymplectic form $\omega$ is indeed canonical,
it does not depend on the choice of (Cauchy-)surface: if $\partial_{\mathrm{in}} \Sigma$ and $\partial_{\mathrm{out}}\Sigma$
are the incoming and outgoing Cauchy surfaces, respectively, in a piece of spacetime $\Sigma$, then the corresponding
presymplectic forms agree\footnote{\label{antibracketremark} If the shell $\mathcal{E}$ is taken to be resolved by a derived manifold/BV-complex as in
footnote \ref{BVremark}, then any on-shell vanishing condition becomes vanishing up to a $d_{\mathrm{BV}}$-exact term, hence
then there is a 2-form $\omega_{\mathrm{BV}}$ of BV-degreee -1 such that
$\omega_{\mathrm{out}} - \omega_{\mathrm{in}} = d_{\mathrm{BV}}\omega_{\mathrm{BV}}$.
(In \cite{CattaneoMnevReshetikhin12} this appears as BV-BFV axiom (9).) The Poisson bracket induced
from this ``shifted symplectic form'' $\omega_{\mathrm{BV}}$ is known as the ``BV-antibracket'' (e.g. \cite{HenneauxTeitelboim}).}

$$
   \omega_{\mathrm{out}} - \omega_{\mathrm{in}} = 0
   \,.
$$

But by the discussion in \ref{globalactionfunctionalinintroduction}, we do not just consider a locally variational field classical field
theory to start with, but a prequantum field theory. Hence in fact there is more data before transgression
than just the new curvature components $d_V \Theta_i$, there is also Cech cocycle coherence data that
glues the locally defined $\Theta_i$ to a globally consistent differential cocycle.

We write $\mathbf{B}^{p+1}_L(\mathbb{R}/_{\!\hbar}\mathbb{Z})$ for the moduli space for such coefficients
(with the subscript for ``Lepage''), so that morphisms $E \longrightarrow \mathbf{B}^{p+1}_L(\mathbb{R}/_{\!\hbar}\mathbb{Z})$
are equivalent to properly prequantized globally defined Lepage lifts of Euler-Lagrange $p$-gerbes.
In summary then, the refinement of an Euler-Lagrange $p$-gerbe $\mathbf{L}$ to a Lepage-$p$-gerbe
$\Theta$ is given by the following diagram
\begin{center}
\begin{tabular}{|c|}
\hline
 {\bf prequantum field theory}
 \\
 {\bf in codimension 1}
\\
\hline
$
  \xymatrix{
    && && &&
    \mathbf{\Omega}^{p,2}
    \ar[d]
     \\
     {\mathrm{Cauchy}
     \atop
     \mathrm{surface}}
     \!\!\!\!\!\!\!
    & \Sigma_p
     \ar[rr]^{\mathrm{classical}\;\mathrm{state}}
     \ar@{^{(}->}[dr]
    &&
    \mathcal{E} \ar[d]|{\mathrm{ker}(\mathrm{EL})}
    \ar@/^1.2pc/[urrr]^{\mbox{ \tiny \begin{tabular}{c} presymplectic \\ current \\ $\Omega$  \end{tabular} } }
    &&
    \mathbf{B}^{p+1}_L(\mathbb{R}/_{\!\hbar}\mathbb{Z})_{\mathrm{conn}}
    \ar[d] \ar[r]^-{\mathrm{curv}}
    &
    \mathbf{\Omega}^{p+1,1}_S \oplus \mathbf{\Omega}^{p,2}
    \ar[d]
    &
    {\mbox{\tiny Lepage} \atop \mbox{\tiny $p$-gerbe}}
    \\
    &
     {\mathrm{spacetime/}
     \atop
     \mathrm{worldvolume}}
     \!\!\!\!\!\!\!
    &
    \Sigma_{p+1}
    \ar[ur]^{\phi_{\mathrm{sol}}}
    \ar[r]
    &
    E
    \ar[urr]|{\mathbf{\Theta}}
    \ar[rr]|-{\mathbf{L}}^<<<{\ }="t"
    \ar@/_1.3pc/[rrr]_{\mbox{ \tiny \begin{tabular}{c} $\mathrm{EL}$ \\ Euler-Lagrange form \end{tabular} }}
    &&
    \mathbf{B}^{p+1}_H(\mathbb{R}/_{\!\hbar}\mathbb{Z})_{\mathrm{conn}}
    \ar[r]^-{\mathrm{curv}}
    &
    \mathbf{\Omega}^{p+1,1}_S
    &
    {\mbox{\tiny Euler-Lagrange} \atop \mbox{\tiny $p$-gerbe}}
  }
$
\\
\hline
\end{tabular}
\end{center}
\medskip

And now higher prequantum geometry bears fruit: since transgression is a natural operation, and since
the differential coefficients $\mathbf{B}_H^{p+1}(\mathbb{R}/_{\!\hbar}\mathbb{Z})_{\mathrm{conn}}$
and $\mathbf{B}^{p+1}_L(\mathbb{R}/_{\!\hbar}\mathbb{Z})$ precisely yield the coherence data to make the
local integrals over the locally defined differential forms $L_i$ and $\Theta_i$ be globally well defined,
we may now hit this entire diagram with the transgression functor $\int_{\Sigma_p} [N^\infty_\Sigma \Sigma_p,-]$
to obtain this diagram:

\begin{center}
\begin{tabular}{|c|}
\hline
{\bf transgression of Lepage $p$-gerbe on the shell}
\\
{\bf to Kostant-Souriau prequantum bundle}
\\
{\bf on the covariant phase space}
\\
\hline
$
  \xymatrix{
   &
   &&
   [N^\infty_\Sigma \Sigma_p, \mathbf{B}^{p+1}_H(\mathbb{R}/_{\!\hbar}\mathbb{Z})_{\mathrm{conn}}]
   \ar[d]
   \ar[rr]^-{\int_{\Sigma_p}}
   &&
   \mathbf{B}(\mathbb{R}/_{\!\hbar}\mathbb{Z})_{\mathrm{conn}}
   \ar[d]^{\mathrm{curv}}
   \\
   \mbox{ \tiny \begin{tabular}{c} covariant \\ phase-space \end{tabular} }
   \!\!\!\!\!\!\!\!\!
   &
   [N^\infty_\Sigma \Sigma_p, \mathcal{E}]
   \ar@/^4pc/[urrrr]^{\mbox{ \tiny \begin{tabular}{c} Kostant-Souriau \\ prequantum bundle \\ $\mathbf{\theta}$\end{tabular} }}
   \ar@/_2pc/[rrrr]_{\omega \atop { \mathrm{canonical}\atop {\mathrm{presymplectic}\, \mathrm{form}} }}
   \ar[rr]_-{[\Sigma_p, d_V \mathbf{\Theta}]}
   \ar[urr]^-{[\Sigma_p, \mathbf{\Theta}]}
   &&
   [N^\infty_\Sigma \Sigma_p, \mathbf{\Omega}^{p+1,1} \oplus \mathbf{\Omega}^{p,2}_{\mathrm{cl}}]
   \ar[rr]^-{\int_{\Sigma_p}}
   &&
   \mathbf{\Omega}^2_{\mathrm{cl}}
  }
$
\\
\hline
\end{tabular}
\end{center}

This exhibits the transgression
$$
  \mathbf{\theta} := \int_{\Sigma_p} [N^\infty_\Sigma \Sigma_p,\mathbf{\Theta}]
$$
of the Lepage $p$-gerbe $\mathbf{\Theta}$ as a $(\mathbb{R}/_{\!\hbar}\mathbb{Z})$-connection
whose curvature is the canonical presymplectic form.

But this $([\Sigma,\mathcal{E}]_\Sigma,\mathbf{\theta})$ is just the structure that Souriau originally called and demanded as a prequantization of the
(pre-)symplectic phase space $([\Sigma,\mathcal{E}]_\Sigma,\omega)$ \cite{Souriau70, Souriau74, Kostant75}.
Conversely, we see that the Lepage $p$-gerbe $\mathbf{\Theta}$ is a ``de-transgression'' of the Kostant-Souriau
prequantization of covariant phase space in codimension-1 to a higher prequantization in full codimension. In particular,
the higher prequantization constituted by the Lepage $p$-gerbe constitutes a compatible choice of Kostant-Souriau
prequantizations of covariant phase space for \emph{all} choices of codimension-1 hypersurfaces at once.
This is a genuine reflection of the fundamental locality of the field theory, even if we look at field
configurations globally over all of a (spatial) hypersurface $\Sigma_p$.

\subsection{The local observables -- Lie theoretically}
\label{symmetriescurrentsinintroduction}

We discuss now how from the previous considerations naturally follow the concepts of local observables
of field theories
and of the Poisson bracket on them, as well as the concept of conserved currents and the variational Noether theorem relating them
to symmetries. At the same time all these concepts are promoted to prequantum local field theory.

\medskip

In section \ref{elgerbesinintroduction} we have arrived at a perspective of prequantum local field theory
where the input datum is a partial differential equation of motion $\mathcal{E}$ on sections of a bundle $E$ over spacetime/worldvolume $\Sigma$
and equipped with a prequantization exhibited by a factorization of the Euler-Lagrange form
$E \stackrel{\mathrm{EL}}{\longrightarrow} \mathbf{\Omega}^{p+1,1}_{S}$ and of the presymplectic current form
$\mathcal{E} \stackrel{\Omega}{\longrightarrow} \mathbf{\Omega}^{p,2}$ through higher Cech-Deligne cocycles
for an Euler-Lagrange $p$-gerbe $\mathbf{L}$ and for a Lepage $p$-gerbe $\mathbf{\Theta}$:

\vspace{-1cm}

$$
  \xymatrix{
    \ar@{}[d]|{\mathrm{shell}}
    &
    \ar@{}[d]|{\mathrm{field} \atop \mathrm{bundle}}
    \\
    \mathcal{E}
    \ar@{}[dr]|-{\;\;\mathrm{higher} \atop \mathrm{prequantization}}
    \ar[r]
    \ar@/_3.8pc/[dd]_-{\mbox{\tiny \begin{tabular}{c}presymplectic \\ current\end{tabular}} \Omega}
    \ar[d]_-{\mathbf{\Theta}}
    & E
    \ar@/^3.8pc/[dd]^-{\mathrm{EL} \mbox{\tiny \begin{tabular}{c}Euler-Lagrange\\ form\end{tabular}}}
    \ar[d]^-{\mathbf{L}}
    \\
    \mathbf{B}^{p+1}_L(\mathbb{R}/_{\!\hbar}\mathbb{Z})_{\mathrm{conn}}
    \ar[d]|-{\mathrm{curv}}
    \ar[r]
    &
    \mathbf{B}^{p+1}_H(\mathbb{R}/_{\!\hbar \mathbb{Z}})_{\mathrm{conn}}
    \ar[d]|-{\mathrm{curv}}
    \\
    \mathbf{\Omega}^{p,2}_{\mathrm{cl}}
    &
    \mathbf{\Omega}^{p+1,1}_{S,\mathrm{cl}}
  }
$$
This local data then transgresses to spaces of field configurations over codimension-k submanifolds
of $\Sigma$.
Transgressing to codimension-0 yields the globally defined exponentiated action functional
$$
  \xymatrix{
    [\Sigma,E]_\Sigma
    \ar[d]^-{\exp(\tfrac{i}{\hbar}S) \;\;\;\;\;\;\;\;\;\mbox{\tiny action functional}}
    & \mbox{\tiny \begin{tabular}{c} space of \\ field configurations\end{tabular}}
    \\
    U(1)
  }
$$
and transgressing to a codimension-1 (Cauchy-)surface $\Sigma_p \hookrightarrow \Sigma$
yields the covariant phase space as a prequantized pre-symplectic manifold

\vspace{-1.5cm}

$$
  \raisebox{45pt}{
  \xymatrix{
    \ar@{}[d]|{\mathrm{covariant} \atop {\mathrm{phase}\, \mathrm{space} }}
    \\
    [N^\infty_\Sigma \Sigma_p,\mathcal{E}]_\Sigma
    \ar@/_3.8pc/[dd]_{\mbox{\tiny \begin{tabular}{c} presymplectic \\ form \end{tabular}} \omega }
    \ar[d]^-{\mathbf{\theta} \mbox{\tiny \begin{tabular}{c}prequantum \\ bundle\end{tabular}}}
    \\
    \mathbf{B}(\mathbb{R}/_{\!\hbar}\mathbb{Z})_{\mathrm{conn}}
    \ar[d]|-{\mathrm{curv}}
    \\
    \mathbf{\Omega}^2_{\mathrm{cl}}
  }
  }
  \,.
$$
Given any space equipped with a map into some moduli space like this, an automorphism of this
structure is a diffeomorphism of the space together with a homotopy which which exhibits the
preservation of the given map into the moduli space.

We consider now the automorphisms of the prequantized covariant phase space and of the Euler-Lagrange $p$-gerbe
that it arises from via transgression, and find that these recover and make globally well-defined the
traditional concepts of symmetries and conserved currents, related by the Noether theorem,
and of observables equipped with their canonical Poisson bracket.

\medskip

The correct automorphisms of presymplectic smooth spaces
$([\Sigma_p,\mathcal{E}]_\Sigma,\omega) \longrightarrow ([\Sigma_p,\mathcal{E}]_\Sigma,\omega)$
are of course diffeomorphisms $\phi : [\Sigma_p,\mathcal{E}]_\Sigma \longrightarrow [\Sigma_p,\mathcal{E}]_\Sigma$ such that
the presymplectic form is preserved, $\phi^\ast \omega = \omega$.
In the diagrammatics this means that $\phi$ fits into a triangle of this form:
$$
  \xymatrix{
    [N^\infty_\Sigma \Sigma_p,\mathcal{E}]_\Sigma \ar[dr]_{\omega}^{\ }="t" \ar[rr]^-\phi|\simeq_-{\ }="s"
    && [N^\infty_\Sigma \Sigma_p,\mathcal{E}]_\Sigma \ar[dl]^{\omega}
    \\
    & \mathbf{\Omega}^2_{\mathrm{cl}}
    \ar@{=} "s"; "t"
  }
$$

Viewed this way, there is an evident definition of an automorphism of a prequantization
$([N^\infty_\Sigma \Sigma_p,\mathcal{E}]_\Sigma,\mathbf{\theta})$ of $([N^\infty_\Sigma\Sigma_p,\mathcal{E}]_\Sigma,\omega)$. This must be a diagram
of the following form
$$
  \raisebox{20pt}{
  \xymatrix{
    [N^\infty_\Sigma\Sigma_p, \mathcal{E}]_\Sigma
    \ar[rr]^-\phi|-\simeq_-{\ }="s"
    \ar[dr]_{\mathbf{\theta}}^{\ }="t"
    &&
    [N^\infty_\Sigma\Sigma_p,\mathcal{E}]_\Sigma
    \ar[dl]^{\mathbf{\theta}}
    \\
    & \mathbf{B}(\mathbb{R}/_{\!\hbar}\mathbb{Z})_{\mathrm{conn}}
    \ar@{=>}^{\eta} "s"; "t"
  }
  }
$$
hence a diffeomorphism $\phi$ together with a homotopy $\eta$ that relates the modulating morphism of the translated
prequantum bundle back to the original prequantum bundle. By the discussion in section \ref{globalactionfunctionalinintroduction}
such homotopies are equivalently coboundaries between the Cech-Deligne cocycles that correspond to the maps that
the homotopy goes between.
Here this means that the homotopy in the above diagram is an isomorphism
$\eta : \phi^\ast \mathbf{\theta} \stackrel{\simeq}{\longrightarrow} \mathbf{\theta}$
of circle bundles with connection.
These pairs $(\phi,\eta)$ are what Souriau called the \emph{quantomorphisms}. Via their canonical action on the space
of section of the prequantum bundle, these become the quantum operators.

To see what this is in local data, consider the special case that $\theta$ is a globally defined 1-form
and suppose that $\phi = \exp(t v)$ is the flow of a vector field $v$ .
$$
  \raisebox{20pt}{
  \xymatrix{
    [N^\infty_\Sigma\Sigma_p,\mathcal{E}]_\Sigma \ar[dr]_{\mathbf{\theta}}^{\ }="t" \ar[rr]^{\exp(t v)}|\simeq_{\ }="s" &&
    [N^\infty_\Sigma\Sigma_p,\mathcal{E}]_\Sigma \ar[dl]^{\mathbf{\theta}}
    \\
    & \mathbf{B}(\mathbb{R}/_{\!\hbar}\mathbb{Z})_{\mathrm{conn}}
    \ar@{=>}^{\exp(\tfrac{i}{\hbar} t \alpha)} "s"; "t"
  }
  }
  \,,
$$
Then the homotopy filling the
previous diagram is given by a smooth function $\exp(i t \alpha )$ such that
$$
  \exp(t v)^\ast \theta - \theta = t d \alpha
  \,.
$$
Infinitesimally, for $t \to 0$, this becomes
$$
  \mathcal{L}_v \theta = d \alpha
  \,.
$$
Using Cartan's formula for the Lie derivative on the left, and the fact that $d \theta = \omega$, by prequantization,
this is equivalent to
\begin{equation}
  \label{equationforHamiltonian}
  d \underset{H}{\underbrace{( \alpha - \iota_v \theta )}} = \iota_v \omega
  \,.
\end{equation}
This is the classical formula \cite{Arnold89} which says that
$$
  H := \alpha - \iota_v \theta
$$
is a Hamiltonian for the vector field $v$.

There is an evident smooth group structure on the homotopies as above, and one checks that the
induced Lie bracket on Hamiltonians $H$ with Hamiltonian vector fields $v$ is the following
$$
  [(v_1,H_1), (v_2,H_2)] = ( [v_1,v_2], \iota_{v_2}\iota_{v_1}\omega )
  \,.
$$
Traditionally this is considered only in the special case that $\omega$ is symplectic, hence
equivalently, in the case that equation (\ref{equationforHamiltonian}) uniquely associates a Hamiltonian vector field
$v$ with any Hamiltonian $H$. In that case we may identify a pair $(v,H)$ with just $H$ and then the above Lie bracket
becomes the \emph{Poisson bracket} on smooth functions induced by $\omega$.
Hence the Poisson bracket Lie algebra is secretly the infinitesimal symmetries of the prequantum line bundle $\mathbf{\theta}$.
This is noteworthy. For instance in the example of the phase space $(T^\ast \mathbb{R} = \mathbb{R}^2, \omega = d p \wedge dq)$
and writing $q,p : \mathbb{R}^2 \to \mathbb{R}$ for the two canonical coordinates ($p$ being called the ``canonical momentum''), then
the Poisson bracket, as above, between these two is
\begin{center}
\fbox{
  $[q,p] = i \hbar  \;\;\; \in \;\; i \hbar \mathbb{R} \hookrightarrow \mathfrak{Pois}(\mathbb{R}^2,d p \wedge d q)$
}
\end{center}
This equation is often regarded as the hallmark of quantum theory. In fact it is a
prequantum phenomenon. Notice how the identification of the central term with $i \hbar$ follows
here from the first prequantization step back around equation (\ref{thexponentialsequenceforprequantization}).

From equation (\ref{equationforHamiltonian}) it is clear that the Poisson bracket is a Lie extension of the Lie algebra
of (Hamiltonian) vector fields by the locally constant Hamiltonians, hence by constant functions
in the case that $X$ is connected. The non-trivial Lie integration of this statement is the
Kostant-Souriau extension, which says that the quantomorphism group of a connected phase space is
a $U(1)$-extension of the diffeological group of Hamiltonian symplectomorphisms.

Hence in summary the situation for observables on the covariant phase space in codimension 1 is
as follows:

\vspace{1cm}

\hspace{-3cm}
\begin{tabular}{|c|ccccc|}
\hline
& \raisebox{-4pt}{\begin{tabular}{c} {\bf Kostant-Souriau} \\ {\bf extension} \\ {\tiny (connected phase space)} \end{tabular}}
&& {\bf observables} && {\bf flows}
\\
\hline \hline
\raisebox{-29pt}{
\begin{turn}{90}\small $\;$infinitesimally$\;$ \end{turn}
}
&\raisebox{8pt}{$i \hbar \mathbb{R}$} & \raisebox{8pt}{$\longrightarrow$} &
\raisebox{0pt}{\begin{tabular}{c}$\mathfrak{Pois}([N^\infty_\Sigma \Sigma_p,\mathcal{E}]_\Sigma,\;\omega)$ \vspace{.2cm}  \\
\fbox{Poisson bracket}\end{tabular}} & \raisebox{8pt}{$\longrightarrow$} & \raisebox{8pt}{$\mathrm{Vect}(X)$}
\\
\hline
\raisebox{-37pt}{
\begin{turn}{90}\small \hspace{.3cm} finitely \hspace{.3cm}\end{turn}
}
& \raisebox{-9pt}{$U(1)$} & \raisebox{-9pt}{$\longrightarrow$} &
\raisebox{-17pt}{
\begin{tabular}{c} $\mathbf{QuantMorph}([N^\infty_\Sigma\Sigma_p,\mathcal{E}]_\Sigma,\;\mathbf{\theta})$ \vspace{.2cm}\\ \fbox{quantomorphism group} \end{tabular}} & \raisebox{-9pt}{$\longrightarrow$}
&
\raisebox{-9pt}{$\mathbf{Diff}(X)$}
\\
\hline
\raisebox{-17pt}{
\begin{turn}{90}abstractly\end{turn}}
&
$
  \left\{
  \raisebox{29pt}{
  \xymatrix@R=39pt{
    [N^\infty_\Sigma \Sigma_p,\mathcal{E}]_\Sigma
      \ar@/^2pc/[d]^{\mathbf{\theta}}_<<<{\ }="s"
      \ar@/_2pc/[d]_{\mathbf{\theta}}^>>>{\ }="t"
    \\
    \mathbf{B}(\mathbb{R}/_{\!\hbar}\mathbb{Z})_{\mathrm{conn}}
    \ar@{=>}|<<<{\mbox{\tiny \begin{tabular}{c} locally \\ constant \\ Hamiltonian\end{tabular}}} "s"; "t"
  }
  }
  \right\}
$
&
$\longrightarrow$
&
$
  \left\{
  \raisebox{52pt}{
  \xymatrix@R=39pt{
    [N^\infty_\Sigma \Sigma_p,\mathcal{E}]_\Sigma
    \ar@/_1pc/[rdd]_\omega
    \ar[dr]_{\theta}^-{\ }="t" \ar[rr]|\simeq^{\mathrm{flow}}_>>>>>>>>>>>>>>{\ }="s"
    &&
    [N^\infty_\Sigma \Sigma_p,\mathcal{E}]_\Sigma \ar[dl]^{\theta}
    \ar@/^1pc/[ddl]^{\omega}
    \\
    & \mathbf{B}(\mathbb{R}/_{\!\hbar}\mathbb{Z})_{\mathrm{conn}}
     \ar[d]^{\mathrm{curv}}
    \\
    & \mathbf{\Omega}^2_{\mathrm{cl}}
    \ar@{=>}|<<<<<<{\mathrm{Hamiltonian}} "s"; "t"
  }
  }
  \right\}
$
&
$\longrightarrow$
&
$
  \left\{
  \xymatrix{
    [N^\infty_\Sigma \Sigma_p,\mathcal{E}] \ar[r]|\simeq & [N^\infty_\Sigma \Sigma_p,\mathcal{E}]_\Sigma
  }
  \right\}
$
\\
\hline
\end{tabular}

\vspace{.3cm}

Generally. the symmetries of a $p$-gerbe connection $\nabla : X \to \mathbf{B}^{p+1}(\mathbb{R}/_{\!\hbar}\mathbb{Z})_{\mathrm{conn}}$
form an extension of the
symmetry group of the underlying space by the higher group of flat $(p-1)$-gerbe connections \cite{hgp}:

\vspace{.2cm}

\hspace{-1cm}
\begin{tabular}{|c|ccccc|}
\hline
&
\begin{tabular}{c}{\bf higher} \\ {\bf extension} \end{tabular}
&&
\begin{tabular}{c}{\bf symmetry of} \\ {\bf $p$-gerbe} \\ {\bf $(p+1)$-connection } \end{tabular}
&&
\begin{tabular}{c} {\bf automorphisms} \\ {\bf of base space} \end{tabular}
\\
\hline
\hline
\raisebox{-31pt}{
\begin{turn}{90}\hspace{.1cm} infinitesimally \hspace{.1cm}\end{turn}}
& $\mathrm{Ch}_{\mathrm{dR},\mathrm{cl}}^{\bullet \leq p}(X)$
& $\longrightarrow$ &
 \begin{tabular}{c} $\mathfrak{sym}_X(F)$ \vspace{.2cm}\\ \fbox{stablizer $L_\infty$-algebra} \end{tabular}
 &$\longrightarrow$&
 $\mathrm{Vect}(X)$
\\
\hline
\raisebox{-25pt}{
\begin{turn}{90}\hspace{.3cm} finitely \hspace{.6cm}\end{turn}
}
&
$\mathbf{Ch}^{\bullet\leq p}_{\mathrm{cl}}(X,U(1))$
&$\longrightarrow$&
\raisebox{-7pt}{\begin{tabular}{c} $\mathbf{Stab}_{\mathbf{Aut}(X)}(\nabla)$ \vspace{.2cm}\\ \fbox{stabilizer $\infty$-group}\end{tabular}}
&$\longrightarrow$&
$\mathbf{Aut}(X)$
\\
\hline
\raisebox{-20pt}{
\begin{turn}{90}abstractly\end{turn}
}
&
$
  \left\{
  \raisebox{29pt}{
  \xymatrix@R=39pt{
    X
      \ar@/^1.7pc/[d]^{\nabla}_<<<{\ }="s"
      \ar@/_1.7pc/[d]_{\nabla}^>>>{\ }="t"
    \\
    \mathbf{B}^{p+1}(\mathbb{R}/_{\!\hbar}\mathbb{Z})_{\mathrm{conn}}
    \ar@{=>}^\simeq "s"; "t"
  }
  }
  \right\}
$
&
$\longrightarrow$
&
$
  \left\{
  \raisebox{52pt}{
  \xymatrix@R=39pt{
    X
    \ar@/_1pc/[rdd]_{F}
    \ar[dr]_{\nabla}^>>>{\ }="t" \ar[rr]|-\simeq^-{\mathrm{automorphism}}_>>>>>>>>>>>>>>{\ }="s"
    && X \ar[dl]^{\nabla}
    \ar@/^1pc/[ddl]^{F}
    \\
    & \mathbf{B}^{p+1}(\mathbb{R}/_{\!\hbar}\mathbb{Z})_{\mathrm{conn}}
    \ar[d]^{\mathrm{curv}}
    \\
    & \mathbf{\Omega}^{p+2}_{\mathrm{cl}}
    \ar@{=>}|<<<<<<{\mathrm{homotopy} \atop \mathrm{stabilization}} "s"; "t"
  }
  }
  \right\}
$
&
$\longrightarrow$
&
$
  \left\{
  \xymatrix{
    V \ar[rr]|\simeq^-{\mathrm{}} && V
  }
  \right\}
$
\\
\hline
\end{tabular}

\newpage

Specifying this general phenomenon to the Lepage $p$-gerbes, it gives
a Poisson bracket $L_\infty$-algebra on higher currents (local observables)
\cite{LocalObservables} and its higher Lie integration to a higher quantomorphism
group constituting a higher Kostant-Souriau extension of the differential automorphisms of the field
bundle. This is determined by the (pre-)symplectic current $p+2$-form
$\Omega$ in analogy to how the ordinary Poisson bracket is determined by the
(pre-)symplectic 2-form $\omega$, hence this is a Poisson $L_\infty$-bracket for what has
been called ``multisymplectic geometry'' (see \cite{Rogers:2010nw}):

\medskip

\hspace{-1cm}
\begin{tabular}{|c|ccccc|}
\hline
&
\begin{tabular}{c}{\bf higher} \\ {\bf Kostant-Souriau}  \\ {\bf extension} \end{tabular}
&&
\begin{tabular}{c}{\bf symmetry of} \\ {\bf Lepage $p$-gerbe}  \end{tabular}
&&
\begin{tabular}{c} {\bf differential} \\ {\bf automorphisms} \\ {\bf of dynamical shell} \end{tabular}
\\
\hline
\hline
\raisebox{-31pt}{
\begin{turn}{90}\hspace{.1cm} infinitesimally \hspace{.1cm}\end{turn}}
& $\mathrm{Ch}_{\mathrm{dR},\mathrm{cl}}^{\bullet \leq p}(\mathcal{E})$
& $\longrightarrow$ &
 \begin{tabular}{c} $\mathfrak{Pois}(\mathcal{E},\;\Omega)$ \vspace{.2cm}\\ \fbox{Poisson bracket $L_\infty$-algebra} \end{tabular}
 &$\longrightarrow$&
 $\mathrm{Vect}(\mathcal{E})$
\\
\hline
\raisebox{-25pt}{
\begin{turn}{90}\hspace{.3cm} finitely \hspace{.6cm}\end{turn}
}
&
$\mathbf{Ch}^{\bullet\leq p}_{\mathrm{cl}}(\mathcal{E},U(1))$
&$\longrightarrow$&
\raisebox{-7pt}{\begin{tabular}{c} $\mathbf{Stab}_{\mathbf{Aut}(\mathcal{E})}(\mathbf{\Theta})$ \vspace{.2cm}\\
\fbox{quantomorphism $\infty$-group}\end{tabular}}
&$\longrightarrow$&
$\mathbf{Aut}(\mathcal{E})$
\\
\hline
\raisebox{-20pt}{
\begin{turn}{90}abstractly\end{turn}
}
&
$
  \left\{
  \raisebox{29pt}{
  \xymatrix@R=39pt{
    \mathcal{E}
      \ar@/^1.9pc/[d]^{\mathbf{\Theta}}_<<<{\ }="s"
      \ar@/_1.9pc/[d]_{\mathbf{\Theta}}^>>>{\ }="t"
    \\
    \mathbf{B}^{p+1}_L(\mathbb{R}/_{\!\hbar}\mathbb{Z})_{\mathrm{conn}}
    \ar@{=>}|<<<{\mathrm{topological} \atop \mathrm{Hamiltonian}} "s"; "t"
  }
  }
  \right\}
$
&
$\longrightarrow$
&
$
  \left\{
  \raisebox{52pt}{
  \xymatrix@R=39pt{
    \mathcal{E}
    \ar@/_1pc/[rdd]_{\Omega}
    \ar[dr]_{\mathbf{\Theta}}^>>>{\ }="t" \ar[rr]|-\simeq^-{\mbox{\tiny on-shell symmetry}}_>>>>>>>>>>>>>>{\ }="s"
    &&
    \mathcal{E} \ar[dl]^{\mathbf{\Theta}}
    \ar@/^1pc/[ddl]^{\Omega}
    \\
    & \mathbf{B}^{p+1}_L(\mathbb{R}/_{\!\hbar}\mathbb{Z})_{\mathrm{conn}}
    \ar[d]^{\mathrm{curv}}
    \\
    & \mathbf{\Omega}^{p,2}_{\mathrm{cl}}
    \ar@{=>}|<<<<<<{\mathrm{Hamiltonian} \atop \mathrm{current}} "s"; "t"
  }
  }
  \right\}
$
&
$\longrightarrow$
&
$
  \left\{
  \xymatrix{
    \mathcal{E} \ar[rr]|\simeq^-{\mbox{\tiny on-shell symmetry}} && \mathcal{E}
  }
  \right\}
$
\\
\hline
\end{tabular}

\medskip

So far this concerned the covariant phase space with its prequantization via the Lepage $p$-gerbe.
In the same way there are the higher symmetries of the field space with its prequantization via the Euler-Lagrange $p$-gerbes $\mathbf{L}$
$$
  \raisebox{20pt}{
  \xymatrix{
    E
    \ar[dr]_{\mathbf{L}}^{\ }="t"
    \ar[rr]^-\phi|\simeq_{\ }="s"
    && E
    \ar[dl]^{\mathbf{L}}
    \\
    & \mathbf{B}^{p+1}(\mathbb{R}/_{\!\hbar}\mathbb{Z})_{\mathrm{conn}}
    \ar@{=>}^\simeq "s"; "t"
  }
  }\,.
$$
To see what these are in components, consider again the special case
that $\mathbf{L}$ is given by a globally defined horizontal form, and consider a one-parameter flow of such symmetries
$$
  \raisebox{20pt}{
  \xymatrix{
    E
    \ar[dr]_{\mathbf{L}}^{\ }="t"
    \ar[rr]^-{\exp(t v)}|\simeq_{\ }="s"
    && E
    \ar[dl]^{\mathbf{L}}
    \\
    & \mathbf{B}^{p+1}_H(\mathbb{R}/_{\!\hbar}\mathbb{Z})_{\mathrm{conn}}
    \ar@{=>}^{\exp(\tfrac{i}{\hbar} t \Delta)} "s"; "t"
  }
  }\,.
$$
In Cech-Deligne cochain components this diagram equivalently exhibits the equation
$$
  \exp(t v)^\ast L - L = t \, d_H \Delta
$$
on differential forms on the jet bundle of $E$, where $v$ is a vertical vector field.
Infinitesimally for $t\to 0$ this becomes
$$
  \mathcal{L}_v L = d_H \Delta
  \,.
$$
Since $L$ is horizontal while $v$ is vertical, the left hand reduces,
by equation \ref{differentialofLagrangian}, to
$$
  \iota_v d L = \iota_v( \mathrm{EL} - d_H\Theta )
  \,,
$$
Therefore the infinitesimal symmetry of $\mathbf{L}$ is equivalent to
$$
  d_H \underbrace{( \Delta - \iota_v \Theta )}_{J} = \iota_v \mathrm{EL}
  \,.
$$
This says that associated to the symmetry $v$ is a current
$$
  J := \Delta - \iota_v \Theta
$$
which is conserved (horizontally closed) on shell (on the vanishing locus $\mathcal{E}$ of the Euler-Lagrange form $\mathrm{EL}$).
This is precisely the statement of Noether's theorem (the first variational theorem of Noether, to be precise).
Indeed, in its modern incarnation (see \cite[section 3]{BarnichHenneaux}), Noether's theorem is understood as stating a Lie algebra extension
of the Lie algebra of symmetries by topological currents to the Lie-Dickey bracket on equivalence classes of conserved currents
.

Hence the $\infty$-group extension of symmetries of the Euler-Lagrange $p$-gerbe
promotes Noether's theorem to the statement that higher Noether currents form an
$L_\infty$-algebra extension of the infinitesimal symmetries by topological currents:

\vspace{.4cm}

\hspace{-1cm}
\begin{tabular}{|c|ccccc|}
\hline
&
\begin{tabular}{c}{\bf higher} \\ {\bf topological charge}  \\ {\bf extension} \end{tabular}
&&
\begin{tabular}{c}{\bf symmetry of} \\ {\bf Euler-Lagrange $p$-gerbe}  \end{tabular}
&&
\begin{tabular}{c} {\bf differential} \\ {\bf automorphisms} \\ {\bf of field bundle} \end{tabular}
\\
\hline
\hline
\raisebox{-31pt}{
\begin{turn}{90}\hspace{.1cm} infinitesimally \hspace{.1cm}\end{turn}}
& $\mathrm{Ch}_{\mathrm{dR},\mathrm{cl}}^{\bullet \leq p}(E)$
& $\longrightarrow$ &
 \begin{tabular}{c} $\mathfrak{curr}(E,\;\mathrm{EL})$ \vspace{.2cm}\\ \fbox{Dickey bracket current $L_\infty$-algebra} \end{tabular}
 &$\longrightarrow$&
 $\mathrm{Vect}(E)$
\\
\hline
\raisebox{-25pt}{
\begin{turn}{90}\hspace{.3cm} finitely \hspace{.6cm}\end{turn}
}
&
$\mathbf{Ch}^{\bullet\leq p}_{\mathrm{cl}}(E,U(1))$
&$\longrightarrow$&
\raisebox{-7pt}{\begin{tabular}{c} $\mathbf{Stab}_{\mathbf{Aut}(E}(\mathbf{L})$ \vspace{.2cm}\\
\fbox{de-transgressed Kac-Moody $\infty$-group}\end{tabular}}
&$\longrightarrow$&
$\mathbf{Aut}(E)$
\\
\hline
\raisebox{-20pt}{
\begin{turn}{90}abstractly\end{turn}
}
&
$
  \left\{
  \raisebox{29pt}{
  \xymatrix@R=39pt{
    E
      \ar@/^1.9pc/[d]^{\mathbf{L}}_<<<{\ }="s"
      \ar@/_1.9pc/[d]_{\mathbf{L}}^>>>{\ }="t"
    \\
    \mathbf{B}^{p+1}_H(\mathbb{R}/_{\!\hbar}\mathbb{Z})_{\mathrm{conn}}
    \ar@{=>}|<<<{\mathrm{topological} \atop \mathrm{current}} "s"; "t"
  }
  }
  \right\}
$
&
$\longrightarrow$
&
$
  \left\{
  \raisebox{52pt}{
  \xymatrix@R=39pt{
    E
    \ar@/_1pc/[rdd]_{\mathrm{EL}}
    \ar[dr]_{\mathbf{L}}^>>>{\ }="t" \ar[rr]|-\simeq^-{\mbox{\tiny variational symmetry}}_>>>>>>>>>>>>>>{\ }="s"
    &&
    E \ar[dl]^{\mathbf{L}}
    \ar@/^1pc/[ddl]^{\mathrm{EL}}
    \\
    & \mathbf{B}^{p+1}_H(\mathbb{R}/_{\!\hbar}\mathbb{Z})_{\mathrm{conn}}
    \ar[d]^{\mathrm{curv}}
    \\
    & \mathbf{\Omega}^{p+1,1}_{S,\mathrm{cl}}
    \ar@{=>}|<<<<<<{\mathrm{Noether} \atop \mathrm{current}} "s"; "t"
  }
  }
  \right\}
$
&
$\longrightarrow$
&
$
  \left\{
  \xymatrix{
    E \ar[rr]|\simeq^{\mathrm{symmetry}} && E
  }
  \right\}
$
\\
\hline
\end{tabular}

\medskip

In summary, physical local observables arise from symmetries of higher
prequantum geometry as follows.

\medskip

\hspace{-1.3cm}
\begin{tabular}{|c|c|c|c|c|}
  \hline
  \begin{tabular}{c}
    {\bf prequantum}
    \\
    {\bf geometry}
  \end{tabular}
  &
  \begin{tabular}{c}
    {\bf automorphism}
    \\
    up to
    \\
    {\bf homotopy}
  \end{tabular}
  &
  \begin{tabular}{c}
    {\bf Lie derivative}
    \\
    up to
    \\
    {\bf differential}
  \end{tabular}
  &
  {\bf equivalently}
  &
  {\bf physical quantity}
  \\
  \hline
  \hline
  \begin{tabular}{c}
    prequantum
    \\
    shell
  \end{tabular}
  &
  $
  \raisebox{30pt}{
  \xymatrix{
    \mathcal{E}
    \ar@/_1pc/[rdd]_{\Omega}
    \ar[rr]^-{\exp(t v)}_{\ }="s"
    \ar[dr]_{\mathbf{\Theta}}^{\ }="t"
    &&
    \mathcal{E}
    \ar@/^1pc/[ddl]^{\Omega}
    \ar[dl]^{\mathbf{\Theta}}
    \\
    & \mathbf{B}^{p+1}_L(\mathbb{R}/_{\!\hbar}\mathbb{Z})
    \ar[d]^{\mathrm{curv}}
    \\
    & \mathbf{\Omega}^{p,2}_{\mathrm{cl}}
    \ar@{=>}^{\exp(\tfrac{i}{\hbar}t \alpha)} "s"; "t"
  }}
  $
  & $\mathcal{L}_v \Theta = d \alpha$ & $d \underset{H}{\underbrace{( \alpha - \iota_v \Theta )}} = \iota_v \Omega$
  &
  Hamiltonian
  \\
  \hline
  \begin{tabular}{c}
    prequantum
    \\
    field bundle
  \end{tabular}
  &
  $
  \raisebox{30pt}{
  \xymatrix{
    E
    \ar@/_1pc/[rdd]_{\mathrm{EL}}
    \ar[rr]^{\exp(t v)}_{\ }="s"
    \ar[dr]_{\mathbf{L}}^{\ }="t"
    &&
    E
    \ar@/^1pc/[ldd]^{\mathrm{EL}}
    \ar[dl]^{\mathbf{L}}
    \\
    & \mathbf{B}^{p+1}_H(\mathbb{R}/_{\!\hbar} \mathbb{Z})
    \ar[d]^{\mathrm{curv}}
    \\
    &
    \mathbf{\Omega}^{p+1,1}_{S,\mathrm{cl}}
    \ar@{=>}^{\exp(\tfrac{i}{\hbar} t \Delta)} "s"; "t"
  }
  }
  $
  & $\mathcal{L}_v L = d_H \Delta$ & $d_H \underset{J}{\underbrace{( \Delta - \iota_v \Theta )}} = \iota_v \mathrm{EL}$
  &
  conserved current
  \\
  \hline
\end{tabular}

\subsection{The evolution -- correspondingly}

The transgression formula discussed in section \ref{elgerbesinintroduction} generalizes to compact oriented $d$-manifolds $\Sigma$,
possibly with boundary $\partial \Sigma \hookrightarrow \Sigma$. Here it becomes transgression \emph{relative} to the boundary
transgression.

For curvature forms this is again classical: For $\omega \in \Omega^{p+2}(\Sigma \times U)$
a closed differential form,
then $\int_\Sigma \omega \in \Omega^{p+2-d}(U)$ is not in general a closed differential form anymore,
but by Stokes' theorem its differential equals the boundary transgression:
$$
  \begin{aligned}
    d_U \int_\Sigma \omega & = \int_\Sigma d_U \omega
    \\
    & = -\int_\Sigma d_\Sigma \omega
    \\
    &= - \int_{\partial \Sigma} \omega
    \,.
  \end{aligned}
$$
This computation also shows that a sufficient condition for the bulk transgression of $\omega$ to be closed and
for the boundary transgression to vanish is that $\omega$ be also \emph{horizontally} closed, i.e. closed with
respect to $d_\Sigma$.

Applied to the construction of the canonical presymplectic structure on phase spaces in \ref{elgerbesinintroduction}
this has the important implication that the canonical presymplectic form on phase space is indeed canonical.

Namely, by equation (\ref{differentialofLagrangian}),
the presymplectic current $\Omega \in \Omega^{p,2}(E)$ is horizontally closed on shell, hence is indeed a conserved current:
$$
  \begin{aligned}
    d_H \Omega & = d_H d_V \Theta
    \\ & = - d_V d_H \Theta
    \\ & = - d_V (- d_V L + \mathrm{EL})
    \\ & = -d_V \mathrm{EL} \,.
  \end{aligned}
$$
It follows that if $\Sigma$ is a spacetime/worldcolume with, say, two boundary components
$\partial \Sigma = \partial_{\mathrm{in}} \Sigma \sqcup \partial_{\mathrm{out}}\Sigma$, then the presymplectic structures
$\omega_{\mathrm{in}} := \int_{\partial_{\mathrm{in}}\Sigma } [\partial_{\mathrm{in}}\Sigma,\omega ]$
and
$\omega_{\mathrm{out}} := \int_{\partial_{\mathrm{out} \Sigma}\Sigma} [\partial_{\mathrm{out}}\Sigma,\omega ]$
agree on the covariant phase space:\footnote{If one uses a BV-resolution of the covariant phase space, then
they agree up to the BVV-differential of a BV (-1)-shifted 2-form, we come back to this in section \ref{introductionbvcomplex}.}
$$
  \xymatrix{
    & [\Sigma, \mathcal{E}]_{\Sigma}
    \ar[dl]_{(-)|_{\partial_{\mathrm{in}}\Sigma}}
    \ar[dr]^{(-)|_{\partial_{\mathrm{out}}\Sigma }}_{\ }="s"
    \\
    [N^\infty_\Sigma \partial_{\mathrm{in}}\Sigma, \mathcal{E}]_{\Sigma}
    \ar[dr]_{\omega_{\mathrm{in}}}^{\ }="t"
    &&
    [N^\infty_\Sigma\partial_{\mathrm{out}}\Sigma, \mathcal{E}]_{\Sigma}
    \ar[dl]^{\omega_{\mathrm{out}}}
    \\
    & \mathbf{\Omega}^{2}_{\mathrm{cl}}
    \ar@{=} "s"; "t"
  }
  \,.
$$
This diagram may be thought of as expressing an \emph{isotropic correspondence} between the two phase spaces,
where $[\Sigma,\mathcal{E}]_{\Sigma}$ is isotropic in the product of the two boundary phase spaces, regarded as
equipped with the presymplectic form $\omega_{\mathrm{out}} -\omega_{\mathrm{in}}$. In particular, when both
$\partial_{\mathrm{in}}\Sigma$ and $\partial_{\mathrm{out}}\Sigma$ are Cauchy surfaces in $\Sigma$, so that the
two boundary restriction maps in the above diagram are in fact equivalences, then this is a \emph{Lagrangian correspondence}
in the sense of \cite{Weinstein71}\cite{Weinstein83}.

All this needs to have and does have prequantization:
The transgression of a $p$-gerbe
$\nabla : X \to \mathbf{B}^{p+1}(\mathbb{R}/_{\!\hbar}\mathbb{Z})_{\mathrm{conn}}$ to the bulk of a $d$-dimensional $\Sigma$ is no longer
quite a $p-d$-gerbe itself, but is a section of the pullback of the $p-d+1$-gerbe that is the transgression to the boundary
$\partial \Sigma$. Diagrammatically this means that transgression to maps out of $\Sigma$ is a homotopy filling a diagram
of the following form
$$
  \xymatrix{
    [\Sigma,X] \ar[rr]^-{[\Sigma,\nabla]}
    \ar[dd]_{(-)|_{\partial\Sigma}}
    &&
    [\Sigma, \mathbf{B}^{p+1}(\mathbb{R}/_{\!\hbar}\mathbb{Z})_{\mathrm{conn}}]
    \ar[rr]_>{\ }="s"^{\int_\Sigma \mathrm{curv}}
    \ar[dd]|{(-)|_{\partial \Sigma}}^>{\ }="t"
    &&
    \mathbf{\Omega}^{p+2-d}
    \ar[dd]
    \\
    \\
    [\partial \Sigma, X]
    \ar[rr]_-{[\partial \Sigma,\nabla]}
    &&
    [\partial\Sigma, \mathbf{B}^{p+1}(\mathbb{R}/_{\!\hbar}\mathbb{Z})_{\mathrm{conn}}]
    \ar[rr]_-{\int_{\partial \Sigma}}
    &&
    \mathbf{B}^{p+2-d}(\mathbb{R}/_{\!\hbar}\mathbb{Z})_{\mathrm{conn}}
    \ar@{=>}^{\int_\Sigma} "s"; "t"
  }
  \,.
$$
Here the appearance of the differential forms coefficients $\mathbf{\Omega}^{p+2-d}$ in the top right corner witnesses the fact that the bulk term
$\int_\Sigma [\Sigma,\nabla]$ is a trivialization of the pullback of the boundary gerbe $\int_{\partial \Sigma}[\partial \Sigma, \nabla]$
only as a plain gerbe, not necessarily as a gerbe with connection: in general the curvature of the pullback of
$\int_{\partial \Sigma}[\partial \Sigma, \nabla]$ will not vanish, but only be exact, as in the above discussion, and the form that it is the de Rham
differential of is expressed by the top horizontal morphism in the above diagram.

Hence in the particular case of the transgression of a Lepage $p$-gerbe to covariant phase space, this formula
yields a prequantization of the above Lagrangian correspondence, where now the globally defined action functional
$$
  \exp(\tfrac{i}{\hbar}S) = \int_\Sigma [\Sigma, \mathbf{\Theta}] = \int_\Sigma [\Sigma, \mathbf{L}]
$$
exhibits the the equivalence between the incoming and outgoing prequantum bundles
$$
  \mathbf{\theta}_{\mathrm{in/out}} = \int_{\partial_{\mathrm{in/out}}\Sigma} [\partial_{\mathrm{in/out}}\Sigma, \mathbf{\theta}]
$$
on covariant phase space:
$$
 \hspace{-1cm}
 \raisebox{20pt}{
  \xymatrix{
    & [\Sigma, \mathcal{E}]_{\Sigma}
    \ar[dl]_{(-)|_{\partial_{\mathrm{in}}\Sigma}}
    \ar[dr]^{(-)|_{\partial_{\mathrm{out}}\Sigma }}_{\ }="s"
    &&&
    \mbox{\begin{tabular}{c}field \\ trajectories\end{tabular}}
    \ar[dl]_{\mbox{\begin{tabular}{c}initial \\ values\end{tabular}}}
    \ar[dr]^{\mbox{\begin{tabular}{c}Hamiltonian \\ evolution\end{tabular}}}_{\ }="s1"
    \\
    [N^\infty_\Sigma \partial_{\mathrm{in}}\Sigma, \mathcal{E}]_{\Sigma}
    \ar[dr]|{\mathbf{\theta}_{\mathrm{in}}}^{\ }="t"
    \ar@/_1pc/[ddr]_{\omega_{\mathrm{in}}}
    &&
    [N^\infty_\Sigma\partial_{\mathrm{out}}\Sigma, \mathcal{E}]_{\Sigma}
    \ar[dl]|{\mathrm{\theta}_{\mathrm{out}}}
    \ar@/^1pc/[ddl]^{\omega_{\mathrm{out}}}
    &
    \mbox{\begin{tabular}{c} incoming \\ field \\configurations \end{tabular}}
    \ar[dr]_{\mbox{\begin{tabular}{c}prequantum \\ bundle\end{tabular}}}^{\ }="t1"
    &&
    \mbox{\begin{tabular}{c} outgoing \\ field \\ configirations \end{tabular}}
    \ar[dl]^{\mbox{\begin{tabular}{c}prequantum \\ bundle\end{tabular}}}
    \\
    & \mathbf{B}(\mathbb{R}/_{\!\hbar}\mathbb{Z})_{\mathrm{conn}}
    \ar[d]|{\mathrm{curv}}
    &&&
    \mbox{\begin{tabular}{c}2-group \\ of phases \end{tabular}}
    \\
    & \mathbf{\Omega}^{2}_{\mathrm{cl}}
    \ar@{=>}|{\mbox{\begin{tabular}{c}action \\ functional\end{tabular}}} "s1"; "t1"
    \ar@{=>}^{\exp(\tfrac{i}{\hbar}S)} "s"; "t"
  }
  }
  \,.
$$
This \emph{prequantized Lagrangian correspondence} hence reflects the prequantum evolution from
fields on the incoming piece $\partial_{\mathrm{in}}\Sigma$ of spacetime/worldvolume
to the outgoing piece $\partial_{\mathrm{out}}\Sigma$ via trajectories of field configurations
along $\Sigma$.

\section{Examples of prequantum field theory}
\label{examplesinmotivation}

We now survey classes of examples of prequantum field theory in the sense of section \ref{PrequantumLocalFieldTheoryInMotivation}, 
see also the survey in \cite{PQFTFromShiftedSymplectic}.

\subsection{Gauge fields}
\label{introductiongaugefields}

Modern physics rests on two fundamental principles. One is the \emph{locality principle};
its mathematical incarnation in terms of differential cocycles on PDEs was the content
of section \ref{PrequantumLocalFieldTheoryInMotivation}. The other is the \emph{gauge principle}.

In generality, the gauge principle says that given any two field configurations $\phi_1$ and $\phi_2$
-- and everything in nature is some field cofiguration --
then it is physically meaningless to ask whether they are \emph{equal}, instead one has to ask whether they are
\emph{equivalent} via a \emph{gauge transformation}
$$
  \xymatrix{
    \phi_1
    \ar@/^1pc/[rr]^-\simeq_-{\mathrm{gauge} \atop \mathrm{equivalence}}
    &&
    \phi_2
  }
  \,.
$$
There may be more than one gauge transformation between two field configurations, and hence there may
be auto-gauge equivalences that non-trivially re-identify a field configuration with itself.
Hence a space of physical field configurations does not really look like a set of points, it looks more
like this cartoon:
$$
  \left\{
  \raisebox{20pt}{
  \xymatrix{
    \phi_1
    \ar@/_1pc/[dr]_\simeq
    \ar@/^1pc/[dr]^\simeq
    &&
    \phi_3
    \ar[dr]^\simeq
    \ar@(ul,ur)^\simeq
    &
    \\
    & \phi_2
    && \phi_4
  }}
  \cdots
  \right\}
  \,.
$$
Moreover, if there are two
gauge transformations, it is again physically meaningless to ask whether they are equal, instead one
has to ask whether they are equivalent via a \emph{gauge-of-gauge} transformation.
$$
  \xymatrix{
    \phi_1
    \ar@/^2pc/[rr]^\simeq_{\ }="s"
    \ar@/_2pc/[rr]_\simeq^{\ }="t"
    &&
    \phi_2
    \ar@{=>}^\simeq "s"; "t"
  }
  \;\;
\mbox{And so on.}
\:\;\;
  \xymatrix{
    \phi_1
    \ar@/^2pc/[rr]^\simeq_{\ }="s"
    \ar@/_2pc/[rr]_\simeq^{\ }="t"
    &&
    \phi_2
    \ar@{}|{\stackrel{\simeq}{\Leftarrow}} "s"; "t"
    \ar@/_1.3pc/@{=>} "s"; "t"
    \ar@/^1.3pc/@{=>} "s"; "t"
  }
$$

In this generality, the gauge principle of physics is the mathematical principle of
\emph{homotopy theory}: in general it is meaningless to say that some objects form a set
whose elements are either equal or not, instead
one has to consider the \emph{groupoid} which they form, whose morphisms are the equivalences
between these objects. Moreover, in general it is meaningless to assume that any two such
morphisms are equal or not, rather one has to consider the groupoid which these form,
which then in total makes a \emph{2-groupoid}. But in general it is also meaningless to ask
whether two equivalences of two equivalences are equal or not, and continuing this
way one finds that objects in general form an \emph{$\infty$-groupoid}, also called a
\emph{homotopy type}.

Of particular interest in physics are smooth gauge transformations that arise by integration of
\emph{infinitesimal} gauge transformations.  An infinitesimal smooth groupoid is a
\emph{Lie algebroid} and an infinitesimal smooth $\infty$-groupoid is an \emph{$L_\infty$-algebroid}.
The importance of infinitesimal symmetry transformations in physics,
together with the simple fact that they are easier to handle than finite transformations,
makes them appear more prominently in the physics literature. In particular, the physics literature
is secretly well familiar with smooth $\infty$-groupoids in their infinitesimal incarnation
as $L_\infty$-algebroids: these
are equivalently what in physics are called \emph{BRST complexes}.
What are called \emph{ghosts} in the BRST complex are the cotangents to the space of equivalences
between objects, and what are called higher order \emph{ghosts-of-ghosts} are cotangents to
spaces of higher order equivalences-of-equivalences. We indicate in a moment how to see this.

\medskip

While every species of fields in physics is subject to the gauge principle, one speaks
specifically of \emph{gauge fields} for those fields which are locally given by differential forms
$A$  with values in a Lie algebra (for ordinary gauge fields) or more generally with values
in an $L_\infty$-algebroid (for higher gauge fields).

\begin{center}
\begin{tabular}{|c|c|c|c|c|}
 \hline
 &
 \multicolumn{2}{|c|}{{\bf infinitesimal}}
 &
 \multicolumn{2}{|c|}{{\bf finite}}
 \\
 & {\bf by itself} & {\bf acting on fields} & {\bf by itself} & {\bf acting on fields}
 \\
 \hline
 \hline
 {\bf symmetry} & Lie algebra & Lie algebroid & Lie group & Lie groupoid
 \\
 \hline
 {\bf
 \begin{tabular}{c}
   symmetries
   \\
   of
   \\
   symmetries
 \end{tabular}
 }
 &
 Lie 2-algebra & Lie 2-algebroid & smooth 2-group & smooth 2-groupoid
 \\
 \hline
 {\bf
 \begin{tabular}{c}
   higher
   \\
   order
   \\
   symmetries
 \end{tabular}
 }
 &
 $L_\infty$-algebra
 &
 $L_\infty$-algebroid
 &
 smooth $\infty$-group
 &
 smooth $\infty$-groupoid
 \\
 \hline
 \hline
 {\bf
 \begin{tabular}{c} physics \\ terminology \end{tabular}} &
 \raisebox{-8pt}{\begin{tabular}{c} FDA \\ (e.g. \cite{CDF})\end{tabular}} &
  \raisebox{-8pt}{\begin{tabular}{c} BRST complex \\ (e.g. \cite{HenneauxTeitelboim})\end{tabular}} & gauge group & ---
 \\
 \hline
\end{tabular}
\end{center}

We now indicate how
such gauge fields and higher gauge fields come about.

\begin{itemize}
  \item \ref{introductionordinarygaugefields} -- Ordinary gauge fields;
  \item \ref{introductionhighergaugefields} -- Higher gauge fields;
  \item \ref{introductionbvcomplex} -- The BV-BRST complex.
\end{itemize}

\paragraph{Ordinary gauge fields.}
\label{introductionordinarygaugefields}

To start with, consider a plain group $G$.
For the standard applications mentioned in section \ref{TheNeedForPrequantumGeometry} we would take $G = U(1)$
or $G = \mathrm{SU}(n)$ or products of these, and then the gauge fields we are about to find would be those
of electromagnetism and of the nuclear forces, as they appear in the standard model of particle physics.

In order to highlight that we think of $G$ as a \emph{group of symmetries} acting on some (presently unspecified object)
$\ast$, we write
$$
  G
  \;=\;
  \left\{
    \xymatrix{
      \ast \ar@/^1pc/[rr]^{g}
      &&
      \ast
    }
  \right\}
  \,.
$$
In this vein, the product operation $(-)\cdot (-) : G \times G \to G$ in the group reflects the result of
applying two symmetry operations
$$
  G \times G
  \simeq
  \left\{
    \raisebox{20pt}{
    \xymatrix{
      & \ast \ar[dr]^{g_2}
      \\
      \ast
      \ar[ur]^{g_1}
      \ar[rr]_{g_1 \cdot g_2}
      &&
      \ast
    }
    }
  \right\}
  \,.
$$
Similarly, the associativity of the group product operation reflects the result of
applying three symmetry operations:
$$
  G \times G \times G
  \simeq
  \left\{
    \raisebox{36pt}{
    \xymatrix{
      \ast \ar[rr]^{g_2} && \ast \ar[dd]^{g_3}
      \\
      \\
      \ast
      \ar[uu]^{g_1}
      \ar[uurr]|{g_1 \cdot g_2}
      \ar[rr]_{(g_1 \cdot g_2) \cdot g_3 }
      &&
      \ast
   }
   }
   \;\;
   =
   \;\;
    \raisebox{36pt}{
    \xymatrix{
      \ast \ar[rr]^{g_2}
      \ar[ddrr]|{g_2 \cdot g_3}
      && \ast \ar[dd]^{g_3}
      \\
      \\
      \ast
      \ar[uu]^{g_1}
      \ar[rr]_{g_1 \cdot (g_2 \cdot g_3) }
      &&
      \ast
   }
   }
  \right\}
$$
Here the reader should think of the diagram on the right as a tetrahedron, hence a 3-simplex,
that has been cut open only for notational purposes.

Continuing in this way, $k$-tuples of symmetry transformations serve to label $k$-simplices
whose edges and faces reflect all the possible ways of consecutively applying the
corresonding symmetry operations. This forms a \emph{simplicial set},
called the \emph{simplicial nerve} of $G$, hence a system
$$
  \mathbf{B}G
  :
  k \mapsto G^{\times_k}
$$
of sets of $k$-simplices for all $k$, together with compatible maps between these that
restrict $k+1$-simplices to their $k$-faces (the face maps) and those that regard
$k$-simplices as degenerate $k+1$-simplices (the degeneracy maps).
From the above picture, the face maps of $\mathbf{B}G$ in low degree look as follows
(where $p_i$ denotes projection onto the $i$th factor in a Cartesian product):
$$
  \mathbf{B}G
  :=
  \left[
  \xymatrix{
    \ar@{..}[r]
    &
    G \times G \times G
    \ar@<+16pt>[rr]^-{(p_1,p_2)}
    \ar@<+6pt>[rr]|-{(\mathrm{id},(-)\cdot (-))}
    \ar@<-6pt>[rr]|-{((-)\cdot (-), \mathrm{id})}
    \ar@<-16pt>[rr]_-{(p_2,p_3)}
    &&
    G \times G
    \ar@<+10pt>[rr]^{p_1}
    \ar[rr]|{(-) \cdot (-)}
    \ar@<-10pt>[rr]_{p_2}
    &&
    G
    \ar@<+4pt>[rr]
    \ar@<-4pt>[rr]
    &&
    \ast
  }
  \right]
$$


It is useful to remember the smooth structure on these spaces of $k$-fold symmetry operation by
remembering all possible ways of forming smoothly $U$-parameterized collections of $k$-fold symmetry
operations, for any abstract coordinate chart $U = \mathbb{R}^n$. Now a smoothly $U$-parameterized
collection of $k$-fold $G$-symmetries is simply a smooth function from $U$ to $G^{\times_k}$,
hence equivalently is $k$ smooth functions from $U$ to $G$. Hence the symmetry group $G$
together with its smooth structure is encoded in the system of assignments
$$
  \mathbf{B}G : (U,k) \mapsto C^\infty(U, G^{\times_k}) = C^\infty(U,G)^{\times_k}
$$
which is contravariantly functorial in abstract coordinate charts $U$ (with smooth functions between them)
and in abstract $k$-simplices (with cellular maps between them).
This is the incarnation of $\mathbf{B}G$ as a \emph{smooth simplicial presheaf}.

Another basic example of a smooth simplicial presheaf is the nerve of an open cover. Let
$\Sigma$ be a smooth manifold and let $\{U_i \hookrightarrow \Sigma\}_{i \in I}$ be a cover
of $\Sigma$ by coordinate charts $U_i \simeq \mathbb{R}^n$. Write
$U_{i_0 \cdots i_k} := U_{i_0} \underset{X}{\times} U_{i_1} \underset{X}{\times} \cdots \underset{X}{\times} U_{i_k}$
for the intersection of $(k+1)$ coordinate charts in $X$. These arrange into a simplicial object like so
$$
  C(\{U_i\})
  =
  \left[
   \xymatrix{
    \ar@{..}[r]
    &
    \underset{i_0, i_1, i_2}{\coprod} U_{i_0, i_1, i_2}
    \ar@<+6pt>[rr]
    \ar[rr]
    \ar@<-6pt>[rr]
    &&
    \underset{i_0, i_1 }{\coprod} U_{i_0,i_1}
    \ar@<+4pt>[rr]
    \ar@<-4pt>[rr]
    &&
    \underset{i_0}{\coprod} U_{i_0}
   }
  \right]
  \,.
$$
A map of simplicial objects
$$
  C(U_i) \longrightarrow \mathbf{B}G
$$
is in degree 1 a collection of smooth $G$-valued functions $g_{i j} : U_{i j} \longrightarrow G$
and in degree 2 it is the condition that on $U_{i j k}$ these functions satisfy the cocycle condition
$g_{i j} \cdot g_{j k} = g_{i k}$. Hence this defines the transition functions for a $G$-principal bundle on $\Sigma$.
In physics this may be called the \emph{instanton sector} of a $G$-gauge field.
A $G$-gauge field itself is a connection on such a $G$-principal bundle, we come to this in a moment.

We may also think of the manifold $\Sigma$ itself as a simplicial object, one that does not actually depend on the simplicial degree.
Then there is a canonical projection map $C(\{U_i\}) \stackrel{\simeq}{\longrightarrow} \Sigma$.
When restricted to arbitrarily small open neighbourhoods (stalks) of points in $\Sigma$, then this projection
becomes a \emph{weak homotopy equivalence} of simplicial sets. We are to regard smooth simplicial presheaves
which are connected by morphisms that are stalkwise weak homotopy equivalences as equivalent. With this
understood, a smooth simplicial presheaf is also called a \emph{higher smooth stack}. Hence a $G$-principal bundle
on $\Sigma$ is equivalently a morphism of higher smooth stacks of the form
$$
  \Sigma \longrightarrow \mathbf{B}G
  \,.
$$

For analysing smooth symmetries it is useful to focus on infinitesimal symmetries.
To that end, consider the (first order) infinitesimal neighbourhood $\mathbb{D}_e(-)$ of the neutral element in the
simplicial nerve.
Here $\mathbb{D}_e(-)$ is the space around the neutral element that is ``so small'' that for any smooth function
on it which vanishes at $e$, the square of that function is ``so very small'' as to actually
be equal to zero.

We denote the resulting system of $k$-fold infinitesimal $G$-symmetries by $\mathbf{B}\mathfrak{g}$:
$$
  \mathbf{B}\mathfrak{g}
  =
  \left[
  \xymatrix{
    \ar@{..}[r]
    &
    \mathbb{D}_e(G \times G \times G)
    \ar@<+16pt>[rr]^{(p_1,p_2)}
    \ar@<+6pt>[rr]|-{(\mathrm{id},(-)\cdot (-))}
    \ar@<-6pt>[rr]|-{((-)\cdot (-), \mathrm{id})}
    \ar@<-16pt>[rr]_{(p_2,p_3)}
    &&
    \mathbb{D}_e(G \times G)
    \ar@<+8pt>[rr]^{p_1}
    \ar[rr]|{(-) \cdot (-)}
    \ar@<-8pt>[rr]_{p_2}
    &&
    \mathbb{D}_e(G)
    \ar@<+4pt>[rr]
    \ar@<-4pt>[rr]
    &&
    \ast
  }
  \right]
  \,.
$$
The alternating sum of pullbacks along the simplicial face maps shown above
defines a differential $d_{\mathrm{CE}}$ on the spaces of functions on these infinitesimal neighbourhoods.
The corresponding \emph{normalized chain complex}
is the differential-graded algebra on those functions which vanish when at least one of their
arguments is the neutral element in $G$. One finds that this is the Chevalley-Eilenberg complex
$$
  \mathrm{CE}(\mathbf{B}\mathfrak{g})
  =
  \left(
    \wedge^\bullet \mathfrak{g}^\ast,
    \;
    d_{\mathrm{CE}} = [-,-]^\ast
  \right)
  \,,
$$
which is the Grassmann algebra on the linear dual of the Lie algebra $\mathfrak{g}$ of $G$
equipped with the differential whose component $\wedge^1 \mathfrak{g}^\ast \to \wedge^2 \mathfrak{g}^\ast$
is given by the linear dual of the Lie bracket
$[-,-]$, and which hence extends to all higher degrees by the graded Leibnitz rule.

For example, when we choose $\{t_a\}$ a linear basis for $\mathfrak{g}$, with structure constants of the Lie bracket
denoted $[t_a,t_b] = C^c{}_{a b} t_c$, then with a dual basis $\{t^a\}$ of $\mathfrak{g}^\ast$ we have that
$$
  d_{\mathrm{CE}} t^a = \tfrac{1}{2} C^a{}_{b c} \, t^b \wedge t^c
  \,.
$$
Given any structure constants for a skew bracket like this, then the condition $(d_{\mathrm{CE}})^2 = 0$
is equivalent to the Jacobi identity, hence to the condition that the skew bracket indeed makes a Lie algebra.

Traditionally, the Chevalley-Eilenberg complex is introduced in order to define and
to compute Lie algebra cohomology: a $d_{\mathrm{CE}}$-closed element
$$
  \mu \in \wedge^{p+1} \mathfrak{g}^\ast \hookrightarrow \mathrm{CE}(\mathbf{B}\mathfrak{g})
$$
is equivalently a Lie algebra $(p+1)$-cocycle. This phenomenon will be crucial further below.

Thinking of $\mathrm{CE}(\mathbf{B}\mathfrak{g})$ as the algebra of functions on the infinitesimal
neighbourhood of the neutral element inside $\mathbf{B}G$ makes it plausible that this is an equivalent incarnation
of the Lie algebra of $G$. This is also easily checked directly: sending finite dimensional Lie algebras to
their Chevalley-Eilenberg algebra constitutes a fully faithful inclusion
$$
  \mathrm{CE}
  \;:\;
  \mathrm{LieAlg}
  \hookrightarrow
  \mathrm{dgcAlg}^{\mathrm{op}}
$$
of the category of Lie algebras into the opposite of the category of differential graded-commutative algebras.
This perspective turns out to be useful for computations in gauge theory and in higher gauge theory.
Therefore it serves to see how various familiar constructions on Lie algebras look when viewed in terms
of their Chevalley-Eilenberg algebras.

Most importantly, for $\Sigma$ a smooth manifold and $\Omega^\bullet(\Sigma)$ denoting its de Rham
dg-algebra of differential forms, then \emph{flat} $\mathfrak{g}$-valued 1-forms on $\Sigma$ are
equivalent to dg-algebra homomorphisms like so:
$$
  \Omega^\bullet_{\mathrm{flat}}(\Sigma,\mathfrak{g})
  :=
  \left\{
    A \in \Omega^1(\Sigma)\otimes \mathfrak{g} \;|\; F_A := d_{\mathrm{dR}}A - \tfrac{1}{2}[A \wedge A] = 0
  \right\}
  \;\simeq\;
  \left\{
    \;\;
    \Omega^\bullet(\Sigma)
    \longleftarrow
    \mathrm{CE}(\mathbf{B}\mathfrak{g})
    \;\;
  \right\}
  \,.
$$
To see this, notice that the underlying homomorphism of graded algebras $\Omega^\bullet(\Sigma) \longleftarrow \wedge^\bullet \mathfrak{g}^\ast$
is equivalently a $\mathfrak{g}$-valued 1-form, and that the respect for the differential forces it to be flat:
$$
  \xymatrix{
    A^a \ar@{|->}[d]^{d_{\mathrm{dR}}} \ar@{<-|}[rr] && t^a \ar@{|->}[dd]^{d_{\mathrm{CE}}}
    \\
    d_{\mathrm{dR}}A^a
    \ar@{=}[dr]
    \\
    & \tfrac{1}{2} C^a{}_{b c} A^b \wedge A^c
    \ar@{<-|}[r]
    &
    \tfrac{1}{2}C^{a}{}_{b c} t^b \wedge t^c
  }
$$

The flat Lie algebra valued forms play a crucial role in recovering a Lie group from its Lie algebra as the group of finite paths of
infinitesimal symmetries. To that end, write $\Delta^1 := [0,1]$ for the abstract interval. Then a $\mathfrak{g}$-valued
differential form $A \in \Omega^1_{\mathrm{flat}}(\Delta^1,\mathfrak{g})$ is at each point of $\Delta^1$ an infinitesimal symmetry,
hence it encodes the finite symmetry transformation that is given by applying the infinitesimal transformation
$A_t$ at each $t \in \Delta^1$ and then ``integrating these''. This integration is called the \emph{parallel transport}
of $A$ and is traditionally denoted by the symbols
$P \exp(\int_0^1 A) \in G$. Now of course different paths of infinitesimal transformations may have the same
integrated effect. But precisely if $A_1$ and
$A_2$ have the same integrated effect, then there is a flat $\mathfrak{g}$-valued 1-form on the disk which restricts
to $A_1$ on the upper semicircle and to $A_2$ on the lower semicircle.

In particular, the composition of two paths of infinitesimal gauge transformations is in general not
equal to any given such path with the same integrated effect, but there will always be a flat 1-form
$\hat A$ on the 2-simplex $\Delta^2$ which interpolates:

\begin{center}
\begin{tabular}{|ccc|}
\hline
\begin{tabular}{c}
  {\bf infinitesimal}
  \\
  {\bf symmetries}
\end{tabular}
&{\bf integration}&
\begin{tabular}{c}
  {\bf finite}
  \\
  {\bf symmetries}
\end{tabular}
\\
\hline
$
  \raisebox{20pt}{
  \xymatrix{
    &
    \ar@{-}[dr]^{A_2}
    \\
    \ar@{-}[ur]^{A_1}
    \ar@{-}[rr]_{A_{1,2}}^{\ }="s"
    &&
    \ar@{}|{\hat A} "s"; "s"+(0,5)
  }
  }
$
&
$\mapsto$
&
$
  \raisebox{24pt}{
  \xymatrix{
    & \ast
    \ar[dr]^{P \exp(\int_0^1 A_2)}
    \\
    \ast
    \ar[rr]_{P \exp(\int_0^1 A_1) \cdot P\exp(\int_0^1 A_2)}
    \ar[ur]^{P\exp(\int_{0}^1 A_1)}
    &&
    \ast
  }
  }
$
\\
\hline
\end{tabular}
\end{center}

In order to remember how the group obtained this way is a Lie group, we simply need to
remember how the above composition works in smoothly $U$-parameterized collections
of 1-forms. But a $U$-parameterized collection of 1-forms on $\Delta^k$ is simply
a 1-form on $U \times \Delta^k$ which vanishes on vectors tangent to $U$, hence a
vertical 1-form on $U \times \Delta^k$, regarded as a simplex bundle over $U$.

All this is captured by saying that there is a simplicial smooth presheaf $\exp(\mathfrak{g})$
which assigns to an abstract coordinate chart $U$ and a simplicial degree $k$ the
set of flat vertical $\mathfrak{g}$-valued 1-forms on $U \times \Delta^k$:
$$
  \begin{aligned}
  \exp(\mathfrak{g}) :=
  (U,k)
  \;\;
  & \mapsto
  \Omega^\bullet_{\mathrm{flat} \atop \mathrm{vert}}( U \times \Delta^k,\mathfrak{g})
  \\
  & =
  \left\{
    \;
    \Omega^\bullet_{\mathrm{vert}}(U \times \Delta^k)
    \longleftarrow
    \mathrm{CE}(\mathbf{B}\mathfrak{g})
    \;
  \right\}
  \end{aligned}
  \,.
$$
By the above discussion, we do not care which of various possible flat 1-forms $\hat A$
on 2-simplices are used to exhibit the composition of finite gauge transformation.
The technical term for retaining just the information that there is any such 1-form
on a 2-simplex at all
is to form the \emph{2-coskeleton}
$\mathrm{cosk}_2(\exp(\mathfrak{g}))$. And one finds that this indeed recovers the smooth
gauge group $G$, in that there is a weak equivalence of simplicial presheaves:
$$
  \mathrm{cosk}_3(\exp(\mathfrak{g}))
  \simeq
  \mathbf{B}G
  \,.
$$

So far this produces the gauge group itself from the infinitesimal symmetries. We now discuss
how similarly its action on gauge fields is obtained.
 To that end, consider the \emph{Weil algebra} of $\mathfrak{g}$, which
is obtained from the Chevalley-Eilenberg algebra by throwing in another copy of
$\mathfrak{g}$, shifted up in degree
$$
  W(\mathbf{B}\mathfrak{g})
  :=
  \left(
    \wedge^\bullet( \mathfrak{g}^\ast \oplus \mathfrak{g}^\ast[1] ), d_W = d_{\mathrm{CE}} + \mathbf{d}
  \right)
  \,,
$$
where $\mathbf{d} : \wedge^1 \mathfrak{g}^\ast \stackrel{\simeq}{\to} \mathfrak{g}^\ast[1]$ is the degree
shift and we declare $d_{\mathrm{CE}}$ and $\mathbf{d}$ to anticommute. So if $\{t^a\}$ is the dual basis
of $\mathfrak{g}^\ast$ from before, write $\{r^a\}$ for the same elements thought of in one degree higher as a basis
of $\mathfrak{g}^\ast[1]$; then
$$
  \begin{aligned}
    d_{\mathrm{W}} & : t^a \;\mapsto\; \tfrac{1}{2}C^a{}_{b c} t^b \wedge t^c + r^a
    \\
    d_{\mathrm{W}} & : r^a \;\mapsto\;  C^a{}_{b c}  t^b \wedge r^c
  \end{aligned}
  \,.
$$
A key point of this construction is that
dg-algebra homomorphisms out of the Weil algebra into a de Rham algebra are equivalent to unconstrained
$\mathfrak{g}$-valued differential forms:
$$
  \Omega(\Sigma,\mathfrak{g})
  :=
  \left\{
    A \in \Omega^1(\Sigma)\otimes \mathfrak{g}
  \right\}
  \;\;\simeq\;\;
  \left\{
    \;\;
    \Omega^\bullet(\Sigma)
    \longleftarrow
    \mathrm{W}(\mathbf{B}\mathfrak{g})
    \;\;
  \right\}
  \,.
$$
This is because now the extra generators $r^a$ pick up the failure of the respect for the $d_{\mathrm{CE}}$-differential,
that failure is precisely the curvature $F_A$:
$$
  \xymatrix{
    A^a \ar@{|->}[d]^{d_{\mathrm{dR}}} \ar@{<-|}[rr] && t^a \ar@{|->}[dd]^{d_{\mathrm{W}}}
    \\
    d_{\mathrm{dR}}A^a
    \ar@{=}[dr]
    \\
    & \tfrac{1}{2} C^a{}_{b c} A^b \wedge A^c + F_A^a
    \ar@{<-|}[r]
    &
    \tfrac{1}{2}C^{a}{}_{b c} t^b \wedge t^c + r^a
  }
  \;\;\;\;
  \xymatrix{
    F_A^a \ar@{|->}[d]^{d_{\mathrm{dR}}} \ar@{<-|}[rr] & & r^a \ar@{|->}[dd]^{d_{\mathrm{W}}}
    \\
    d_{\mathrm{dR}} F_A^a
    \ar@{=}[dr]
    \\
    &
    C^a{}_{b c} A^b \wedge F_A^c
    \ar@{<-|}[r]
    &
    C^a{}_{b c} t^b \wedge r^c
  }
  \,.
$$
Notice here that once $t^a \mapsto A^a$ is chosen, then the diagram on the left uniquely specifies that $r^a \mapsto F_A^a$
and then the diagram on the right is already implied: its commutativity is the \emph{Bianchi idenity} $d F_A = [A\wedge F_A]$
that is satisfied by curvature forms.

Traditionally, the Weil algebra is introduced in order to define and compute invariant polynomials
on a Lie algebra.
A $d_{\mathrm{W}}$-closed element in the shifted generators
$$
  \langle -,-,\cdots\rangle \in \wedge^{k} \mathfrak{g}^\ast[1] \hookrightarrow \mathrm{W}(\mathbf{B}\mathfrak{g})
$$
is equivalently a invariant polynomial of order $k$ on the Lie algebra $\mathfrak{g}$.
Therefore write
$$
  \mathrm{inv}(\mathbf{B}\mathfrak{g})
$$
for the graded commutative algebra of invariant polynomials,
thought of as a dg-algebra with vanishing differential.

(For notational convenience we will later often abbreviate $\mathrm{CE}(\mathfrak{g})$ for $\mathrm{CE}(\mathbf{B}\mathfrak{g})$, etc.
This is unambiguous as long as no algebroids with nontrivial bases spaces appear.)

There is a canonical projection map from the Weil algebra to the Chevalley-Eilenberg algebra,
given simply by forgetting the shifted generators
($t^a \mapsto t^a$; $r^a \mapsto 0$). And there is the defining inclusion
$\mathrm{inv}(\mathbf{B}\mathfrak{g}) \hookrightarrow \mathrm{W}(\mathbf{B}\mathfrak{g})$.
$$
  \xymatrix{
    \mathrm{CE}(\mathbf{B}\mathfrak{g})
    \\
    \mathrm{W}(\mathbf{B}\mathfrak{g})
    \ar[u]
    \\
    \mathrm{inv}(\mathbf{B}\mathfrak{g})
    \ar[u]
  }
$$

Cartan had introduced all these dg-algebras as algebraic models of the universal $G$-principal bundle.
We had seen above that homomorphisms $\Omega^\bullet_{\mathrm{vert}}(U \times \Delta^k) \longleftarrow \mathrm{CE}(\mathbf{B}\mathfrak{g})$
constitute the gauge symmetry group $G$ as integration of the paths of infinitesimal symmetries. Here the vertical forms
on $U \times \Delta^k$ are themselves part of the sequence of differential forms on the trivial $k$-simplex bundle over the given coordinate chart $U$.
Hence consider compatible dg-algebra homomorphisms between these two sequences
$$
  \xymatrix{
    \Omega^{\bullet}_{\mathrm{vert}}(U \times \Delta^k)
    &&
    \mathrm{CE}(\mathbf{B}\mathfrak{g})
    \ar[ll]_-\kappa
    \\
    \Omega^\bullet(U \times \Delta^k)
    \ar[u]
    &&
    \mathrm{W}(\mathbf{B}\mathfrak{g})
    \ar[u]
    \ar[ll]_-{A}
    \\
    \Omega^\bullet(U)
    \ar[u]
    &&
    \mathrm{inv}(\mathbf{B}\mathfrak{g})
    \ar[u]
    \ar[ll]^-{\langle F_A \wedge \cdots \wedge F_A\rangle}
  }
$$
We unwind what this means in components: The middle morphism
is an unconstrained Lie algebra valued form $A\in \Omega^1(U \times \Delta^k, \mathfrak{g})$,
hence is a sum
$$
  A = A_U + A_{\Delta^k}
$$
of a 1-form $A_U$ along $U$ and 1-form $A_{\Delta^k}$ along $\Delta^k$.
The second summand $A_{\Delta^k}$
is the vertical component of $A$. The commutativity of the top square above says that as a vertical differential
form, $A_{\Delta^k}$ has to be flat. By the previous discussion this means that $A_{\Delta^k}$ encodes
a $k$-tuple of $G$-gauge transformations. Now we will see how these gauge transformations naturally act on the
gauge field $A_U$:

Consider this for the case $k = 1$, and write $t$ for the canonical coordinate along $\Delta^1 = [0,1]$.
Then $A_U$ is a smooth $t$-parameterized collection of 1-forms, hence of $\mathfrak{g}$-gauge fields, on $U$;
and $A_{\Delta^1} = \kappa \,d t$ for $\kappa$ a smooth Lie algebra valued function, called the \emph{gauge parameter}.
Now the equation for the $t$-component of the total curvature $F_A$ of $A$ says how the gauge parameter together with the
mixed curvature component causes infinitesimal transformations of the  gauge field $A_U$ as $t$ proceeds:
$$
  \frac{d}{dt} A_U = d_U \kappa - [\kappa, A] + \iota_{\partial_t} F_A
  \,.
$$
But now the commutativity of the lower square above demands that the curvature forms evaluated in
invariant polynomials have vanishing contraction with $\iota_t$. In the case that
$\mathfrak{g} = \mathbb{R}$ this means that $\iota_{\partial t} F_A = 0$, while for nonabelian $\mathfrak{g}$
this is still generically the necessary condition. So for vanishing $t$-component of the curvature
the above equation says that
$$
  \frac{d}{dt} A_U = d\kappa - [\kappa, A]
  \,.
$$
This is the traditional formula for infinitesimal gauge transformations $\kappa$ acting on
a gauge field $A_U$. Integrating this up, $\kappa$ integrates to a gauge group element
$g := P \exp(\int_0^1 \kappa d t)$ by the previous discussion, and this equation
becomes the formula for finite gauge transformations (where we abbreviate now $A_t := A_U(t)$):
$$
  A_1
  =
  g^{-1} A_0 g + g^{-1} d_{\mathrm{dR}}g
  \,.
$$
This gives the smooth groupoid $\mathbf{B}G_{\mathrm{conn}}$ of $\mathfrak{g}$-gauge fields with $G$-gauge transformations
between them.
\begin{center}
\begin{tabular}{|ccc|}
\hline
\begin{tabular}{c}
  {\bf infinitesimal}
  \\
  {\bf gauge}
  \\
  {\bf transformations}
\end{tabular}
&{\bf integration}&
\begin{tabular}{c}
  {\bf finite}
  \\
  {\bf gauge}
  \\
  {\bf transformations}
\end{tabular}
\\
\hline
$
  \raisebox{20pt}{
  \xymatrix{
    &
    A_1
    \ar@{-}[dr]^{\kappa_{1,2}}
    \\
    A_0
    \ar@{-}[ur]^{\kappa_{0,1}}
    \ar@{-}[rr]_{\kappa_{0,2}}^{\ }="s"
    &&
    A_2
    \ar@{}|{\hat A} "s"; "s"+(0,8)
  }
  }
$
&
$\mapsto$
&
$
  \raisebox{24pt}{
  \xymatrix{
    & A_1
    \ar[dr]^{P \exp(\int_0^1 \kappa_{1,2}dt)}
    \\
    A_0
    \ar[rr]_g
    \ar[ur]^{P\exp(\int_{0}^1 \kappa_{0,1} dt)}
    &&
    A_2
  }
  }
$
\\
\hline
\end{tabular}
\end{center}
Hence $\mathbf{B}G_{\mathrm{conn}}$ is the smooth groupoid such that for $U$ an abstract coordinate chart,
the smoothly $U$-parameterized collections of its objects are
$\mathfrak{g}$-valued differential forms $A \in \Omega(U,\mathfrak{g})$ of, and whose $U$-parameterized collections
of gauge transformations are $G$-valued functions $g$ acting by
$$
  \xymatrix{
   U
   \ar@/_1.8pc/[rr]_{g^{-1}A g + g^{-1}d g}^{\ }="t"
   \ar@/^1.8pc/[rr]^{A}_{\ }="s"
   &&
   \mathbf{B}G_{\mathrm{conn}}
   \ar@{=>}^g "s"; "t"
  }
  \,.
$$

This dg-algebraic picture of gauge fields with gauge transformations between them
now immediately generalizes to higher gauge fields with higher gauge transformations
between them. Moreover, this picture allows to produce prequantized higher Chern-Simons-type
Lagrangians by Lie integration of transgressive $L_\infty$-cocycles.

\paragraph{Higher gauge fields.}
\label{introductionhighergaugefields}

Ordinary gauge fields are characterized by the property that there are no non-trivial
gauge-of-gauge transformations, equivalently that their BRST complexes contain no higher
order ghosts.  Mathematically, it is natural to generalize beyond this case to
\emph{higher gauge fields}, which do have non-trivial higher gauge transformations.
The simplest example is a ``2-form field'' (``$B$-field''), generalizing the ``vector potential'' 1-form
$A$ of the electromagnetic field. Where such a 1-form has gauge transformations given by
0-forms (functions) $\kappa$ via
$$
  \xymatrix{
    A
    \ar@/^1pc/[rr]^-\kappa
    &&
    A' = A +d\kappa
  }
  \,,
$$
a 2-form $B$ has gauge transformations given by 1-forms $\rho_1$, which themselves then have
gauge-of-gauge-transformations given by 0-forms $\rho_0$:
$$
  \xymatrix{
    B
    \ar@/_2pc/[rr]_{\rho'_1 = \rho_1+ d\rho_0}^{\ }="t"
    \ar@/^2pc/[rr]^{\rho_1}_{\ }="s"
    &&
    \rlap{$B' = B + d \rho_1 = B + d\rho'_1$}
    \ar@{=>}^{\rho_0} "s"; "t"
  }
  ,.
$$
Next a ``3-form field'' (``$C$-field'') has third order gauge transformations:
$$
  \xymatrix{
    C
    \ar@/_2pc/[rr]_{{\rho'_2 = \rho_2+ d\rho_1} \atop {\;\;\;\; = \rho_2 + d \rho'_1}}^{\ }="t"
    \ar@/^2pc/[rr]^{\rho_2}_{\ }="s"
    &&
    \rlap{$C' = C + d \rho_2 = C + d\rho'_2$}
    \ar@{}|{\stackrel{\rho_0}{\Leftarrow}} "s"; "t"
    \ar@/^1pc/@{=>}^{\rho_1} "s"; "t"
    \ar@/_1pc/@{=>}_{\rho'_1  } "s"; "t"
  }
  ,.
$$
Similarly ``$n$-form fields'' have order-$n$ gauge-of-gauge transformations and hence have
order-$n$ ghost-of-ghosts in their BRST complexes.

Higher gauge fields have not been experimentally observed, to date, as fundamental fields of nature,
but they appear by necessity and ubiquitously in higher dimensional supergravity and in the
hypothetical physics of strings and $p$-branes. The higher differential geometry which we develop
is to a large extent motivated by making precise and tractable the global structure of higher
gauge fields in string and M-theory.

Generally, higher gauge fields are part of mathematical physics
just as the Ising model and $\phi^4$-theory are, and as such they do serve to illuminate
the structure of experimentally verified physics.
For instance the Einstein equations of motion
for ordinary (bosonic) general relativity on 11-dimensional spacetimes are equivalent
to the full super-torsion constraint in 11-dimensional supergravity with its 3-form
higher gauge field \cite{CandielloLechner}.
From this point of view one may regard the
the 3-form higher gauge field in supergravity, together with the gravitino, as auxiliary fields
that serve to present Einstein's equations for the graviton in a particularly neat mathematical way.

We now use the above dg-algebraic formulation of ordinary gauge fields above in section \ref{introductionordinarygaugefields}
in order to give a quick but accurate idea of the mathematical structure of higher gauge fields.

\medskip

Above we saw that (finite dimensional) Lie algebras are equivalently the formal duals of those differential graded-commutative
algebras whose underlying graded commutative algebra is freely generated from a (finite dimensional)
vector space over the ground field. From this perspective, there are two evident generalizations
to be considered: we may take the underlying vector space to already have contributions in
higher degrees itself, and we may pass from vector spaces, being modules over the ground field $\mathbb{R}$,
to (finite rank) projective modules over an algebra of smooth functions on a smooth manifold.

Hence we say that an \emph{$L_\infty$-algebroid} (of finite type) is a smooth manifold $X$ equipped with a
$\mathbb{N}$-graded vector bundle (degreewise of finite rank),
whose smooth sections hence form an $\mathbb{N}$-graded projective $C^\infty(X)$-module
$\mathfrak{a}_\bullet$, and equipped with an $\mathbb{R}$-linear differential $d_{\mathrm{CE}}$ on the Grassmann algebra of
the $C^\infty(X)$-dual $\mathfrak{a}^\ast$ modules
$$
  \mathrm{CE}(\mathfrak{a})
  :=
  \left(
    \wedge^\bullet_{C^\infty(X)}(\mathfrak{a}^\ast),
    \;
    d_{\mathrm{CE}(\mathfrak{a})}
  \right)
  \,.
$$
Accordingly, a homomorphism of $L_\infty$-algebroids we take to be a dg-algebra homomorphism (over $\mathbb{R}$)
of their CE-algebras going the other way around. Hence the category of $L_\infty$-algebroids
is the full subcategory of the opposite of that of differential graded-commutative algebras over $\mathbb{R}$
on those whose underlying graded-commutative algebra is free on graded locally free projective
$C^\infty(X)$-modules:
$$
  L_\infty\mathrm{Algbd}
  \hookrightarrow
  \mathrm{dgcAlg}^\mathrm{op}
  \,.
$$
We say we have a \emph{Lie $n$-algebroid} when $\mathfrak{a}$ is concentrated in the lowest $n$-degrees.
Here are some important examples of $L_\infty$-algebroids:

When the base space is the point, $X = \ast$, and $\mathfrak{a} $ is concentrated in degree 0, then
we recover {\bf Lie algebras}, as above. Generally, when the base space is the point, then the $\mathbb{N}$-graded module $\mathfrak{a}$
is just an $\mathbb{N}$-graded vector space $\mathfrak{g}$. We write $\mathfrak{a} = \mathbf{B}\mathfrak{g}$
to indicate this, and then $\mathfrak{g}$ is an {\bf $L_\infty$-algebra}.
When in addition $\mathfrak{g}$ is concentrated in the lowest $n$ degrees, then these are also
called {\bf Lie $n$-algebras}. With no constraint on the grading but assuming that the differential sends single generators
always to sums of wedge products of at most two generators, then we get {\bf dg-Lie algebras}.

The Weil algebra of a Lie algebra $\mathfrak{g}$ hence exhibits a Lie 2-algebra. We may think of this as the Lie 2-algebra
$\mathrm{inn}(\mathfrak{g})$ of inner derivations of $\mathfrak{g}$. By the above discussion, it is suggestive to
write $\mathbf{E}\mathfrak{g}$ for this Lie 2-algebra, hence
$$
  \mathrm{W}(\mathbf{B}\mathfrak{g}) = \mathrm{CE}(\mathbf{B}\mathbf{E}\mathfrak{g})
  \,.
$$

If $\mathfrak{g} = \mathbb{R}[n]$ is concentrated in degree $p$ on the real line
(so that the CE-differential is necessarily trivial), then we speak
of the \emph{line Lie $(p+1)$-algebra} $\mathbf{B}^p \mathbb{R}$, which as an $L_\infty$-algebroid over the point
is to be denoted
$$
  \mathbf{B}\mathbf{B}^{p}\mathbb{R} = \mathbf{B}^{p+1}\mathbb{R}
  \,.
$$

All this goes through verbatim, up to additional signs, with all vector spaces generalized to super-vector spaces. The Chevalley-Eilenberg
algebras of the resulting {\bf super $L_\infty$-algebras} are known in parts of the supergravity literature
as {\bf FDA}s \cite{CDF}.

Passing now to $L_\infty$-algebroids over non-trivial base spaces, first of all every smooth manifold $X$
may be regarded as the $L_\infty$-algebroid over $X$, these are the Lie 0-algebroids. We just write
$\mathfrak{a} = X$ when the context is clear.

For the tangent bundle $T X$ over $X$
then the graded algebra of its dual sections is the wedge product algebra of differential forms,
$\mathrm{CE}(T X ) = \Omega^\bullet(X)$ and hence the de Rham differential makes
$\wedge^\bullet \Gamma(T^\ast X)$ into a dgc-algebra and hence makes $T X$ into a
Lie algebroid. This is called the  {\bf tangent Lie algebroid} of $X$.
We usually write $\mathfrak{a} = TX$ for the tangent Lie algebroid
(trusting that context makes it clear that we do not mean the Lie 0-algebroid over the underlying manifold
of the tangent bundle itself).
In particular this means that
for any other $L_\infty$-algebroid $\mathfrak{a}$ then
flat $\mathfrak{a}$-valued differential forms on some smooth manifold $\Sigma$ are equivalently
homomorphisms of $L_\infty$-algebroids like so:
$$
  \Omega_{\mathrm{flat}}(\Sigma,\mathfrak{a})
  \;\;=\;\;
  \left\{
    \; T \Sigma \longrightarrow \mathfrak{a}\;
  \right\}
  \,.
$$
In particular ordinary closed differential forms of degree $n$ are equivalently
flat $\mathbf{B}^n\mathbb{R}$-valued differential forms:
$$
  \Omega^n_{\mathrm{cl}}(\Sigma)
  \;\; \simeq \;\;
  \left\{
    \; T \Sigma \longrightarrow \mathbf{B}^n \mathbb{R}\;
  \right\}
  \,.
$$

More generally,
for $\mathfrak{a}$ any $L_\infty$-algebroid over some base manifold $X$, then we have its Weil dgc-algebra
$$
  \mathrm{W}(\mathfrak{a})
  :=
  \left(
    \wedge^\bullet_{C^\infty(X)}( \mathfrak{a}^\ast \oplus \Gamma(T^\ast X) \oplus \mathfrak{a}^\ast[1] ),
      d_{\mathrm{W}} = d_{\mathrm{CE}} + \mathbf{d} )
  \right)
  \,,
$$
where $\mathbf{d}$ acts as the degree shift isomorphism in the components
$\wedge^1_{C^\infty(X)}\mathfrak{a}^\ast \longrightarrow \wedge^1_{C^\infty(X)}\mathfrak{a}^\ast[1]$
and as the de Rham differential in the components $\wedge^k \Gamma(T^\ast X) \to \wedge^{k+1}\Gamma(T^\ast X)$.
This defines a new $L_\infty$-algebroid that may be called the {\bf tangent $L_\infty$-algebroid} $T \mathfrak{a}$
$$
  \mathrm{CE}(T \mathfrak{a}) := \mathrm{W}(\mathfrak{a})
  \,.
$$
We also write $\mathbf{E}\mathbf{B}^p \mathbb{R}$ for the $L_\infty$-algebroid with
$$
  \mathrm{CE}(\mathbf{E}\mathbf{B}^p \mathbb{R})
  :=
  \mathrm{W}(\mathbf{B}^{p}\mathbb{R})
  \,.
$$

In direct analogy with the discussion for Lie algebras, we then say that an unconstrained
$\mathfrak{a}$-valued differential form $A$ on a manifold $\Sigma$ is a dg-algebra homomorphism from the
Weil algebra of $\mathfrak{a}$ to the de Rham dg-algebra on $\Sigma$:
$$
  \Omega(\Sigma,\mathfrak{a})
  :=
  \left\{
    \; \Omega^\bullet(\Sigma) \longleftarrow \mathrm{W}(\mathfrak{a}) \;
  \right\}
  \,.
$$

For $G$ a Lie group acting on $X$ by diffeomorphisms, then there is the {\bf action Lie algebroid} $X/\mathfrak{g}$
over $X$ with
$\mathfrak{a}_0 = \Gamma_X(X \times \mathfrak{g})$ the $\mathfrak{g}$-valued smooth functions over $X$.
Write $\rho : \mathfrak{g} \to \mathrm{Vect}$
for the linearized action. With a choice of basis $\{t_a\}$ for $\mathfrak{g}$ as before
and assuming that $X = \mathbb{R}^n$ with canonical coordinates $x^i$, then $\rho$
has components $\{\rho_a^\mu\}$ and the CE-differential on $\wedge^\bullet_{C^\infty(X)} (\Gamma_X(X \times \mathfrak{g}^\ast))$
is given on generators by
$$
  \begin{aligned}
    d_{\mathrm{CE}} &: f \mapsto t^a \rho_a^\mu \partial_\mu f
    \\
    d_{\mathrm{CE}} &: t^a \mapsto \tfrac{1}{2}C^{a}{}_{b c} t^b \wedge t^c
  \end{aligned}
  \,.
$$
In the physics literature this Chevalley-Eilenberg algebra
$\mathrm{CE}(X/\mathfrak{g})$ is known as the {\bf BRST complex}
of $X$ for infinitesimal symmetries $\mathfrak{g}$. If $X$ is thought of as a space of fields, then
the $t_a$ are called \emph{ghost fields}.\footnote{More generally the base manifold $X$ may be a derived manifold/BV-complex
as in footnote \ref{BVremark}. Then $\mathrm{CE}(X/\mathfrak{g})$ is known as the ``BV-BRST complex''.}

Given any $L_\infty$-algebroid, it induces further $L_\infty$-algebroids via its extension by higher cocycles.
A \emph{$p+1$-cocycle} on an $L_\infty$-algebroid $\mathfrak{a}$ is a closed element
$$
  \mu \in (\wedge^\bullet_{C^\infty(X)}\mathfrak{a}^\ast)_{p+1} \hookrightarrow \mathrm{CE}(\mathfrak{a})
  \,.
$$
Notice that now cocycles are \emph{representable} by the higher line $L_\infty$-algebras
$\mathbf{B}^{p+1}\mathbb{R}$ from above:
$$
  \begin{aligned}
  \left\{
    \mu \in \mathrm{CE}(\mathfrak{a})_{p+1} \;|\; d_{\mathrm{CE}}\mu = 0
  \right\}
  &
  \simeq
  \left\{
    \;
    \mathrm{CE}(\mathfrak{a})
    \stackrel{\mu^\ast}{\longleftarrow}
    \mathrm{CE}(\mathbf{B}^{p+1}\mathbb{R})
    \;
  \right\}
  \\
  & =
  \left\{
    \;
    \mathfrak{a} \stackrel{\mu}{\longrightarrow} \mathbf{B}^{p+1}\mathbb{R}
    \;
  \right\}
  \end{aligned}
  \,.
$$
It is a traditional fact that $\mathbb{R}$-valued 2-cocycles on a Lie algebra induce central Lie algebra extensions. More generally,
higher cocycles $\mu$ on an $L_\infty$-algebroid induce $L_\infty$-extensions $\hat {\mathfrak{a}}$, given by the pullback
$$
  \raisebox{20pt}{
  \xymatrix{
    \hat{\mathfrak{a}} \ar@{}[ddrr]|{\mbox{\tiny (pb)}} \ar[dd] \ar[rr] && \mathbf{E}\mathbf{B}^{p}\mathbb{R} \ar[dd]
    \\
    \\
    \mathfrak{a}
    \ar[rr]^\mu
    &&
    \mathbf{B}^{p+1}\mathbb{R}
  }
  }
  \,.
$$
Equivalently this makes $\hat{\mathfrak{a}}$ be the homotopy fiber of $\mu$ in the homotopy theory of
$L_\infty$-algebras, and induces a long homotopy fiber sequence of the form
$$
  \raisebox{20pt}{
  \xymatrix{
    \mathbf{B}^p \mathbb{R} \ar[r] & \hat{\mathfrak{a}} \ar[d]
    \\
    & \mathfrak{a} \ar[rr]^{\mu}
    &&
    \mathbf{B}^{p+1}\mathbb{R}
  }
  }\,.
$$
In components this means simply that $\mathrm{CE}(\hat{\mathfrak{a}})$ is obtained from $\mathrm{CE}(\mathfrak{a})$
by adding one generator $c$ in degre $p$ and extending the differential to it by the formula
$$
  d_{\mathrm{CE}} : c = \mu
  \,.
$$
This construction has a long tradition in the supergravity literature \cite{CDF}\cite{InfinityWZW},
we come to the examples considered there below in section \ref{introductionWZWmodels}.
Iterating this construction, out of every $L_\infty$-algebroid their grows a whole bouquet of
further $L_\infty$-algebroids
$$
  \xymatrix{
    \ar@{..}[d]
    &
    \ar@{..}[dr]
    &
    \ar@{..}[d]
    &
    \ar@{..}[dl]
    \\
    \ar[dr] && \ar[dl]
    \\
    &\hat{\hat{\mathfrak{a}}}
    \ar[d]
    \\
    &\hat{\mathfrak{a}}
    \ar[d]
    \ar[rr]^-{\mu_2}
    &&
    \mathbf{B}^{p_2+2}
    \\
    & \mathfrak{a}
    \ar[rr]^-{\mu_1}
    &&
    \mathbf{B}^{p_1+2}
  }
$$

For example for $\mathfrak{g}$ a semisimple Lie algebra with binary invariant polynomial $\langle -,-\rangle$
(the Killing form), then $\mu_3 = \langle-,[-,-]\rangle$ is a 3-cocycle. The $L_\infty$-extension by this
cocycle is a Lie 2-algebra called the \emph{string Lie 2-algebra} $\mathfrak{string}_\mathfrak{g}$.
If $\{t^a\}$ is a linear basis of $\mathfrak{g}^\ast$ as before write $k_{a b} := \langle t_a, t_b\rangle$
for the components of the Killing form; the components of the 3-cocycle are
$\mu_{a b c} = k_{a a'}C^{a'}{}_{b c}$.  The CE-algebra of the string Lie 2-algebra then is that of
$\mathfrak{g}$ with a generator $b$ added and with CE-differential defined by
$$
  \begin{aligned}
    d_{\mathrm{CE}(\mathfrak{string})} & : t^a \mapsto \tfrac{1}{2}C^a{}_{b c} t^b \wedge t^c
    \\
    d_{\mathrm{CE}(\mathfrak{string})} & : b \mapsto k_{a a'}C^{a'}{}_{c b} t^a \wedge t^b \wedge t^c
    \,.
  \end{aligned}
$$
Hence a flat $\mathfrak{string}_{\mathfrak{g}}$-valued differential form on some $\Sigma$
is a pair consisting of an ordinary  flat $\mathfrak{g}$-valued 1-form $A$ and of a 2-form
$B$ whose differential has to equal the evaluation of $A$ in the 3-cocoycle:
$$
  \Omega_{\mathrm{flat}}(\Sigma,\mathfrak{string}_{\mathfrak{g}})
  \;\simeq\;
  \left\{
    (A,B) \in \Omega^1(\Sigma,\mathfrak{g})\times\Omega^2(\Sigma)
    \;|\;
    F_A = 0 \,,\;\; d B = \langle A \wedge [A \wedge A]\rangle
  \right\}
  \,.
$$
Notice that since $A$ is flat, the 3-form $\langle A \wedge [A \wedge A]\rangle$ is its Chern-Simons 3-form.
More generally, Chern-Simons forms are such that their differential is the evaluation of the curvature
of $A$ in an invariant polynomial.

An invariant polynomial $\langle -\rangle$ on an $L_\infty$-algebroid
we may take to be a $d_{\mathrm{W}}$-closed element in the shifted generators of its Weil algebra $\mathrm{W}(\mathfrak{a})$
$$
  \langle -\rangle \in \wedge^\bullet_{C^\infty(X)}(\mathfrak{a}^\ast[1]) \hookrightarrow \mathrm{W}(\mathfrak{a})
  \,.
$$
When one requires the invariant polynomial to be binary, i.e. in
$\wedge^2 (\mathfrak{a}^\ast[1]) \to \mathrm{W}(\mathfrak{a})$ and non-degenerate,
then it is also called a \emph{shifted symplectic form} and it makes
$\mathfrak{a}$ into a ``symplectic Lie $n$-algebroid''. For $n = 0$ these are the
symplectic manifolds, for $n = 1$ these are called \emph{Poisson Lie algebroids},
for $n = 2$ they are called \emph{Courant Lie 2-algebroids} \cite{RoytenbergCourant}.
There are also plenty of non-binary invariant polynomials, we discuss further examples
below in section \ref{introductionchernsimonstypefieldtheories}.

Being $d_{\mathrm{W}}$-closed, an invariant polynomial on $\mathfrak{a}$
is represented by a dg-homomorphism:
$$
  W(\mathfrak{a}) \longleftarrow \mathrm{CE}(\mathbf{B}^{p+2}\mathbb{R}) : \langle - \rangle
$$
This means that given an invariant polynomial $\langle -\rangle$ for an
$L_\infty$-algebroid $\mathfrak{a}$, then it assigns to any $\mathfrak{a}$-valued
differential form $A$ a plain closed $(p+2)$-form $\langle F_A\rangle$ made up of the $\mathfrak{a}$-curvature forms,
namely the composite
$$
  \Omega^\bullet(\Sigma)
  \stackrel{A}{\longleftarrow}
  W(\mathfrak{a})
  \stackrel{\langle -\rangle}{\longleftarrow} \mathrm{CE}(\mathbf{B}^{p+2}\mathbb{R}) : \langle F_A \rangle
  \,.
$$
In other words, $A$ may be regarded as a nonabelian pre-quantization of $\langle F_A\rangle$.

Therefore we may consider now the $\infty$-groupoid of $\mathfrak{a}$-connections whose
gauge transformations preserve the specified invariant polynomial, such as to guarantee that
it remains a globally well-defined differential form. The smooth $\infty$-groupoid of
$\mathfrak{a}$-valued connections with such gauge transformations between them we write
$\exp(\mathfrak{a})_{\mathrm{conn}}$. As a smooth simplicial presheaf, it is hence
given by the following assignment:
$$
  \exp(\mathfrak{a})_{\mathrm{conn}}
  \;:\;
  (U,k)
  \mapsto
  \left\{
    \raisebox{44pt}{
    \xymatrix{
      \Omega^{\bullet}_{\mathrm{vert}}(U \times \Delta^k)
      &&
      \mathrm{CE}(\mathfrak{a})
      \ar[ll]
      \\
      \Omega^\bullet(U \times \Delta^k)
      \ar[u]
      &&
      \mathrm{W}(\mathfrak{a})
      \ar[u]
      \ar[ll]_{A}
      \\
      \Omega^\bullet(U)
      \ar[u]
      &&
      \mathrm{inv}(\mathfrak{a})
      \ar[u]
      \ar[ll]^{\langle F_A\rangle}
    }
    }
  \right\}
$$
Here on the right we have, for every $U$ and $k$, the set of those $A$ on $U \times \Delta^k$
that induce gauge transformations along the $\Delta^k$-direction (that is the commutativity of the
top square) such that the given invariant polynomials evaluated on the curvatures are preserved
(that is the commutativity of the bottom square).

This $\exp(\mathfrak{a})_{\mathrm{conn}}$ is the moduli stack of $\mathfrak{a}$-valued connections with gauge transformations
and gauge-of-gauge transformations between them that preserve the chosen invariant polynomials \cite{FSS}\cite{frs}.

The key example is the moduli stack of $(p+1)$-form gauge fields
$$
  \exp(\mathbf{B}^{p+1}\mathbb{R})_{\mathrm{conn}}/\mathbb{Z}
  \simeq
  \mathbf{B}(\mathbb{R}/_{\!\hbar}\mathbb{Z})_{\mathrm{conn}}
$$

Generically we write
$$
  \mathbf{A}_{\mathrm{conn}} := \mathrm{cosk}_{n+1} (\exp(\mathfrak{a})_{\mathrm{conn}})
$$
for the $n$-truncation of a higher smooth stack of $\mathfrak{a}$-valued gauge field connections
obtained this way. If $\mathfrak{a} = \mathbf{B}\mathfrak{g}$ then we write $\mathbf{B}G_{\mathrm{conn}}$
for this.

Given such, then an $\mathfrak{a}$-gauge field on $\Sigma$ (an $\mathbf{A}$-principal connection) is equivalently a map of smooth higher stacks
$$
  \nabla : \Sigma \longrightarrow \mathbf{A}_{\mathrm{conn}}
  \,.
$$
By the above discussion, a simple map like this subsumes all of the following component data:
\begin{enumerate}
  \item
    a choice of open cover $\{U_i \to \Sigma\}$;
  \item
    a $\mathfrak{a}$-valued differential form $A_i$ on each chart $U_i$;
  \item
    on each intersection $U_{i j}$ of charts
    a path of infinitesimal gauge symmetries whose integrated finite gauge symmetry $g_{i j}$ takes $A_i$
    to $A_j$;
  \item
    on each triple intersection $U_{i j k}$ of charts a path-of-paths of infinitesimal gauge symmetries
    whose integrated finite gauge-of-gauge symmetry takes the gauge transformation $g_{i j}\cdot g_{j k}$
    to the gauge transformation $g_{i k}$
  \item
    and so on.
\end{enumerate}
Hence a $\mathfrak{a}$-gauge field is locally $\mathfrak{a}$-valued differential form data which are coherently
glued together to a global structure by gauge transformations and higher order gauge-of-gauge transformations.

Given two globally defined $\mathfrak{a}$-valued gauge fields this way, then
a globally defined gauge transformation them is equivalently a homotopy between maps
of smooth higher stacks
$$
  \xymatrix{
    \Sigma
    \ar@/^2pc/[rr]^{\nabla_1}_{\ }="s"
    \ar@/_2pc/[rr]_{\nabla_2}^{\ }="t"
    &&
    \mathbf{A}_{\mathrm{conn}}
    \ar@{=>}^\simeq "s"; "t"
  }
  \,.
$$
Again, this concisely encodes a system of local data: this is on each chart $U_i$
a path of inifinitesimal gauge symmetries whose integrated gauge transformation
transforms the local $\mathfrak{a}$-valued forms into each other, together with
various higher order gauge transformations and compatibilities on higher order
intersections of charts.

Then a gauge-of-gauge transformation is a homotopy of homotopies
$$
  \xymatrix{
    \Sigma
    \ar@/^2pc/[rr]^{\nabla_1}_{\ }="s"
    \ar@/_2pc/[rr]_{\nabla_2}^{\ }="t"
    &&
    \mathbf{A}_{\mathrm{conn}}
    \ar@/^1pc/@{=>} "s"; "t"
    \ar@/_1pc/@{=>} "s"; "t"
    \ar@{}|{\Leftarrow} "s"; "t"
  }
$$
and again this encodes a recipe for how to extract the corresponding local differential form data.

\paragraph{The BV-BRST complex}
\label{introductionbvcomplex}

(This subsection is momentarily just a quick side remark. )

The category of partial differential equations that we referred to so far, as in \cite{Marvan86},
is modeled on the category of smooth manifolds. Accordingly, it really only contains
differential equations that are non-singular enough such as to guarantee that the shell locus
$\mathcal{E} \hookrightarrow J^\infty E$ is itself a smooth manifold.
This is not the case for all differential equations of interest. For some pairs of differential operators,
their equalizer  $\xymatrix{ E \ar@<-4pt>[r] \ar@<+4pt>[r] & F }$ does not actually exist in smooth bundles
modeled on manifolds.

This is no problem when working in the sheaf topos over $\mathrm{PDE}_\Sigma$, where all limits do exist as diffeological bundles.
However, even though all limits exist here, some do not interact properly with other constructions of interest.
For instance intersection products in cohomology will not properly count non-transversal intersections,
even if they do exist as diffeological spaces.

To fix this, we may pass to a category of ``derived manifolds''.
In generalization of how an ordinary smooth manifold
is the formal dual to its real algebra of smooth functions, via the faithful embedding
$$
  C^\infty : \mathrm{SmoothMfd} \hookrightarrow \mathrm{CAlg}_{\mathbb{R}}^{\mathrm{op}}
  \,,
$$
so a derived manifold is the formal dual to
a differential graded-commutative algebra in non-positive degrees, whose underlying graded algebra is
of the form $\wedge^\bullet_{C^\infty(X)}(\Gamma(V^\ast))$ for $V$ a $-\mathbb{N}$-graded smooth vector
bundle over $X$. In the physics literature these dg-algebras are known as \emph{BV-complexes}.

For example, for $X$ a smooth manifold and $S \in C^\infty(X)$ a smooth function on it, then the vanishing
locus of $S$ in $X$ is represented by the derived manifold $\mathrm{ker}_d(S)$ that is formally dual to the dg-algebra
denoted $C^\infty(\mathrm{ker}_d(S))$ which is spanned over $C^\infty(X)$ by a single generator $t$ of degree -1 and whose
differential (linear over $\mathbb{R}$) is defined by
$$
  d_{\mathrm{BV}} : t \mapsto S
  \,.
$$
For $\Sigma$ an ordinary smooth manifold, then morphisms $\Sigma \longrightarrow \mathrm{ker}_d(\Sigma)$
are equivalently dg-algebra homomorphisms $C^\infty(\Sigma) \longleftarrow C^\infty(\mathrm{ker}_d(\Sigma))$,
and these are equivalently algebra homomorphisms $\phi^\ast : C^\infty(\Sigma)\longleftarrow C^\infty(X)$ such that
$\phi^\ast S = 0$. These, finally, are equivalently smooth functions $\phi : \Sigma \longrightarrow X$ that land
everywhere in the 0-locus of $S$. It is in this way that $\mathrm{ker}_d(S)$ is a resolution of the
possibly singular vanishing locus by a complex of non-singular smooth bundles.

Notice that even if the kernel of $S$ does exist as a smooth submanifold $\mathrm{ker}(S) \hookrightarrow $ it
need not be equivalent to the derived kernel: for instance over $X = \mathbb{R}^1$ with its canonical coordinate
function $x$, then $ker(x) = \{0\}$ but $\mathrm{ker}_d(x^2) \simeq \mathbb{D}_0^{(1)}$ is the infinitesimal
interval around $0$.

Given a derived manifold $X_d$ this way, then for each $k \in \mathbb{N}$ the differential $k$-forms
on $X_d$ also inherit the BV-differential, on top of the de Rham differential. We write $\Omega^{k;-s}(X_d)$
to indicate the differential $k$-forms of BV-degree $-s$. So in particular the 0-forms recover the
BV dg-algebra itself $\Omega^{0,-\bullet}(X_d) = C^\infty(X_d)$.

Hence using underived manifolds, then the conservation of the presymplectic current, $d_H \Omega = 0$, implies that
over a spacetime/worldvolume $\Sigma$ with two boundary components $\Sigma_{\mathrm{in}} = \partial_{\mathrm{in}}\Sigma$
and $\Sigma_{\mathrm{out}} = \partial_{\mathrm{out}}\Sigma$ then
the canonical pre-symplectic forms $\omega_{\mathrm{in}}$ and $\omega_{\mathrm{out}}$ agree
$$
  \raisebox{20pt}{
  \xymatrix{
    & [\Sigma, \mathcal{E}]_{\Sigma}
    \ar[dl]_{\pi_{\mathrm{in}}}
    \ar[dr]_{\ }="s"^{\pi_{\mathrm{out}}}
    \\
    [N^\infty_{\Sigma}\Sigma_{\mathrm{in}},\mathcal{E}]_\Sigma
    \ar[dr]_{\omega_{\mathrm{in}}}^{\ }="t"
    &&
    [N^\infty_{\Sigma}\Sigma_{\mathrm{out}},\mathcal{E}]_\Sigma
    \ar[dl]^{\omega_{\mathrm{out}}}
    \\
    & \mathbf{\Omega}^{2}
    \ar@{=} "s"; "t"
  }
  }
  \;\;\;\;\;\;
  \pi^\ast_{\mathrm{out}} \omega_{\mathrm{out}} -  \pi^\ast_{\mathrm{in}}\omega_{\mathrm{in}} = 0
$$

When the covariant phase space is resolved by a derived space $([\Sigma,\mathcal{E}]_\Sigma)_d$, then
this equation becomes a homotopy which asserts the existence of a 2-form $\omega_{\mathrm{BV}}$ of BV-degree
-1 which witnesses the invariance of the canonical presymplectic form:
$$
  \raisebox{20pt}{
  \xymatrix{
    & ([\Sigma, \mathcal{E}]_{\Sigma})_d
    \ar[dl]_{\pi_{\mathrm{in}}}
    \ar[dr]_{\ }="s"^{\pi_{\mathrm{out}}}
    \\
    [N^\infty_{\Sigma}\Sigma_{\mathrm{in}},\mathcal{E}]_\Sigma
    \ar[dr]_{\omega_{\mathrm{in}}}^{\ }="t"
    &&
    [N^\infty_{\Sigma}\Sigma_{\mathrm{out}},\mathcal{E}]_\Sigma
    \ar[dl]^{\omega_{\mathrm{out}}}
    \\
    & \mathbf{\Omega}^{2;-\bullet}
    \ar@{=>}^{\omega_{\mathrm{BV}}} "s"; "t"
  }
  }
  \;\;\;\;\;\;
  \pi^\ast_{\mathrm{out}} \omega_{\mathrm{out}} -  \pi^\ast_{\mathrm{in}}\omega_{\mathrm{in}} = d_{\mathrm{BV}}\omega_{\mathrm{BV}}
  \,.
$$

The equation on the right appears in the BV-liteature as \cite[equation (9)]{CattaneoMnevReshetikhin12}).

For the purpose of prequantum field theory, we again wish to de-transgress this phenomenon. Instead of
just modelling the covariant phase space by a derived space, we should model the dynamical shell
$\mathcal{E} \hookrightarrow J^\infty_\Sigma E$ itself by a derived bundle.

The derived shell $\mathrm{ker}_d(\mathrm{EL})$
is the derived manifold bundle over $\Sigma$ whose underlying manifold is $J^\infty_\Sigma E$
and whose bundle of antifields is the pullback of $V^\ast E \otimes \wedge^{p+1} T^\ast \Sigma$
to the jet bundle (along the projection maps to $E$).

If $\phi^i$ are a choice of local vertical coordinates on $E$ (the fields) and $\phi_i^\ast$ denotes the corresponding
local antifield coordinates with respect to any chosen volume form on $\Sigma$, then this BV-differential looks like
$$
  d_{\mathrm{BV}} =  \mathrm{EL}_i  \frac{\partial}{\partial \phi^\ast_i}
  \;\; :\;\; \phi_i^\ast \mapsto \mathrm{EL}_i
  \,.
$$
When regarded as an odd graded vector field, this differential is traditionally denoted by $Q$.

In such coordinates there is then the following canonical differential form
$$
  \Omega_{\mathrm{BV}} = d_V \phi_i^\ast \wedge d_V \phi^i \;\;\; \in \Omega^{p+1,2;-1}(\mathrm{ker}_d(\mathrm{EL}))
$$
which, as indicated, is of BV-degree -1 and otherwise is a $(p+3)$-form with horizontal degree $p+1$ vertical
degree 2. More abstractly, this form is characterized by the property that
$$
  \iota_{Q}\Omega_{\mathrm{BV}} = \mathrm{EL} \;\;\;  \in \Omega^{p+1,1;0}(\mathrm{ker}_d(\mathrm{EL}))
  \,.
$$

As before, we write
$$
  \omega_{\mathrm{BV}} := \int_\Sigma \Omega_{\mathrm{BV}}
$$
for the transgression of this form to the covariant phase space. We now claim that there it satisfies
the above relation of witnessing he conservation of the presymplectic current up to BV-exact terms
\footnote{
This was first pointed out by us informally on the $n$Lab in October 2011
{\url{http://ncatlab.org/nlab/revision/diff/phase+space/29}}.}
In fact it satisfies the following stronger relation
\begin{equation}
  \label{differentialofLagrangianintransgressedBVincarnation}
  \iota_Q \omega_{\mathrm{BV}} = d S + \pi^\ast \theta
\end{equation}
which turns out to be the transgressed and BV-theoretic version of the fundamental variational
equation \ref{differentialofLagrangian}:
$$
  \begin{aligned}
    d S & = d \int_\Sigma L
    \\
    & = \int_\Sigma d L
    \\
    & = \int_\Sigma ( \mathrm{EL} -d_H \Theta )
    \\
    & = \int_\Sigma (\iota_Q \Omega_{\mathrm{BV}} - d_H \Theta )
    \\
    & = \iota_Q \omega_{\mathrm{BV}} - \pi^\ast \theta \,.
  \end{aligned}
$$
Equation \ref{differentialofLagrangianintransgressedBVincarnation} has been postulated
as the fundamental compatibility condition for BV-theory on spacetimes $\Sigma$s with boundary in
\cite[equation (7)]{CattaneoMnevReshetikhin12}. Applying $d$ to both sides of this equation
recovers the previous $d_{\mathrm{BV}} \omega_{\mathrm{BV}} = \pi^\ast \omega$.

Notice that equation \ref{differentialofLagrangianintransgressedBVincarnation} may be read as saying that
the action functional is a Hamiltonian, not for the ordinary presymplectic structure, but for the
BV-symplectic structure.

\medskip

\begin{center}
\begin{tabular}{|c|c|}
  \hline
  \begin{tabular}{c}
    {\bf concept in}
    \\
    {\bf classical field theory}
  \end{tabular}
  &
  \begin{tabular}{c}
    {\bf local model in}
    \\
    {\bf in BV-BRST formalism}
  \end{tabular}
  \\
  \hline
  \begin{tabular}{c}
    $[N^\infty_\Sigma\Sigma_p,\mathcal{E}]_\Sigma$
    \\
    phase space
  \end{tabular}  &
  \begin{tabular}{c}
    $d_{\mathrm{BV}}$
    \\
    BV-complex
    \\
    of anti-fields
  \end{tabular}
  \\
  \hline
  \begin{tabular}{c}
    $\omega_{\mathrm{in}} = \omega_{\mathrm{out}}$
    \\
    independence of presymplectic form
    \\
    from choice of Cauchy surface
  \end{tabular}
  &
  \begin{tabular}{c}
    $\xymatrix{\omega_{\mathrm{in}} \ar[rr]|{\simeq}^{\omega_{\mathrm{BV}}} && \omega_{\mathrm{out}} }$
    \\
    coboundary by
    \\
    BV-bracket
  \end{tabular}
  \\
  \hline
  \begin{tabular}{c}
    $[N^\infty_\Sigma \Sigma_p, \mathcal{E}]_\Sigma
      \longrightarrow
     [\Sigma, \mathbf{B}G_{\mathrm{conn}}]
    $
    \\
    smooth groupoid
    \\
    of gauge fields and gauge transformations
  \end{tabular}
  &
  \begin{tabular}{c}
    $d_{\mathrm{BRST}}$
    \\
    BRST complex
    \\
    of ghost fields
  \end{tabular}
  \\
  \hline
  \begin{tabular}{c}
    $[\Sigma_p, \mathcal{E}]_\Sigma  \longrightarrow [\Sigma, \mathbf{B}^k U(1)_{\mathrm{conn}}]$
    \\
    higher smooth groupoid
    \\
    of higher gauge fields
    \\
    and higher gauge transformations
  \end{tabular}
  &
  \begin{tabular}{c}
    $d_{\mathrm{BRST}}$
    \\
    BRST complex
    \\
    of higher order ghost-of-ghost fields
  \end{tabular}
  \\
  \hline
\end{tabular}
\end{center}

\subsection{Sigma-model field theories}
\label{introductionsigmamodelfieldtheories}

A \emph{sigma-model} is a field theory whose field bundle (as in section \ref{PrequantumLocalFieldTheoryInMotivation}) is of the simple form
$$
  \xymatrix{
    \Sigma \times X
    \ar[d]^{p_1}
    \\
    \Sigma
  }
$$
for some space $X$.
This means that in this case field configurations, which by definition are sections of the field bundle, are equivalently maps of the form
$$
  \phi : \Sigma \longrightarrow X
  \,.
$$
One naturally thinks of such a map as being a $\Sigma$-shaped trajectory of a $p$-dimensional
object (a $p$-brane) on the space $X$. Hence $X$ is called the \emph{target space} of the model.
Specifically, if this models
$\Sigma$-shaped trajectories of $p$-dimensional relativistic branes, then
$X$ is the \emph{target spacetime}. There are also famous examples of sigma-models where $X$ is a more abstract
space, usually some moduli space of certain scalar fields of a field theory that is itself defined on
spacetime. Historically the first sigma-models were of this kind. In fact in the first examples $X$ was
a linear space. For emphasis that this is not assumed one sometimes speaks of \emph{non-linear sigma models}
for the sigma-models that we consider here. In fact we consider examples where $X$ is not even a manifold, but a
smooth $\infty$-groupoid, a higher moduli stack.

Given a target space $X$, then every $(p+1)$-form $A_{p+1} \in \Omega^{p+1}(X)$ on $X$
induces a local Lagrangian for sigma-model field theories with target $X$:
we may simply pull back that form to the jet bundle $J^\infty_\Sigma(\Sigma \times X)$ and project out its
horizontal component. Lagrangians that arise this way are known as \emph{topological terms}.

The archetypical example of a sigma-model with topological term is that for describing the electron propagating in
a spacetime and subject to the background forces of gravity and of electromagnetism. In this case $p = 0$
(a point particle, hence a ``0-brane''), $\Sigma$  is the interval $[0,1]$
or the circle $S^1$, regarded as the abstract \emph{worldline} of an electron.
Target space $X$ is a spacetime manifold equipped with a pseudo-Riemannian metric $g$
(modelling the background field of gravity) and with a vector potential 1-form $A\in \Omega^1(X)$
whose differential is the Faraday tensor $F = d A$ (modelling the electromagnetic background field).
The local Lagrangian is
$$
  L = L_{\mathrm{kin}} + \underset{L_{\mathrm{int}}}{\underbrace{q (A_\Sigma)_H}} \;\; \in \Omega^{p+1}_H(J^\infty_\Sigma(\Sigma \times X))
  \,,
$$
where $L_{\mathrm{kin}}$ is the standard kinetic Lagrangian for (relativistic) point particles,
$q$ is some constant, the electric charge of the electron, and $(A_\Sigma)_H$ is the horizontal component
of the pullback of $A$ to the jet bundle. The variation of $L_{\mathrm{int}}$ yields the Lorentz force
that the charged electron experiences.

Now, as in the the discussion in section \ref{PrequantumLocalFieldTheoryInMotivation}, in general the Faraday tensor
$F$ is not globally exact, and hence in general there does not exist a globally defined such 1-form on the jet bundle.
But via the sigma-model construction, the prequantization of the
worldline field theory of the electron on its jet bundle is naturally induced by a Dirac charge quantization of
its background electromagnetic field on target spacetime: given
$$
  \xymatrix{
    & \mathbf{B} U(1)_{\mathrm{conn}} \ar[d]
    \\
    X \ar[ur]^{\nabla} \ar[r]_-F & \mathbf{\Omega}^2_{\mathrm{cl}}
  }
$$
a circle-principal connection on target spacetime for the given field strength Faraday tensor $F$
(hence with local ``vector potential'' 1-forms $\{A_i\}$ with respect to some
cover $\{U_i \to X\}$), then the horizontal projection $(\nabla_\Sigma)_H$ of the pullback
of the whole circle-bundle with connection to the jet bundle
constitutes a prequantum field theory in the sense of sections \ref{globalactionfunctionalinintroduction}.
Similarly, the background electromagnetic field $\nabla$ also
serves to prequantize the covariant phase space of the electron, according to
section \ref{elgerbesinintroduction}. This is related to the familiar statement
that in the presence of a magnetic background field the spatial coordinates of the electron no longer
Poisson-commute with each other.

This prequantization of sigma-models via $(p+1)$-form connections on target space works generally:
we obtain examples of prequantum field theories of sigma-model type by adding to
a globally defined kinetic Lagrangian form a prequantum \emph{topological term} given
by the pullback of a $(p+1)$-form connection on target space. The pullback of that target
$(p+1)$-form connection to target space serves to prequantize the entire field theory in all codimensions

\begin{center}
\begin{tabular}{|c|rcl|}
  \hline
  \multicolumn{4}{|c|}{\bf prequantum sigma-model topological terms}
  \\
  \hline
  \hline
  background field & $\nabla :$ & $X$ & $\longrightarrow \mathbf{B}^{p+1}U(1)_{\mathrm{conn}}$
  \\
  \hline
  \begin{tabular}{c}
    prequantum
    \\
    Lagrangian
  \end{tabular}
  & $(\nabla_\Sigma)_H :$ & $\Sigma \times X$ & $\longrightarrow \mathbf{B}^{p+1}_H U(1)_{\mathrm{conn}}$
  \\
  \hline
  \begin{tabular}{c}
    prequantized
    \\
    phase space
  \end{tabular}
  &
  $(\nabla_\Sigma)_L :$ & $\mathcal{E}$ & $\longrightarrow \mathbf{B}^{p+1}_L U(1)_{\mathrm{conn}}$
  \\
  \hline
\end{tabular}
\end{center}

While sigma-models with topological terms are just a special class among all variational field theories,
in the context of higher differential geometry this class is considerably larger than in traditional differential geometry.
Namely we may regard any of the moduli stacks $\mathbf{A}_{\mathrm{conn}}$ of gauge fields
that we discuss in section \ref{introductiongaugefields} as target space, i.e. we may consider higher stacky field bundles
of the form
$$
  \raisebox{20pt}{
  \xymatrix{
    \Sigma \times\mathbf{A}_{\mathrm{conn}}
    \ar[d]
    \\
    \Sigma
  }
  }
  \,.
$$
Everything goes through as before, in particular a field configuration now is a map $\Sigma \longrightarrow \mathbf{A}_{\mathrm{conn}}$
from worldvolume/spacetime $\Sigma$ to this moduli stack.
But by the discussion above in section \ref{introductiongaugefields}, such maps now
are equivalent to gauge fields on $\Sigma$. These are, of course, the field configurations of \emph{gauge theories}.
Hence, in higher differential geometry, the concepts of sigma-model field theories and of gauge field theories
are unified.

In particular both concepts may mix. Indeed, we find below that higher dimensional Wess-Zumino-Witten-type models generally
are ``higher gauged'', this means that their field configurations are a pair consisting of
a map $\phi : \Sigma \to X$ to some target spacetime $X$, together with a $\phi$-twisted higher gauge field
on $\Sigma$.

$$
  \fbox{
  \xymatrix{
    & \mbox{\begin{tabular}{c} prequantum \\ sigma-model \\ topological terms  \end{tabular}}
    \ar[dl]_{ \mbox{ \tiny \begin{tabular}{c} target space \\ $=$ \\ spacetime \end{tabular} } }
    \ar[dr]^{ \mbox{\tiny  \begin{tabular}{c} target space \\ $=$ \\ moduli stack \\ of gauge fields \end{tabular}  } }
    \\
    \mbox{\begin{tabular}{c}higher \\ WZW terms\end{tabular}}
    \ar[dr]
    &&
    \mbox{\begin{tabular}{c}higher \\\ Chern-Simons \\ terms\end{tabular}}
    \ar[dl]
    \\
    & \mbox{\begin{tabular}{c}higher \\ gauged \\ WZW terms\end{tabular}}
  }
  }
$$
Examples of a (higher) gauged WZW-type sigma model are the Green-Schwarz-type sigma-models of those
super $p$-branes on which other branes may end. This includes the D-branes and the M5-brane. The
former are gauged by a 1-form gauge field (the ``Chan-Paton gauge field'') while the latter is gauged
by a 2-form gauge field. We say more about these examples below in \ref{introductionWZWmodels}.

\medskip

We may construct examples of prequantized topological terms from functoriality of the Lie integration process
that already gave the (higher) gauge fields themselves in section \ref{introductiongaugefields}.
There we saw that a $(p+2)$-cocycle on
an $L_\infty$-algebroid is a homomorphism of $L_\infty$-algebroids of the form
$$
  \mu : \mathfrak{a} \longrightarrow \mathbf{B}^{p+2}\mathbb{R}
  \,.
$$
Moreover, the $\exp(-)$-construction which sends $L_\infty$-algebroids to simplicial presheaves
representing universal higher moduli stacks of $\mathfrak{a}$-valued gauge fields is clearly functorial,
hence it sends this cocycle to a morphism of simplicial presheaves of the form
$$
  \exp(\mu) : \exp(\mathfrak{a}) \longrightarrow \mathbf{B}^{p+2}\mathbb{R}
  \,.
$$
One finds that this descends  to the $(p+2)$-coskeleton $\mathbf{A} := \mathrm{cosk}_{p+2}\exp(\mathfrak{a})$
after quotienting out the subgroup $\Gamma \hookrightarrow \mathbb{R}$ of periods of $\mu$
\cite{FSS} (just as in the prequantization of the global action functional in section \ref{globalactionfunctionalinintroduction}):
$$
  \raisebox{20pt}{
  \xymatrix{
    \exp(\mathfrak{a}) \ar[d]^{\eta^{\mathrm{cosk}_{p+2}}} \ar[d] \ar[rr]^{\exp(\mu)} && \mathbf{B}^{p+2}\mathbb{R} \ar[d]
    \\
    \mathbf{A} \ar[rr]^-{\mathbf{c}} && \mathbf{B}^{p+2}(\mathbb{R}/\Gamma)
  }
  }
  \,.
$$
To get a feeling for what the resulting morphism $\mathbf{c}$ is, consider the case that $\mathbf{A} = \mathbf{B}G$ for some
group $G$. There is a geometric realization operation $\pi_\infty$ which sends smooth $\infty$-groupoids
to plain homotopy types (homotopy types of topological spaces). Under this operation
a map $\mathbf{c}$ as above becomes a map $c$ of the form
$$
  \raisebox{20pt}{
  \raisebox{20pt}{
  \xymatrix{
    \mathbf{B}G \ar[d]^{\eta^{\pi_\infty}} \ar[rr]^-{\mathbf{c}} && \mathbf{B}^{p+2}(\mathbb{R}/\mathbb{Z}) \ar[d]^{\eta^{\pi_\infty}}
    \\
    B G \ar[rr] \ar[rr]^-c && K(\mathbb{Z},p+3)
  }
  }
  }
  \,,
$$
where $B G$ is the traditional classifying space of a (simplicial) topological group $G$, and where
$K(\mathbb{Z},p+3) = B^{p+3}\mathbb{Z}$ is the Eilenberg-MacLane space that classifies integral cohomology
in degree $(p+3)$. What $B G$ classifies are $G$-principal bundles, and hence for each space $\Sigma$ the map
$c$ turns into a \emph{characteristic class} of equivalence classes of $G$-principal bundles:
$$
  c_\Sigma : G \mathrm{Bund}(\Sigma)_\sim \longrightarrow H^{p+3}(\Sigma,\mathbb{Z})
  \,.
$$
Hence $c$ itself is a \emph{universal characteristic class}. Accordingly, $\mathbf{c}$ is
a refinement of $c$ that knows about gauge transformations: it sends smooth $G$-bundles with smooth gauge
transformations and gauge-of-gauge transformations between these to integral cocycles and coboundaries
and coboundaries-between-coboundaries between these.

Equivalently, we may think of $\mathbf{c}$ as classifying a $(p+1)$-gerbe on the universal moduli stack
of $G$-principal bundles. This is equivalently its homotopy fiber (in direct analogy with the infinitesimal
version of this statement above in section \ref{introductionhighergaugefields}) fitting into a long homotopy fiber
sequence of the form
$$
  \xymatrix{
    \mathbf{B}^{p+1}(\mathbb{R}/\mathbb{Z}) \ar[r] & \mathbf{B}\hat G \ar[d]
    \\
    & \mathbf{B}G
    \ar[rr]^-{\mathbf{c}}
    &&
    \mathbf{B}^{p+2}(\mathbb{R}/\mathbb{Z})
  }
  \,.
$$
Yet another equivalent perspective is that this defines an $\infty$-group extension $\hat G$
of the $\infty$-group $G$ by the $\infty$-group $\mathbf{B}^p (\mathbb{R}/\mathbb{Z})$.

So far all this is without connection data, so far these are just higher instanton sectors without
any actual gauge fields inhabiting these instanton sectors. We now add connection data to the
situation.

\noindent Adding connection data to $\mathbf{c}$ regarded as a higher prequantum bundle on
the moduli stack $\mathbf{B}G$
yields
\begin{itemize}
 \item \ref{introductionchernsimonstypefieldtheories} -- Chern-Simons-type prequantum field theory.
\end{itemize}
\noindent Adding instead connection data to $\mathbf{c}$ regarded as a higher group extension yields
\begin{itemize}
  \item \ref{introductionWZWmodels} -- Wess-Zumino-Witten-type prequantum field theories.
\end{itemize}

\subsection{Chern-Simons type field theory}
\label{introductionchernsimonstypefieldtheories}

For $\mathfrak{g}$ a semisimple Lie algebra with Killing form invariant polynomial $\langle -,-\rangle$,
classical 3-dimensional Chern-Simons theory \cite{FreedCS} has
as fields the space of $\mathfrak{g}$-valued differential 1-forms $A$,
and the Lagrangian is the Chern-Simons 3-form
$$
  L_{\mathrm{CS}}(A) = \mathrm{CS}(A) := \langle A \wedge d A \rangle - \tfrac{1}{3} \langle A \wedge [A\wedge A]\rangle
  \,.
$$
This Chern-Simons form is characterized by two properties: for vanishing curvature it reduces to the value of the 3-cocycle
$\langle -,[-,-]\rangle$ on the connection 1-form $A$, and its differential is the value of the invariant polynomial $\langle -,-\rangle$ on
the curvature 2-form $F_A$.

There is a slick way to express this in terms of the dg-algebraic description from section \ref{introductionhighergaugefields}: there is an element
$\mathrm{cs} \in \mathrm{W}(\mathbf{B}\mathfrak{g})$, which in terms of the chosen basis $\{t^a\}$
for $\wedge^1\mathfrak{g}^\ast$ is given by
$$
  \mathrm{cs} : k_{a b} (d_W t^a)\wedge t^b - \tfrac{1}{3} k_{a a'} C^{a'}{}_{b c} t^a \wedge t^b \wedge t^c
  \,.
$$
Hence equivalently this is a dg-homomorphism of the form
$$
  \mathrm{W}(\mathbf{B}\mathfrak{g})
  \stackrel{\mathrm{cs}}{\longleftarrow}
  \mathrm{W}(\mathbf{B}^3\mathbb{R})
$$
and for $A \in \Omega^1(\Sigma,\mathfrak{g}) = \{ \; \Omega^\bullet(\Sigma) \longleftarrow \mathrm{W}(\mathbf{B}\mathfrak{g}) \; \}$
then the Chern-Simons form of $A$ is the composite
$$
  \Omega^\bullet(\Sigma)
  \stackrel{A}{\longleftarrow}
  \mathrm{W}(\mathbf{B}\mathfrak{g})
  \stackrel{\mathrm{cs}_3}{\longleftarrow}
  :
  \mathrm{CS}(A)
  \,.
$$
Now, the two characterizing properties satisfied by the Chern-Simons equivalently mean in terms
of dg-algebra that the map $\mathrm{cs}$ makes the following two squares commute:
$$
  \xymatrix{
    \mathrm{CE}(\mathbf{B}\mathfrak{g})
    \ar@{<-}[rr]^-{\langle -,[-,-]\rangle}
    &&
    \mathrm{CE}(\mathbf{B}^{3}\mathbb{R})
    \\
    \mathrm{W}(\mathbf{B}\mathfrak{g})
    \ar[u]
    \ar@{<-}[rr]^-{\mathrm{cs}}
    &&
    \mathrm{W}(\mathbf{B}^{3}\mathbb{R})
    \ar[u]
    \\
    \mathrm{inv}(\mathbf{B}\mathfrak{g})
    \ar[u]
    \ar@{<-}[rr]^-{\langle -,-\rangle}
    &&
    \mathrm{inv}(\mathbf{B}^3 \mathbb{R})
    \ar[u]
  }
$$
This shows how to prequantize 3d Chern-Simons theory in codimension 3: the vertical sequences
appearing here are just the Lie algebraic data to which we apply differential Lie integration,
as in section \ref{introductiongaugefields},
to obtain the moduli stacks of $G$-connections and of 3-form connections, resepctively. Moreover,
by the discussion at the end of section \ref{introductionsigmamodelfieldtheories}
and using that $\langle -,[-,-]\rangle$ represents an integral cohomology class on $G$ we get a map
$$
  (\mathbf{c}_2)_{\mathrm{conn}}
  :=
  \exp(\mathrm{cs})
  \;:\;
  \mathbf{B}G_{\mathrm{conn}}
  \longrightarrow
  \mathbf{B}^3(\mathbb{R}/\mathbb{Z})_{\mathrm{conn}}
  \,.
$$
This is the background 3-connection which induces prequantum Chern-Simons field theory
by the general procedure indicated in section \ref{introductionsigmamodelfieldtheories}.

Notice that this map is a refinement of the traditional Chern-Weil homomorphism.
This allows for instance to prequantize the Green-Schwarz anomaly cancellation condition
heterotic strings: the higher moduli stack of GS-anomaly free gauge fields is the homotopy
fiber product of the prequantum Chern-Simons Lagrangians for the simple groups $\mathrm{Spin}$
and $\mathrm{SU}$ \cite{SSSIII}.

\medskip

This higher Lie theoretic formulation of prequantum 3-Chern-Simons theory now immediately generalizes to
produce higher (and lower) dimensional prequantum $L_\infty$-algebroid Chern-Simons theories.

For $\mathfrak{a}$ any $L_\infty$-algebroid as in section \ref{introductionhighergaugefields},
we say that a $(p+2)$-cocycle $\mu$ on $\mathfrak{a}$
is in transgression with an invariant polynomial $\langle -\rangle$ on $\mathfrak{a}$
if there is an element $\mathrm{cs} \in W(\mathfrak{a})$ such that $d_W \mathrm{cs} = \langle -\rangle $
and $\mathrm{cs}|_{\mathrm{CE}} = \mu$. Equivalently this means that $\mathrm{cs}$ fits into a diagram
of dg-algebras of the form
$$
  \xymatrix{
    \mathrm{CE}(\mathfrak{a})
    \ar@{<-}[rr]^-\mu
    &&
    \mathrm{CE}(\mathbf{B}^{p+2}\mathbb{R})
    \\
    \mathrm{W}(\mathfrak{a})
    \ar[u]
    \ar@{<-}[rr]^-{\mathrm{cs}}
    &&
    \mathrm{W}(\mathbf{B}^{p+2}\mathbb{R})
    \ar[u]
    \\
    \mathrm{inv}(\mathfrak{a})
    \ar[u]
    \ar@{<-}[rr]^-{\langle - \rangle}
    &&
    \mathrm{inv}(\mathbf{B}^{p+2}\mathbb{R})
    \ar[u]
  }
$$
Applying $\exp(-)$ to this, this induces maps of smooth moduli stacks of the form
$$
  \mathbf{c}_{\mathrm{conn}} :  \mathbf{A}_{\mathrm{conn}} \longrightarrow \mathbf{B}^{p+2}(\mathbb{R}/\Gamma)_{\mathrm{conn}}
  \,.
$$
This gives a prequantum Chern-Simons-type field theory whose field configurations locally are
$\mathfrak{a}$-valued differential forms, and whose Lagrangian is locally the Chern-Simons element
$\mathrm{cs}$ evaluated on these forms.

For instance if $(\mathfrak{a},\omega)$ is a symplectic Lie $p$-algebroid, then we obtain the
prequantization of $(p+1)$-dimensional AKSZ-type field theories \cite{frs}. For $p = 1$ this
subsumes the topological string A- and B-model \cite{AKSZ}. Generally, the prequantum moduli stack of
fields for 2-dimensional prequantum AKSZ theory is a differential refinement of the symplectic groupoid
of a given Poisson manifold \cite{Bongers}.
The Poisson manifold canonically defines a boundary condition for the
corresponding prequantum 2d Poisson-Chern-Simons theory, and the higher geometric
boundary quantization of this 2d prequantum theory reproduces ordinary Kostant-Souriau geometric quantization of the
symplectic leafs \cite{Nuiten}. This is a non-perturbative improvement of the perturbative algebraic deformation
quantization of the Poisson manifold as the boundary of the perturbative 2d AKSZ field theory
due to \cite{CaFe}.

Generally one expects to find non-topological non-perturbative  $p$-dimensional quantum field theories arising this way as the
higher geometric boundary quantization of
$(p+1)$-dimensional prequantum Chern-Simons type field theories \cite{talkEdingburgh, motquant}.

For instance for $(\mathbf{B}^3 \mathbb{R}, \omega)$ the line Lie 3-algebra equipped with its canonical
binary invariant polynomial, the corresponding prequantum Chern-Simons type field theory is 7-dimensional abelian  cup-product
Chern-Simons theory \cite{FiorenzaSatiSchreiberIV}. This has been argued to induce on its boundary the
conformal 6-dimensional field theory of a self-dual 2-form field \cite{Witten96} \cite{HopkinsSinger}.
This 7-dimensional Chern-Simons theory is one summand in the Chern-Simons term of 11-dimensional supergravity
compactified on a 4-sphere. The $\mathrm{AdS}_7/\mathrm{CFT}_6$ correspondence predicts that this carries on its boundary the refinement of the
self-dual 2-form to a 6-dimensional superconformal field theory. There are also nonabelian summands in this 7d Chern-Simons term.
For instance for $(\mathbf{B}\mathfrak{string}_{\mathfrak{g}},\langle -,-,-,-\rangle)$ the string Lie 2-algebra
equipped with its canonical degree-4 invariant polynomial, then the resulting prequantum field theory
is 7-dimensional Chern-Simons field theory on String 2-connection fields \cite{FiorenzaSatiSchreiberI}.

For more exposition of prequantum Chern-Simons-type field theories see also \cite{FiorenzaSatiSchreiberCS}.

\subsection{Wess-Zumino-Witten type field theory}
\label{introductionWZWfieldtheory}
\label{introductionWZWmodels}

The traditional Wess-Zumino-Witten (WZW) field theory \cite{Ga, Gawedzki} for a semisimple, simply-connected compact Lie group $G$
is a 2-dimensional sigma-model
with target space $G$, in the sense of section \ref{introductionsigmamodelfieldtheories},
given by a canonical kinetic term, and with topological term that is locally a potential
for the left-invariant 3-form $\langle \theta \wedge [\theta \wedge \theta]\rangle \in \Omega^3(G)_{\mathrm{cl}\atop \mathrm{li}}$,
where $\theta$ is the Maurer-Cartan form on $G$. This means that for $\{U_i \to G\}$ a cover of $G$ by coordinate charts
$U_i \simeq \mathbb{R}^n$, then the classical WZW model is the locally variational classical field theory
(in the sense discussed in section \ref{Principleofextremalactioncomonadically}) whose local Lagrangian $L_i$
is (in the notation introduced above in section \ref{introductionsigmamodelfieldtheories}) $L_i = (L_{\mathrm{kin}})_i + ((B_i)_\Sigma)_H$
for $B_i \in \Omega^2(U_i)$ a 2-form such that $d B_i = \langle \theta\wedge [\theta\wedge \theta]\rangle|_{U_i}$.

By the discussion in section \ref{introductionsigmamodelfieldtheories}, in order to prequantize this field
theory it is sufficient that we construct a $U(1)$-gerbe on $G$ whose curvature 3-form is $\langle \theta \wedge [\theta \wedge\theta]\rangle$.
In fact we may ask for a little more: we ask for the gerbe to be \emph{multiplicative} in that it carries 2-group
structure that covers the group structure on $G$, hence that it is given by the 2-group extension classified by
the smooth universal class $\mathbf{c} : \mathbf{B}G \longrightarrow \mathbf{B}^3 U(1)$.

\medskip

An elegant construction of this prequantization, which will set the scene for the general construction of higher
WZW models, proceeds by making use of a universal property of the differential coefficients.
Namely one finds that for all $p \in \mathbb{Z}$, then the moduli stack $\mathbf{B}^{p+1}(\mathbb{R}/\mathbb{Z})_{\mathrm{conn}}$
of $(p+1)$-form connections is the homotopy fiber product of $\mathbf{B}^{p+1}(\mathbb{R}/\mathbb{Z})$ with
$\mathbf{\Omega}^{p+2}_{\mathrm{cl}}$ over $\flat_{\mathrm{dR}}\mathbf{B}^{p+2}\mathbb{R}$.
$$
  \raisebox{20pt}{
  \xymatrix{
    & \mathbf{\Omega}^{p+2}_{\mathrm{cl}}
    \ar[dr]
    \\
    \mathbf{B}^{p+1}(\mathbb{R}/\mathbb{Z})_{\mathrm{conn}}
    \ar@{}[rr]|{\mathrm{(pb)}}
    \ar[ur]
    \ar[dr]
    &&
    \mathbf{B}^{p+2}\flat \mathbb{R}
    \\
    & \mathbf{B}^{p+2}\mathbb{Z}
    \ar[ur]
  }
  }
  \,.
$$
Here ``$\flat$'' indicates the discrete underlying group, and hence this homotopy pullback says that giving a $(p+1)$-form
connection is equivalent to giving an integral $(p+2)$-class and a closed $(p+2)$-form together with a homotopy
the identifies the two as cocycles in real cohomology.

In view of this, consider the following classical Lie theoretic data associated with the semisimple Lie
algebra $\mathfrak{g}$.
\begin{center}
\begin{tabular}{|l|l|}
\hline
$\mathfrak{g}$ & semisimple Lie algebra
\\
\hline
$G$ & its simply-connected Lie group
\\
\hline
$\theta \in \Omega^1(G,\mathfrak{g})$ & Maurer-Cartan form
\\
\hline
$\langle -,-\rangle$ & Killing metric
\\
\hline
$\mu_3 = \langle -,[-,-]\rangle$ & Lie algebra 3-cocycle
\\
\hline
$k \in H^3(G,\mathbb{Z})$ & level
\\
\hline
$ \mu_3(\theta \wedge \theta \wedge \theta) \overset{q}{\underset{\simeq}{\longrightarrow}} k_{\mathbb{R}}$ &  prequantization condition
\\
\hline
\end{tabular}
\end{center}
Diagrammatically, this data precisely corresponds to a diagram as shown on the left in the following,
and hence the universal property of the homotopy pullback uniquely associates a lift
$\nabla_{\mathrm{WZW}}$ as on the right:
\begin{center}
\fbox{$
  \raisebox{20pt}{
  \xymatrix{
    & & \mathbf{\Omega}^3_{\mathrm{cl}}
        \ar[d]
    \\
    G \ar[r]_k \ar[urr]^{\mu_3(\theta)}_>{\ }="s" & \mathbf{B}^3 \mathbb{Z}  \ar[r]^{\ }="t" & \mathbf{B}^3 \flat \mathbb{R}
    \ar@{=>}^q "s"; "t"
  }
  }
  \;\;\;\;\;\;\;\;
  \Leftrightarrow
  \;\;\;\;\;\;\;\;
  \raisebox{20pt}{
  \xymatrix{
    & \mathbf{B}^2 (\mathbb{R}/\mathbb{Z})_{\mathrm{conn}} \ar@{}[dr]|{\mathrm{(pb)}} \ar[d]\ar[r]^-{\mathrm{curv}}
    & \mathbf{\Omega}^3_{\mathrm{cl}}
        \ar[d]_<<{\ }="s"
    \\
    G \ar[r]_k \ar@{-->}[ur]^{\nabla_{\mathrm{WZW}}} & \mathbf{B}^3 \mathbb{Z}  \ar[r]^{\ }="t" & \mathbf{B}^3 \flat \mathbb{R}
    \ar@{=>} "s"; "t"
  }
  }
$
}
\end{center}
This $\nabla_{\mathrm{WZW}}$ is the required prequantum topological term for the 2d WZW model.
Hence the prequantum 2d  WZW sigma-model field theory is the $(p  = 2)$-dimensional prequantum field theory with
target space the group $G$ and with local prequantum Lagrangian, i.e. with Euler-Lagrange gerbe given by
$$
  \mathbf{L}
  :=
  \underset{\mathbf{L}_{\mathrm{kin}}}{\underbrace{ \langle \theta_H \wedge \star \theta_H \rangle}}
  +
  \underset{\mathbf{L}_{\mathrm{WZW}}}{\underbrace{(\nabla_{\mathrm{WZW}})_H}}
  \;\;:\;\;
  \Sigma \times G
  \longrightarrow
  \mathbf{B}^{p+1}_H(\mathbb{R}/_{\!\hbar}\mathbb{Z})_{\mathrm{conn}}
  \,.
$$
This prequantization is a de-transgression of a famous traditional construction. To see this,
write  $\hat \Omega_k G$ for level-$k$ Kac-Moody loop group extension of $G$.
This has an adjoint action by the based path group $P_e G$. Write
$$
  \mathrm{String}(G) :=  P_e G/\!/ \hat \Omega_k G
$$
for the homotopy quotient. This is a differentiable group stack, called the \emph{string 2-group} \cite{BCSS}.
It turns out to be the total space of the 2-bundle underlying $\nabla_{\mathrm{WZW}}$
$$
  \xymatrix{
    \mathrm{String}(G) \ar[d] \ar@{}[drr]|{\mathrm{(pb)}} \ar[rr] && \ast \ar[d]_{\ }="s"
    \\
    G \ar[r]^-{\nabla_{\mathrm{WZW}}} & \mathbf{B}^2(\mathbb{R}/\mathbb{Z})_{\mathrm{conn}} \ar[r]^{\ }="t" & \mathbf{B}^2 (\mathbb{R}/\mathbb{Z})
    \ar[r]^{\mbox{\tiny$\pi_\infty$}}
    &
    K(\mathbb{Z},3)
  }
$$
and it is a de-transgression of the Kac-Moody loop group extension $\hat L_K G$: transgressing to fields over the circle gives:
$$
  \xymatrix{
    \hat L_k G
    \ar[rr]
    \ar[d]
    \ar@{}[drr]|{\mathrm{(pb)}}
    &&
    \ast
    \ar[d]
    \\
    L G
    \ar[r]^-{\int_S^1 \nabla_{\mathrm{WZW}}}
    &
    \mathbf{B}(\mathbb{R}/\mathbb{Z})_{\mathrm{conn}}
    \ar[r]
    &
    \mathbf{B}(\mathbb{R}/\mathbb{Z})
  }
$$
The string 2-group also appears again as the 2-group of Noether symmetries, in the sense of
section \ref{symmetriescurrentsinintroduction}, of the prequantum 2d WZW model. The
Noether homotopy fiber sequence for the prequantum 2d WZW model looks as follows \cite{hgp}:
\begin{center}
\begin{tabular}{|ccccc|}
\hline
$
  \left\{
  \raisebox{29pt}{
  \xymatrix@R=39pt{
    G
      \ar@/^1.7pc/[d]^{\nabla_{\mathrm{WZW}}}_<<<{\ }="s"
      \ar@/_1.7pc/[d]_{\nabla_{\mathrm{WZW}}}^>>>{\ }="t"
    \\
    \mathbf{B}^{2}_H(\mathbb{R}/_{\!\hbar}\mathbb{Z})_{\mathrm{conn}}
    \ar@{=>}|<<<<{\mathrm{topological} \atop \mathrm{current}} "s"; "t"
  }
  }
  \right\}
$
&
$\longrightarrow$
&
$
  \left\{
  \raisebox{29pt}{
  \xymatrix@R=39pt{
    G \ar[dr]_{\nabla_{\mathrm{WZW}}}^>>>{\ }="t" \ar[rr]|\simeq^{\mathrm{symmetry}}_>>>>>>>>>>>>>>{\ }="s"
    && G \ar[dl]^{\nabla_{\mathrm{WZW}}}
    \\
    & \mathbf{B}^{p+1}_H(\mathbb{R}/_{\!\hbar}\mathbb{Z})_{\mathrm{conn}}
    \ar@{=>}|<<<<<<{\mathrm{Noether}\;\mathrm{current}} "s"; "t"
  }
  }
  \right\}
$
&
$\longrightarrow$
&
$
  \left\{
  \xymatrix{
    G  \ar[rr]|\simeq^{\mathrm{point}}_-{\mathrm{symmetry}} && G
  }
  \right\}
$
\\
\hline
&&&&
\\
$\mathbf{B}U(1)$ & $\longrightarrow$
& $\mathrm{String}(G)$
& $\longrightarrow$ & $G$
\\
&&&&
\\
\hline
\end{tabular}
\end{center}
In fact, this extension is classified by the smooth universal characteristic class
$\mathbf{c} :\mathbf{B}G \longrightarrow \mathbf{B}^3 U(1)$, whose differential refinement gave
3d Chern-Simons theory in section \ref{introductionchernsimonstypefieldtheories}.

Given a $G$-principal bundle $P \to X$, then one may aks for a fiberwise parameterization of $\nabla_{\mathrm{WZW}}$
over $P$. If such \emph{definite parameterization} $\nabla : P \to \mathbf{B}^{2}(\mathbb{R}/\mathbb{Z})_{\mathrm{conn}}$ exists, then it defines
the prequantum topological term for the \emph{parameterized WZW model} with target space $P$.
$$
  \xymatrix{
    G \simeq P_x
    \ar@{}[dr]|{\mathrm{(pb)}}
    \ar[d]
    \ar@/^2.3pc/[rrr]^-{\nabla_{\mathrm{WZW}}}_{\ }="s"
    \ar[r]
    &
    P
    \ar[rr]|-\nabla_-{ \mathrm{definite} \atop \mathrm{parameterization}  }^<<<{\ }="t"
    \ar[d]
    && \mathbf{B}^2(\mathbb{R}/\mathbb{Z})_{\mathrm{conn}}
    \\
    \{x\} \ar@{^{(}->}[r] & X
    \ar@{=>}^\simeq "s"; "t"
  }
$$
Such a parameterization is equivalent to a lift of a structure group of $P$ through the
above extension $\mathrm{String}(G) \longrightarrow G$ \cite{SchreiberWZWterms}.
Accordingly, the obstruction to parameterizing $\nabla_{\mathrm{WZW}}$ over $P$ is the universal extension class $\mathbf{c}$
evaluated on $P$.
Specifically for the case that $G = \mathrm{Spin}\times \mathrm{SU}$, this is the sum of fractional Pontryagin and second Chern class:
$$
  \tfrac{1}{2}p_1 - c_2 \in H^4(X,\mathbb{Z})
  \,.
$$
The vanishing of this class is the \emph{Green-Schwarz anomaly cancellation}
condition for the 2d field theory describing propagation of the heterotic string on $X$. This perspective on the Green-Schwarz anomaly via
parameterized WZW models had been suggested in \cite{DistlerSharpe}. The prequantum field theory we present
serves to make this precise and to generalize it to higher dimensional parameterized WZW-type field theories.

\medskip

Generally, given any $L_\infty$-cocycle $\mu : \mathbf{B}\mathfrak{g} \longrightarrow \mathbf{B}^{p+2}\mathbb{R}$
as in section \ref{introductionhighergaugefields} with induced smooth $\infty$-group cocycle
$\mathbf{c} : \mathbf{B}G \longrightarrow \mathbf{B}^{p+2}(\mathbb{R}/\Gamma)$ as in section \ref{introductionsigmamodelfieldtheories},
then there is a higher analog of the universal construction of the WZW-type topological term $\nabla_{\mathrm{WZW}}$.

First of all, the homotopy pullback characterization of $\mathbf{B}^{p+1}(\mathbb{R}/\mathbb{Z})_{\mathrm{conn}}$
refines to one that does not just involve the geometrically discrete coefficients $\mathbf{B}^{p+2}\mathbb{Z}$,
but the smooth coefficients $\mathbf{B}^{p+1}(\mathbb{R}/\mathbb{Z})$.
$$
  \raisebox{20pt}{
  \xymatrix{
    & \mathbf{\Omega}^{p+2}_{\mathrm{cl}}
    \ar[dr]
    \\
    \mathbf{B}^{p+1}(\mathbb{R}/\mathbb{Z})_{\mathrm{conn}}
    \ar@{}[rr]|{\mathrm{(pb)}}
    \ar[ur]^{\mathrm{curv}}
    \ar[dr]
    &&
    \flat_{\mathrm{dR}}\mathbf{B}^{p+2} \mathbb{R}
    \\
    & \mathbf{B}^{p+1}(\mathbb{R}/\mathbb{Z})
    \ar[ur]
  }
  }
  \,.
$$
Here $\flat_{\mathrm{dR}}(-)$ denotes the homotopy fiber of the canonical map
$\flat(-)\longrightarrow (-)$ embedding the underlying discrete smooth structure
of any object into the given smooth object. A key aspect of the theory is that
the further homotopy fiber of $\flat_{\mathrm{dR}}(-) \longrightarrow \flat(-)$
has the interpretation of being the Maurer-Cartan form $\theta$ on the given
smooth $\infty$-groupoid.
$$
  \xymatrix{
    & G
    \ar[d]^{\theta}_{\mbox{\tiny \begin{tabular}{c} Maurer-Cartan \\ form \end{tabular}}}
    \\
    & \flat_{\mathrm{dR}}\mathbf{B}G  \ar[r] & \flat \mathbf{B}G \ar[d]
    \\
    & & \mathbf{B}G
  }
$$
Or rather, one finds that $\flat_{\mathrm{dR}}\mathbf{B}G \simeq \mathbf{\Omega}_{\mathrm{flat}}^{1 \leq \bullet\leq p+2}(-,\mathfrak{g})$
is the coefficient for ``hypercohomology'' in flat $\mathfrak{g}$-valued differential forms, hence for $G$ a higher smooth group
then its Maurer-Cartan form $\theta$ is not, in general, a globally defined differential form, but instead a system
of locally defined forms with higher coherent gluing data.

But one may universally force $\theta$ to become globally defined, so to speak, by pulling it back
along the inclusion $\mathbf{\Omega}_{\mathrm{flat}}(-,\mathfrak{g})$ of the globally defined
flat $\mathfrak{g}$-valued forms. This defines a differential extension $\tilde G$ of $G$ equipped
with a globally defined Maurer-Cartan form $\tilde \theta$, by the following homotopy pullback diagram
$$
  \xymatrix{
    & \mathbf{\Omega}_{\mathrm{flat}}(-,\mathfrak{g})
     \ar[dr]
    \\
    \tilde G
    \ar@{}[rr]|{\mathrm{(pb)}}
    \ar[dr]
    \ar[ur]^{\tilde \theta}
    && \flat_{\mathrm{dR}}\mathbf{B}G
    \\
    & G
    \ar[ur]^{\theta}
  }
  \,.
$$
When $G$ is an ordinary Lie group, then it so happens that
$\flat_{\mathrm{dr}}\mathbf{B}G \simeq \mathbf{\Omega}_{\mathrm{flat}}(-,\mathfrak{g})$, and so in this
case $\tilde G \simeq G$ and $\tilde \theta \simeq \theta$, so that nothing new happens.

At the other extreme, when $G = \mathbf{B}^{p+1}(\mathbb{R}/\mathbb{Z})$, then $\theta \simeq \mathrm{curv}$
as above, and so in this case one find that $\tilde G$ is $\mathbf{B}^{p+1}(\mathbb{R}/\mathbb{Z})_{\mathrm{conn}}$
and that $\tilde \theta \simeq F_{(-)}$ is the map that sends an $(p+1)$-form connection to its globally
defined curvature $(p+2)$-form.

More generally these two extreme cases mix: when $G$ is a $\mathbf{B}^p (\mathbb{R}/Z)$-extension of an ordinary Lie group,
then $\tilde G$ is a twisted product of $G$ with $\mathbf{B}^{p}(\mathbb{R}/\mathbb{Z})_{\mathrm{conn}}$, hence then
a single map
$$
  (\phi,B) : \Sigma \longrightarrow \tilde G
$$
is a pair consisting of an ordinary sigma-model field $\phi$ together with a $\phi$-twisted $p$-form
connection on $\Sigma$.

Hence the construction of $\tilde G$ is a twisted generalization of the construction of differential
coefficients. In particular, given an $L_\infty$-cocycle
$\mu : \mathbf{B}\mathfrak{g}\longrightarrow \mathbf{B}^{p+2}\mathbb{R}$
Lie-integrating to an $\infty$-group cocycle $\mathbf{c} : \mathbf{B}G \to \mathbf{B}^{p+2}(\mathbb{R}/\Gamma)$,
 then it Lie integrates to a
prequantum topological term $\nabla_{\mathrm{WZW}} : \tilde G \longrightarrow \mathbf{B}^{p+1}(\mathbb{R}/\Gamma)_{\mathrm{conn}}$
via the universal dashed map in the following induced diagram:
$$
  \xymatrix@C=16pt@R=20pt{
    &&
    &&&&
    \mathbf{\Omega}_{\mathrm{flat}}(-,\mathfrak{g})
    \ar[drr]^\mu
    \ar[ddrr]|>>>>>>>>{\phantom{AA} \atop \phantom{AA}}
    \\
    &&&&
    &&&&
    \mathbf{\Omega}^{p+2}_{\mathrm{cl}}
    \ar[ddrr]
    \\
    &&&&
    \tilde G
    \ar[uurr]|{\tilde \theta}
    \ar[ddrr]
    \ar@{-->}[drr]|{\nabla_{\mathrm{WZW}}}
    &&&&
    \flat_{\mathrm{dR}}\mathbf{B}G
    \ar[rrd]|{\flat_{\mathrm{dR}}\mathbf{c}}
    \\
    &&
    &&
    &&
    \mathbf{B}^{p+1}(\mathbb{R}/\Gamma)_{\mathrm{conn}}
    \ar[uurr]|{\mathrm{curv}}
    \ar[ddrr]
    &&&&
    \flat_{\mathrm{dR}}\mathbf{B}^{p+2}\mathbb{R}
    \\
    &&
    &&&&
    G
    \ar[uurr]|<<<<<<<<<{\phantom{AA} \atop\phantom{AA}}|{\theta}
    \ar[drr]|{\Omega \mathbf{c}}
    \\
    &&&&
    &&&&
    \mathbf{B}^{p+1}(\mathbb{R}/\Gamma)
    \ar[uurr]
  }
$$

This construction provides a large supply of prequantum Wess-Zumino-Witten type field theories.
Indeed, by the discussion in \ref{introductionhighergaugefields}, from every $L_\infty$-algebroid
there emanates a bouquet of $L_\infty$-extensions with $L_\infty$-cocycles on them, hence
for every WZW-type sigma model prequantum field theories we find a whole bouquet of prequantum field
theories emanating from it.

Therefore it is interesting to consider the simplest non-trivial $L_\infty$-algebroids and
see which bouquets of prequantum field theories they induce. The abelian line Lie algebra $\mathbb{R}$
is arguably the simplest non-vanishing $L_\infty$-algebroid, but it is in fact a little too simple
for this purpose, the bouquet it induces is not interesting. But all of the above generalizes
essentially verbatim to super-algebra and super-geometry, and in super-Lie-algebra theory we have the
``odd lines'' $\mathbb{R}^{0|q}$, i.e. the superpoints. The bouquet which emanates from these turns out to be remarkably
rich \cite{InfinityWZW}, it gives the entire $p$-brane spectrum of string theory/M-theory.

\vspace{-.4cm}

$$
  \hspace{-.6cm}
  \xymatrix@C=2pt{
    &&
	&& \fbox{$\mathfrak{D}0\mathfrak{brane}$} \ar[drr]
	& \fbox{$\mathfrak{D}2\mathfrak{brane}$} \ar[dr]
	& \fbox{$\mathfrak{D}4\mathfrak{brane}$} \ar[d]
	& \fbox{$\mathfrak{D}6\mathfrak{brane}$} \ar[dl]
	& \fbox{$\mathfrak{D}8\mathfrak{brane}$} \ar[dll]
    \\
    & \ar[ur]^{\mathrm{KK}}&
	& \mathfrak{sdstring} \ar[drrr]|{{d = 6} \atop {N = (2,0)}}
	&
	&& \mathfrak{string}_{\mathrm{IIA}} \ar[d]|-{{d=10} \atop {N=(1,1)}}
	&& \mathfrak{string}_{\mathrm{het}} \ar[dll]|-{{d=10}\atop {N = 1}}
	&& \mathfrak{littlestring}_{\mathrm{het}} \ar[dllll]|-{{d=6}\atop {N = 1}}
	&&
     \ar@{<->}[dd]^{\mbox{T}}	
	&&
    \\
    && \fbox{$\mathfrak{m}5\mathfrak{brane}$} \ar[rr]
	&& \mathfrak{m}2\mathfrak{brane} \ar[rr]|-{{d=11} \atop {N=1}}
	&& \mathbb{R}^{d;N}
       \ar@{-->}[dlll]
	&&
	&& \mathfrak{ns}5\mathfrak{brane}_{\mathrm{het}} \ar[llll]|-{{d = 10}\atop {N =1}}
	&&
	\\
	&&
    &\mathbb{R}^{0|N}
	&
	& \mathfrak{string}_{\mathrm{IIB}} \ar[ur]|-{{d = 10}\atop {N=(2,0)}}
	\ar@{.}[r]
	& (p,q)\mathfrak{string}_{\mathrm{IIB}} \ar[u]|-{{d = 10}\atop {N=(2,0)}}
	\ar@{.}[r]
	& \mathfrak{Dstring} \ar[ul]|-{{d = 10}\atop {N=(2,0)}}
	&&
	&&
	&&
    \\
    &
    &
	&& \fbox{$(p,q)1\mathfrak{brane}$} \ar[urr]
	&\fbox{$(p,q)3\mathfrak{brane}$} \ar[ur]
	& \fbox{$(p,q)5\mathfrak{brane}$} \ar[u]
	& \fbox{$(p,q)7\mathfrak{brane}$} \ar[ul]
	& \fbox{$(p,q)9\mathfrak{brane}$} \ar[ull]
	\\
	&& &&  & \ar@{<->}[rr]_S &  &&
  }
$$

Each entry in this diagram denotes a super $L_\infty$-algebra extension of some
super Minkowski spacetime $\mathbb{R}^{d-1,1|N}$ (regarded as the corresponding supersymmetry
super Lie algebra), and each arrow denotes a
super-$L_\infty$-extension classified by a $p+2$ cocycle for some $p$.
By the above general
construction, this cocycle
induces a $(p+1)$-dimensional WZW-type sigma-model prequantum field theory with target space
a higher extension of super-Minkowski spacetime \cite{InfinityWZW}, and the names of the super $L_\infty$-algebras
in the above diagram correspond to the traditional names of these super $p$-branes.

As for the traditional WZW-models, all of this structure naturally generalizes to its
parameterized versions \cite{SchreiberWZWterms}: given any higher extended super Minkowski spacetime $V$
equipped with a prequantum topological term $\nabla_{\mathrm{WZW}} : V \longrightarrow \mathbf{B}^{p+1}(\mathbb{R}/\Gamma)_{\mathrm{conn}}$
for a super $p$-brane sigma model, we may ask for globalizations of $\nabla$ over $V$-manifolds
($V$-{\'e}tale stacks) $X$, hence for topological term $\nabla$ on all of $X$ that is suitably equivalent
on each infinitesimal disk $\mathbb{D}^X_x \simeq \mathbb{D}^V_e$ to $\nabla_{\mathrm{WZW}}$.
$$
  \xymatrix{
    \mathbb{D}_x^{(1)}
    \ar@{}[dr]|{\mathrm{(pb)}}
    \ar[d]\ar[r]
    \ar@/^2.2pc/[rrr]^{\nabla_{\mathrm{WZW}}}_{\ }="s"
    &
    T^{(1)} X
    \ar[d]
    \ar[rr]|-{\nabla}_-{\mathrm{definite} \atop \mathrm{globalization}}^<<{\ }="t"
    &&
    \mathbf{B}^{p+1}(\mathbb{R}/\Gamma)_{\mathrm{conn}}
    \\
    \{x\} \ar@{^{(}->}[r] & X
    \ar@{=>}^\simeq "s"; "t"
  }
$$
Such globalizations serve as prequantum topological
terms for  WZW-type sigma-models describing the propagation of super $p$-branes
on $V$-manifolds $X$ (e.g. \cite[sections 2,3]{Duff99}). One finds
(this is proven with the abstract theory surveyed below in section \ref{abstractprequantumgeometry})
that such globalizations equip the higher frame bundle of $X$ with a lift of
its structure group through a canonical map $\mathbf{Stab}_{\mathrm{GL}(V)}(\nabla) \longrightarrow \mathrm{GL}(V)$
from the homotopy stabilizer group of the WZW term,
in direct analogy to the previous examples. Apart from ``cancelling the classical anomalies''
of making the super $p$-brane WZW-type sigma-model be globally defined on $X$, such a lift induces
metric structure on $X$:

Since the homotopy stabilization of $\nabla$ in particular stabilizes its curvature form,
  there is a reduction of the structure group of the $V$-manifold in direct analogy to how
  a globalization of the ``associative'' 3-form $\alpha$ on $\mathbb{R}^7$ equips a 7-manifold with $G_2$-structure.
  For the above super $p$-brane models the relevant stabilizer is the spin-cover of
  the Lorentz group,
  and hence globalizing the prequantum $p$-brane model over $X$ in particular induces orthogonal
  structure on $X$, hence equips $X$ with a field configuration of supergravity.

Given such a globalization of a topological term $\nabla$ over a $V$-manifold $X$,
it is natural to require it to be infinitesimally integrable. In the present example this comes out
to imply that the torsion of the orthogonal structure on $X$ vanishes. This is
particularly interesting at the top end of the brane bouquet: for globalization over
over an 11-dimensional supermanifold, the vanishing of the torsion is equivalent to
$X$ satisfying the equations of motion of 11-dimensional gravity \cite{CandielloLechner}.
The Noether charges of the corresponding WZW-type prequantum field theory are
the \emph{supergravity BPS-charges} \cite{SatiSchreiber15}.

Here the relation to $G_2$-structure is more than an analogy. We may naturally lift the topological term
for the M2-brane sigma-model from $\mathbb{R}/\mathbb{Z}$-coefficients to $\mathbb{C}/\mathbb{Z}$-coefficients
by adding $\alpha : \mathbb{R}^{10,1\vert \mathbf{32}} \to \mathbb{R}^7 \to \mathbf{\Omega}^3_{\mathrm{cl}}$.
Then a globalization of the complex linear combination
$$
  \nabla_{\mathrm{M2}} + i \alpha :  \mathbb{R}^{10,1\vert \mathbf{32}} \longrightarrow \mathbf{B}^{3}(\mathbb{C}/\Gamma)_{\mathrm{conn}}
$$
over an 11-dimensional supermanifold $X$ equips $X$ with the structure of a $G_2$-fibration over a 4-dimensional $N=1$ supergravity spacetime.
The volume holonomy of $\nabla_{\mathrm{M2}} + i \alpha$ around supersymmetric 3-cycles are the ``M2-instanton contributions''.
This setup of
11-dimensional supergravity Kaluza-Klein-compactified on $G_2$-manifolds to 4 spacetime dimensions and with the
prequantum M2/M5-brane charges and instantons included -- known as \emph{M-theory on $G_2$-manifolds} \cite{Acharya02, AtiyahWitten03}
-- comes at least close to capturing the qualitative structure of experimentally observed fundamental physics.

$$
  \xymatrix@C=36pt{
    \fbox{superpoint}
    \ar@{~>}[r]^-{\mathrm{Whitehead} \atop \mathrm{tower}}
    &
    \fbox{\begin{tabular}{c}super $p$-brane\\ $L_\infty$-cocycles\end{tabular}}
    \ar@{~>}[r]^-{\mathrm{Lie} \atop \mathrm{integration}}
    &
    \fbox{\begin{tabular}{c}WZW-type \\ topological terms\end{tabular}}
    \ar@{~>}[r]^-{\mathrm{definitie} \atop \mathrm{globalization}}
    &
    \fbox{\begin{tabular}{c}prequantum sigma-model \\ for super $p$-brane \\ on super-spacetimes\end{tabular}}
    \ar@{~>}[d]^{\mathrm{extremal} \atop \mathrm{branch}}
    \\
    &&&
    \fbox{\begin{tabular}{c}M-theory \\ on $G_2$-manifolds\end{tabular}}
  }
$$

This shows that a fair bit of fundamental physics is encoded in
those prequantum field theories that are canonically induced from a minimum of input data.
For more on this see \cite{5BraneCohomology,M2M5Charges}.
We continue in section \ref{abstractprequantumgeometry} with indicating that the concept of
prequantum field theory itself arises from first principles.

\section{Abstract prequantum field theory}
\label{abstractprequantumgeometry}
\label{syntheticpqg}

Above in section \ref{PrequantumLocalFieldTheoryInMotivation} we have surveyed
mathematical structure that captures prequantum local field theory.
While the constructions and results proceed smoothly, the whole setup may still look
somewhat intricate. One needs a good abstract machinery to be practically able to
analyze properties of, say, Euler-Lagrange 5-gerbes on 3-stacky jet super-bundles (as they do arise in the
formulation of the M5-brane prequantum sigma-model field theory as in section \ref{introductionWZWfieldtheory}), because it is unfeasible to
do so in components. Moreover, if prequantum field theory is part of the fundamental
description of nature, one may expect that its mathematical formulation is
indeed natural and neat. We now survey results from \cite{dcct} showing that a good abstract formalization
of the differential super-geometry and of the differential cohomology and of the PDE-theory necessary for
formulating prequantum field theory does exist.
For further exposition of the following see also \cite{IHPTalk, Bristol,DMV}.

\subsection{Modal homotopy theory}
\label{introductionmodalhomotopytheory}

The homotopy theory in which all pre-quantum physics that has been considered in section \ref{examplesinmotivation} naturally
finds its place is that of  super formal smooth higher stacks. We briefly state the definition below.
Then we claim that this homotopy theory carries a rich progression of adjoint idempotent $\infty$-(co-)monads.
Such idempotent (co-)monads equip the homotopy theory with what in formal logic is known as \emph{modalities},
hence we may speak of \emph{modal homotopy theory}. The particular system of modalities that we find and consider
we call (super-)\emph{differential cohesive homotopy theory}.  Below in
sections \ref{introductionabstractdifferentialcohomology} and \ref{introductionabstractdifferentialgeometry} we
survey the rich differential cohomology and differential geometry that is implied formally from just this
abstract modal homotopy theoretic structure.

\begin{definition}
  \label{introductionsuperformalsmoothsite}
  The site of \emph{super formal smooth Cartesian spaces}
  $$
    \mathrm{SupFormCartSp} \hookrightarrow \mathrm{sCAlg}_{\mathbb{R}}^{\mathrm{op}}
  $$
  is the full subcategory of that of super-commutative superalgebras over $\mathbb{R}$ on those which are tensor products
  $$
    C^\infty(\mathbb{R}^n \times \mathbb{D}) := C^\infty(\mathbb{R}^n)\otimes_{\mathbb{R}}(\mathbb{R}\oplus V)
  $$
  of the algebra of smooth real functions in $n$ variables, for any $n \in \mathbb{N}$, with
  a supercommutative superalgebra $(\mathbb{R} \oplus V)$ for finite dimensional nilpotent $V$.
  Take this to be a site by equipping it with the coverage whose coverings are of the form
  $
    \{ U_i \times\mathbb{D} \stackrel{(\phi_i,\mathrm{id})}{\longrightarrow} X \times \mathbb{D}\}
  $
  for $\{U_i \stackrel{\phi_i}{\longrightarrow} X\}$ being an open cover of smooth manifolds.
\end{definition}
\begin{itemize}
  \item $\mathrm{Mfd}$ is the category of smooth manifolds;
  \item $\mathrm{FormMfd}$ is the category of formal smooth manifolds \cite{Kock80} \cite[sections I.17 and I.19]{KockBookA};
  \item $\mathrm{DiffeolSp}$ is the category of diffeological spaces \cite{IglesiasZemmour}, which is the category of concrete sheaves
  on $\mathrm{Mfd}$;
  \item $\mathrm{Sh}(\mathrm{FormMfd})$ is the ``Cahiers topos'' \cite{Dubuc79} that was introduced
    as a model for the Kock-Lawvere axioms \cite[I.12]{KockBookA} \cite[1.3]{KockBook} for synthetic differential geometry.
\end{itemize}

\newpage

\begin{proposition}
The sheaf toposes and $\infty$-sheaf $\infty$-toposes \cite{Lurie} over the sites
in def. \ref{introductionsuperformalsmoothsite} form the following system of full inclusions
of categories of geometric spaces.
\end{proposition}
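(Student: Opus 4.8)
The plan is to establish the diagram of full inclusions by working site-by-site, exhibiting each category of geometric spaces as the sheaf (or $\infty$-sheaf) topos — or a reflective subcategory thereof — over one of the sites in Definition \ref{introductionsuperformalsmoothsite}, and then checking that the evident restriction/inclusion functors between these toposes are full embeddings compatible with the inclusions of sites. First I would fix the ladder of sites: $\mathrm{CartSp} \hookrightarrow \mathrm{FormCartSp} \hookrightarrow \mathrm{SupFormCartSp}$, obtained by adjoining first the formal (infinitesimally thickened) points $\mathbb{D}$ with $V$ purely even nilpotent, then the super directions. Each of these is a dense subsite of the corresponding larger site of manifolds (ordinary, formal, super-formal) by the standard good-open-cover argument, so that $\mathrm{Sh}(\mathrm{CartSp})\simeq\mathrm{Sh}(\mathrm{Mfd})$, $\mathrm{Sh}(\mathrm{FormCartSp})\simeq\mathrm{Sh}(\mathrm{FormMfd})$ is the Cahiers topos \cite{Dubuc79}, and likewise in the super case and in the $\infty$-categorical refinement via $\infty$-sheaves \cite{Lurie}.

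Next I would handle the non-topos entries — $\mathrm{Mfd}$, $\mathrm{FormMfd}$, $\mathrm{DiffeolSp}$, $\mathrm{SmoothMfd}$ — by the concrete-sheaf description. The Yoneda embedding $\mathrm{Mfd}\hookrightarrow \mathrm{Sh}(\mathrm{Mfd})$ is fully faithful (every manifold is a sheaf on $\mathrm{CartSp}$, and maps of manifolds are determined by their behavior on charts), and $\mathrm{DiffeolSp}$ is precisely the category of concrete sheaves on $\mathrm{Mfd}$ in the sense of \cite{IglesiasZemmour}, hence a full subcategory of $\mathrm{Sh}(\mathrm{CartSp})$ sitting between $\mathrm{Mfd}$ and the full topos. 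For the formal and super-formal rows the same pattern applies: $\mathrm{FormMfd}\hookrightarrow\mathrm{Sh}(\mathrm{FormMfd})$ is fully faithful, and one checks fullness of each horizontal inclusion $\mathrm{Sh}(\text{smaller site})\hookrightarrow\mathrm{Sh}(\text{larger site})$ by noting that the inclusion of sites is fully faithful, continuous and cocontinuous (sends covers to covers and is dense), so the induced geometric morphism has a fully faithful inverse image, equivalently the restriction functor on sheaves is a full embedding with a left adjoint given by left Kan extension. Finally, passing from $1$-toposes to $\infty$-toposes: the simplicial-presheaf/Bousfield-localization model (or the $\infty$-sheaf definition of \cite{Lurie}) turns each site into an $\infty$-topos, and $1$-truncation together with the fact that the sites are $1$-categories shows the $1$-topos sits inside the $\infty$-topos as its full subcategory of $0$-truncated objects, so the whole square of inclusions commutes up to natural equivalence.

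The main obstacle I anticipate is verifying \emph{fullness} of the inclusions in the super-formal direction with sufficient care: the subtlety is that a morphism of super formal smooth stacks is a priori only a morphism of $\infty$-sheaves, and one must confirm that when source and target happen to lie in a smaller category (say both are ordinary smooth manifolds, or both are formal manifolds) no genuinely new morphisms appear — i.e. that the embedding does not forget structure and does not add structure. For ordinary manifolds this is classical; for formal and super-formal manifolds it rests on the Kock–Lawvere axioms holding in the Cahiers topos and its super-analogue \cite{Kock80,KockBookA,KockBook}, which pin down that maps out of infinitesimally thickened points are detected algebraically, so a map of super formal manifolds is the same data as a map of their function superalgebras. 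A secondary technical point is checking that the coverage on $\mathrm{SupFormCartSp}$ (Definition \ref{introductionsuperformalsmoothsite}), which only sees the underlying reduced bosonic open cover, really is subcanonical and that the resulting sheaf topos agrees with sheaves on all super formal manifolds; this is again a density argument but one must be careful that the nilpotent and odd directions contribute no extra covering families.
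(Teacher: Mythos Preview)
The paper does not actually give a proof of this proposition: it is stated, the large diagram of inclusions is displayed, and the text moves directly on to Theorem~\ref{theprogression}. This is a survey article, and the proposition is asserting an assembly of facts whose verification is deferred to \cite{dcct} and the references cited around Definition~\ref{introductionsuperformalsmoothsite} (\cite{Dubuc79,Kock80,KockBookA,KockBook,IglesiasZemmour,Lurie}). So there is nothing in the paper to compare your proposal against line by line.

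That said, your outline is essentially the right shape for how one would verify the content of the diagram, and it matches the spirit of what \cite{dcct} does. A few small points of care: when you argue that the horizontal inclusions $\mathrm{Sh}(\text{smaller site}) \hookrightarrow \mathrm{Sh}(\text{larger site})$ are full, be precise about which of the induced functors you mean. What you want is that left Kan extension along the fully faithful site inclusion is itself fully faithful on (pre)sheaves, and that it descends to sheaves; your phrasing ``the induced geometric morphism has a fully faithful inverse image'' would instead say that the smaller topos is a \emph{subtopos} (a localization), which is a different statement. In fact in the paper's setup the smaller toposes sit inside the larger ones as (co)reflective subcategories cut out by the modalities of Theorem~\ref{theprogression} (e.g.\ $\mathbf{H}_\Re$ as the $\Re$-modal objects), and that is the cleaner way to see full faithfulness once that theorem is available. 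Your anticipated obstacles (subcanonicity of the coverage on $\mathrm{SupFormCartSp}$, and that no new morphisms appear between ordinary manifolds when viewed in the larger topos) are the right things to check, and they are handled by exactly the Kock--Lawvere and density arguments you cite.
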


\hspace{-3.0cm}
\fbox{
$
  \xymatrix@R=17pt@C=6pt{
    \mbox{
      \small
      \begin{tabular}{c}
        discrete
        \\
        geometry
      \end{tabular}
    }
    &
    \mbox{
      \small
      \begin{tabular}{c}
        cohesive
        \\
        geometry
      \end{tabular}
    }
    &
    \mbox{
      \small
      \begin{tabular}{c}
        synthetic
        \\
        differential
        \\
        geometry
      \end{tabular}
    }
    &
    \mbox{
      \small
      \begin{tabular}{c}
        synthetic
        \\
        supergeometry
      \end{tabular}
    }
    &
    \mbox{
      \small
      \begin{tabular}{c}
        relative
        \\
        geometry
      \end{tabular}
    }
    &
    \mbox{
     \small
     \begin{tabular}{c}
       $\mathcal{D}$-geometry
     \end{tabular}
    }
    \\
    \mathrm{Set}
    \ar@{^{(}->}[r]
    \ar@{=}[dddd]
    &
    \mathrm{Mfd}
    \ar@{^{(}->}[d]
    \ar@/_3pc/@{^{(}->}[dd]
    \ar@{^{(}->}[r]
    & \mathrm{FormMfd}
    \ar@{^{(}->}[dddd]
    \ar[r]
    &
    \mathrm{SupFormMfd}
    \ar@{^{(}->}[dddd]
    \ar[r]^-{\Sigma^\ast}
    &
    \mathrm{SupFormMfd}_{/\Sigma}
    \ar@{^{(}->}[dddd]
    \ar@{<-}@<+4pt>[r]^U
    \ar@<-4pt>[r]_F
    &
    \mathrm{EM}(J^\infty_\Sigma)
    \simeq
    \mathrm{PDE}_\Sigma
    \ar@{^{(}->}[dddd]
    &
    \begin{turn}{-90} \mbox{ \hspace{-0.8cm} \begin{tabular}{c}  point-set \\ geometry \end{tabular}} \end{turn}
    \\
    &
    \mathrm{Orbfld}
    \ar@/^3.9pc/@{_{(}->}[dddd]
    &&&&&&&
    \\
    &
    \mathrm{FrechetMfd}
    \ar@{^{(}->}[d]
    &&&&&&
    \\
    & \mathrm{DiffeologicalSp}
    \ar@{^{(}->}[d]
    &&&&&&&
    \\
    \mathrm{Sh}(\ast)
    \ar@{^{(}->}[dd]
    \ar@{^{(}->}[r]
    &
    \mathrm{Sh}(\mathrm{Mfd})
    \ar@/_3pc/@{^{(}->}[dd]
    \ar@{^{(}->}[r]|{\phantom{AA}}
    &
    \mathrm{Sh}(\mathrm{FormMfd})
    \ar@{^{(}->}[dddd]
    \ar[r]
    &
    \mathrm{Sh}(\mathrm{SupFormMfd})
    \ar@{^{(}->}[dddd]
    \ar[r]
    &
    \mathrm{Sh}(\mathrm{SupFormMfd}_{/\Sigma})
    \ar@{<-}@<+4pt>[r]
    \ar@<-4pt>[r]
    \ar@{^{(}->}[dddd]
    &
    \mathrm{Sh}(\mathrm{SupFormMfd}_{/\Im \Sigma})
    \ar@{^{(}->}[dddd]
    &
    \begin{turn}{-90} \mbox{ \hspace{-1cm} \begin{tabular}{c} topos \\ theory \end{tabular} }\end{turn}
    \\
    &
    \mbox{ \begin{tabular}{c} $\mathrm{LieGrpd}$/ \\ $\mathrm{GeomStack}$ \end{tabular} }
    \ar@{^{(}->}[d]
    &&&&&
    \\
    \mathrm{Grpd}
    \ar@{^{(}->}[d]
    \ar@{^{(}->}[r]
    &
    \mbox{ \begin{tabular}{c} $\mathrm{SmoothGrpd}$/ \\ $\mathrm{SmoothStack}$ \end{tabular}}
    \ar@{^{(}->}[d]
    &
    &&
    &&
    \\
    2\mathrm{Grpd}
    \ar@{^{(}->}[d]
    \ar@{^{(}->}[r]
    &
    \mathrm{Smooth}2\mathrm{Grpd}
    \ar@{^{(}->}[d]
    &&&&&
    \\
    \mathrm{Sh}_\infty(\ast)
    \ar@{^{(}->}[r]
    \ar@{=}[d]
    &
    \mathrm{Sh}_\infty(\mathrm{Mfd})
    \ar@{=}[d]
    \ar@{^{(}->}[r]
    &
    \mathrm{Sh}_\infty(\mathrm{FormMfd})
    \ar[r]
    \ar@{=}[d]
    &
    \mathrm{Sh}_\infty(\mathrm{SupFormMfd})
    \ar[r]
    \ar@{=}[d]
    &
    \mathrm{Sh}_\infty(\mathrm{SupFormMfd}_{/\Sigma})
    \ar@<+4pt>@{<-}[r]
    \ar@<-4pt>[r]
    \ar@{=}[d]
    &
    \mathrm{Sh}_\infty(\mathrm{SupFormMfd}_{/\Im \Sigma})
    \ar@{=}[d]
    &
    \begin{turn}{-90} \mbox{ \hspace{-1.7cm} \begin{tabular}{c}  higher \\ topos \\ theory  \end{tabular}  } \end{turn}
    \\
    \infty \mathrm{Grpd}
    \ar@{^{(}->}[r]
    &
    \mathbf{H}_\Re
    \ar@{^{(}->}[r]
    &
    \mathbf{H}_{\rightsquigarrow}
    \ar@{^{(}->}[r]
    &
    \mathbf{H}
    \ar[r]^-{\Sigma^\ast}
    &
    \mathbf{H}_{/\Sigma}
    \ar@{<-}@<+4pt>[r]^{(\eta_\Sigma^\Im)^\ast}
    \ar@<-4pt>[r]_{(\eta_\Sigma^\Im)_\ast}
    &
    ´\mathbf{H}_{/\Im\Sigma}
    &
    \begin{turn}{-90}  \mbox{ \hspace{-1.3cm} \begin{tabular}{c} modal \\ homotopy \\ type theory  \end{tabular}  } \end{turn}
    \\
    \mbox{
      \small
      \begin{tabular}{c}
        discrete
        \\
        geometry
      \end{tabular}
    }
    &
    \mbox{
      \small
      \begin{tabular}{c}
        cohesive
        \\
        geometry
      \end{tabular}
    }
    &
    \mbox{
      \small
      \begin{tabular}{c}
        synthetic
        \\
        differential
        \\
        geometry
      \end{tabular}
    }
    &
    \mbox{
      \small
      \begin{tabular}{c}
        synthetic
        \\
        supergeometry
      \end{tabular}
    }
    &
    \mbox{
      \small
      \begin{tabular}{c}
        relative
        \\
        geometry
      \end{tabular}
    }
    &
    \mbox{
     \small
     \begin{tabular}{c}
       $\mathcal{D}$-geometry
     \end{tabular}
    }
  }
  \!\!\!\!\!\!\!\!\!\!\!\!\!\!\!\!\!\!
$
}

\newpage

The key now is that super formal smooth homotopy theory exhibits the following
abstract structure.

\begin{theorem}
\label{theprogression}
The homotopy theory $\mathbf{H} := \mathrm{Sh}_\infty(\mathrm{SupFormMfd})$
over the site of def. \ref{introductionsuperformalsmoothsite}
carries a system of idempotent $\infty$-(co-)monads as follows:

\begin{center}
\fbox{
$
  \xymatrix{
    &
    &
    \mathrm{id}
    \ar@{}[dd]|{\vee}
    \ar@{}[rr]|{\dashv}
    &&
    \mathrm{id}
    \ar@{}[dd]|{\vee}
    \\
    \\
    \ar@{}[d]|{ \mbox{ \small \begin{tabular}{c} synthetic \\ supergeometry \end{tabular} } }
    &
    &
    \mbox{\Large $\rightrightarrows$}
    \ar@{}[d]|{\bot}
    \ar@{}[rr]|{\dashv}
    &&
    \mbox{\Large $\rightsquigarrow$}
    \ar@{}[d]|{\mathrm{\bot}}
    \\
    &
    &
    \mbox{\Large $\rightsquigarrow$}
    \ar@{}[dd]|{\vee}
    \ar@{}[rr]|{\dashv}
    &&
    \mbox{\Large Rh}
    \ar@{}[ddll]|{\begin{turn}{-45} \tiny $\vee$ \end{turn} }
    \ar@{}[dd]|{\vee}
    \ar@{}[r]|-{\simeq}
    &
    \mathrm{loc}_{\mathbb{R}^{0\vert 1}}
    \\
    \\
    \ar@{}[d]|{ \mbox{\small \begin{tabular}{c} synthetic \\ differential \\ geometry \end{tabular}}  }
    &
    &
    \mbox{\Large $\Re$}
    \ar@{}[d]|{\bot}
    \ar@{}[rr]|{\dashv}
    &&
    \mbox{\Large $\Im$}
    \ar@{}[d]|{\bot}
    \ar@{}[r]|-{\simeq}
    &
    \mathrm{loc}_{\mathbb{D}}
    \\
    &
    \mathrm{loc}_{\mathbb{D}}
    \ar@{}[r]|-{\simeq}
    &
    \mbox{\Large $\Im$}
    \ar@{}[dd]|{\vee}
    \ar@{}[rr]|{\dashv}
    &&
    \mbox{\Large \it \&}
    \ar@{}[dd]|{\vee}
    \\
    \\
    \ar@{}[d]|{\mbox{\small \begin{tabular}{c} cohesive \\ geometry \end{tabular}}}
    &
    \mathrm{loc}_{\mathbb{R}^1}
    \ar@{}[r]|-{\simeq_{\Re}}
    &
    \mbox{\Large $\pi_\infty$} \ar@{}[d]|{\bot} \ar@{}[rr]|{\dashv}
    &&
    \mbox{\Large $\flat$} \ar@{}[d]|\bot
    \\
    &
    & \mbox{\Large $\flat$} \ar@{}[dd]|{\vee} \ar@{}[rr]|{\dashv}
    &&
    \mbox{\Large $\sharp$} \ar@{}[dd]|{\vee}
    \ar@{}[ddll]|{\begin{turn}{-45} \tiny $\vee$ \end{turn} }
    \\
    \\
    \mbox{\small \begin{tabular}{c} discrete \\ geometry\end{tabular}}
    &
    &
    \mbox{\Large $\emptyset$}
    \ar@{}[rr]|{\dashv}
    &&
    \mbox{\Large $\ast$}
  }
$
}
\end{center}

\medskip

Here
\begin{itemize}
  \item each $\bigcirc \dashv \Box$ is an adjunction of idempotent $\infty$-(co-)monads arising from an adjoint triple;
  \item $\bigcirc_1 < \bigcirc_2$ means that $(\bigcirc_1 X \simeq X) \Rightarrow (\bigcirc_2 X \simeq X) $.
\end{itemize}

\end{theorem}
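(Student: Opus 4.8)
The plan is to construct each of the six adjoint triples explicitly and then verify the two claimed properties (idempotency of the induced (co-)monads and the ordering $\bigcirc_1 < \bigcirc_2$) in a uniform way. First I would recall the general principle: whenever one has an adjoint triple $L \dashv C \dashv R$ between $\infty$-categories in which the outer functors $L, R$ are fully faithful, the composite comonad $LC$ and monad $RC$ on the larger $\infty$-topos are idempotent, and moreover $LC \dashv RC$. This is the abstract mechanism behind \emph{each} row of the diagram, so the bulk of the proof reduces to exhibiting, for each geometric inclusion of sites in Proposition (the one preceding the theorem), the corresponding adjoint triple of $\infty$-sheaf functors and checking full faithfulness of the embedding. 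Concretely: (i) for $\infty\mathrm{Grpd} \hookrightarrow \mathbf{H}_\Re$ the triple is $\Pi \dashv \mathrm{Disc} \dashv \Gamma$ (with a further left adjoint giving $\emptyset \dashv \ast$ at the bottom and $\pi_\infty \dashv \flat \dashv \sharp$ above it), coming from the fact that $\mathrm{SupFormCartSp}$ has a terminal object and is ``cohesive'' over the point; (ii) for $\mathbf{H}_\Re \hookrightarrow \mathbf{H}_{\rightsquigarrow} \hookrightarrow \mathbf{H}$ the triples $\Re \dashv \Im \dashv \&$ and $\rightrightarrows \dashv \rightsquigarrow$ come from the three full inclusions of sites $\mathrm{Mfd} \hookrightarrow \mathrm{FormMfd} \hookrightarrow \mathrm{SupFormMfd}$, each admitting a coreflection (reduction, resp. passing to the bosonic part) which is itself a further adjoint.

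Next I would verify the localization characterizations asserted on the right-hand edge of the diagram: $\rightsquigarrow \simeq \mathrm{loc}_{\mathbb{R}^{0\vert 1}}$, $\Im \simeq \mathrm{loc}_{\mathbb{D}}$, $\pi_\infty \simeq \mathrm{loc}_{\mathbb{R}^1}$ (the last one only after passing to the reduced/bosonic part, whence the ``$\simeq_\Re$''). For this I would show that the right adjoint of each coreflection is the reflective localization at the class of maps that become equivalences after inverting the relevant representable: e.g. $\Im X$ is the localization forcing $X(\mathbb{D}) \simeq X(\ast)$ for every infinitesimally thickened point $\mathbb{D}$ (``de Rham stack''/coreduction), and similarly $\rightsquigarrow$ forces invariance under the odd line $\mathbb{R}^{0\vert 1}$ (``bosonic reduction''). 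The key input is that on the \emph{site} these inclusions are dense and the coverages match, so that restriction along the site inclusion is fully faithful on $\infty$-sheaves, and its adjoints are computed by left/right Kan extension; idempotency of the resulting $\infty$-(co-)monads is then automatic from full faithfulness of one of the two outer adjoints.

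Then I would establish the ordering relations $\bigcirc_1 < \bigcirc_2$ depicted by the ``$\vee$'' and diagonal ``$\vee$'' symbols, i.e.\ statements like: an object which is discrete ($\flat X \simeq X$, equivalently geometrically discrete) is in particular infinitesimally reduced and coreduced, and an object which is already ``$\Re$-modal'' (reduced) is $\Im$-comodal after the appropriate composite, etc. Each such implication is proved by chasing the factorization of site inclusions: if $X$ is a sheaf on the smaller site pulled back to the larger one, it is automatically local with respect to \emph{all} the larger classes of maps, because those classes restrict to equivalences on the smaller site. Concretely $\mathrm{loc}_{\mathbb{R}^1}$-locality implies $\mathrm{loc}_{\mathbb{D}}$-locality implies $\mathrm{loc}_{\mathbb{R}^{0\vert 1}}$-locality, which is exactly the claimed chain $\pi_\infty < \Im < \rightsquigarrow$ on comonads and dually $\flat < \& < \rightrightarrows$-type statements on monads; the diagonal comparisons ($\mathrm{Rh}$ versus $\flat$, $\sharp$ versus $\&$) follow by combining a vertical inclusion with an adjunction.

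I expect the main obstacle to be not any single step but the bookkeeping of \emph{which} outer adjoint in each triple is fully faithful (hence which of $\bigcirc$, $\Box$ is idempotent ``for free'' and which requires a small argument), together with carefully setting up the last row: the functor $\pi_\infty$ (shape/geometric realization) is only a \emph{left} adjoint — there is no further left adjoint — so the pattern breaks there, and one must instead invoke the separate fact (``$\mathbb{R}^1$-localization'') that $\mathrm{Sh}_\infty(\mathrm{CartSp})$ is cohesive, proved by hand via the standard smooth-simplex/nerve resolution showing $\mathbb{R}^1$ is contractible as a sheaf. Dually, the very top row $\mathrm{id} \dashv \mathrm{id}$ is degenerate and just records that the whole homotopy theory sits at the top of the progression; this needs no proof beyond noting that the empty localization is the identity. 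Once those two endpoints and the full-faithfulness bookkeeping are handled, the interior of the diagram assembles formally from the general adjoint-triple lemma applied to the six site inclusions of the preceding Proposition, and the partial order is read off from the nesting of the corresponding localizing classes of morphisms.
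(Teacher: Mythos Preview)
The paper does not actually give a proof of this theorem: it is stated as a result surveyed from \cite{dcct}, and the text immediately moves on to discuss its significance and the constructions it enables. So there is no in-paper argument to compare your proposal against.

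That said, your outline is essentially the correct strategy and is what the cited reference carries out: each row of the progression arises from an adjoint string between $\infty$-sheaf topoi induced by one of the nested full site inclusions $\ast \hookrightarrow \mathrm{CartSp} \hookrightarrow \mathrm{FormCartSp} \hookrightarrow \mathrm{SupFormCartSp}$, full faithfulness of the embedding forces idempotency of the resulting (co-)monads, and the ordering relations follow from the nesting of the localizing classes of morphisms. Two small corrections are worth making. First, in an adjoint triple $L \dashv M \dashv R$, full faithfulness of $L$ is \emph{equivalent} to full faithfulness of $R$, so you only ever need to check one side. Second, your worry about $\pi_\infty$ ``breaking the pattern'' is slightly misdirected: the cohesive level comes from an adjoint \emph{quadruple} $\Pi \dashv \mathrm{Disc} \dashv \Gamma \dashv \mathrm{coDisc}$ between $\mathbf{H}$ and $\infty\mathrm{Grpd}$, which yields both $\pi_\infty \dashv \flat$ and $\flat \dashv \sharp$ at once; the genuine extra input there is the existence of the leftmost adjoint $\Pi$ (i.e.\ that the site is $\infty$-cohesive), and the annotation $\simeq_{\Re}$ on $\pi_\infty \simeq \mathrm{loc}_{\mathbb{R}^1}$ records that this particular identification with $\mathbb{R}^1$-localization is only asserted on reduced objects, not that the adjoint string is shorter at that level.
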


\newpage

The existence of a progression of modal operators in theorem \ref{theprogression} is strong condition
on an $\infty$-topos $\mathbf{H}$. This suggests that much of the differential geometry
available in $\mathrm{Sh}_\infty(\mathrm{SupFormMfd})$ may be seen abstractly from reasoning in the internal
language of $\infty$-toposes with such a progression of modal operators added. This abstract homotopy theory
might be called \emph{super differential cohesive homotopy theory}, or just \emph{cohesive homotopy theory}
for short. In \cite{dcct} it is show that:
\begin{claim}
  In super differential cohesive homotopy theory, fundamental physics is synthetically\footnote{
    A synthetic axiomatization specifies intended properties of an object, in contrast to
    an analytic axiomatization which specifies how to build the intended object from more basic constituents.
    In synthetic formalization, a duck is what quacks like a duck, whereas in analytic formalization a
    duck has to be built out of its molecules.
    Euclid's plane geometry is synthetic, Descartes' plane geometry is analytic.
  }
  axiomatized
  \begin{enumerate}
   \item {\it naturally} -- the formalization is elegant and meaningful;
   \item {\it faithfully} -- the formalization captures the deep nontrivial phenomena;
   \item {\it usefully} -- the formalization yields proofs and constructions that are unfeasible otherwise.
  \end{enumerate}
\end{claim}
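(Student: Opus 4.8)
The "statement" in question is a \textbf{Claim}, not a theorem with a definite mathematical content; it asserts that super differential cohesive homotopy theory axiomatizes fundamental physics \emph{naturally}, \emph{faithfully}, and \emph{usefully}. There is no proof of such a claim in the ordinary sense — it is a meta-assertion about the program developed in \cite{dcct}. Accordingly, the plan is not to "prove" it by a single argument but to \emph{substantiate} it by exhibiting, for each of the three desiderata, concrete realizations already prepared by the machinery surveyed in sections \ref{PrequantumLocalFieldTheoryInMotivation}--\ref{examplesinmotivation} and by the modal structure of Theorem \ref{theprogression}. The overall strategy is to show that each clause follows from a structural fact about the progression of idempotent $\infty$-(co-)monads, specialized to $\mathbf{H} = \mathrm{Sh}_\infty(\mathrm{SupFormMfd})$ and its slices $\mathbf{H}_{/\Im\Sigma} \simeq \mathrm{Sh}_\infty(\mathrm{PDE}_\Sigma)$.

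First, for \emph{naturality}, I would argue that every construction entering prequantum field theory is forced by universal properties in modal homotopy theory: the jet comonad $J^\infty_\Sigma$ arises as the base change comonad along $\eta^\Im_\Sigma$ (so $\mathrm{PDE}_\Sigma \simeq \mathbf{H}_{/\Im\Sigma}$ by Marvan's theorem \cite{Marvan86}); the de Rham coefficients $\flat_{\mathrm{dR}}\mathbf{B}^{p+2}\mathbb{R}$ and hence the Deligne-type coefficients $\mathbf{B}^{p+1}(\mathbb{R}/_{\!\hbar}\mathbb{Z})_{\mathrm{conn}}$ are homotopy pullbacks built from $\flat$ and $\pi_\infty$; the Maurer--Cartan form, the Weil algebra, the Lepage refinement, and the Kostant--Souriau prequantum bundle all drop out of the homotopy-fiber/homotopy-pullback constructions displayed in sections \ref{introductionhighergaugefields} and \ref{introductionWZWmodels}. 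The key step here is to check that no ad hoc choices remain once the modalities are fixed — i.e. that the diagrams labelled $\mathrm{(pb)}$ throughout are genuine limits in $\mathbf{H}$, which is where the idempotency and adjointness in Theorem \ref{theprogression} do the work.

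Second, for \emph{faithfulness}, I would assemble the list of nontrivial physical phenomena recovered: Dirac charge quantization and its higher-gauge analogues (section \ref{TheNeedForPrequantumGeometry}), the global action functional and locally variational field theory (section \ref{globalactionfunctionalinintroduction}), Noether's theorem as the homotopy-fiber sequence of symmetries of the Euler--Lagrange $p$-gerbe (section \ref{symmetriescurrentsinintroduction}), the Kostant--Souriau extension and the BV antibracket (footnotes \ref{BVremark}, \ref{antibracketremark}), the brane bouquet emanating from the superpoint and its recovery of the super $p$-brane spectrum with the correct dimensions and supersymmetries (section \ref{introductionWZWmodels}), and the $G_2$-compactification of $11$-dimensional supergravity. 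Each of these is already a theorem in \cite{dcct} or the cited papers; the step is to record that in every case the synthetic object reproduces exactly the object known from the physics or differential-geometry literature, including the subtle pieces (instanton sectors, torsion constraints, anomaly classes like $\tfrac12 p_1 - c_2$).

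Third, for \emph{usefulness}, I would point to constructions that are genuinely infeasible in components: analysis of Euler--Lagrange $5$-gerbes on $3$-stacky jet super-bundles for the M5-brane, the automatic globalization of WZW terms over $V$-étale stacks via homotopy stabilizers $\mathbf{Stab}_{\mathrm{GL}(V)}(\nabla)$, and the uniform treatment of higher gauge fields as maps into $\exp(\mathfrak{a})_{\mathrm{conn}}$. The main obstacle — and I would flag this honestly — is that \emph{usefully} and \emph{faithfully} are not formalizable predicates, so the "proof" here can only be a curated demonstration; the genuine mathematical content is the verification that $\mathbf{H}$ actually carries the modal progression of Theorem \ref{theprogression} (which is the real theorem), together with the individual reconstruction results, and the Claim is then the informal summary that these reconstructions are comprehensive and economical. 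I would therefore present the "proof" as a structured survey keyed to the three clauses, deferring to \cite{dcct} for the detailed verifications, and making explicit that the load-bearing formal input is Theorem \ref{theprogression} itself.
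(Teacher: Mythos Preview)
Your assessment is correct: this is a programmatic claim, not a theorem, and the paper does not prove it either---it introduces the Claim with ``In \cite{dcct} it is shown that:'' and then substantiates it informally via the historical discussion of Hilbert's sixth problem and Lawvere's program, followed by the concrete abstract constructions in sections \ref{introductionabstractdifferentialcohomology}--\ref{introductionabstractPDEtheory}. Your proposal to organize the substantiation explicitly around the three clauses (\emph{naturally}, \emph{faithfully}, \emph{usefully}) is if anything more systematic than what the paper does, but the underlying strategy---defer to \cite{dcct} and to Theorem \ref{theprogression} as the load-bearing formal input, then exhibit the reconstructions as evidence---is the same.
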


At the {International Congress of Mathematics} in Paris, 1900,
David Hilbert stated his list of 23 central open questions of mathematics
\cite{Hi1900}. Among them, the sixth problem
has possibly received the least attention
from mathematicians \cite{Corry04}, but:
``From all the problems in the list, the sixth is the only one that continually engaged
[Hilbert's] efforts over a very long period, at least between 1894 and 1932.''
\cite{Corry06}.
Hilbert stated the problem as follows:

\vspace{4pt}

\noindent{\bf Hilbert's problem 6.}
{\it To treat by means of {axioms}, those {physical sciences} in
	which mathematics plays an important part}
[...]
{\it   try first by a {small number of axioms}
    to include as large a class as possible of physical phenomena,
    and then by adjoining new axioms to arrive gradually at the more special theories. }
[...]
{\it take account not only of those theories coming near to reality, but also,
	as in geometry, all {logically possible theories} }.

\vspace{4pt}

Since then, various aspects of physics have been given a mathematical formulation.
The following table, necessarily incomplete, gives a broad idea of central
concepts in theoretical physics and the mathematics
that captures them.

\vspace{.5cm}

\begin{center}
\begin{tabular}{|l|ll|}
    \hline
	&&
	\\
    & {\bf physics} & {\bf mathematics}
	\\
	&&
    \\
	\hline\hline
    & {\it prequantum physics} & {\it differential geometry}
    \\
	\hline\hline
    18xx-19xx & {Lagrangian mechanics} & variational calculus
    \\
    18xx-19xx & {Hamiltonian mechanics} & {symplectic geometry}
	\\
	1910s & {gravity} &
	{Riemannian geometry}
	\\
	1950s & {gauge theory} &
	{Chern-Weil theory}
	\\
	2000s & {higher gauge theory} & {differential cohomology}
	\\
	\hline
	& &
	\\
	& &
	\\
	\hline \hline
    & {\it quantum physics} & {\it noncommutative algebra}
    \\
	\hline\hline
	1920s & {quantum mechanics} & {operator algebra}
	\\
	1960s & {local observables}  & {co-sheaf theory}
    \\
    1990s-2000s & {local field theory}
    & {$(\infty,n)$-category theory}
	\\
	\hline
\end{tabular}
\end{center}

\vspace{.5cm}

These are traditional solutions to aspects of Hilbert's sixth  problem.
Two points are noteworthy: on the one hand the items in the list are
crown jewels of mathematics; on the other hand their appearance is somewhat
unconnected and remains piecemeal.

Towards the end of the 20th century, William Lawvere, the founder
of categorical logic and of categorical algebra, aimed for a more
encompassing answer that rests the axiomatization of physics on a decent
unified foundation. He suggested to
\begin{enumerate}
  \item rest the foundations of mathematics itself in topos theory
     \cite{Lawvere65};
  \item build the foundations of physics \emph{synthetically} inside topos theory by
  \begin{enumerate}
    \item
  imposing properties on a topos which ensure that the
  objects  have the
  structure of \emph{differential geometric spaces}
  \cite{Lawvere98};
  \item formalizing classical mechanics on this basis by
        universal constructions
        \\
  (``Categorical dynamics'' \cite{Lawvere67},
   ``Toposes of laws of motion'' \cite{LawvereSynth}).
\end{enumerate}
\end{enumerate}
\noindent But
\begin{enumerate}
  \item modern mathematics naturally wants foundations not in
  topos theory, but in \emph{higher topos theory};
  \item modern physics needs to refine classical mechanics to
  \emph{quantum mechanics} and \emph{quantum field theory}
  at small length scales/high energy scales.
\end{enumerate}
Hence there is need for refining Lawvere's synthetic approach on Hilbert's sixth
problem from classical physics formalized in synthetic differential geometry
axiomatized in topos theory
to
high energy physics formalized in higher differential geometry
axiomatized in higher topos theory.

\medskip

Differential cohesive homotopy theory is close in spirit to the
classical approach of synthetic differential geometry \cite{KockBookA}. There one considers
1-toposes such as $\mathrm{Sh}(\mathrm{FormMfd})$, notices that they validate an axiom scheme called the
\emph{Kock-Lawvere axioms} (KL-axioms) and then develops differential geometry by reasoning in
the internal language of toposes validating the KL-axioms. The internal language of elementary toposes
is essentially intuitionistic type theory with coinductive types (TT+IT), except for two issues: interpreting
identity types is problematic and that the type universe tends to lack expected extensionality properties.

Both these issues have been resolved by passing to homotopy type theory \cite{Shulman16, HoTT} with univalent universes
and higher inductive types (HoTT+UV+HIT). This is argued to be the internal language of
(elementary) $\infty$-toposes and hence is just what is needed for synthetic higher differential geometry.

\begin{theorem}
$\,$
\begin{itemize}
\item HoTT has semantics in locally presentable locally Cartesian closed $\infty$-categories \cite{Shulman12};
\item HoTT+UV$_{\mathrm{strict}}$ has semantics in the $\infty$-topos  $\infty \mathrm{Grpd}$ \cite{KLV};
\item HoTT+UV$_\mathrm{strict}$ has semantics in a few infinite classes of $\infty$-presheaf $\infty$-toposes \cite{Shulman13,Shulman15};
\end{itemize}
\end{theorem}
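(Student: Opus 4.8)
The plan is to assemble the three bullets as three separate consequences, since they are really citations to three distinct bodies of work rather than a single argument. First I would fix the statement of what ``semantics of HoTT'' means: a model of Martin-L\"of type theory with $\Sigma$, $\Pi$, identity types and a hierarchy of universes is presented by a structure of \emph{type-theoretic fibration category} (or, following the references, a comprehension category / contextual category with the appropriate structure), and the content of each bullet is the construction of such a structure on the category in question, with the coherence issues (Beck--Chevalley, stability of the chosen structure under substitution) resolved.

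For the first bullet, the approach is to invoke Shulman's result that every locally presentable, locally Cartesian closed $\infty$-category admits a presentation by a model category in which the interpretation of type theory can be carried out: one passes to a combinatorial model category presenting the $\infty$-category, applies the right-properness and local Cartesian closure to get a well-behaved class of fibrations, and then uses the \emph{local universes} (a.k.a. Shulman's coherence) construction to rigidify the universes so that substitution is strictly functorial. The univalence of the resulting universe object then follows from the object-classifier in the $\infty$-topos in the case at hand, but for the bare first bullet one only claims HoTT without strict univalence, so this step is lighter. For the second bullet, I would cite the Kapulkin--Lumsdaine--Voevodsky model structure on simplicial sets: here one takes Kan fibrations as the fibrations, uses the minimal-fibration / universe construction of Voevodsky to produce a univalent universe $\mathcal{U}$ fitting into a fibration classifying small Kan fibrations, and verifies univalence via the equivalence-extension property. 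The third bullet is then obtained by transporting this along the observation (Shulman, ``inverse diagrams'' and ``elegant Reedy categories'') that for an elegant Reedy category $R$ the diagram category $[R^{\mathrm{op}},\mathrm{sSet}]$ with the injective-type model structure again admits a univalent universe — built pointwise from the simplicial one and then corrected for coherence — which presents the $\infty$-presheaf $\infty$-topos $\mathrm{Sh}_\infty(R)$; iterating over the classes of elegant Reedy sites (e.g. $\Delta^{\times n}$, semi-simplicial sets, etc.) yields the claimed ``few infinite classes''.

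The main obstacle, and the step I would flag as doing the real work, is \emph{coherence}: the naive interpretation of substitution in a model category is only pseudo-functorial, whereas type theory demands it be strictly functorial, and similarly the universe must strictly classify the chosen fibrations rather than merely up to equivalence. This is exactly the gap between ``$\infty$-topos'' and ``model of HoTT'', and it is resolved differently in each bullet — by Voevodsky's minimal-fibration trick in simplicial sets, and by Shulman's local-universes / left-exact-localization machinery in the general presentable case — so the honest proof is a case analysis citing the appropriate coherence theorem in each setting, rather than a uniform construction. I would therefore structure the writeup as: (i) recall the notion of a model of HoTT+(UV)+(HIT) as algebraic structure on a fibration category; (ii) for each bullet, name the presenting (model) category and the class of fibrations; (iii) cite the relevant coherence/universe construction; (iv) for the univalent cases, verify univalence via the object classifier; (v) for HIT, note that the higher-inductive types needed are constructed by the standard cell-attachment / localization arguments available in all these models. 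No new mathematics is required beyond correctly attributing and stitching together the cited theorems.
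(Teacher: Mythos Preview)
The paper provides no proof for this theorem: it is stated purely as a summary of results established in the cited references \cite{Shulman12}, \cite{KLV}, \cite{Shulman13,Shulman15}, and is followed immediately by a remark rather than a proof environment. The citations themselves are the entire justification.

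Your proposal is therefore not so much a different proof as an expansion of what those citations contain. As such it is broadly accurate --- the identification of coherence/strict substitution as the real work, the role of the local-universes construction for the first bullet, the Voevodsky universe for the second, and the Reedy/inverse-diagram transport for the third are all correct glosses on the cited papers. A couple of minor points: \cite{Shulman12} is specifically about inverse diagrams (the local-universes coherence theorem in full generality is a slightly later development), and \cite{Shulman15} covers inverse EI diagrams rather than only elegant Reedy categories, so your description of the third bullet conflates the two cited classes. Also, your item (v) on HITs is not part of the theorem as stated and should be dropped. But none of this is a mathematical gap; the paper simply does not attempt what you are attempting, and for the purposes of matching the paper you should replace your outline with a one-line deferral to the references.
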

\begin{remark}
$\,$
\begin{itemize}
\item HoTT+UV$_\mathrm{weak}$ is argued to have semantics in all $\infty$-toposes \cite{Shulman14};
\item HoTT+UV+Cohesion is developed in \cite{LicataShulman,Shulman15b,RSP}.
\end{itemize}
\end{remark}

\begin{center}
\begin{tabular}{|c|c|c|}
  \hline
  & \begin{tabular}{c}
    Kock-Lawvere
    \\
    {\bf synthetic }
    \\ {\bf diff. geometry }
  \end{tabular}
  &
  \begin{tabular}{c}
    $\,.$
    \\
    {\bf synthetic}
    \\
    {\bf higher diff. geometry}
  \end{tabular}
  \\
  \hline
  {\bf model} & topos & $\infty$-topos
  \\
  \hline
  \begin{tabular}{c}
    {\bf internal}
    \\
    {\bf language}
  \end{tabular}
  &
  \begin{tabular}{c}
    intuitionistic
    \\
    type theory
  \end{tabular}
  &
  \begin{tabular}{c}
    homotopy
    \\
    type theory
  \end{tabular}
  \\
  \hline
  {\bf axioms}
  &
  \begin{tabular}{c}
    KL-axiom scheme
    \\
    forcing internal
    \\
    infinitesimal
  \end{tabular}
  &
  \begin{tabular}{c}
    progression of
    \\
    modal operators
    \\
    forcing cohesion
  \end{tabular}
  \\
  \hline
\end{tabular}
\end{center}

In the following we will not reason fully formally in cohesive homotopy type theory, but instead
proceed in the familiar pseudocode formerly known as mathematics. But all constructions that follow
are manifestly such that they should lend themselves to full formalization in cohesive homotopy type theory.

\subsection{Abstract differential cohomology}
\label{introductionabstractdifferentialcohomology}

We now survey a list of abstract constuctions and theorems that follow formally for
every homotopy theory $\mathbf{H}$ which is equipped with the first stage of adjoint (co-)monads in
theorem \ref{theprogression}. These we call \emph{cohesive} homotopy theories.

\begin{definition}
  For $\mathbf{H}$ an $\infty$-topos, write
  $T \mathbf{H}$ for the $\infty$-category of parameterized spectrum objects in
  $\mathbf{H}$.
\end{definition}
\begin{proposition}[{\cite[section 35]{JoyalLogoi}}]
  $T \mathbf{H}$ is itself an $\infty$-topos. The spectra $\mathrm{Stab}(\mathbf{H}) \simeq T_\ast \mathbf{H}$
  are precisely the stable (linear) objects.
\end{proposition}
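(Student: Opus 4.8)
The statement is that for an $\infty$-topos $\mathbf{H}$, the $\infty$-category $T\mathbf{H}$ of parameterized spectrum objects is again an $\infty$-topos, and that the spectra $\mathrm{Stab}(\mathbf{H}) \simeq T_\ast \mathbf{H}$ are exactly the stable (linear) objects therein. The plan is to cite the structural result of Joyal already referenced in the excerpt (\cite[section 35]{JoyalLogoi}) and then fill in the identification of stable objects. First I would recall the definition: $T\mathbf{H}$ is the $\infty$-category whose objects are pairs $(B, E)$ with $B \in \mathbf{H}$ a base and $E$ a spectrum object in the slice $\mathbf{H}_{/B}$, i.e.\ $T\mathbf{H} \simeq \int_{B \in \mathbf{H}} \mathrm{Stab}(\mathbf{H}_{/B})$, the Grothendieck construction of the functor sending $B$ to the stabilization of the slice. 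Equivalently, following Lurie's presentation, $T\mathbf{H}$ is the $\infty$-category of excisive functors from finite pointed spaces (or from pointed finite $\infty$-groupoids) into $\mathbf{H}$, or the tangent bundle $\infty$-category. The key claim that this is an $\infty$-topos is exactly Joyal's theorem, so the main work reduces to checking it satisfies the Giraud--Rezk--Lurie axioms; I would not reprove this but cite it.

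Next I would prove $T_\ast \mathbf{H} \simeq \mathrm{Stab}(\mathbf{H})$. The terminal object of $T\mathbf{H}$ is $(\ast, 0)$ — the terminal base with the zero spectrum. A pointed object of $T\mathbf{H}$ is a section of the projection $T\mathbf{H} \to \mathbf{H}$ over the terminal object, i.e.\ an object $(B,E)$ equipped with a map from $(\ast,0)$; but a map $(\ast, 0) \to (B,E)$ in $T\mathbf{H}$ forces $B \simeq \ast$ (since the component in $\mathbf{H}$ must be $\ast \to B$, and being pointed with respect to the terminal... ) — more carefully, $T_\ast\mathbf{H} = (T\mathbf{H})_{(\ast,0)/}$, and one computes this slice. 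Since the fiber of $T\mathbf{H} \to \mathbf{H}$ over $\ast$ is $\mathrm{Stab}(\mathbf{H})$ and the zero object $(\ast,0)$ already lives in this fiber as its zero object, the undercategory $(\ast,0)/$ restricted appropriately recovers $\mathrm{Stab}(\mathbf{H})$ with its zero object; here one uses that $\mathrm{Stab}(\mathbf{H})$ is itself pointed (has a zero object) so $(\mathrm{Stab}(\mathbf{H}))_{0/} \simeq \mathrm{Stab}(\mathbf{H})$. The identification of these spectra with the ``stable objects'' of the $\infty$-topos $T\mathbf{H}$ — meaning those objects $X$ for which the suspension--loop adjunction $\Sigma \dashv \Omega$ on $(T\mathbf{H})_{/X}$... no: the stable objects are those $X$ whose slice or whose self-$\infty$-category of pointed objects is stable — amounts to: an object $(B,E)$ of $T\mathbf{H}$ is stable iff $B \simeq \ast$, since a spectrum object over a nontrivial base retains the non-stable homotopy of $B$, whereas over $\ast$ one gets precisely a spectrum, and $\mathrm{Stab}(\mathbf{H})$ is a stable $\infty$-category.

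The steps in order: (1) recall the definition of $T\mathbf{H}$ as parameterized spectra / tangent $\infty$-category, with its projection $p: T\mathbf{H}\to \mathbf{H}$ and fibers $\mathrm{Stab}(\mathbf{H}_{/B})$; (2) invoke \cite[section 35]{JoyalLogoi} for the assertion that $T\mathbf{H}$ is an $\infty$-topos (alternatively sketch that it inherits the Giraud axioms because it is built from slices of $\mathbf{H}$, each of which is an $\infty$-topos, glued via the Grothendieck construction along geometric morphisms); (3) identify the terminal object as $(\ast, 0)$ and compute $T_\ast\mathbf{H} = (T\mathbf{H})_{(\ast,0)/}$, using that the fiber over $\ast\in\mathbf H$ is $\mathrm{Stab}(\mathbf{H})$ and that slicing under the zero object of a pointed $\infty$-category is trivial; (4) verify that an object of $T\mathbf{H}$ is stable precisely when its base is terminal, so that the stable objects form exactly $\mathrm{Stab}(\mathbf{H}) \simeq T_\ast\mathbf{H}$. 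I expect step (4) — pinning down the precise meaning of ``stable (linear) object'' in an arbitrary $\infty$-topos and showing it is detected by triviality of the base — to be the main obstacle, since it requires being careful about whether ``stable'' refers to the slice $(T\mathbf{H})_{/X}$ being stable, to $X$ being an infinite loop object, or to the condition that $X \simeq \Omega\Sigma X$; in each formulation the argument is short but the equivalence of formulations needs the structure of $T\mathbf{H}$ as a tangent bundle, which is exactly where Joyal's and Lurie's results do the heavy lifting.
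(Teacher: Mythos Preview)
The paper does not provide a proof of this proposition; it is stated purely as a citation to \cite[section 35]{JoyalLogoi} and immediately followed by an example. So there is no proof in the paper to compare your proposal against.

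Your outline is reasonable as a sketch of why the result holds, and your plan to cite Joyal for the $\infty$-topos structure matches exactly what the paper does. However, you overcomplicate step (3): in this context $T_\ast\mathbf{H}$ denotes the \emph{fiber} of the projection $T\mathbf{H} \to \mathbf{H}$ over the terminal object $\ast \in \mathbf{H}$, not the coslice $(T\mathbf{H})_{(\ast,0)/}$. By the very definition of the tangent $\infty$-category as the Grothendieck construction of $B \mapsto \mathrm{Stab}(\mathbf{H}_{/B})$, this fiber is $\mathrm{Stab}(\mathbf{H}_{/\ast}) \simeq \mathrm{Stab}(\mathbf{H})$, so the identification is tautological and needs no argument. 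Your coslice computation is a detour and would require further work to reconcile with the fiber. Similarly, step (4) is reading too much into the phrase ``stable (linear) objects'': the paper is simply \emph{naming} the objects of $T_\ast\mathbf{H}$ this way, not asserting an intrinsic characterization that requires proof; the parenthetical ``linear'' alludes to the additivity of the stable $\infty$-category $\mathrm{Stab}(\mathbf{H})$.
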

\begin{example}
  \label{twistedcohomologyfromtangenttoposhom}
  For $\mathbf{H} = \infty \mathrm{Grpd}$ then an object
  $E \in T_\ast \infty \mathrm{Grpd}$ is equivalently a spectrum,
  and for any $X \in \infty \mathrm{Grpd} \hookrightarrow T \infty \mathrm{Grpd}$ then
  $$
    E^\bullet(X) \simeq [X,E]
  $$
  is the $E$-cohomology spectrum of $X$.
  More generally, for $\tau \in T_X \infty \mathrm{Grpd}$ a bundle
  of spectra whose fibers are equivalent to $E$, then
  $$
    E^{\bullet + \tau}(X) \simeq [X,\tau]_X
  $$
  is the $\tau$-twisted $E$-cohomology spectrum of $X$ \cite{ABGHR13}.
\end{example}
\begin{example}
  For $S$ a site, let $\mathbf{H} := \mathrm{Sh}_\infty(S)$ be the hypercomplete
  $\infty$-topos over that site. The stable Dold-Kan correspondence turns
  a sheaf of chain complexes $A \in \mathrm{Ch}_\bullet(\mathrm{Sh}(S))$ into
  a spectrum object $H A \in T_\ast \mathbf{H} \hookrightarrow T \mathbf{H}$.
  Then
  $$
    H A^\bullet(X) \simeq [X, H A]
  $$
  is the sheaf hypercohomology of $X$ with coefficients in $A$ \cite{Brown}.
\end{example}

\begin{proposition}
  For $(\pi_\infty \dashv \flat \dashv \sharp) :  \mathbf{H} \to \mathbf{H}$ a cohesive $\infty$-topos
  then also its tangent $\infty$-topos is cohesive
  $$
    (T \pi_\infty \dashv T \flat \dashv T \sharp) :  T \mathbf{H} \to T\mathbf{H}
    \,.
  $$
\end{proposition}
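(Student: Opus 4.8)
The claim is that if $\mathbf{H}$ is a cohesive $\infty$-topos, with the adjoint triple of $\infty$-functors $(\pi_\infty \dashv \flat \dashv \sharp) : \mathbf{H} \to \mathbf{H}$ inducing idempotent (co-)monads, then the tangent $\infty$-topos $T\mathbf{H}$ inherits a cohesive structure $(T\pi_\infty \dashv T\flat \dashv T\sharp)$. The plan is to argue this functorially: cohesion is defined by an adjoint quadruple (or the triple here) relating $\mathbf{H}$ to $\infty\mathrm{Grpd}$ (equivalently, relating $\mathrm{Sh}_\infty(S)$ to $\mathrm{Sh}_\infty(\ast)$ over a cohesive site), and the operation $T(-)$ of passing to parameterized spectrum objects is functorial enough on the relevant morphisms of $\infty$-toposes that it carries such an adjoint quadruple to another one. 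Since $T\mathbf{H}$ is again an $\infty$-topos by the cited result of Joyal (section 35 of \emph{Notes on Logoi}), the only thing to check is that the cohesive adjunctions transport.

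First I would recall that a cohesive $\infty$-topos $\mathbf{H}$ over a base $\infty$-topos $\mathbf{B}$ (here $\mathbf{B} = \infty\mathrm{Grpd}$) is a geometric morphism $\Gamma : \mathbf{H} \to \mathbf{B}$ which extends to an adjoint quadruple $\Pi \dashv \mathrm{Disc} \dashv \Gamma \dashv \mathrm{Codisc}$ with $\mathrm{Disc}$ and $\mathrm{Codisc}$ fully faithful and $\Pi$ preserving finite products; the cohesive modalities are then $\pi_\infty = \mathrm{Disc}\,\Pi$, $\flat = \mathrm{Disc}\,\Gamma$, $\sharp = \mathrm{Codisc}\,\Gamma$. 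The key input is that forming the tangent $\infty$-category is functorial: a geometric morphism $f_* \dashv f^*$ between $\infty$-toposes induces $T f_* \dashv T f^*$ between the tangent $\infty$-toposes (one constructs $T f^*$ and $T f_*$ levelwise on the excisive/parameterized-spectrum presentation, using that $f^*$ is left exact so it preserves the pullback squares and the suspension/loop structure defining spectrum objects, and that both adjoints are continuous/cocontinuous as needed). Applying this to each of the three (or four) adjoint functors in the cohesive quadruple for $\mathbf{H}$, I would obtain the candidate quadruple $T\Pi \dashv T\mathrm{Disc} \dashv T\Gamma \dashv T\mathrm{Codisc}$ on $T\mathbf{H}$, noting that $T\infty\mathrm{Grpd}$ plays the role of the base.

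Then I would verify the three defining properties of cohesion for this transported quadruple: (i) $T\mathrm{Disc}$ and $T\mathrm{Codisc}$ are fully faithful — this follows because $T(-)$ preserves fully faithful functors (a fully faithful functor has invertible unit/counit on the relevant adjunction, and $T$ preserves these identities levelwise); (ii) $T\Pi$ preserves finite products — here I would use that finite products in $T\mathbf{H}$ are computed "pointwise" over the base direction together with the spectrum-object structure, and that $\Pi$ preserving finite products lifts; this is the step I expect to be the main obstacle, since finite products of parameterized spectra mix the base-object and the spectrum-fiber data, and one must check $T\Pi$ respects both, possibly appealing to the fact that $T\mathbf{H} \simeq \mathrm{Exc}(\mathrm{Fin}_*, \mathbf{H})$ or to the description of $T\mathbf{H}$ as sheaves on $S$ with values in spectra when $\mathbf{H} = \mathrm{Sh}_\infty(S)$, where the statement reduces to the known stable/linear case; and (iii) the composite $T\Gamma$ is the global-sections geometric morphism of $T\mathbf{H}$ over $T\infty\mathrm{Grpd}$, which follows from functoriality of $T$ applied to the geometric morphism $\Gamma$. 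Assembling (i)–(iii) gives that $T\mathbf{H}$ is cohesive with the stated adjoint triple of modalities $(T\pi_\infty \dashv T\flat \dashv T\sharp)$, completing the proof.
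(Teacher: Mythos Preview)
The paper states this proposition without proof, so there is no argument in the paper to compare against directly. Your outline is the standard and correct one: functoriality of the tangent construction on $\infty$-toposes transports the cohesive adjoint quadruple, and one then verifies the cohesion axioms for the transported functors.

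There is one soft spot worth flagging. Your construction of the four tangent functors appeals to the description ``levelwise on the excisive/parameterized-spectrum presentation, using that $f^\ast$ is left exact.'' This works cleanly for $\mathrm{Disc}$, $\Gamma$, and $\mathrm{Codisc}$, each of which preserves finite limits (being right adjoints, or the inverse image of a geometric morphism). It does \emph{not} directly apply to $\Pi$, which in general preserves only finite products, not all finite limits; so the levelwise construction of $T\Pi$ on spectrum objects is not automatic. The clean fix is the one you gesture at: once $T\mathrm{Disc}$ is constructed and shown to be accessible and limit-preserving between presentable $\infty$-categories, the adjoint functor theorem supplies its left adjoint $T\Pi$. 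Alternatively, one may use the model $T\mathbf{H} \simeq \mathrm{Sh}_\infty(S, T(\infty\mathrm{Grpd}))$ over a cohesive site $S$ (as in the Bunke--Nikolaus--V\"olkl reference the paper cites), where the entire quadruple is induced objectwise from the site-level adjunctions and all properties, including product-preservation of $T\Pi$, follow from the corresponding properties of $\Pi$ applied value-wise. With that adjustment your argument goes through.
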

\begin{definition}
  For $\bigcirc$ an idempotent $\infty$-(co-)monad on $T \mathbf{H}$, write $\overline{\bigcirc}$ for the homotopy (co-)fiber
  of its (co-unit).
\end{definition}
\begin{proposition}[\cite{BunkeNikolausVoelkl}]
\label{differentialhexagon}
For $(\pi_\infty \dashv \flat \dashv \sharp) :  \mathbf{H} \to \mathbf{H}$ a cohesive $\infty$-topos, then
for every $A \in T_\ast \mathbf{H}$ the canonical hexagon
$$
  \xymatrix{
    & \overline{\pi_\infty} A \ar[dr] \ar[rr]^{d} &&  \overline{\flat} A \ar[dr]
    \\
    \overline{\pi_\infty} \flat A  \ar[dr] \ar[ur]  && A \ar[ur] \ar[dr] && \pi_\infty \overline{\flat A}
    \\
    & \flat A \ar[ur] \ar[rr]_{\beta} && \pi_\infty A \ar[ur]
  }
$$
is homotopy exact, in that in addition to the diagonals being homotopy fiber sequences, also
\begin{enumerate}
  \item both squares are homotopy cartesian;
  \item both outer sequences are homotopy fiber sequences.
\end{enumerate}
\end{proposition}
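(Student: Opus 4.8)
The plan is to deduce the homotopy-exactness of the hexagon purely formally from the modal structure, following \cite{BunkeNikolausVoelkl}, by reducing everything to properties of the idempotent comonad $\flat$ and its associated \emph{de Rham} comonad $\overline{\flat}$ (the homotopy cofiber of the counit $\flat A \to A$), together with the idempotent monad $\sharp$ and, dually, $\overline{\pi_\infty}$ (here I read $\overline{\pi_\infty}A$ as the homotopy fiber of the unit $A \to \pi_\infty A$, exploiting that $\pi_\infty \dashv \flat$ so that on stable objects $\pi_\infty$ is the ``flat de Rham coefficient'' playing the role of the shape). The two diagonals are homotopy fiber sequences essentially by definition of $\overline{\flat}$ and $\overline{\pi_\infty}$, so the content is: (i) both squares are homotopy cartesian, and (ii) the two outer ``zig-zag'' sequences $\overline{\pi_\infty}\flat A \to A \to \pi_\infty\overline{\flat}A$ and $\flat A \to \overline{\pi_\infty}A \xrightarrow{d} \overline{\flat}A$ (and its mirror) are fiber sequences.

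First I would set up the ambient facts: since $T\mathbf{H}$ is a stable $\infty$-category (it is the $\infty$-category of parameterized spectrum objects restricted to the fibers $T_\ast\mathbf{H} = \mathrm{Stab}(\mathbf{H})$, where the hexagon lives because $A \in T_\ast\mathbf{H}$), homotopy pullback squares coincide with homotopy pushout squares, and a square is cartesian iff the induced map on horizontal (equivalently vertical) fibers is an equivalence. This is the key simplification that makes the whole diagram collapse to a bookkeeping exercise with fiber sequences. Then I would record the two structural inputs from cohesion: (a) $\flat$ is \emph{idempotent} and \emph{left exact} enough that $\flat\flat A \simeq \flat A$ and $\flat$ preserves the relevant fiber sequences; (b) the comparison $\flat A \to \pi_\infty A$ (the map $\beta$ in the bottom edge) factors the counit-unit composite, and its fiber/cofiber is controlled by $\overline{\flat}$ and $\overline{\pi_\infty}$. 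Concretely, applying $\flat$ to the fiber sequence $\overline{\flat}A \to A \to$ (something) and using $\flat\overline{\flat}A \simeq \flat(\mathrm{cofib}) $ — one shows $\flat\overline{\flat}A \simeq 0$ because $\flat$ kills the de Rham part (this is the cohesive analogue of ``$\flat$ of a de Rham coefficient vanishes''), and dually $\pi_\infty \overline{\pi_\infty}A$ behaves correctly. These vanishing statements are exactly what force the two squares to be cartesian.

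The concrete steps, in order: (1) Show the right-hand square (with corners $\overline{\flat}A$, $A$, $\pi_\infty\overline{\flat}A$, $\pi_\infty A$ arranged via the maps $d$, the unit, and $\pi_\infty$ applied to the diagonal) is cartesian, by computing that the horizontal fibers agree — the top fiber is $\flat\overline{\flat}A$ and the bottom fiber is $\flat A$-shifted, and the idempotency/vanishing from the previous paragraph identifies them. (2) Show the left-hand square is cartesian by the dual argument applied to $\overline{\pi_\infty}$ and $\sharp$; here one uses that $\overline{\pi_\infty}\flat A$ is the common fiber. (3) Deduce the outer fiber sequences: $\flat A \to \overline{\pi_\infty}A \to \overline{\flat}A$ is a fiber sequence because it is obtained by taking horizontal fibers across the right square (which is cartesian by step (1)), and its mirror $\overline{\pi_\infty}\flat A \to A \to \pi_\infty\overline{\flat}A$ (the ``long diagonal'') is a fiber sequence by pasting the two cartesian squares along the central object $A$ and using the pasting law for pullbacks together with stability. (4) Finally verify the stated diagonals $\overline{\pi_\infty}\flat A \to \flat A \to \pi_\infty A$ etc.\ are fiber sequences, which is immediate from the definitions of $\overline{\pi_\infty}$ and $\beta$ once one knows $\beta$ is the composite $\flat A \to A \to \pi_\infty A$.

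The main obstacle I anticipate is step (1)–(2), i.e.\ proving the vanishing/idempotency identities $\flat\overline{\flat}A \simeq 0$ and $\pi_\infty\overline{\pi_\infty}A \simeq 0$ in the stable/parameterized setting $T_\ast\mathbf{H}$ rather than in $\mathbf{H}$ itself. These hold in $\mathbf{H}$ by standard cohesion arguments, but one must check they are inherited by the tangent $\infty$-topos $T\mathbf{H}$ with its cohesive structure $(T\pi_\infty \dashv T\flat \dashv T\sharp)$ — that is, that $T\flat$ restricted to stable objects still kills de Rham coefficients. I would handle this by invoking the proposition (stated above) that $T\mathbf{H}$ is again cohesive and that the modalities on $T\mathbf{H}$ are the ``pointwise'' extensions of those on $\mathbf{H}$, so that on a spectrum object the identities reduce levelwise (or via the stable Dold–Kan / Postnikov presentation) to the known $\infty$-groupoid-level statements. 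Once that reduction is in place, the hexagon is a purely diagrammatic consequence of stability plus two cartesian squares, exactly as in \cite{BunkeNikolausVoelkl}.
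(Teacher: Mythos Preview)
Your approach is essentially the one the paper sketches: use stability of $T_\ast\mathbf{H}$ so that cartesian squares are detected on homotopy fibers, and then invoke idempotency of the cohesive modalities to identify those fibers. The paper's proof is literally just those two sentences, so you have correctly reverse-engineered the strategy.

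Two small points worth cleaning up. First, the modality $\sharp$ plays no role whatsoever in this hexagon or its proof; the entire argument uses only $\pi_\infty \dashv \flat$ and the idempotency identities $\pi_\infty\flat \simeq \flat$, $\flat\pi_\infty \simeq \pi_\infty$ (whence $\flat\,\overline{\flat}A \simeq 0$ and $\overline{\pi_\infty}\,\flat A \simeq 0$, and dually). Your step (2) should be the mirror of step (1) using $\flat$, not $\sharp$. Second, your identification of the ``top fiber'' in the right square as $\flat\,\overline{\flat}A$ is garbled: the horizontal fiber of $A \to \overline{\flat}A$ is $\flat A$, and the horizontal fiber of $\pi_\infty A \to \pi_\infty\overline{\flat}A$ is $\pi_\infty\flat A$; cartesianness follows because $\flat A \to \pi_\infty\flat A$ is an equivalence by idempotency. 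Your vanishing identity $\flat\,\overline{\flat}A \simeq 0$ is correct and equivalent to this, but it enters via the other direction of fiber comparison. Once these cosmetic fixes are made, your steps (1)--(4) go through exactly as in \cite{BunkeNikolausVoelkl}, and the concern you flag about lifting the identities to $T_\ast\mathbf{H}$ is handled precisely as you say, by the cohesion of $T\mathbf{H}$.
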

\begin{proof}
  Use that homotopy pullback of stable objects is detected on homotopy fibers. Then use
  cohesion and idempotency to find that the squares are homotopy cartesian.
\end{proof}
\begin{example}
\label{introductionDelignecomplexfromhexagon}
Let $\mathbf{H}$ be as in theorem \ref{theprogression}.
Inside $\mathbf{H}$ the traditional Poincar{\'e} lemma is equivalent to the statement that there
is an equivalence
$$
  \flat \mathbb{R}
  \simeq
  \mathbf{\Omega}^\bullet
  \;\;\;\;
  \in T_\ast \mathbf{H}
  \,.
$$
This induces for each $p \in \mathbb{N}$ an instance of the  exact hexagon of
prop. \ref{differentialhexagon}:
$$
  \xymatrix{
    &
    \mathbf{\Omega}^{\bullet \leq p+1} \ar[rr]^{d_{\mathrm{dR}}}
    \ar[dr]
    &&
    \mathbf{\Omega}^{p+2}_{\mathrm{cl}}
    \ar[dr]^{\mbox{ \tiny \begin{tabular}{c} de Rham \\ theorem \end{tabular} }}
    \\
    \flat \mathbf{B}^{p+1} \mathbb{R}
    \ar[ur]^{\mbox{ \tiny \begin{tabular}{c} Poincar{\'e} \\ lemma\end{tabular} } }
    \ar[dr]
    &&
    \mathbf{B}^{p+1}(\mathbb{R}/_{\!\hbar}\mathbb{Z})
    \ar[ur]|{\mathrm{curv}}
    \ar[dr]
    &&
    \flat \mathbf{B}^{p+1}\mathbb{R}
    \\
    & \flat \mathbf{B}^{p+1} (\mathbb{R}/_{\!\hbar}\mathbb{Z})
    \ar[ur]
    \ar[rr]_{\mathrm{Bockstein}}
    &&
    \mathbf{B}^{p+2}\mathbb{Z}
    \ar[ur]_{2\pi \hbar}
  }
  \,.
$$
The object appearing the middle is the Deligne complex
$$
  \mathbf{B}^{p+1}(\mathbb{R}/_{\!\hbar} \mathbb{Z})_{\mathrm{conn}}
  \simeq
  H [\mathbb{Z} \stackrel{2 \pi \hbar}{\hookrightarrow} \mathbf{\Omega}^0 \stackrel{d_{\mathrm{dR}}}{\to} \mathbf{\Omega}^1 \stackrel{d_{\mathrm{dR}}}{\to} \cdots \to \mathbf{\Omega}^{p+1}]
  \,.
$$
For $X \in \mathbf{H} \stackrel{0}{\hookrightarrow} T_\ast \mathbf{H}$ then
$$
  \hat H^{p+2}(X,\mathbb{Z}) \simeq \pi_0 [X, \mathbf{B}^{p+1}(\mathbb{R}/_{\!\hbar} \mathbb{Z})]
$$
is known equivalently as
\begin{enumerate}
  \item the ordinary differential cohomology of $X$ in degree $(p+2)$;
  \item the Deligne cohomology of $X$ in degree $(p+2)$;
  \item the equivalence classes of $p$-gerbe connections for band $(\mathbb{R}_{/\!\hbar} \mathbb{Z})$;
  \item the equivalence classes of $\mathbf{B}^p(\mathbb{R}/_{\!\hbar} \mathbb{Z}$-principal bundles with connection.
\end{enumerate}
\end{example}
\begin{remark}
  In \cite{SimonsSullivan} it was observed that the natural hexagon that ordinary differential cohomology sits in already characterizes
  it. The authors suggested that this may be true for generalized differential cohomology theories.
  In view of this prop. \ref{differentialhexagon} may be regarded as the lift of the Brown
  representability theorem from generalized cohomology theories to generalized differential cohomology theories.
\end{remark}
Combining this observation with example \ref{twistedcohomologyfromtangenttoposhom}, we find that
as we vary the slices of the cohesive $\infty$-topos, it knows also about twisted differential cohomology:
\begin{center}
\begin{tabular}{|c|c|c|}
  \hline
    &
    \begin{tabular}{c}
      {\bf cohomology}
    \end{tabular}
    &
    \begin{tabular}{c}
      {\bf differential cohomology}
    \end{tabular}
    \\
    \hline
    \hline
    {\bf plain}
    &
    $A \in T_\ast \infty \mathrm{Grpd}$
    &
    $\mathbf{A} \in T_\ast \mathbf{H}$
    \\
    \hline
    {\bf twisted}
    &
    $\tau \in T_{\mathrm{Pic}(A)} \infty \mathrm{Grpd}$
    &
    $\mathbf{\tau} \in T_{\mathrm{Pic}(\mathbf{A})} \mathbf{H}$
    \\
    \hline
  \end{tabular}
\end{center}
Hence the hexagon in prop. \ref{differentialhexagon} generally has the following interpretation:
$$
  \hspace{-1cm}
  \xymatrix{
    & \fbox{\begin{tabular}{c} connections \\ on trivial bundles \end{tabular}}
    \ar[rr]|-{\mbox{de Rham differential}}
    \ar[dr]|-{\mbox{regard as}}
    && \fbox{\begin{tabular}{c} curvature \\ forms \end{tabular}}
    \ar[dr]|{\mbox{de Rham theorem}}
    \\
    \fbox{\begin{tabular}{c} closed \\ differential forms \end{tabular}}
    \ar[ur]|-{\mbox{regard as}}
    \ar[dr]|-{\mbox{regard as}}
    && \fbox{\begin{tabular}{c} connections on \\ geometric bundles \end{tabular}}
    \ar[ur]|{\mbox{curvature}}
    \ar[dr]|<<<<<<<<{\mbox{\begin{tabular}{l}topol. class\end{tabular}}}
    && \fbox{\begin{tabular}{c} rationalized \\ bundle \end{tabular}}
    \\
    & \fbox{\begin{tabular}{c} flat \\ connections \end{tabular}}
    \ar[ur]|-{\mbox{regard as}}
    \ar[rr]|-{\mbox{\begin{tabular}{l}comparison map\end{tabular}}}
    &&
    \fbox{\begin{tabular}{c} shape of \\ bundles \end{tabular}}
    \ar[ur]|{\mbox{Chern character}}
  }
$$
\begin{definition}[{\cite{BunkeNikolausVoelkl}}]
  \label{introductionabstractintegration}
  Using that $\pi_\infty \simeq \mathrm{loc}_{\mathbb{R}^1}$
  the universal property of $\pi_\infty$ induces for each linear cohesive object $A \in T_\ast \mathbf{H}$
  a canonical morphism of the form
  $$
    \int_0^1 \;:\; [\mathbb{R}^1, \overline{\flat} A] \longrightarrow \overline{\pi_\infty} A
    \,.
  $$
\end{definition}
\begin{proposition}[{\cite{BunkeNikolausVoelkl}}]
  In the situation of example \ref{introductionDelignecomplexfromhexagon},
  the abstracty defined map $\int_0^1$ from def. \ref{introductionabstractintegration}
  is equivalent to traditional fiber integration of differential forms.
\end{proposition}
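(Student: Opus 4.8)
The plan is to unwind the abstract definition of $\int_0^1$ from Definition \ref{introductionabstractintegration} in the specific situation of Example \ref{introductionDelignecomplexfromhexagon}, where $A = \mathbf{B}^{p+1}\mathbb{R}$ so that $\overline{\flat}A \simeq \mathbf{\Omega}^{\bullet\leq p+1}$ (the cofiber of the counit $\flat\mathbf{B}^{p+1}\mathbb{R} \to \mathbf{B}^{p+1}\mathbb{R}$, identified via the Poincaré lemma) and $\overline{\pi_\infty}A \simeq \mathbf{\Omega}^{\bullet\leq p}$. First I would recall how the abstract map is constructed: since $\pi_\infty \simeq \mathrm{loc}_{\mathbb{R}^1}$ is localization at the interval, the canonical map $[\mathbb{R}^1,\overline{\flat}A] \to \overline{\pi_\infty}A$ arises from the universal property that $\pi_\infty$ sends $\mathbb{R}^1$ to the point, combined with the defining cofiber sequences. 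Concretely, this means that one takes a morphism $\mathbb{R}^1 \to \overline{\flat}A$, i.e. a closed-up-to-exact differential form family on $\mathbb{R}^1 \times U$ (for varying test object $U$), and produces the object over $U$ that $\pi_\infty$-invariance forces upon it.

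The key step is then to reduce this to a statement about differential forms on $\mathbb{R}^1 \times U$ modulo pullbacks from $U$. I would argue that a $U$-point of $[\mathbb{R}^1, \mathbf{\Omega}^{\bullet\leq p+1}]$ is, via the mapping-space/exponential characterization used repeatedly in Section \ref{elgerbesinintroduction} (namely $\{U \to [\Sigma,\mathbf{\Omega}^{k}]\} \simeq \Omega^{k}(U\times\Sigma)$), the same as a $(p+1)$-form on $\mathbb{R}^1\times U$ truncated appropriately; splitting such a form into its components along $\mathbb{R}^1$ and along $U$ exhibits the piece $\iota_{\partial_t}(-)\,dt$ whose integral over $t\in[0,1]$ is the classical fiber integral landing in $\Omega^{p}(U)$. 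The content of the proposition is that the abstract map does exactly this extraction-and-integration; I would verify it by checking that both maps fit into the same commuting square with the de Rham differential (i.e. both satisfy the Stokes-type compatibility $d_U\int_0^1 = \int_{\{1\}} - \int_{\{0\}} - \int_0^1 d_{\mathbb{R}^1}$, which for the abstract map follows from the hexagon exactness and naturality of the cofiber constructions), and that they agree on the generating case, which by Yoneda and the freeness of the relevant simplicial presheaves reduces to a finite check on $\mathbb{R}^1$ itself. Since two natural transformations between such representable-built functors agreeing on generators agree, this pins down the identification.

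The main obstacle I expect is bookkeeping the identifications in Example \ref{introductionDelignecomplexfromhexagon} carefully enough: one must track how $\overline{\flat}A$, $\overline{\pi_\infty}A$, and $\overline{\pi_\infty}\flat A$ are each presented as truncated de Rham complexes (via Poincaré lemma, de Rham theorem, and the cohesive structure), and ensure that the abstract morphism $\int_0^1 : [\mathbb{R}^1,\overline{\flat}A] \to \overline{\pi_\infty}A$ really is the one induced by these identifications rather than differing by a sign or a shift. In particular the step where one invokes $\pi_\infty \simeq \mathrm{loc}_{\mathbb{R}^1}$ to get the canonical map is where the abstract homotopy-theoretic input is essential, and translating the universal property of this localization into the explicit homotopy operator $\int_0^1 \iota_{\partial_t}(-)\,dt$ on forms — which is precisely the chain homotopy witnessing that $\mathbb{R}^1$ is contractible in de Rham cohomology — is the technical heart. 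Once that chain homotopy is matched with the abstractly-induced map, the rest is the cited classical fact (e.g. \cite{BottTu}) that iterating this over the fibers of $\Sigma \times U \to U$ is fiber integration of differential forms. I would therefore structure the proof as: (1) present all objects as explicit complexes of sheaves; (2) identify the abstract $\int_0^1$ with the de Rham homotopy operator on $\mathbb{R}^1$; (3) invoke naturality/Yoneda to upgrade from $\mathbb{R}^1$ to arbitrary oriented closed $\Sigma$ by the same coskeletal/transgression machinery already used in Section \ref{elgerbesinintroduction}; (4) conclude equality with classical fiber integration.
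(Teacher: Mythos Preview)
The paper does not supply its own proof of this proposition; it is stated as a citation to \cite{BunkeNikolausVoelkl}, so there is no in-paper argument to compare against. Your overall strategy --- unwind the abstract construction in explicit sheaves of complexes and match the resulting map with the classical de Rham homotopy operator $\int_0^1 \iota_{\partial_t}(-)\,dt$ --- is the right shape and is essentially what the cited reference does.

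However, your identifications are swapped, and this is a genuine error that would derail the computation. In Example \ref{introductionDelignecomplexfromhexagon} the object $A$ is the Deligne complex $\mathbf{B}^{p+1}(\mathbb{R}/_{\!\hbar}\mathbb{Z})_{\mathrm{conn}}$, not $\mathbf{B}^{p+1}\mathbb{R}$, and comparing the general hexagon of Proposition \ref{differentialhexagon} with the instance in that example gives
\[
  \overline{\pi_\infty} A \;\simeq\; \mathbf{\Omega}^{\bullet \leq p+1}
  \qquad\text{and}\qquad
  \overline{\flat} A \;\simeq\; \mathbf{\Omega}^{p+2}_{\mathrm{cl}}\,,
\]
which is the opposite of what you wrote. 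With the correct identifications the abstract map $\int_0^1 : [\mathbb{R}^1,\overline{\flat}A] \to \overline{\pi_\infty}A$ reads
\[
  [\mathbb{R}^1,\mathbf{\Omega}^{p+2}_{\mathrm{cl}}] \longrightarrow \mathbf{\Omega}^{\bullet\leq p+1}\,,
\]
i.e.\ it takes a closed $(p+2)$-form on $U\times\mathbb{R}^1$ to a $(p+1)$-form on $U$, which is exactly the signature of fiber integration over the interval. Your version would have the degrees going the wrong way. Also, your step (3), upgrading to arbitrary closed oriented $\Sigma$, is not part of this proposition: the statement concerns only the interval map $\int_0^1$, and the extension to general $\Sigma$ is a separate construction (the transgression discussed earlier in Section \ref{elgerbesinintroduction}). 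Once you fix the identifications and drop step (3), the remaining outline --- present the objects as explicit complexes, recognize the induced map as the Poincar{\'e}-lemma chain homotopy, conclude --- is sound.
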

\begin{proposition}[fundamental theorem of calculus {\cite{BunkeNikolausVoelkl}}]
  \label{introductionfundamentaltheoremofcalculus}
  For every linear object $A\in T_\ast \mathbf{H}$
   we have
  $$
    \int_0^1 \circ d \;\; \simeq \;\; (-)|_1 - (-)|_0
    \,,
  $$
  where $d$ is the top morphism in prop. \ref{differentialhexagon} and
  where $\int_0^1$ is the morphism from def. \ref{introductionabstractintegration}.
\end{proposition}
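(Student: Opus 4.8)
The plan is to identify the abstractly-defined morphism $\int_0^1$ and the morphism $d$ with their classical counterparts, then reduce the claimed identity $\int_0^1 \circ d \simeq (-)|_1 - (-)|_0$ to the ordinary fundamental theorem of calculus for differential forms, promoted to a statement about morphisms in the stable cohesive $\infty$-topos $T_\ast \mathbf{H}$. Throughout I would work with a fixed linear object $A \in T_\ast \mathbf{H}$, and keep in mind that all relevant maps are natural in $A$, so it suffices to check the identity on the ``universal'' linear object, or equivalently to argue representably by evaluating against an arbitrary test object $X \in \mathbf{H}$.

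First I would recall the three ingredients. The morphism $d : \overline{\pi_\infty} A \to \overline{\flat} A$ is the top edge of the exact hexagon of Proposition \ref{differentialhexagon}; in the concrete situation of Example \ref{introductionDelignecomplexfromhexagon} (taking $A = \mathbf{B}^{p+1}\mathbb{R}$ and using the Poincar{\'e} lemma equivalence $\flat \mathbb{R} \simeq \mathbf{\Omega}^\bullet$) this $d$ is literally the de Rham differential $d_{\mathrm{dR}} : \mathbf{\Omega}^{\bullet \le p+1} \to \mathbf{\Omega}^{p+2}_{\mathrm{cl}}$. The morphism $\int_0^1 : [\mathbb{R}^1, \overline{\flat} A] \to \overline{\pi_\infty} A$ of Definition \ref{introductionabstractintegration} is induced from $\pi_\infty \simeq \mathrm{loc}_{\mathbb{R}^1}$ by the universal property of localization, and by the Proposition following Definition \ref{introductionabstractintegration} it coincides with classical fiber integration over the interval. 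Finally, $(-)|_i$ for $i = 0,1$ are the two restriction maps $[\mathbb{R}^1, \overline{\flat} A] \to \overline{\flat} A$ along the two points $\{i\} \hookrightarrow \mathbb{R}^1$, followed by the canonical comparison $\overline{\flat} A \to \overline{\pi_\infty} A$ (the top-left diagonal ``$d$''-precomposed inclusion, i.e. the map landing a flat connection on the trivial bundle into $\overline{\pi_\infty} A$ — see the interpretive hexagon in the excerpt).

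The key steps in order: (1) unwind $\int_0^1 \circ d$ via naturality of the hexagon in the cohesive structure: since $d$ comes from the homotopy-cofiber description of the (co)units and $\int_0^1$ from the $\mathbb{R}^1$-localization adjunction, both are compatible with the $\mathbb{R}^1$-homotopy-invariance built into $\pi_\infty$; (2) translate into the concrete model of Example \ref{introductionDelignecomplexfromhexagon}, where the composite becomes ``integrate the de Rham differential of a form over $[0,1]$''; (3) apply the ordinary Stokes/fundamental-theorem identity $\int_0^1 d_{\mathrm{dR}}\alpha = \alpha|_1 - \alpha|_0$ for forms on $U \times \mathbb{R}^1$ — which is exactly the boundary-transgression computation already displayed in the subsection ``The evolution --- correspondingly'' specialized to $\Sigma = [0,1]$ — and observe this is natural in the test object $U$, hence gives an identity of morphisms of sheaves; (4) promote from the concrete $\mathbf{\Omega}^\bullet$-model to arbitrary linear $A$ by naturality: every linear cohesive object receives a comparison from the de Rham model through its own hexagon, and both $\int_0^1 \circ d$ and $(-)|_1 - (-)|_0$ are natural transformations of functors on $T_\ast\mathbf{H}$, so agreement on the generating object forces agreement everywhere; (5) finally check the sign/orientation bookkeeping, so that the difference $(-)|_1 - (-)|_0$ appears with the correct orientation of $[0,1]$ consistent with the convention fixing $\int_0^1$.

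The main obstacle I expect is step (4): making precise in what sense it ``suffices to check on the de Rham model.'' One must argue that the hexagon of Proposition \ref{differentialhexagon}, together with the universal property underlying $\int_0^1$, is natural enough in $A \in T_\ast\mathbf{H}$ that the identity for general $A$ follows formally from the identity for $A = H\mathbf{\Omega}^\bullet$ (equivalently $\flat\mathbb{R}$); concretely this amounts to showing that $d$ and $\int_0^1$ are components of natural transformations between the functors $A \mapsto \overline{\pi_\infty}A$, $A \mapsto \overline{\flat}A$, $A \mapsto [\mathbb{R}^1,\overline{\flat}A]$, and that every linear object is built from $\flat\mathbb{R}$-modules in a way that is respected by these transformations — this is where one leans on the cited results of Bunke--Nikolaus--V{\"o}lkl that both $d$ and $\int_0^1$ recover their classical analogs, so really the work is organizing their naturality statements rather than proving something new. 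Once that is in place, the identity is the classical fundamental theorem of calculus plus a diagram chase in the hexagon.
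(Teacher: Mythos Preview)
The paper itself does not give a proof of this proposition; it is stated with attribution to \cite{BunkeNikolausVoelkl} as part of a survey. So strictly speaking there is no ``paper's own proof'' to compare against. That said, your proposal has a genuine gap worth flagging, and it also runs counter to the spirit of the result as presented.

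The problem is your step (4). You want to verify the identity for the de Rham object $A = H\mathbf{\Omega}^\bullet \simeq \flat\mathbb{R}$ and then propagate to all linear $A$ by naturality, on the grounds that the de Rham object is ``generating.'' But there is no such generation principle available: a general spectrum object in $T_\ast\mathbf{H}$ is not built from $\flat\mathbb{R}$ in any way that would let naturality of $\int_0^1 \circ d$ and of $(-)|_1 - (-)|_0$ force their agreement from agreement on that single object. Two natural transformations can agree on one object and differ elsewhere. You correctly flag this step as the main obstacle, but the resolution you sketch (``every linear object is built from $\flat\mathbb{R}$-modules in a way respected by these transformations'') is not something that holds, and is not what the cited reference does.

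The argument in \cite{BunkeNikolausVoelkl} goes the other direction entirely: it is purely formal from the cohesive structure, valid for every linear $A$ at once, with no appeal to any concrete model. The integration map $\int_0^1$ is \emph{constructed} from the datum that $\pi_\infty \simeq \mathrm{loc}_{\mathbb{R}^1}$: since $\pi_\infty A$ is $\mathbb{R}^1$-local, the two evaluations $(-)|_0, (-)|_1 : [\mathbb{R}^1, A] \to A \to \pi_\infty A$ are canonically homotopic, so their difference lifts through the fiber $\overline{\pi_\infty}A \to A$; combined with the cofiber description of $\overline{\flat}A$ this is exactly how $\int_0^1$ is produced, and the identity $\int_0^1 \circ d \simeq (-)|_1 - (-)|_0$ then drops out by unwinding the fiber/cofiber sequences. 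The classical Stokes formula (your steps (2)--(3)) is then a \emph{consequence} of the abstract statement together with the separate proposition that $\int_0^1$ recovers fiber integration in the de Rham model --- not an input to the proof. In short: you have the logical dependency reversed, and the reductive strategy cannot be completed as stated.
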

\begin{remark}
  The statement of prop. \ref{introductionfundamentaltheoremofcalculus}
  was long imposed as an extra axiom on differential cohomology theories
  (see \cite{BunkeDifferentialCohomology}).
\end{remark}
In summary, this shows that $(\pi_\infty \dashv \flat)$ synthetically axiomatizes the
existence of differential cocycles. The remaining monad $\sharp$ turns out to give
the moduli spaces of such cocycles:

\begin{definition}
  \label{introductionsharpk}
  For $n \in \mathbb{N}$, write $\sharp_n$ for the $n$-image factorization
  of the unit $\mathrm{id}\longrightarrow \sharp$.
\end{definition}
\begin{proposition}
  For $\mathbf{H}$ as in theorem \ref{theprogression},
  the diffeological spaces
  are equivalently the reduced 0-truncated objects which are in addition $\sharp_1$-model
  $$
    \mathrm{DiffeologicalSp}
    \simeq
    \mathbf{H}_{\tau_0, \Re, \sharp_1}
    \hookrightarrow
    \mathbf{H}
    \,.
  $$
\end{proposition}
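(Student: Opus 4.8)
The plan is to establish the claimed equivalence $\mathrm{DiffeologicalSp} \simeq \mathbf{H}_{\tau_0, \Re, \sharp_1}$ by identifying each of the three modal conditions with a classical property of a sheaf on $\mathrm{SupFormCartSp}$ and then invoking the standard characterization of diffeological spaces as concrete sheaves. First I would recall that $\mathbf{H} = \mathrm{Sh}_\infty(\mathrm{SupFormMfd})$ and that an object $X \in \mathbf{H}$ is $\tau_0$-truncated precisely when it is (equivalent to) an ordinary sheaf of sets, so $\mathbf{H}_{\tau_0} \simeq \mathrm{Sh}(\mathrm{SupFormMfd})$; this reduces the whole statement to a 1-categorical claim about sheaves of sets. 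Next I would analyze the condition $\Re X \simeq X$: since $\Re = \mathrm{loc}_{\mathbb{D}}$ (the localization inverting the inclusions $\mathbb{D} \hookrightarrow \ast$ of infinitesimally thickened points, equivalently the reduction modal operator), a sheaf is $\Re$-modal exactly when it sends every infinitesimal thickening $\mathbb{R}^n \times \mathbb{D} \to \mathbb{R}^n$ to an isomorphism — that is, when $X$ is determined by its values on ordinary Cartesian spaces $\mathbb{R}^n$, hence is the pullback of a sheaf on $\mathrm{CartSp} \simeq \mathrm{SmoothMfd}$. So $\mathbf{H}_{\tau_0,\Re} \simeq \mathrm{Sh}(\mathrm{Mfd})$, the topos of smooth sets.

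The remaining and central point is to show that, within $\mathrm{Sh}(\mathrm{Mfd})$, the $\sharp_1$-modal objects are exactly the concrete sheaves, i.e. the diffeological spaces. Here I would unwind Definition \ref{introductionsharpk}: $\sharp = \mathrm{coDisc} \circ \Gamma$ is the codiscrete-points comonad (right adjoint part), its unit $\mathrm{id} \to \sharp$ at $X$ is the map $X \to \sharp\Gamma(X)$, and $\sharp_1$ is the $1$-image factorization of this unit, so $\sharp_1 X$ is the image sheaf of $X \to \sharp\Gamma(X)$. An object is $\sharp_1$-modal iff the unit $X \to \sharp_1 X$ is an equivalence, which is iff the comparison map $X \to \sharp\Gamma(X)$ is a monomorphism. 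Concretely, for a sheaf $X$ on $\mathrm{Mfd}$, the map $X(U) \to \sharp\Gamma(X)(U) = \mathrm{Set}(U_{\mathrm{set}}, X(\ast))$ sends a $U$-plot to its underlying function on points; this being injective for all $U$ is precisely the concreteness condition in the sense of \cite{IglesiasZemmour} (equivalently the Baez–Hoffnung characterization of diffeological spaces as concrete sheaves on the site of Cartesian spaces / open subsets of Euclidean spaces). I would then check that the equivalence $\mathrm{Sh}(\mathrm{Mfd}) \simeq \mathrm{Sh}(\mathrm{CartSp})$ (by comparison of sites, since $\mathrm{CartSp}$ is a dense sub-site of $\mathrm{Mfd}$) matches concrete objects on both sides, so that $\mathbf{H}_{\tau_0,\Re,\sharp_1} \simeq \{\text{concrete sheaves on }\mathrm{CartSp}\} = \mathrm{DiffeologicalSp}$.

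To package this cleanly I would prove two lemmas: (i) the three conditions $\tau_0$, $\Re$, $\sharp_1$ are independent enough that imposing them successively gives the intersection $\mathbf{H}_{\tau_0} \cap \mathbf{H}_{\Re} \cap \mathbf{H}_{\sharp_1}$ as a reflective/coreflective subcategory — this follows from each $\bigcirc$-modal subcategory being a localization and from $\Re$-modality and $\sharp_1$-modality being visibly closed under $\tau_0$-truncation (a truncation of a reduced sheaf is reduced, a truncation of a concrete sheaf is concrete); and (ii) the explicit identification of the resulting subcategory of $\mathrm{Sh}(\mathrm{Mfd})$ with diffeological spaces via the concreteness criterion above. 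The main obstacle I anticipate is (i) combined with the subtlety that $\sharp_1$ is a $1$-image factorization in an $\infty$-topos, so one must be slightly careful that on $0$-truncated objects the $(n{-}1)$-connected/$n$-truncated factorization system reduces to the ordinary epi–mono factorization in the $1$-topos of sheaves of sets; once that reduction is in hand, $\sharp_1$-modality becomes simply "the unit to the codiscrete reflection is a monomorphism of sheaves", and the rest is a direct comparison with the literature definition of concrete sheaf.
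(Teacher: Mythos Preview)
The paper does not actually supply a proof of this proposition --- it is stated as a result with details deferred to \cite{dcct} --- so there is no paper-side argument to compare against directly. Your overall strategy is the natural one and is essentially what a full proof would do: pass to $0$-truncated objects to land in ordinary sheaves, cut down to the reduced sub-$\infty$-topos to reach $\mathrm{Sh}(\mathrm{Mfd})$, and then identify $\sharp_1$-modality with concreteness. Your treatment of the $\sharp_1$ step is correct: on $0$-truncated objects the $1$-image factorization is the ordinary epi--mono factorization, so $\sharp_1$-modality says exactly that the unit $X \to \sharp X$ is a monomorphism of sheaves, which is the Baez--Hoffnung/Iglesias-Zemmour concreteness condition.

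There is, however, a genuine error in your handling of $\Re$. You write ``$\Re = \mathrm{loc}_{\mathbb{D}}$'' and conclude that a sheaf is $\Re$-modal precisely when it sends every $\mathbb{R}^n \times \mathbb{D} \to \mathbb{R}^n$ to an isomorphism. But Theorem~\ref{theprogression} identifies $\Im$, not $\Re$, with $\mathrm{loc}_{\mathbb{D}}$; the condition you describe is $\Im$-modality (being \emph{coreduced}), not $\Re$-modality (being \emph{reduced}). Concretely, the representable $\mathbb{R}^1 \in \mathbf{H}$ is reduced --- it lies in the essential image of the fully faithful left Kan extension $i_! : \mathrm{Sh}_\infty(\mathrm{CartSp}) \hookrightarrow \mathbf{H}$, since left Kan extension preserves representables --- yet $\mathbb{R}^1(\mathbb{D}) = \mathrm{Hom}(\mathbb{D},\mathbb{R}^1)$ is strictly larger than $\mathbb{R}^1(\ast)$ for any nontrivial infinitesimal $\mathbb{D}$. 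The correct characterization is that $\Re$-modal objects are exactly the essential image of $i_!$; since $i_!$ is fully faithful this still yields $\mathbf{H}_{\Re} \simeq \mathrm{Sh}_\infty(\mathrm{CartSp}) \simeq \mathrm{Sh}_\infty(\mathrm{Mfd})$ (as recorded in the paper's big inclusion diagram), so your conclusion $\mathbf{H}_{\tau_0,\Re} \simeq \mathrm{Sh}(\mathrm{Mfd})$ survives --- only the justification of that step needs to be replaced.
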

\begin{example}
  For $\mathbf{H}$ as in theorem \ref{theprogression},
  and $\mathbf{\Omega}^{p+1}\in \mathbf{H}$ the sheaf of differential forms
  and for $\Sigma \in \mathbf{H}$ any smooth manifold, then
  the mapping space $[\Sigma,\mathbf{\Omega}^{p+1}]$ is not the diffeological
  space of differential $(p+1)$-forms on $\Sigma$, but $\sharp_1[\Sigma,\mathbf{\Omega}^{p+1}]$ is.
\end{example}
For object which are not 0-truncated, concretification depends on a choice of co-filtration:
\begin{definition}
  \label{introductionconcretification}
  For $X \in \mathbf{H}$ equipped with a co-filtration $F^\bullet X$, we say
  that its \emph{concretification} is the iterated homotopy fiber product
  $$
    \mathrm{Conc}(F^\bullet X)
    :=
    \sharp_1 F^0 X \underset{\sharp_1 F^1 X}{\times} \sharp_2 F^1 X \underset{\sharp_2 F^2 X}{\times} \cdots
    \,,
  $$
  with $\sharp_k$ from def. \ref{introductionsharpk}  or rather, is the canonical morphism
  $$
    \mathrm{conc} : X \longrightarrow \mathrm{Conc}(F^\bullet X)
    \,.
  $$
\end{definition}
\begin{proposition}
  For $\mathbf{B}^{p+1}(\mathbb{R}/_{\!\hbar}\mathbb{Z})\in \mathbf{H}$ from prop. \ref{introductionDelignecomplexfromhexagon}
  and equipped with its canonical co-filtration induced from the Hodge filtration on $\mathbf{\Omega}^\bullet$,
  then for $\Sigma \in \mathbf{H}$ a smooth manifold, the concretification
  $$
    \mathbf{B}^{p}(\mathbb{R}/_{\!\hbar}\mathbb{Z})\mathbf{Conn}(\Sigma)
    :=
    \mathrm{Conc}([\Sigma,\mathbf{B}^{p+1}(\mathbb{R}/_{\!\hbar}\mathbb{Z})])
  $$
  is given by the diffeological $(p+1)$-groupoid of Deligne $(p+2)$-cocycles on $\Sigma$.
\end{proposition}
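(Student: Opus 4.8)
The plan is to unwind the concretification formula of Definition \ref{introductionconcretification} applied to the Hodge co-filtration on $[\Sigma, \mathbf{B}^{p+1}(\mathbb{R}/_{\!\hbar}\mathbb{Z})]$ and check it reproduces, level by level, the Deligne complex presentation of Deligne cohomology on the smooth manifold $\Sigma$. First I would record the co-filtration: the Hodge filtration on $\mathbf{\Omega}^\bullet$ truncates the de Rham complex, $F^k \mathbf{\Omega}^\bullet = [\mathbf{\Omega}^k \to \mathbf{\Omega}^{k+1} \to \cdots]$, and through the homotopy pullback characterization of $\mathbf{B}^{p+1}(\mathbb{R}/_{\!\hbar}\mathbb{Z})_{\mathrm{conn}}$ (the Deligne complex, as in Example \ref{introductionDelignecomplexfromhexagon}) this induces a finite co-filtration $F^\bullet \mathbf{B}^{p+1}(\mathbb{R}/_{\!\hbar}\mathbb{Z})$ whose associated graded pieces are the shifted sheaves $\mathbf{\Omega}^k[\,\cdot\,]$ for $0 \le k \le p+1$ together with the bottom piece $\mathbf{B}^{p+1}(\mathbb{R}/_{\!\hbar}\mathbb{Z})$ itself (the ``underlying bundle''). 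Applying $[\Sigma, -]$ is exact enough to carry this to a co-filtration on the mapping space, and then $\mathrm{Conc}$ forms the iterated fiber product $\sharp_1 F^0 \times_{\sharp_1 F^1} \sharp_2 F^1 \times_{\sharp_2 F^2}\cdots$.

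Next I would identify each factor. The top factor $\sharp_1 F^0 X = \sharp_1[\Sigma, \mathbf{B}^{p+1}(\mathbb{R}/_{\!\hbar}\mathbb{Z})]$ is, by the remark in the Example following Proposition on $\sharp_1$, the diffeological space version of the mapping stack; more usefully, the successive quotients $\sharp_k F^{k-1} X / \sharp_k F^k X$ compute, by the Example that $\sharp_1[\Sigma, \mathbf{\Omega}^{p+1}]$ is the diffeological space of $(p+1)$-forms, the diffeological spaces $\Omega^{k-1}(\Sigma)$ of differential forms of the appropriate degree on $\Sigma$ — the $\sharp_k$ being exactly what is needed to concretify the $k$-fold stacky mapping space of a degree-shifted sheaf of abelian groups down to its diffeological incarnation. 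Here I would lean on the fact, from the modal setup of Theorem \ref{theprogression}, that $\sharp_k$ is the $k$-image factorization of the unit into $\sharp$, so that applied to an Eilenberg–MacLane-type object $\mathbf{B}^k \mathbf{\Omega}^j$ it collapses the higher homotopy while retaining the diffeological (concrete) structure in degree $k$. Assembling the iterated fiber product then yields precisely a simplicial/groupoidal object whose $n$-simplices are tuples $(\{U_i\}, L_i, \kappa^{(1)}_{ij}, \dots)$ of local forms with Čech gluing data of total Deligne degree $p+2$, i.e. the Deligne $(p+2)$-cocycles, with morphisms the coboundaries — which is the statement.

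The main obstacle, and the step I would spend the most care on, is the precise bookkeeping of the concretification at each filtration stage: one must verify that the ``correct'' level of $\sharp_k$ is being applied to the ``correct'' graded piece, i.e. that $\sharp_{k}$ hits the $k$-th co-filtration stage $F^{k-1}X$ in the degree where the de Rham truncation contributes a $k$-stacky sheaf, and that the homotopy fiber products glue these without introducing spurious higher cells or losing the Čech descent data. Concretely, this amounts to checking that $\mathrm{Conc}$ of the mapping stack out of a manifold into a sheaf of chain complexes (via stable Dold–Kan, as in the Example on sheaf hypercohomology) reproduces the Čech hypercohomology complex with its diffeological enrichment — a statement which is essentially the content of \cite{dcct} and which I would cite for the technical core, while giving the identification of the graded pieces by hand. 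A secondary point to be careful about is that $\Sigma$ being an honest smooth manifold (not a general stack) is what makes the Čech nerve of a good open cover a cofibrant resolution, so that $[\Sigma, -]$ computes the hypercohomology complex rather than something more derived; this is where the hypothesis ``$\Sigma \in \mathbf{H}$ a smooth manifold'' is used essentially.
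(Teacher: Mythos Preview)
The paper does not actually prove this proposition: it is stated without proof in this survey, with the technical details implicitly deferred to \cite{dcct}. So there is no argument in the paper to compare your sketch against.

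That said, your outline is the natural one and is essentially what a full proof has to do: pull back the Hodge co-filtration along $[\Sigma,-]$, identify the associated graded pieces as (shifted) sheaves of forms, observe that $\sharp_k$ applied to $[\Sigma,\mathbf{B}^k\mathbf{\Omega}^j]$ returns the diffeological space of $j$-forms on $\Sigma$ in the correct homotopical degree, and then check that the iterated homotopy fiber product reassembles these into {\v C}ech--Deligne cocycle data with diffeological enrichment. You are right that the delicate point is the index-matching between $\sharp_k$ and the $k$-th co-filtration stage, and that the hypothesis that $\Sigma$ is an honest manifold is what lets a good-cover {\v C}ech nerve compute the mapping stack. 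Your decision to cite \cite{dcct} for that technical core is consistent with how the paper itself handles this result.
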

Using this there is an axiomatization of the higher groups of symmetries of $p$-gerbes as they appeared in
section \ref{symmetriescurrentsinintroduction}:
\begin{proposition}
  \label{introductionhigherKSextension}
  For any $X \in \mathbf{H}$ and give
  $\nabla : X \longrightarrow \mathbf{B}^{p+1}(\mathbb{R}/_{\!\hbar}\mathbb{Z})_{\mathrm{conn}}$,
  then there is a homotopy fiber sequence of the form
  $$
    \raisebox{20pt}{
    \xymatrix{
      \mathbf{B}^{p}(\mathbb{R}/_{\!\hbar}\mathbb{Z})\mathbf{FlatConn}(X)
      \ar[r]
      & \mathbf{Stab}_{\mathbf{Aut}(X)}(\mathrm{conc}(\nabla))
      \ar[d]
      \\
      & \mathbf{HamAut}(X,\nabla)
      \ar[r]^-{\mathbf{KS}_\nabla}
      &
      \mathbf{B}(\mathbf{B}^{p}(\mathbb{R}/_{\!\hbar}\mathbb{Z})\mathbf{FlatConn}(\Sigma))
    }
    }
    \,,
  $$
  where $\mathbf{Stab}(...)$ denotes the stabilizer $\infty$-group of $\nabla$ in
  $\mathbf{B}^p(\mathbb{R}/_{\!\hbar}\mathbb{Z})\mathbf{Conn}(\Sigma)$ under the canonical
  $\infty$-action of the automorphism $\infty$-group of $\Sigma$, and where $\mathbf{HamAut}(X,\nabla)$
  is the 1-image of the canonical map from there to $\mathbf{Aut}(X)$.
\end{proposition}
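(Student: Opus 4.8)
\noindent\emph{Proof strategy.} The plan is to exhibit the asserted four‑term sequence as the delooped short exact sequence of $\infty$‑groups attached to the canonical $\infty$‑action of $\mathbf{Aut}(X)$ on the $(p+1)$‑groupoid $\mathcal{M} := \mathbf{B}^{p}(\mathbb{R}/_{\!\hbar}\mathbb{Z})\mathbf{Conn}(X)$ of $p$‑gerbe connections on $X$. Pullback of a $p$‑gerbe connection along an automorphism of $X$ equips $\mathcal{M}$ with such an action, hence with a fibration $\mathcal{M} \to \mathcal{M}/\!/\mathbf{Aut}(X) \to \mathbf{B}\mathbf{Aut}(X)$ over the delooping of $\mathbf{Aut}(X)$. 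By the general description of stabilizer $\infty$‑groups in an $\infty$‑topos, $\mathbf{Stab}_{\mathbf{Aut}(X)}(\mathrm{conc}(\nabla))$ is the loop $\infty$‑group of this action $\infty$‑groupoid at the image point $[\nabla]$; looping the fibration sequence at $[\nabla]$ then produces a fiber sequence of $\infty$‑groups $\Omega_\nabla \mathcal{M} \to \mathbf{Stab}_{\mathbf{Aut}(X)}(\mathrm{conc}(\nabla)) \to \mathbf{Aut}(X)$, in which the last map is the forgetful homomorphism $(\phi,\eta)\mapsto\phi$.

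\smallskip

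\noindent First I would identify the fiber $\Omega_\nabla\mathcal{M}$; this is where the cohesive input enters. Using that $\mathcal{M} = \mathrm{Conc}\big([X,\mathbf{B}^{p+1}(\mathbb{R}/_{\!\hbar}\mathbb{Z})_{\mathrm{conn}}]\big)$ (the preceding proposition), that $\sharp$ and hence each $\sharp_n$, as well as the internal hom $[X,-]$, preserve loop spaces, and that the Deligne coefficient is grouplike so that its loop $\infty$‑group is basepoint‑independent, one reduces to computing $\Omega \mathbf{B}^{p+1}(\mathbb{R}/_{\!\hbar}\mathbb{Z})_{\mathrm{conn}}$. By Example \ref{introductionDelignecomplexfromhexagon} this object sits in a homotopy‑cartesian square with the $0$‑truncated sheaf of closed curvature forms as one corner; since $\Omega$ preserves homotopy pullbacks, annihilates that $0$‑truncated corner, and carries the $2\pi\hbar$‑map of the exponential fiber sequence (\ref{thexponentialsequenceforprequantization}) one degree down, one gets $\Omega \mathbf{B}^{p+1}(\mathbb{R}/_{\!\hbar}\mathbb{Z})_{\mathrm{conn}} \simeq \flat\mathbf{B}^{p}(\mathbb{R}/_{\!\hbar}\mathbb{Z})$, the coefficient for flat $(p-1)$‑gerbes with connection — this is the standard fact that the loop space of the moduli of $p$‑gerbe connections is the moduli of flat $(p-1)$‑gerbe connections. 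Consequently $\Omega_\nabla\mathcal{M} \simeq \mathbf{B}^{p}(\mathbb{R}/_{\!\hbar}\mathbb{Z})\mathbf{FlatConn}(X)$, in particular independently of the chosen $\nabla$.

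\smallskip

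\noindent Finally I would factor the forgetful homomorphism through its $1$‑image, which by definition is $\mathbf{HamAut}(X,\nabla)$, obtaining $\mathbf{Stab}_{\mathbf{Aut}(X)}(\mathrm{conc}(\nabla)) \twoheadrightarrow \mathbf{HamAut}(X,\nabla) \hookrightarrow \mathbf{Aut}(X)$, where after delooping the inclusion is a $0$‑truncated map. Looping a $0$‑truncated map of pointed connected objects gives an equivalence onto a discrete fiber, so the $\infty$‑group fiber of $\mathbf{HamAut}(X,\nabla)\hookrightarrow\mathbf{Aut}(X)$ is trivial; hence $\mathrm{fib}\big(\mathbf{Stab}\to\mathbf{HamAut}\big)\simeq\mathrm{fib}\big(\mathbf{Stab}\to\mathbf{Aut}(X)\big)\simeq\Omega_\nabla\mathcal{M}\simeq\mathbf{B}^{p}(\mathbb{R}/_{\!\hbar}\mathbb{Z})\mathbf{FlatConn}(X)$. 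Delooping the resulting short exact sequence of $\infty$‑groups yields the claimed four‑term homotopy fiber sequence, with $\mathbf{KS}_\nabla$ the connecting map, i.e. the map classifying $\mathbf{Stab}_{\mathbf{Aut}(X)}(\mathrm{conc}(\nabla))$ as a $\mathbf{B}^{p}(\mathbb{R}/_{\!\hbar}\mathbb{Z})\mathbf{FlatConn}(X)$‑extension of $\mathbf{HamAut}(X,\nabla)$. The main obstacle I anticipate is the bookkeeping around \emph{concretification}: verifying carefully that forming loop spaces commutes with the iterated $\sharp_n$‑fiber‑product defining $\mathrm{Conc}$ (Def.~\ref{introductionconcretification}) and that the output is genuinely the smooth moduli $p$‑groupoid $\mathbf{B}^{p}(\mathbb{R}/_{\!\hbar}\mathbb{Z})\mathbf{FlatConn}(X)$ and not merely its underlying homotopy type; a secondary subtlety is checking that the fiber $\mathbf{B}^{p}(\mathbb{R}/_{\!\hbar}\mathbb{Z})\mathbf{FlatConn}(X)\hookrightarrow\mathbf{Stab}$ is normal, so that the delooping producing $\mathbf{KS}_\nabla$ is legitimate.
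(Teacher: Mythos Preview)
The paper does not actually supply a proof of this proposition; it is stated as part of the survey in Section~\ref{introductionabstractdifferentialcohomology} and immediately followed by an example, with the details deferred to \cite{hgp} and \cite{dcct}. The informal argument appears earlier, in Section~\ref{symmetriescurrentsinintroduction}, where the quantomorphism group is described diagrammatically as homotopies filling triangles over $\mathbf{B}^{p+1}(\mathbb{R}/_{\!\hbar}\mathbb{Z})_{\mathrm{conn}}$, and the extension by flat $(p{-}1)$-gerbe connections is read off from the tables there.

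Your approach is the expected one and is consistent with that informal picture: loop the action fibration $\mathcal{M}\to\mathcal{M}/\!/\mathbf{Aut}(X)\to\mathbf{B}\mathbf{Aut}(X)$ at $[\nabla]$, identify the fiber via $\Omega\,\mathbf{B}^{p+1}(\mathbb{R}/_{\!\hbar}\mathbb{Z})_{\mathrm{conn}}\simeq \flat\mathbf{B}^{p}(\mathbb{R}/_{\!\hbar}\mathbb{Z})$, then factor through the $1$-image and deloop. The two subtleties you flag are exactly the right ones. For the concretification issue, note that $\mathrm{Conc}$ is built from $\sharp_n$ and homotopy fiber products, both of which commute with looping since $\Omega$ is a limit; the point requiring care is that the Hodge cofiltration on $[X,\mathbf{B}^{p+1}(\mathbb{R}/_{\!\hbar}\mathbb{Z})_{\mathrm{conn}}]$ loops to the corresponding cofiltration on $[X,\flat\mathbf{B}^{p}(\mathbb{R}/_{\!\hbar}\mathbb{Z})]$, so that the output is indeed $\mathbf{B}^{p}(\mathbb{R}/_{\!\hbar}\mathbb{Z})\mathbf{FlatConn}(X)$ as a concretified object. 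For normality: since the fiber $\Omega_\nabla\mathcal{M}$ is the kernel of a homomorphism of $\infty$-groups (the forgetful map to $\mathbf{Aut}(X)$, or equivalently to its $1$-image $\mathbf{HamAut}(X,\nabla)$), it is automatically normal in the $\infty$-categorical sense, and the delooping producing $\mathbf{KS}_\nabla$ exists by the long fiber sequence of the pointed map $\mathbf{B}\mathbf{Stab}\to\mathbf{B}\mathbf{HamAut}(X,\nabla)$.
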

\begin{example}
In the case that $\nabla$ is a $U(1)$-principal connection, $\mathbf{KS}_\nabla$
is the class of the traditioonal Kostant-Souriau quantomorphism extension.
\end{example}
\begin{proposition}
  \label{introductionglobalizationobstruction}
  With $(X,\nabla)$ as in prop. \ref{introductionhigherKSextension}, given an $X$ fiber bundle $E \to \Sigma$
  then definite globalizations of $\nabla$ over $E$ are equivalent to lifts of the structure group
  of $E$ through $\mathrm{Stab}_{\mathbf{Aut}(X)}(\mathrm{conc}(\nabla)) \to \mathbf{Aut}(X)$. In particular
  $\mathbf{KS}_\nabla(E)$ is the obstruction class to the existence of such a globalization.
\end{proposition}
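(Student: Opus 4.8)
The plan is to identify ``definite globalization of $\nabla$ over $E$'' with a reduction of structure group, and then read off the obstruction from the homotopy fiber sequence of Proposition \ref{introductionhigherKSextension}. First I would set up the relevant moduli-theoretic picture: for the $X$-fiber bundle $E \to \Sigma$, write $\mathbf{GL}(X) := \mathbf{Aut}(X)$ for its structure $\infty$-group, so that $E$ is classified by a map $\Sigma \to \mathbf{B}\mathbf{Aut}(X)$, and the typical fiber carries the given datum $\mathrm{conc}(\nabla) : X \to \mathbf{B}^{p+1}(\mathbb{R}/_{\!\hbar}\mathbb{Z})_{\mathrm{conn}}$. A \emph{definite globalization} of $\nabla$ over $E$ is, by definition (cf.\ the discussion of WZW-type globalizations in section \ref{introductionWZWfieldtheory}), a map $\nabla_E : E \to \mathbf{B}^{p+1}(\mathbb{R}/_{\!\hbar}\mathbb{Z})_{\mathrm{conn}}$ that on each fiber $E_x \simeq X$ restricts to something equivalent to $\mathrm{conc}(\nabla)$. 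The key observation is that such fiberwise-definite structures on the associated bundle $E = P \times_{\mathbf{Aut}(X)} X$ are in natural bijection with $\mathbf{Aut}(X)$-equivariant structure relative to the subgroup that fixes $\mathrm{conc}(\nabla)$ up to homotopy --- i.e.\ lifts of the classifying map $\Sigma \to \mathbf{B}\mathbf{Aut}(X)$ through $\mathbf{B}\,\mathbf{Stab}_{\mathbf{Aut}(X)}(\mathrm{conc}(\nabla)) \to \mathbf{B}\mathbf{Aut}(X)$.

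The second step is to make that bijection precise using the fact that in an $\infty$-topos, sections of an associated bundle with fiber $Y$ carrying $\mathbf{Aut}(X)$-action are equivalent to $\mathbf{Aut}(X)$-equivariant maps $P \to Y$, equivalently to maps $\Sigma \to Y /\!/ \mathbf{Aut}(X)$ over $\mathbf{B}\mathbf{Aut}(X)$. Applying this with $Y$ the space of all definite $\nabla$-structures on $X$ --- which is precisely the orbit $\mathbf{Aut}(X) \cdot \mathrm{conc}(\nabla) \simeq \mathbf{Aut}(X)/\mathbf{Stab}_{\mathbf{Aut}(X)}(\mathrm{conc}(\nabla))$ --- one gets that a definite globalization exists iff the classifying map of $E$ lifts against $\mathbf{B}\,\mathbf{Stab}_{\mathbf{Aut}(X)}(\mathrm{conc}(\nabla)) \to \mathbf{B}\mathbf{Aut}(X)$, and that the space of such globalizations is the space of such lifts. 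This is a general ``reduction of structure group = lift against a stabilizer'' argument; I would state it as a lemma about homotopy quotients and quote it, since the only input is the defining universal property of associated $\infty$-bundles and the identification of the stabilizer as a homotopy fiber of a group action.

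The third step is to deliver the obstruction class. Delooping the homotopy fiber sequence of Proposition \ref{introductionhigherKSextension} --- namely $\mathbf{Stab}_{\mathbf{Aut}(X)}(\mathrm{conc}(\nabla)) \to \mathbf{HamAut}(X,\nabla) \xrightarrow{\mathbf{KS}_\nabla} \mathbf{B}(\mathbf{B}^p(\mathbb{R}/_{\!\hbar}\mathbb{Z})\mathbf{FlatConn}(\Sigma))$, together with the $1$-image factorization $\mathbf{Stab}_{\mathbf{Aut}(X)}(\mathrm{conc}(\nabla)) \to \mathbf{HamAut}(X,\nabla) \hookrightarrow \mathbf{Aut}(X)$ --- the composite obstruction to lifting $\Sigma \to \mathbf{B}\mathbf{Aut}(X)$ through $\mathbf{B}\,\mathbf{Stab}_{\mathbf{Aut}(X)}(\mathrm{conc}(\nabla))$ is the composite with $\mathbf{B}\mathbf{KS}_\nabla$, restricted along the classifying map of $E$; this is by definition $\mathbf{KS}_\nabla(E)$. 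Combining with step two, a definite globalization exists iff $\mathbf{KS}_\nabla(E)$ is trivial, which is the assertion. I expect the main obstacle to be step two: pinning down the precise sense in which ``fiberwise definite structure'' is the same as ``$\mathbf{Aut}(X)$-equivariant orbit-valued map'' --- one must be careful that the $\infty$-groupoid of local definite structures is really the full orbit $\mathbf{Aut}(X)\cdot\mathrm{conc}(\nabla)$ and not some larger or smaller space, and that passing to the $1$-image $\mathbf{HamAut}$ (rather than the full $\mathbf{Aut}(X)$) does not change which bundles admit a lift. Once the orbit/quotient bookkeeping is set up correctly, the obstruction-theoretic conclusion is formal.
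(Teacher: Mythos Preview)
The paper does not include a proof of this proposition; it is stated as part of the survey in section~\ref{introductionabstractdifferentialcohomology}, with the detailed arguments deferred to \cite{dcct} and \cite{SchreiberWZWterms}. So there is no ``paper's own proof'' to compare against directly.

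That said, your approach is the expected one and matches the informal justification the paper gives earlier in section~\ref{introductionWZWmodels} (for the special case of the 2d WZW gerbe and the $\mathrm{String}(G)$-extension). The core of your argument---identifying definite globalizations with sections of the associated bundle whose fiber is the $\mathbf{Aut}(X)$-orbit of $\mathrm{conc}(\nabla)$, and then invoking the orbit/stabilizer equivalence $\mathbf{Aut}(X)/\mathbf{Stab}_{\mathbf{Aut}(X)}(\mathrm{conc}(\nabla)) \simeq \mathbf{Aut}(X)\cdot\mathrm{conc}(\nabla)$ to convert this into a lift of the classifying map---is exactly the general associated-$\infty$-bundle argument one would run. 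Your caveat in step two is the right place to be careful: one needs the general fact (proved in \cite{dcct}) that for a $G$-action on a pointed object $(Y,y)$, the space of sections of the associated bundle landing in the orbit of $y$ is equivalent to the space of lifts through $\mathbf{B}\mathbf{Stab}_G(y) \to \mathbf{B}G$.

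Your remark about the 1-image $\mathbf{HamAut}(X,\nabla)$ is also on point. Strictly, $\mathbf{KS}_\nabla$ is defined on $\mathbf{HamAut}(X,\nabla)$, not on all of $\mathbf{Aut}(X)$, so writing $\mathbf{KS}_\nabla(E)$ presupposes that the structure group of $E$ already reduces to $\mathbf{HamAut}$. Since $\mathbf{HamAut}(X,\nabla) \hookrightarrow \mathbf{Aut}(X)$ is by construction a 1-monomorphism, this first reduction is a condition rather than an obstruction in higher cohomology; the paper's ``in particular'' clause is tacitly assuming it, and you are right to flag it.
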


\subsection{Abstract differential geometry}
\label{introductionabstractdifferentialgeometry}

We now survey a list of abstract constructions and theorems that follow formally for
every homotopy theory $\mathbf{H}$ which is equipped with the first and the second stage of adjoint (co-)monads in
theorem \ref{theprogression}. These we call \emph{differential cohesive} homotopy theories.

\begin{proposition}
  In the situation of theorem \ref{theprogression},
  for $\Sigma \in \mathrm{SmoothMfd} \hookrightarrow \mathbf{H}$
  then there is a pullback diagram
  $$
    \xymatrix{
      T^\infty \Sigma \ar@{}[dr]|{\mathrm{(pb)}} \ar[d]_{p_1} \ar[r]^{p_2} & \Sigma \ar@{->>}[d]^{\eta^\Im_\Sigma}
      \\
      \Sigma \ar[r]_{\eta^\Im_\Sigma} & \Im \Sigma
    }
  $$
  where $T^\infty \Sigma$ is the formal neighbourhood of the diagonal of $\Sigma$,
  and $\Im \Sigma$ is the coequalizer of the two projections
  $$
    \xymatrix{
      T^\infty \Sigma
      \ar@<-4pt>[rr]_{p_2}
      \ar@<+4pt>[rr]^{p_1}
      &&
      \Sigma
      \ar[rr]^{\eta^\Im_\Sigma}
      &&
      \Im \Sigma
    }
    \,.
  $$
\end{proposition}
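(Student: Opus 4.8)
The plan is to read off both claims --- the pullback square and the coequalizer description --- directly from the defining properties of the infinitesimal shape modality $\Im$ as an idempotent monad, using the standard fact that in an $\infty$-topos the left adjoint $\Re \dashv \Im$ makes $\Im$ preserve finite products and that its unit $\eta^\Im$ is an effective epimorphism on $0$-truncated objects (in particular on smooth manifolds). First I would recall that $\Im$ is a localization, so $\eta^\Im_\Sigma : \Sigma \to \Im\Sigma$ is, by \v{C}ech-descent in the cohesive $\infty$-topos, the quotient map onto the quotient by the equivalence relation it generates; concretely the \v{C}ech nerve of $\eta^\Im_\Sigma$ has as its space of $1$-simplices the fiber product $\Sigma \times_{\Im\Sigma} \Sigma$. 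The whole content of the proposition is then the identification
$$
  \Sigma \underset{\Im\Sigma}{\times} \Sigma \;\simeq\; T^\infty\Sigma,
$$
the formal (infinitesimal) neighbourhood of the diagonal, together with the statement that $\eta^\Im_\Sigma$ is the effective coequalizer of the two projections out of this object.

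The key computational step is therefore the local identification of $\Sigma\times_{\Im\Sigma}\Sigma$ with $T^\infty\Sigma$. Since both constructions are local on $\Sigma$ and $\Im$ preserves the relevant colimits that glue charts together (it is a left adjoint on the relevant slice, or one argues by the site-level description), it suffices to treat $\Sigma = \mathbb{R}^n$. Here one uses the explicit form of $\Im$ on formal Cartesian spaces from Definition~\ref{introductionsuperformalsmoothsite}: $\Im$ is $\mathrm{loc}_{\mathbb{D}}$, the localization inverting all maps $\mathbb{D}\to\ast$ for infinitesimally thickened points $\mathbb{D}$, so that the reduced algebra $C^\infty(\mathbb{R}^n)$ becomes its own $\Im$-image while the product $\mathbb{R}^n\times\mathbb{D}$ is sent to $\mathbb{R}^n$. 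A straightforward computation with the representing (super)algebras shows that maps $U\to \Sigma\times_{\Im\Sigma}\Sigma$ are pairs of maps $f,g : U \to \mathbb{R}^n$ which agree after applying $\Im$, i.e. which agree "to all infinitesimal orders along the diagonal"; by the Kock--Lawvere / Hadamard-type analysis of functions on formal neighbourhoods this is exactly the data of a map $U \to T^\infty\Sigma$, where $T^\infty\Sigma = \mathrm{Spec}\bigl(C^\infty(\Sigma\times\Sigma)/I_\Delta^\infty\bigr)$ with $I_\Delta$ the ideal of the diagonal. This yields the claimed pullback square with the two legs being $p_1,p_2$.

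For the coequalizer statement I would argue that $\eta^\Im_\Sigma$ is an effective epimorphism --- it is the unit of a localization and $\Im\Sigma$ is $0$-truncated as a reflection of a manifold, and on such objects localization units are $(-1)$-connected --- so by the Giraud-type exactness axiom of the $\infty$-topos $\Im\Sigma$ is the colimit of the \v{C}ech nerve of $\eta^\Im_\Sigma$, hence in particular the coequalizer of the two maps $\Sigma\times_{\Im\Sigma}\Sigma \rightrightarrows \Sigma$; combined with the identification of the previous paragraph this is precisely the stated coequalizer
$$
  \xymatrix{
    T^\infty\Sigma \ar@<+3pt>[r]^-{p_1} \ar@<-3pt>[r]_-{p_2} & \Sigma \ar[r]^-{\eta^\Im_\Sigma} & \Im\Sigma
  }.
$$
The main obstacle I expect is the local identification $\Sigma\times_{\Im\Sigma}\Sigma \simeq T^\infty\Sigma$: one must be careful that the $\infty$-categorical fiber product computed in $\mathbf{H}=\mathrm{Sh}_\infty(\mathrm{SupFormMfd})$ coincides with the naive sheaf-theoretic (1-categorical) one --- this holds because all three objects $\Sigma$, $\Im\Sigma$ are $0$-truncated and the map $\eta^\Im_\Sigma$ is "nice" (a $1$-epimorphism between $0$-types), so no higher homotopy is generated --- and that $\Im$ applied to the manifold $\Sigma$ really is representable by the expected reduced/formal quotient rather than some larger stack; both points are handled by restricting to the site and invoking that $\Im = \mathrm{loc}_{\mathbb{D}}$ acts on representables in the transparent way recorded in Definition~\ref{introductionsuperformalsmoothsite}.
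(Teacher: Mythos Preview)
The paper does not actually supply a proof of this proposition: it is one of the survey statements in Section~\ref{introductionabstractdifferentialgeometry}, with the detailed arguments deferred to \cite{dcct}. So there is no in-paper proof to compare against directly. Your overall strategy --- identify $\Sigma\times_{\Im\Sigma}\Sigma$ with $T^\infty\Sigma$ by a local computation using $\Im\simeq\mathrm{loc}_{\mathbb{D}}$, then deduce the coequalizer statement from the \v{C}ech nerve of the effective epimorphism $\eta^\Im_\Sigma$ --- is the standard one and matches what is done in the reference.

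There is, however, a genuine gap in your justification that $\eta^\Im_\Sigma$ is an effective epimorphism. You write that ``on such objects localization units are $(-1)$-connected'', but this is false as a general principle: units of reflective localizations need not be effective epimorphisms, even onto $0$-truncated objects (consider for instance the shape unit $X\to\pi_\infty X$). The correct reason here is specific to the infinitesimal shape modality and to smooth manifolds: concretely, on a probe $U\times\mathbb{D}$ the map $\Sigma(U\times\mathbb{D})\to(\Im\Sigma)(U\times\mathbb{D})=\Sigma(U)$ is restriction along $U\hookrightarrow U\times\mathbb{D}$, and this is surjective because one can always extend by precomposing with the projection $U\times\mathbb{D}\to U$. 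Abstractly, this is the statement that smooth manifolds are \emph{formally smooth} (the $\Im$-unit is an effective epi), which is a property one must verify and not a formal consequence of $\Im$ being a localization. Once you supply this argument, the rest of your proof goes through.
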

\begin{remark}
  Hence $\Im \Sigma$ is what elsewhere is called the \emph{de Rham stack} of $\Sigma$, also
  denoted $X_{\mathrm{dR}}$.  Its sheaf cohomology is crystalline cohomology.
\end{remark}
\begin{definition}
  \label{introductionVmanifold}
  For $\mathbf{H}$ a differential cohesive $\infty$-topos, say that a morphism
  $f : X \to Y$ is \emph{formally {\'e}tale} if the naturality square of its
  $\Im$-unit is a homotopy pullback
  $$
    \raisebox{20pt}{
    \xymatrix{
      X \ar[d]_f|{\mathrm{et}} \ar@{}[dr]|{\mathrm{(pb)}} \ar[r]^{\Im^\eta_\Sigma} & \Im X \ar[d]^{\Im f}
      \\
      Y \ar[r]_{\Im^\eta_\Sigma} & \Im Y
    }
    }\,.
  $$
  For $V \in \mathrm{Grp}(\mathbf{H})$ a group object, say that a \emph{$V$-manifold}
  is an object $X \in \mathbf{H}$ equipped with a \emph{$V$-atlas}, namely with a correspondence
  of the form
  $$
    \xymatrix{
       & U
       \ar[dl]|{\mathrm{et}}
       \ar@{->>}[dr]|{\mathrm{et}}
       \\
      V && X
    }
  $$
  such that both maps are formally {\'e}tale and such that the right map is in addition a
  1-epimorphism.
\end{definition}
\begin{proposition}
  \label{introductionframebundle}
  For $\mathbf{H}$ a differential cohesive $\infty$-topos and for $V \in \mathrm{Grp}(\mathbf{H})$
  any group object, then its formal disk bundle $p_1 : T^\infty V  \longrightarrow V$
  is canonically trivialized by left translation. Moreover, for $X$ any $V$-manifold,
  def. \ref{introductionVmanifold}, then the formal disk bundle of $X$ is associated to a
  uniquely defined $\mathrm{GL}^\infty(V)$-principal bundle
  $$
    \mathrm{Fr}(X) \longrightarrow X
    \,,
  $$
  its \emph{frame bundle},
  where
  $$
    \mathrm{GL}^\infty(V)
    :=
    \mathbf{Aut}(\mathbb{D}^V_e)
  $$
  is the automorphism group of the formal disk around the neutral element in $V$.
\end{proposition}
Proposition \ref{introductionframebundle} allows to abstractly speak of $G$-structure and torsion-free $G$-structure
on $V$-manifolds, in any differential cohesive $\infty$-topos, hence to formalize Cartan geometry,
which subsumes (pseudo-)Riemannian geometry, complex geometry, symplectic geometry, conformal geometry, etc.
Moreover, $G$-structures naturally arise as follows.
\begin{proposition}
  Given a differential cocycle $\nabla^V : V \longrightarrow \mathbf{B}^{p+1}(\mathbb{R}/_{\!\hbar}\mathbb{Z})$
  and a $V$-manifold $X$,
  then there is an $\infty$-functor from definite globalizations of $\nabla^V$ over $X$ to
  $\mathrm{Stab}_{\mathrm{GL}(V)}(\mathrm{conc}\nabla^{\mathbb{D}^V,E})$-structures on the frame bundle of $X$,
  where $\nabla^{\mathbb{D}^V_e}$ is the restriction of $\nabla$ to the infinitesimal neighbourhood of $e$ in $V$.
  In particular the class $\mathbf{KS}_{\nabla^{\mathbb{D}^V_e}}(\mathrm{Fr}(X))$ from prop. \ref{introductionglobalizationobstruction}
  is an obstruction to the existence of such a globalization.
\end{proposition}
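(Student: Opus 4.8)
The plan is to reduce the statement to the universal property of the frame bundle established in Proposition~\ref{introductionframebundle} together with the obstruction-theoretic content of Proposition~\ref{introductionglobalizationobstruction}, applied to the infinitesimal disk rather than to $V$ itself. First I would make precise the notion of \emph{definite globalization} of $\nabla^V$ over a $V$-manifold $X$: it is a morphism $\nabla : X \longrightarrow \mathbf{B}^{p+1}(\mathbb{R}/_{\!\hbar}\mathbb{Z})_{\mathrm{conn}}$ together with, for the $V$-atlas $\xymatrix{V \ar@{<-}[r]|{\mathrm{et}} & U \ar@{->>}[r]|{\mathrm{et}} & X}$, a coherent identification of the pullback of $\nabla$ along $U \to X$, restricted over each formal disk $\mathbb{D}^X_x \simeq \mathbb{D}^V_e$, with $\nabla^V|_{\mathbb{D}^V_e} =: \nabla^{\mathbb{D}^V_e}$. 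The key observation is that, by formal {\'e}taleness of the atlas maps, the formal disk bundle $T^\infty X \to X$ is $\mathrm{GL}^\infty(V)$-associated to $\mathrm{Fr}(X)$ (Proposition~\ref{introductionframebundle}), so a definiteness datum is exactly a choice of trivialization of the pulled-back gerbe connection along each fibre of $T^\infty X \to X$ that is compatible with the structure group action.

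Next I would invoke the naturality of the higher Kostant--Souriau construction. By Proposition~\ref{introductionhigherKSextension} applied to $(\mathbb{D}^V_e, \nabla^{\mathbb{D}^V_e})$, there is the stabilizer $\infty$-group $\mathbf{Stab}_{\mathbf{Aut}(\mathbb{D}^V_e)}(\mathrm{conc}(\nabla^{\mathbb{D}^V_e}))$ sitting in a homotopy fibre sequence over $\mathbf{Aut}(\mathbb{D}^V_e) = \mathrm{GL}^\infty(V)$ with fibre the flat $(p{-}1)$-gerbe connections. Then, exactly as in the proof of Proposition~\ref{introductionglobalizationobstruction} but with $\mathbb{D}^V_e$ in place of the ambient $X$ and with $\mathrm{Fr}(X)$ in place of the $X$-fibre bundle $E$, a definite globalization of $\nabla^V$ over $X$ amounts to a lift of the classifying map $X \to \mathbf{B}\,\mathrm{GL}^\infty(V)$ of $\mathrm{Fr}(X)$ through
$$
  \mathbf{B}\,\mathrm{Stab}_{\mathrm{GL}(V)}(\mathrm{conc}\,\nabla^{\mathbb{D}^V_e,E}) \longrightarrow \mathbf{B}\,\mathrm{GL}^\infty(V)
  \,.
$$
Such a lift is precisely a $\mathrm{Stab}_{\mathrm{GL}(V)}(\mathrm{conc}\,\nabla^{\mathbb{D}^V_e,E})$-structure on $\mathrm{Fr}(X)$, and the obstruction to its existence is the composite class $\mathbf{KS}_{\nabla^{\mathbb{D}^V_e}}(\mathrm{Fr}(X))$, i.e.\ the image of the classifying map of $\mathrm{Fr}(X)$ under the Kostant--Souriau map of Proposition~\ref{introductionhigherKSextension}. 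Upgrading this bijection-on-components to an $\infty$-functor is then a matter of observing that all the constructions involved (pullback of gerbe connections, restriction to infinitesimal disks, stabilizer $\infty$-groups) are functorial and that the identifications above are equivalences of $\infty$-groupoids, not merely of their sets of components; this uses that $T^\infty X \to X$ is itself a fibre bundle with structure $\infty$-group acting on the disk, so that the slice over $\mathbf{B}\,\mathrm{GL}^\infty(V)$ organizes everything coherently.

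The main obstacle I anticipate is the careful bookkeeping in the ``infinitesimal descent'' step: one must check that the locally-defined definiteness data over the charts $U$ of the $V$-atlas glue, up to coherent higher homotopy, to a global $\mathrm{Stab}$-structure, and that this gluing is exactly governed by the $\mathrm{GL}^\infty(V)$-principality of the frame bundle. Concretely, the formal {\'e}taleness condition (Definition~\ref{introductionVmanifold}) must be used to identify the transition data of $T^\infty X$ with that of $\mathrm{Fr}(X)$ and hence to match the cocycle for the $\mathrm{Stab}$-structure with the definiteness cocycle; this is where the homotopy-pullback characterization of formally {\'e}tale maps does the real work. Once that identification is in place, the remainder---that $\mathbf{KS}_{\nabla^{\mathbb{D}^V_e}}(\mathrm{Fr}(X))$ is the obstruction---is immediate from Proposition~\ref{introductionglobalizationobstruction} specialized to $E = \mathrm{Fr}(X)$ and the fibre $\mathbb{D}^V_e$.
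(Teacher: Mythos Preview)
The paper itself does not supply a proof of this proposition: it is stated as part of the survey in section~\ref{introductionabstractdifferentialgeometry}, with the details deferred to \cite{dcct}. So there is no in-text argument to compare against directly.

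That said, your outline is the intended one and matches how the surrounding propositions are meant to fit together. The essential reduction---restrict a definite globalization on $X$ to the formal disk bundle $T^\infty X$ via the second projection, use Proposition~\ref{introductionframebundle} to identify $T^\infty X$ as $\mathrm{Fr}(X)$-associated with fibre $\mathbb{D}^V_e$, then invoke Proposition~\ref{introductionglobalizationobstruction} with $(\mathbb{D}^V_e,\nabla^{\mathbb{D}^V_e})$ playing the role of the fibre datum---is exactly the mechanism the paper sets up. Two small points of precision: first, in your specialization of Proposition~\ref{introductionglobalizationobstruction} the fibre bundle $E$ should literally be $T^\infty X \to X$ (the $\mathbb{D}^V_e$-fibre bundle), not $\mathrm{Fr}(X)$ itself; the identification with a structure-group lift on $\mathrm{Fr}(X)$ then follows because the two are associated. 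Second, the proposition only asserts an $\infty$-functor \emph{into} $\mathrm{Stab}$-structures, not an equivalence, so your phrase ``amounts to a lift'' is slightly stronger than what is claimed; the functor direction you need is just the pullback along $p_2 : T^\infty X \to X$, which is manifestly functorial, and the obstruction statement then follows because the $\mathbf{KS}$-class already obstructs the \emph{target} of that functor.
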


\subsection{Abstract PDE theory}
\label{introductionabstractPDEtheory}

We survey more abstract constuctions and theorems that follow formally for
every differential cohesive homotopy theory, $\mathbf{H}$ i.e. one equipped with the first and second stage of adjoint (co-)monads in
theorem \ref{theprogression}.

\begin{definition}
  \label{introductionjetcomonad}
  For any $\Sigma \in \mathbf{H}$, write
  $$
    \left(
      \mbox{\Large $T^\infty_\Sigma$}
      \dashv
      \mbox{\Large $J^\infty_\Sigma$}
    \right)
    :=
    \left(
      (\eta_\Sigma)^\ast \circ (\eta_\Sigma)_!
      \dashv
      (\eta_\Sigma)^\ast \circ (\eta_\Sigma)_\ast
    \right)
    :
    \mathbf{H}_{/\Sigma} \longrightarrow \mathbf{H}_{/\Sigma}
  $$
  for the base change (co-)monad along the unit of the $\Im$-monad.
\end{definition}

\begin{proposition}
  \label{introductionabstractjetsgivetraditionaljets}
  For $\mathbf{H}$ from \ref{theprogression},
  then for $E \in \mathrm{FormMfd}_{/\Sigma} \hookrightarrow \mathbf{H}_{/\Sigma}$
  the (co-)monads in def. \ref{introductionjetcomonad} come out as follows:
  \begin{enumerate}
    \item $T^\infty_\Sigma \Sigma$ is the formal disk bundle of $\Sigma$ \cite[above prop. 2.2]{Kock80};
    \item $J^\infty_\Sigma E$ is the jet bundle of $E$ \cite[remark 7.3.1]{KockBook}.
  \end{enumerate}
\end{proposition}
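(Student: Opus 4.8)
The plan is to verify the two identifications separately, in each case by comparing the abstract base-change construction of Definition \ref{introductionjetcomonad} with the classical algebraic/geometric descriptions of formal disk bundles and jet bundles, and invoking the universal properties that characterize both sides. The key input is the pullback square of the preceding proposition: for $\Sigma$ a smooth manifold, $\eta^\Im_\Sigma : \Sigma \to \Im\Sigma$ is an effective epimorphism whose \v{C}ech nerve in degree $1$ is the formal neighbourhood of the diagonal $T^\infty\Sigma \hookrightarrow \Sigma \times \Sigma$, with the two legs being the two projections. Since $T^\infty_\Sigma := (\eta_\Sigma)^\ast (\eta_\Sigma)_!$ is base change along $\eta_\Sigma$ both ways, the object $T^\infty_\Sigma(\mathrm{id}_\Sigma) \in \mathbf{H}_{/\Sigma}$ is computed by the pullback of $\eta_\Sigma : \Sigma \to \Im\Sigma$ against itself, which by that very square is precisely $T^\infty\Sigma$ with its projection $p_1$ to the base. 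This gives the first statement essentially by unwinding definitions; the small point to check is that $(\eta_\Sigma)_!$ applied to $\mathrm{id}_\Sigma$ is just $\eta_\Sigma$ regarded as an object of $\mathbf{H}_{/\Im\Sigma}$, which is immediate from the definition of the dependent sum.

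For the second statement I would proceed dually and locally. First reduce to the affine case: the jet bundle functor, like the base-change comonad $J^\infty_\Sigma$, is local over $\Sigma$ (both are right adjoints, hence preserve the relevant limits, and $\eta^\Im$ restricts well to opens because formal {\'e}taleness of open immersions is part of the setup), so it suffices to treat $\Sigma = \mathbb{R}^n$ and $E = \mathbb{R}^n \times \mathbb{R}^k \to \mathbb{R}^n$, or more generally a formal manifold bundle. Then I would compute $(\eta_\Sigma)_\ast$ explicitly: for $E \to \Sigma$, the fiber of $(\eta_\Sigma)_\ast E \to \Im\Sigma$ over a point is the space of sections of $E$ over the corresponding formal disk of $\Sigma$, by the defining adjunction $\mathrm{Hom}_{/\Im\Sigma}(X, (\eta_\Sigma)_\ast E) \simeq \mathrm{Hom}_{/\Sigma}((\eta_\Sigma)^\ast X, E)$ together with the description of $(\eta_\Sigma)^\ast$ of a formal disk as the corresponding formal neighbourhood in $\Sigma$. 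Pulling this back along $\eta_\Sigma$ again (that is the outer $(\eta_\Sigma)^\ast$ in the definition of $J^\infty_\Sigma$) reattaches this to $\Sigma$ itself, and the resulting bundle over $\Sigma$ has as fiber over $\sigma$ the sections of $E$ over the formal disk $\mathbb{D}_\sigma$ — which is exactly the characterization of the jet bundle already recalled in the Introduction (the "$\{\ast \dashrightarrow J^\infty_\Sigma E\} \simeq \{\mathbb{D}_\sigma \hookrightarrow \Sigma\} \times_\Sigma \{E\}$" diagram) and which is Kock's synthetic definition of $J^\infty_\Sigma E$ in the cited references. I would then cite \cite{Kock80} for the formal disk bundle identification and \cite[remark 7.3.1]{KockBook} for the jet bundle identification to close the argument, since in the Kock–Lawvere framework these are the definitions of the respective objects and the content is precisely that the abstract base-change (co-)monad reproduces them.

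The main obstacle I expect is bookkeeping about \emph{which} category of spaces one works in and making sure the abstract adjoints agree with the classical constructions rather than merely restricting to them: $\mathbf{H} = \mathrm{Sh}_\infty(\mathrm{SupFormMfd})$ is much larger than $\mathrm{FormMfd}$, so one must check that for $E \in \mathrm{FormMfd}_{/\Sigma}$ the objects $T^\infty_\Sigma E$ and $J^\infty_\Sigma E$ computed in $\mathbf{H}_{/\Sigma}$ actually land back in (representable, $0$-truncated) formal manifold bundles and are not some stacky enlargement. This is where one genuinely uses that $\eta^\Im_\Sigma$ is a $1$-epimorphism with formal-disk-bundle \v{C}ech nerve and that $(\eta_\Sigma)_\ast$ of a relatively formal-manifold object over $\Sigma$ is again representable — the latter being the synthetic-differential-geometry fact that the functor of points "sections over the formal disk" is representable by a formal manifold, i.e. the existence of jet bundles in $\mathrm{FormMfd}$. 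Once that representability is granted (it is exactly the content of the cited Kock results), the comparison is forced by the Yoneda lemma, since both the abstract and the classical object corepresent the same functor $(\text{test space } U) \mapsto \{\text{maps } U \times_\Sigma \mathbb{D}\text{-neighbourhood} \to E\}$.
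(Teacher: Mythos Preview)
Your proposal is correct and in fact supplies considerably more detail than the paper does: the paper is a survey and offers no proof of this proposition at all, merely stating it with pointers to Kock's work for the two identifications. Your argument --- computing $T^\infty_\Sigma \Sigma = (\eta_\Sigma)^\ast(\eta_\Sigma)_!(\mathrm{id}_\Sigma)$ directly from the pullback square of the preceding proposition, and identifying $J^\infty_\Sigma E$ fiberwise via the adjunction $\mathrm{Hom}_{/\Sigma}(U, (\eta_\Sigma)^\ast(\eta_\Sigma)_\ast E) \simeq \mathrm{Hom}_{/\Sigma}(T^\infty_\Sigma U, E)$ and then invoking Kock's synthetic characterization --- is exactly the standard way to unpack these base-change (co-)monads, and your discussion of the representability issue (ensuring the abstractly-defined objects land back in $\mathrm{FormMfd}_{/\Sigma}$) correctly isolates the one point where actual content from the cited references is needed.
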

\begin{proposition}[{\cite{Marvan86},\cite[section 1.1]{Marvan93}}]
  In the situation of prop. \ref{introductionabstractjetsgivetraditionaljets},
  the Eilenberg-Moore category of jet coalgebras over $\Sigma$ is equivalent to
  Vinogradov's category of partial differential equations with free variables in $\Sigma$:
  $$
    \mathrm{EM}(J^\infty_\Sigma) \simeq  \mathrm{PDE}_\Sigma
    \,.
  $$
  In particular the co-Kleisli category of the jet comonad is that of bundles over $\Sigma$
  with differential operators between them as morphisms.
  $$
    \mathrm{Kl}(J^\infty_\Sigma) \simeq \mathrm{DiffOp}_\Sigma
    \,.
  $$
\end{proposition}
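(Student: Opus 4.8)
The plan is to deduce both claimed equivalences from the characterization of the jet comonad $J^\infty_\Sigma$ as the base change comonad along the $\Im$-unit $\eta_\Sigma : \Sigma \to \Im\Sigma$, together with Proposition \ref{introductionabstractjetsgivetraditionaljets} identifying this abstract construction with the classical jet bundle functor on $\mathrm{FormMfd}_{/\Sigma}$. First I would recall Marvan's structural observation \cite{Marvan86, Marvan93}: for a comonad $J^\infty_\Sigma = (\eta_\Sigma)^\ast \circ (\eta_\Sigma)_\ast$ on a slice category, the coalgebras over it are computed in two steps. Since $(\eta_\Sigma)^\ast \dashv (\eta_\Sigma)_\ast$ is an adjunction with fully faithful $(\eta_\Sigma)_\ast$ (the unit $\eta_\Sigma$ being an effective epimorphism-like map, or more precisely using that $\Im$ is idempotent so the relevant comparison is an equivalence), the Eilenberg-Moore category $\mathrm{EM}(J^\infty_\Sigma)$ is equivalent to the slice $\mathbf{H}_{/\Im\Sigma}$ via descent: a $J^\infty_\Sigma$-coalgebra structure on $p : E \to \Sigma$ is precisely gluing data that descends $E$ along $\eta_\Sigma$ to an object over $\Im\Sigma$. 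Concretely, the coalgebra structure map $E \to J^\infty_\Sigma E$ unwinds, under the Cartesian square relating $T^\infty_\Sigma$ and $\Im\Sigma$ from Proposition \ref{introductionabstractPDEtheory}'s preamble, to an action of the formal-disk-bundle groupoid on $E$, i.e. an identification of the jets at formally-close points — exactly the structure of a ``differential equation'' in Vinogradov's sense, a subbundle of the jet bundle closed under prolongation.

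Next I would make precise the comparison with $\mathrm{PDE}_\Sigma$. Here I would invoke Proposition \ref{introductionabstractjetsgivetraditionaljets}(2): on the full subcategory of formal manifold bundles, $J^\infty_\Sigma$ is the honest jet bundle. Marvan's theorem \cite{Marvan86} then says that the category of (possibly nonlinear, possibly singular-free) PDEs with free variables $\Sigma$ — objects being submanifolds $\mathcal{E} \hookrightarrow J^\infty_\Sigma E$ stable under the Cartan distribution, morphisms being solution-preserving differential operators — is the Eilenberg-Moore category of this comonad. The abstract statement $\mathrm{EM}(J^\infty_\Sigma) \simeq \mathrm{PDE}_\Sigma$ is then obtained by combining these two identifications; one should check that the equivalence of Proposition \ref{introductionabstractjetsgivetraditionaljets} is compatible with the respective morphism structures (differential operators on both sides), which is essentially the content of Marvan's proof and I would cite it rather than reprove it.

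For the co-Kleisli statement, the argument is lighter. By definition the co-Kleisli category $\mathrm{Kl}(J^\infty_\Sigma)$ has the same objects as $\mathbf{H}_{/\Sigma}$ (bundles over $\Sigma$) and hom-sets $\mathrm{Hom}_{\mathrm{Kl}}(E,F) = \mathrm{Hom}_{\mathbf{H}_{/\Sigma}}(J^\infty_\Sigma E, F)$. Using Proposition \ref{introductionabstractjetsgivetraditionaljets}(2) to identify $J^\infty_\Sigma E$ with the classical jet bundle, a morphism $J^\infty_\Sigma E \to F$ over $\Sigma$ is exactly a bundle map from the jet bundle to $F$, which — by the discussion recalled in Section \ref{Principleofextremalactioncomonadically} of the introduction — is the data of a (nonlinear) differential operator $D_f : \Gamma_\Sigma(E) \to \Gamma_\Sigma(F)$. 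The co-Kleisli composition, given by $g \circ J^\infty_\Sigma f \circ (\text{comultiplication})$, is precisely the ``reshuffle-derivatives-then-compose'' rule that reproduces composition of differential operators, again as spelled out in that section. Hence $\mathrm{Kl}(J^\infty_\Sigma) \simeq \mathrm{DiffOp}_\Sigma$.

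The main obstacle I anticipate is the first half: verifying carefully that $\mathrm{EM}(J^\infty_\Sigma)$, as an abstract Eilenberg-Moore construction in the cohesive $\infty$-topos, really coincides with Vinogradov's $\mathrm{PDE}_\Sigma$ and not merely with some $\infty$-categorical enlargement of it. This requires knowing that on the relevant subcategory the coalgebras are discrete (no higher coherence data beyond the strict PDE structure), which uses that the objects in play are $0$-truncated formal manifolds and that $J^\infty_\Sigma$ restricts to an ordinary comonad there — this is exactly where one leans on Proposition \ref{introductionabstractjetsgivetraditionaljets} and on Marvan's original $1$-categorical theorem \cite{Marvan86, Marvan93}, so the honest statement is that the proposition is a transcription of Marvan's results into the present synthetic framework, with the identification of abstract jets and classical jets (Proposition \ref{introductionabstractjetsgivetraditionaljets}) supplying the bridge.
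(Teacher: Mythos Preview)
Your proposal is a reasonable sketch, but it overshoots what the paper does: the paper gives no proof of this proposition at all. The statement is simply attributed to Marvan \cite{Marvan86, Marvan93} and quoted as a known result; the ambient section is a survey, and this proposition functions as a citation, not as something to be established from the cohesive machinery.

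You also partially conflate this proposition with the one that immediately follows it in the paper. Your first step, arguing that $\mathrm{EM}(J^\infty_\Sigma)$ is equivalent to the slice over $\Im\Sigma$ via descent along $\eta_\Sigma$, is precisely the content of the \emph{next} proposition, which the paper proves by invoking the $\infty$-Beck monadicity theorem. The proposition you were asked to address is purely the classical 1-categorical statement on $\mathrm{FormMfd}_{/\Sigma}$, for which the paper's entire argument is the pointer to Marvan. Your closing paragraph recognizes this correctly (``the proposition is a transcription of Marvan's results''), and that is in fact all the paper claims. The co-Kleisli half of your argument is fine and matches the informal discussion the paper gives earlier in Section \ref{Principleofextremalactioncomonadically}, but again the paper does not repeat that discussion here as a proof.
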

Since prop. \ref{introductionjetcomonad} gives the jet comonad by base change,
the $\infty$-Beck monadicity theorem gives in generality that
\begin{proposition}
  There is an equivalence of $\infty$-categories
  $$
    \mathrm{EM}(J^\infty_\Sigma, \mathbf{H})
    \simeq
    \mathbf{H}_{/\Im\Sigma}
    \,.
  $$
\end{proposition}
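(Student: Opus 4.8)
The plan is to deduce this from the $\infty$-categorical Beck monadicity theorem together with the identification of the jet comonad as a base change comonad from Proposition~\ref{introductionjetcomonad}. The comonad in question is $J^\infty_\Sigma = (\eta_\Sigma)^\ast \circ (\eta_\Sigma)_\ast$, where $\eta_\Sigma : \Sigma \to \Im\Sigma$ is the unit of the $\Im$-monad, and the associated adjunction is $(\eta_\Sigma)_! \dashv (\eta_\Sigma)^\ast \dashv (\eta_\Sigma)_\ast$ between $\mathbf{H}_{/\Sigma}$ and $\mathbf{H}_{/\Im\Sigma}$. The first step is to observe that the coalgebras over the comonad $J^\infty_\Sigma$ generated by the adjunction $(\eta_\Sigma)^\ast \dashv (\eta_\Sigma)_\ast$ are, by the general (dual) monadicity formalism, the same as the Eilenberg--Moore coalgebras $\mathrm{EM}(J^\infty_\Sigma,\mathbf{H})$; and that the right adjoint $(\eta_\Sigma)_\ast : \mathbf{H}_{/\Im\Sigma} \to \mathbf{H}_{/\Sigma}$ factors through $\mathrm{EM}(J^\infty_\Sigma,\mathbf{H})$ via the canonical comparison functor. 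So the whole content is that this comparison functor is an equivalence, i.e. that $(\eta_\Sigma)_\ast$ is comonadic.

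Second, I would verify the hypotheses of $\infty$-Beck (comonadic form): $(\eta_\Sigma)_\ast$ must preserve all limits (which it does, being a right adjoint) and, crucially, must preserve and reflect limits of $(\eta_\Sigma)_\ast$-split cosimplicial diagrams, and be conservative. Conservativity of $(\eta_\Sigma)_\ast$ holds because $\eta_\Sigma$ is a $1$-epimorphism (effective epimorphism): $\Im\Sigma$ is the coequalizer of the two projections from $T^\infty\Sigma$, so $\eta_\Sigma$ is an effective epimorphism, and base change along an effective epimorphism into a slice is conservative in an $\infty$-topos (this is a standard descent-type fact). The creation of totalizations of split cosimplicial objects is automatic for any right adjoint between $\infty$-categories that has a left adjoint and is conservative, by the general comonadicity criterion as in Lurie's Higher Algebra; equivalently, one may invoke directly that for any $1$-epimorphism $f : A \to B$ in an $\infty$-topos, the functor $f_\ast : \mathbf{H}_{/A} \to \mathbf{H}_{/B}$ is comonadic, which is precisely the $\infty$-categorical incarnation of $\infty$-descent.

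Third, I would assemble these: by $\infty$-Beck the comparison functor $\mathbf{H}_{/\Im\Sigma} \to \mathrm{EM}(J^\infty_\Sigma,\mathbf{H})$ is an equivalence, which is the claim. One subtlety worth spelling out is the bookkeeping of which side is the monad and which the comonad: the adjoint triple $(\eta_\Sigma)_! \dashv (\eta_\Sigma)^\ast \dashv (\eta_\Sigma)_\ast$ produces, from the \emph{left} adjunction $(\eta_\Sigma)_! \dashv (\eta_\Sigma)^\ast$, the comonad $T^\infty_\Sigma = (\eta_\Sigma)^\ast (\eta_\Sigma)_!$ on $\mathbf{H}_{/\Sigma}$ and, from the \emph{right} adjunction $(\eta_\Sigma)^\ast \dashv (\eta_\Sigma)_\ast$, the comonad $J^\infty_\Sigma = (\eta_\Sigma)^\ast (\eta_\Sigma)_\ast$ on $\mathbf{H}_{/\Sigma}$ whose \emph{co}monadicity we are testing via the functor $(\eta_\Sigma)_\ast$ going the other way; so the monadicity theorem is applied to the adjunction $(\eta_\Sigma)^\ast \dashv (\eta_\Sigma)_\ast$ read as a \emph{co}monadic situation for $(\eta_\Sigma)_\ast$. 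Making sure the direction of the adjunction matches the direction in which Beck's theorem is quoted is where most of the care goes.

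\textbf{Main obstacle.}
The genuine mathematical point, and the step I expect to be the crux, is establishing that $(\eta_\Sigma)_\ast$ is comonadic --- concretely, that $\eta_\Sigma : \Sigma \to \Im\Sigma$ is a $1$-epimorphism so that $\infty$-descent applies. This relies on $\Im\Sigma$ being realized as the $\infty$-categorical coequalizer (geometric realization) of the formal disk bundle groupoid $T^\infty\Sigma \rightrightarrows \Sigma$, which in turn uses that $\Im$ is an idempotent monad localizing at the infinitesimal disks together with the pullback square of the preceding proposition exhibiting $T^\infty\Sigma = \Sigma \times_{\Im\Sigma}\Sigma$. Everything else --- preservation of limits by a right adjoint, conservativity, creation of split totalizations --- is formal once the $1$-epimorphism property is in hand. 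Hence the proof is essentially: ``$\eta_\Sigma$ is $1$-epi $\Rightarrow$ $\infty$-descent $\Rightarrow$ Beck comonadicity,'' with the identification $\mathrm{EM}(J^\infty_\Sigma) \simeq \mathrm{PDE}_\Sigma$ and $\mathrm{Kl}(J^\infty_\Sigma) \simeq \mathrm{DiffOp}_\Sigma$ already supplied by the cited results of Marvan.
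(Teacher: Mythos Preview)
Your overall strategy is the paper's: its entire argument is one line --- since Definition~\ref{introductionjetcomonad} exhibits $J^\infty_\Sigma$ as a base-change comonad, invoke the $\infty$-Beck (co)monadicity theorem. You have also correctly isolated the one non-formal ingredient, namely that $\eta_\Sigma : \Sigma \to \Im\Sigma$ is an effective epimorphism (which is what the preceding pullback/coequalizer proposition provides).

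However, your bookkeeping has the comonadic functor on the wrong side throughout. For $\eta_\Sigma : \Sigma \to \Im\Sigma$ the base-change triple is $(\eta_\Sigma)_! \dashv (\eta_\Sigma)^\ast \dashv (\eta_\Sigma)_\ast$ with $(\eta_\Sigma)^\ast : \mathbf{H}_{/\Im\Sigma} \to \mathbf{H}_{/\Sigma}$ the pullback and $(\eta_\Sigma)_\ast : \mathbf{H}_{/\Sigma} \to \mathbf{H}_{/\Im\Sigma}$ its right adjoint; you have the domain and codomain of $(\eta_\Sigma)_\ast$ reversed, and this error propagates. In the adjunction $(\eta_\Sigma)^\ast \dashv (\eta_\Sigma)_\ast$ the comonad $J^\infty_\Sigma = (\eta_\Sigma)^\ast(\eta_\Sigma)_\ast$ lives on $\mathbf{H}_{/\Sigma}$, and the comparison functor is the \emph{left} adjoint $(\eta_\Sigma)^\ast : \mathbf{H}_{/\Im\Sigma} \to \mathrm{EM}(J^\infty_\Sigma)$, not $(\eta_\Sigma)_\ast$. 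The comonadic Barr--Beck criterion therefore asks that $(\eta_\Sigma)^\ast$ be conservative and preserve totalizations of $(\eta_\Sigma)^\ast$-split cosimplicial objects. The second condition is automatic because $(\eta_\Sigma)^\ast$ has a further left adjoint $(\eta_\Sigma)_!$ and hence preserves all limits; the first is precisely conservativity of \emph{pullback} along an effective epimorphism, which is the $\infty$-descent fact you cite. Your sentence ``for any $1$-epimorphism $f$, the functor $f_\ast$ is comonadic'' should read $f^\ast$. Once you swap $(\eta_\Sigma)_\ast$ for $(\eta_\Sigma)^\ast$ in every Beck-hypothesis clause, your argument is correct and coincides with (a spelled-out version of) the paper's.
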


Write then
$$
  \xymatrix{
    \mathbf{H}
    \ar[rr]^{\Sigma^\ast}
    \ar@/^2pc/[rrrr]^{(-)_\Sigma}
    &&
    \mathbf{H}_{/\Sigma}
    \ar[rr]^{(\eta^\Im_\Sigma)_\ast}
    &&
    \mathbf{H}_{/\Im \Sigma}
  }
$$
for the canonical map that regards objects of the differential cohesive $\infty$-topos
as co-free homotopy partial differential equations:

In the situation of example \ref{introductionDelignecomplexfromhexagon},
consider the universal decomposition of the differential forms
$(\mathbf{\Omega}^{\bullet \leq p+1})_\Sigma$ regarded over $\Im \Sigma$
this way into horizontal and vertical forms.
$$
  \xymatrix{
    0 \ar[d] \ar[rr] & & \mathbf{\Omega}^{p+1}_V
    \ar[d]
    \\
    \Sigma \ar[r]^{\phi} & E \ar[r]^-{L} & (\mathbf{\Omega}^{\bullet \leq p+1})_\Sigma
    \ar[r]
    &\mathbf{\Omega}^{\bullet \leq p+1}_H
  }
$$
\begin{proposition}
\label{introductioninducedelgerbes}
This induces a horizontal projection of the exact hexagon from example \ref{introductionDelignecomplexfromhexagon}:
$$
  \!\!\!\!\!\!\!\!\!\!\!\!\!\!\!\!\!\!\!
  \!\!\!\!\!\!\!\!\!\!\!\!\!\!\!\!\!\!\!\!
  \xymatrix@C=16pt@R=20pt{
    &&
    (\mathbf{\Omega}^{\bullet \leq p+1})_\Sigma
    \ar[rrrr]^{d_{\mathrm{dR}}}
    \ar[drr]
    \ar[ddrr]|>>>>>>>>{\phantom{AA} \atop \phantom{AA}}
    &&&&
    (\mathbf{\Omega}^{p+2}_{\mathrm{cl}})_\Sigma
    \ar[drr]
    \ar[ddrr]|>>>>>>>>>>>>>>>>>>{\phantom{AA} \atop \phantom{AA}}|>>>>>>>{\phantom{AA} \atop \phantom{AA}}
    \\
    &&&& \mathbf{\Omega}_H^{\bullet \leq p+1}
    \ar[rrrr]^{\delta_V}
    \ar@{->>}[ddrr]
    &&&&
    \mathbf{\Omega}^{p+1,1}_S
    \ar[ddrr]
    \\
    (\flat \mathbf{B}^{p+1}\mathbb{R})_\Sigma
    \ar[uurr]
    \ar[ddrr]
    \ar@{=}[drr]
    &&&&
    (\mathbf{B}^{p+1}(\mathbb{R}/_{\!\hbar}\mathbb{Z})_{\mathrm{conn}})_\Sigma
    \ar[uurr]|<<<<<<<{\phantom{AA} \atop \phantom{AA}}|<<<<<<<<<<<<<<<<<<<<{\phantom{AA}\atop \phantom{AA}}|>>>>>>>>>{\mathrm{curv}}
    \ar[ddrr]|>>>>>>>>{\phantom{AA} \atop \phantom{AA}}
    \ar[drr]|{H}
    &&&&
    (\flat \mathbf{B}^{p+2}\mathbb{R})_\Sigma
    \ar@{=}[drr]
    \\
    &&(\flat \mathbf{B}^{p+1}\mathbb{R})_\Sigma
    \ar[uurr]
    \ar[ddrr]
    &&
    &&
    \mathbf{B}^{p+1}_H(\mathbb{R}/_{\!\hbar}\mathbb{Z})_{\mathrm{conn}}
    \ar[uurr]|{\mathrm{curv}}
    \ar[ddrr]
    &&&&
    (\flat \mathbf{B}^{p+1}\mathbb{R})_\Sigma
    \\
    && (\flat \mathbf{B}^{p+1}(\mathbb{R}/_{\!\hbar} \mathbb{Z}))_\Sigma
    \ar[rrrr]|<<<<<<<<<<{\phantom{AAAA}}_{\beta_\Sigma}|>>>>>>>>>>>>>>>{\phantom{AAAA}}
    \ar[uurr]|<<<<<<<<{\phantom{AA} \atop \phantom{AA}}
    \ar@{=}[drr]
    &&&&
    (\mathbf{B}^{p+2}\mathbb{Z})_\Sigma
    \ar[uurr]|<<<<<<<{\phantom{AA} \atop\phantom{AA}}
    \ar@{=}[drr]
    \\
    &&&& (\flat \mathbf{B}^{p+1}(\mathbb{R}/_{\!\hbar} \mathbb{Z}))_\Sigma
    \ar[rrrr]_{\beta_\Sigma}
    \ar[uurr]
    &&&&
    (\mathbf{B}^{p+2}\mathbb{Z})_\Sigma
    \ar[uurr]
  }
$$
\end{proposition}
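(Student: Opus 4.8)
\textbf{Proof plan for Proposition \ref{introductioninducedelgerbes}.}

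The plan is to obtain the horizontally-projected hexagon by functorially applying the horizontal/vertical decomposition of differential forms to the exact hexagon of Proposition \ref{differentialhexagon} (in the guise of Example \ref{introductionDelignecomplexfromhexagon}), working internally in the $\infty$-topos $\mathbf{H}_{/\Im\Sigma}\simeq\mathrm{EM}(J^\infty_\Sigma,\mathbf{H})$. The starting point is the observation that, over $\Im\Sigma$, the de Rham complex $(\mathbf{\Omega}^{\bullet\leq p+1})_\Sigma$ carries the canonical bigrading of the variational bicomplex, and the horizontal projection $(\mathbf{\Omega}^{\bullet\leq p+1})_\Sigma\to\mathbf{\Omega}^{\bullet\leq p+1}_H$ is a morphism of complexes of sheaves which is the identity in horizontal degree and kills all forms of positive vertical degree. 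The first step is therefore to promote this to a projection of the full Deligne complex: since $\mathbf{B}^{p+1}(\mathbb{R}/_{\!\hbar}\mathbb{Z})_{\mathrm{conn}}$ is built from $[\mathbb{Z}\hookrightarrow\mathbf{\Omega}^0\to\cdots\to\mathbf{\Omega}^{p+1}]$, and the integer and $0$-form parts are purely horizontal, the projection passes to $\mathbf{B}^{p+1}_H(\mathbb{R}/_{\!\hbar}\mathbb{Z})_{\mathrm{conn}}$, and it is compatible with the de Rham differential in the sense that it intertwines $d_{\mathrm{dR}}$ with $\delta_V$ modulo $d_H$-exact terms, exactly as in equation \eqref{differentialofLagrangian}.

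Next I would produce the outer pieces of the projected hexagon. The geometrically-discrete flat coefficients $\flat\mathbf{B}^{p+1}\mathbb{R}$, $\flat\mathbf{B}^{p+1}(\mathbb{R}/_{\!\hbar}\mathbb{Z})$ and $\mathbf{B}^{p+2}\mathbb{Z}$ have no room for a horizontal/vertical split — they are locally constant — so the projection acts as the identity on these corners; this is why three of the four back corners of the displayed prism are labelled with an equality. The curvature corner $\mathbf{\Omega}^{p+2}_{\mathrm{cl}}$ projects to the source forms $\mathbf{\Omega}^{p+1,1}_S$ (the forms that are horizontal of degree $p+1$ and vertical of degree $1$, with source-form dependence on jets), because the horizontally-closed, horizontally-exact part of a closed $(p+2)$-form on the jet bundle is precisely what the decomposition \eqref{differentialofLagrangian} isolates, and the remaining component is the Euler--Lagrange source form. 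Here the key fact I would invoke is the standard structure theory of the variational bicomplex \cite{Anderson-book}: in the relevant low stages the interior Euler operator realizes exactly the claimed projection $\mathbf{\Omega}^{p+1}_V\to\mathbf{\Omega}^{p+1,1}_S$, giving the short exact sequence displayed just before the proposition.

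Finally I would assemble these projected corners and verify the stated exactness properties. Since horizontal projection is the image of a natural transformation of sheaves of complexes (equivalently, a morphism of spectrum objects in $\mathbf{H}_{/\Im\Sigma}$), it sends the homotopy-cartesian squares and homotopy-fiber sequences of Proposition \ref{differentialhexagon} to squares and sequences that remain homotopy-exact provided the projection is compatible with the relevant (co)fiber decompositions; the argument is the same as the proof of Proposition \ref{differentialhexagon} — homotopy pullbacks of stable objects are detected on homotopy fibers, and the horizontal projection commutes with taking these fibers. I expect the main obstacle to be showing that the horizontal projection interacts correctly with the \emph{vertical} leg of the hexagon (the curvature map and the Bockstein), i.e. checking that after projection one still gets that both squares of the smaller hexagon are homotopy cartesian rather than merely commuting; concretely this is the statement that the source-form-valued ``$\delta_V$'' and the induced ``$H$'' (horizontal Lepage-type lift) fit into a pullback, which amounts to the exactness of the variational bicomplex in the two relevant columns. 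This is classical but requires care about the non-representability of source forms in $\mathrm{DiffOp}_\Sigma$ versus their representability in $\mathrm{Sh}(\mathrm{PDE}_\Sigma)$, as already noted in the body of the paper.
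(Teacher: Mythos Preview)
The paper does not actually supply a proof of this proposition: it is stated as a consequence of the ``universal decomposition of the differential forms $(\mathbf{\Omega}^{\bullet\leq p+1})_\Sigma$ into horizontal and vertical forms'' displayed immediately before it, and then the text moves on to interpret the front hexagon. In this survey the detailed verification is deferred to \cite{dcct} and \cite{KhavkineSchreiber}.

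Your proposal is consistent with the paper's implicit reasoning and simply unpacks what the paper leaves as a one-line ``this induces''. The ingredients you list --- functoriality of $(-)_\Sigma$, the variational-bicomplex bigrading on $(\mathbf{\Omega}^{\bullet})_\Sigma$ over $\Im\Sigma$, the fact that the locally constant corners $\flat\mathbf{B}^{p+1}\mathbb{R}$, $\flat\mathbf{B}^{p+1}(\mathbb{R}/_{\!\hbar}\mathbb{Z})$, $\mathbf{B}^{p+2}\mathbb{Z}$ are unaffected by horizontal projection, and the identification of the projected curvature corner with $\mathbf{\Omega}^{p+1,1}_S$ via equation~\eqref{differentialofLagrangian} --- are exactly the pieces the paper has set up in sections~\ref{Principleofextremalactioncomonadically} and~\ref{globalactionfunctionalinintroduction} and then invokes here. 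So there is no divergence of approach to report.

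One caution: you frame the final step as ``horizontal projection sends homotopy-exact hexagons to homotopy-exact hexagons because it is a morphism of spectrum objects''. That is not automatic --- a morphism of stable objects need not preserve homotopy pullback squares --- and the paper does not in fact claim that the \emph{front} hexagon is again exact in the full sense of Proposition~\ref{differentialhexagon}. What the paper asserts is only that the projection exists and that the front face has the stated interpretation (Euler--Lagrange $p$-gerbes with curvature the source form). Your identified ``main obstacle'' is real, but it is not something the paper resolves here; you should not present it as part of proving the proposition as stated.
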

This is the abstract characterization of the Euler-Lagrange $p$-gerbes of section \ref{globalactionfunctionalinintroduction}.
Hence the front hexagon in prop. \ref{introductioninducedelgerbes} now has the following interpretation.
$$
  \hspace{-.6cm}
  \xymatrix{
    & \fbox{\begin{tabular}{c} globally defined \\ Lagrangians \end{tabular}}
    \ar[rr]|-{\mbox{variational Euler differential}}
    \ar[dr]|-{\mbox{regard as}}
    && \fbox{\begin{tabular}{c} source \\ forms \end{tabular}}
    \ar[dr]|{\mbox{de Rham theorem}}
    \\
    \fbox{\begin{tabular}{c} trivial \\ Lagrangians \end{tabular}}
    \ar[ur]|-{\mbox{regard as}}
    \ar[dr]|-{\mbox{regard as}}
    && \fbox{\begin{tabular}{c} Euler-Lagrange \\ $p$-gerbes \end{tabular}}
    \ar[ur]|{\mbox{curvature}}
    \ar[dr]|<<<<<<<<{\mbox{\begin{tabular}{l}topol. class\end{tabular}}}
    && \fbox{\begin{tabular}{c} rationalized \\ background \\ charge \end{tabular}}
    \\
    & \fbox{\begin{tabular}{c} flat \\ Euler-Lagrange \\ $p$-gerbes \end{tabular}}
    \ar[ur]|-{\mbox{regard as}}
    \ar[rr]|-{\mbox{\begin{tabular}{l}Bockstein homomorphism\end{tabular}}}
    &&
    \fbox{\begin{tabular}{c} background \\ topological \\ charge \end{tabular}}
    \ar[ur]|{\mbox{Chern character}}
  }
$$
Similarly there is a further filtration of horizontal projections which induces also the
Lepage $p$-gerbes of section \ref{elgerbesinintroduction}.

Hence the abstract differential cohomology in cohesive homotopy theory combined with the
abstract manifold theory and abstract PDE theory of differential cohesive homotopy theory
provides just the right formal language for abstractly speaking about the prequantum field theory
surveyed in section \ref{PrequantumLocalFieldTheoryInMotivation}.


\end{document}